\title{\bf Application of Comparator Networks in Encoding of Cardinality Constraints}
\author{Micha\l~Karpi\'nski}
\date{\today}
\algrenewcommand{\algorithmiccomment}[1]{\small \hfill\# #1}
\newcounter{algorithmicH}
\let\oldalgorithmic\algorithmic
\renewcommand{\algorithmic}{%
  \stepcounter{algorithmicH}
  \oldalgorithmic}
\renewcommand{\theHALG@line}{ALG@line.\thealgorithmicH.\arabic{ALG@line}}
\LetLtxMacro{\originaleqref}{\eqref}
\renewcommand{\eqref}{Eq.~\originaleqref}
\newtheorem{theorem}{Theorem}
\newtheorem{lemma}{Lemma}
\newtheorem{observation}{Observation}
\newtheorem{corollary}{Corollary}
\theoremstyle{definition}
\newtheorem{definition}{Definition}
\newtheorem{example}{Example}
\numberwithin{definition}{chapter}
\numberwithin{theorem}{chapter}
\numberwithin{lemma}{chapter}
\numberwithin{example}{chapter}
\numberwithin{observation}{chapter}
\numberwithin{corollary}{chapter}
\newcommand{\tuple}[1]{\langle #1 \rangle}
\newcommand{\floor}[1]{\left \lfloor #1 \right \rfloor }
\newcommand{\zip}{\textrm{zip}}
\newcommand{\odd}{\textit{odd}}
\newcommand{\even}{\textit{even}}
\newcommand{\leftt}{\textit{left}}
\newcommand{\rightt}{\textit{right}}
\newcommand{\ceil}[1]{\left \lceil #1 \right \rceil }
\newcommand{\nat}{\mathbb{N}}
\newcommand{\bool}{\{0,1\}}
\newcommand{\pref}{\textrm{pref}}
\newcommand{\suff}{\textrm{suff}}
\newcommand{\remove}{\textrm{drop}}
\newcommand{\impl}{\Rightarrow}
\newcommand{\sort}{\textit{sort}}
\newcommand{\snd}{\textrm{2nd}}
\newcommand{\trd}{\textrm{3rd}}
\renewcommand{\iff}{\Leftrightarrow}
\newcommand{\Iff}{\Longleftrightarrow}
\newtcolorbox{examplebox}{colback=white,colframe=black,sharp corners}
\begin{document}

\frontmatter

\begin{titlepage}
  \newcommand{\HRule}{\rule{\linewidth}{0.5mm}}

  \center

  \textsc{\large University of Wroc{\l}aw}\\[0.1cm]
  \textsc{\large Institute of Computer Science}\\[4cm]

  \HRule \\[0.4cm]
  { \Large \bfseries CNF Encodings of Cardinality Constraints Based on \\ Comparator Networks }\\[0.4cm]
  \HRule \\[1.5cm]

  \begin{minipage}{0.45\textwidth}
  \begin{flushleft} \normalsize
  \emph{Author:}\\
  Micha\l~\textsc{Karpi\'nski}
  \end{flushleft}
  \end{minipage}
  ~
  \begin{minipage}{0.45\textwidth}
  \begin{flushright} \normalsize
  \emph{Supervisor:} \\
  dr hab. Marek \textsc{Piotr\'ow}
  \end{flushright}
  \end{minipage}\\[6cm]

  {\large \today}\\[2cm]

  \vfill
\end{titlepage}

\chapter*{Acknowledgments}

This thesis -- the culmination of my career as a graduate student (or as an apprentice, in a broader sense) -- is a consequence
of many encounters I have had throughout my life, not only as an adult, but also as a youth. It is because I was lucky to meet
these particular people on my path, that I am now able to raise a toast to my success. A single page is insufficient to
mention all those important individuals (which I deeply regret), but still I would like to express my gratitude to a few selected ones.

First and foremost, I would like to thank my advisor -- Marek Piotr\'ow -- whose great effort,
dedication and professionalism guided me to the end of this journey.
I especially would like to thank him for all the patience he had while repeatedly correcting the same lazy mistakes I made
throughout my work as a PhD student. If I had at least 10\% of his determination, I would have finished this thesis sooner.
For what he has done, I offer my everlasting respect.
I would also like to thank Prof. Krzysztof Lory\'s for advising me for many years and not loosing faith in me.

I thank a few selected colleagues, classmates and old friends, those who have expanded my horizons,
and those who on many occasions offered me a helping hand (in many ways). They have been good friends
to me for many years, and although I have lost contact with most them I still feel obliged to mention them here:
Aleksander Balicki, Marcin Barczyszyn, Paweł Bogdan, Micha{\l}~Bruli\'nski, Piotr Gajowiak, Marcin Kolarczyk, Micha{\l}~Krasowski,
Kacper Kr\'ol, Micha\l~Lepka, Mateusz Lewandowski, Marcin Oczeretko, Jacek Olchawa, Krzysztof Piecuch, Damian and Dominik Rusak,
Grzegorz Sobiegraj, Kuba St\k{e}pniewicz, Damian Straszak, Micha{\l} Szczepanowski, Jakub Tarnawski, {\L}ukasz Zapart.

I would like to separately mention my colleagues from the Institute of Computer Science, for the great time I had during the years of my PhD program:
residents of the room 340 -- Pawe\l~Garncarek, Piotr Polesiuk;
residents of the room 202 -- Jan Chorowski, Adam Kaczmarek, Adrian {\L}a\'ncucki, Micha\l~R\'o\.za\'nski;
and all other wonderful people of the Institue of Computer Science.
Andr\'es Aristiz\'abal, even though your time in Poland was short, I think of you as my brother.

One particular person needs to be mentioned separately -- Maciej Pacut -- my friend, my roommate, and the fellow PhD student,
who gave me the final push to continue my graduate studies after obtaining master's degree.
Whether his suggestion was an epiphany, or a curse, I will ponder for years to come.

Finally, I would like to thank my family, especially my mother, for providing for me and for allowing me to peacefully pursue my dreams.

\chapter*{Abstract}

Boolean Satisfiability Problem (SAT) is one of the core problems in computer science. As one of the fundamental NP-complete problems, it can be used -- by known reductions -- to represent instances of variety of hard decision problems. Additionally, those representations can be passed to a program for finding satisfying assignments to Boolean formulas, for example, to a program called \textsc{MiniSat}. Those programs (called SAT-solvers) have been intensively developed for many years and -- despite their worst-case exponential time complexity -- are able to solve a multitude of hard practical instances. A drawback of this approach is that clauses are neither expressive, nor compact, and using them to describe decision problems can pose a big challenge on its own.

We can improve this by using high-level constraints as a bridge between a problem at hand and SAT. Such constraints are then automatically translated to eqisatisfiable Boolean formulas. The main theme of this thesis revolves around one type of such constraints, namely Boolean Cardinality Constraints (or simply {\em cardinality constraints}). Cardinality constraints state that at most (at least, or exactly) $k$ out of $n$ propositional literals can be true. Such cardinality constraints appear naturally in formulations of different real-world problems including cumulative scheduling, timetabling or formal hardware verification.

The goal of this thesis is to propose and analyze new and efficient methods to {\em encode} (translate) cardinality constraints into equisatisfiable proposition formulas in CNF, such that the resulting SAT instances are small and that the SAT-solver runtime is as short as possible. The ultimate contribution of this thesis is the presentation and analysis of several new translation algorithms, that improve the state-of-the-art in the field of encoding cardinality constraints. Algorithms presented here are based on {\em comparator networks}, several of which have been recently proposed for encoding cardinality constraints and experiments have proved their efficiency. With our constructions we achieve better encodings than the current state-of-the-art, in both theoretical and experimental senses. In particular, they make use of so called {\em generalized comparators}, that can be efficiently translated to CNFs. We also prove that any encoding based on generalized comparator networks preserves {\em generalized arc-consistency} (GAC) -- a theoretical property that guarantees better propagation of values in the SAT-solver computation.

Finally, we explore the possibility of using our algorithms to encode a more general type of constraints - {\em the Pseudo-Boolean Constraints}. To this end we implemented a PB-solver based on the well-known \textsc{MiniSat+} and the experimental evaluation shows that on many instances of popular benchmarks our technique outperforms other state-of-the-art PB-solvers.

\newpage

~

\newpage

\tableofcontents

\mainmatter

\part{SAT-solving and Constraint Programming}

\chapter{Introduction}\label{ch:intro}

  \def\nqueenssolution{Qd4}
  \setchessboard{smallboard,labelleft=false,labelbottom=false,showmover=false,setpieces=\nqueenssolution}

  \begin{tikzpicture}[remember picture,overlay]
    \node[anchor=east,inner sep=0pt] at (current page text area.east|-0,3cm) {\chessboard};
  \end{tikzpicture}


Several hard decision problems can be efficiently reduced to the Boolean satisfiability (SAT)
problem and tried to be solved by recently-developed SAT-solvers. Some of them are
formulated with the help of different high-level constraints, which should be either
encoded into CNF formulas \cite{minisatp,philipp2015pblib,manthey2012npsolver}
or solved inside a SAT-solver by a specialized extension \cite{een2003extensible}.
In this thesis we study how a SAT-solver can be used 
to solve Boolean Constraint Problems by translation to clauses.

The major part of this thesis is dedicated to encoding Boolean Cardinality Constraints that take the form
$x_1 + x_2 + \dots + x_n \; \# \; k$, where  $x_1, x_2, \dots, x_n$ are Boolean literals (that
is, variables or their negations), $\#$ is a relation from the set $\{<, \le, = , \ge
, >\}$ and $k,n \in \nat$. Such cardinality constraints appear naturally in formulations of various
real-world problems including cumulative scheduling \cite{schutt2009cumulative}, timetabling
\cite{acha2014curriculum} or formal hardware verification \cite{biere1999symbolic}.

The goal of this thesis is to study the technique of translating Boolean Cardinality Constraints
into SAT, based on comparator networks approach. We propose several new classes of networks and we prove
their utility both theoretically and experimentally. We show that on many instances of popular benchmarks our
algorithms outperform other state-of-the-art solvers. The detailed description of the results is given in Section \ref{sec:org}.

This introductory chapter familiarizes the reader with the concepts of SAT-solving
and Constraint Programming (CP) -- the central topics providing motivation for this thesis.
First, we take a look at the SAT problem and its continuous interest in computer science. We show
some applications of SAT and give a short summary of the history of SAT-solving. Then, we turn
to the notion of {\em Constraint Satisfaction Problem} (CSP) and define the main object that
is studied in this thesis, namely, {\em the clausal encoding of cardinality constraints}. We end this chapter
with a section explaining how the rest of this thesis is organized and how it contributes
to the field of constraint programming.

\section{Brief History of SAT-solving}

  SAT, or in other words, {\em Boolean Satisfiability Problem} or {\em satisfiability problem of propositional logic},
  is a decision problem in which we determine whether a given Boolean formula (often called {\em propositional formula})
  has a satisfying assignment or not. We use propositional formulas in conjunctive normal form (CNF). A CNF formula
  is a conjunction (binary operator $\wedge$) of clauses. Each clause is a disjunction (binary operator $\vee$) of literals,
  where literal is an atomic proposition $x_i$ or its negation $\neg x_i$. In general, CNF formula on $n$ variables and $m$ clauses
  can be expressed as:

  \[
    \psi = \bigwedge_{i=1}^{m}\left( \bigvee_{j \in P_i} x_j \vee \bigvee_{j \in N_i} \neg x_j \right),
  \]

  \noindent where $P_i, N_i \subseteq \{1,\dots,n\}$, $P_i \cap N_i = \emptyset$, $n,m \in \nat$. To make certain ideas more clear,
  when translating something to CNF, we also call implication (binary operator $\impl$) a clause,
  keeping in mind the following equivalence:

  \[
    x_1 \wedge x_2 \wedge \dots \wedge x_n \impl y_1 \vee y_2 \vee \dots \vee y_m \;\; \Iff \;\; \neg x_1 \vee \neg x_2 \vee \dots \vee \neg x_n \vee y_1 \vee y_2 \vee \dots \vee y_m.
  \]

  \noindent We also present formulas simply as sets of clauses, and single clauses as a sets of literals,
  for succinctness. Note that any propositional formula can be transformed into an
  equivalent formula in CNF, in linear time \cite{blair1986some}.

  A {\em truth assignment} (also called 
  {\em instantiation} or {\em assignment}) is a partial function $I$ that maps variables $x \in V$ to the elements of set $\{true,false\}$.
  Therefore, a single variable can be either true, false or free. The truth values of propositional logic
  {\em true} and {\em false} will be represented by 1 and 0, respectively.
  A variable $x$ is said to be assigned to (or fixed to) 0 by instantiation $I$ if $I(x)=0$, assigned to 1 if $I(x)=1$,
  and free if $I(x)$ is undefined. In non-ambiguous context, $x=1$ denotes $I(x)=1$ (similarly for $x=0$). We also generalize
  this concept for sets of variables, so that if we write $V=1$, we mean that for each $x\in V$, $I(x)=1$ (same for $V=0$).
  An instantiation $I$ of $V$ is said to be {\em complete} if it fixes all the variables in $V$. The instantiations that are not
  complete are said to be {\em partial}. We further extend our notation, such that if $\phi$ is a Boolean formula,
  then a value of $\phi$ under assignment $I$ is denoted by $I(\phi)$, which can be either $0$, $1$ or undefined.
  Furthermore, if $I$ is a complete instantiation and $\bar{x}=\tuple{x_1,\dots,x_n}$ is a sequence of Boolean literals, then
  $I(\bar{x})=\tuple{I(x_1),\dots,I(x_n)}$.

  Although the language of SAT is very limited, it is very powerful, allowing us to model many mathematical and
  real-world problems. Unfortunately we do not expect to see a fast algorithm that could solve all SAT instances, as the 
  problem is NP-complete by the famous theorem of Cook \cite{cook1971complexity}. The situation did not improve much after
  almost 50 years. The best known deterministic algorithm solving SAT runs in worst-case time $O(1.439^n)$ \cite{kutzkov2010using},
  where $n$ is the number of variables. 

  On the positive side, specialized programs called {\em SAT-solvers} have emerged,
  and even though they process CNFs -- in worst case -- in exponential time with respect to the size of the 
  formula, for many practical instances they can quickly determine the satisfiability of the formula.
  This section presents the most important milestones in the field of SAT-solving. Before we look at
  historical results, let us see why solving SAT instances is a task of great significance.

  \subsection{Applications}\label{sec:sat:app}

  \paragraph{Mathematical puzzles.} We begin our journey with some academic examples.
  First, let us take a look at a 170-year old puzzle called {\em 8-Queens Puzzle}
  (a recurring theme of this thesis). The 8-Queens Puzzle is the problem of placing eight chess queens on an $8 \times 8$
  chessboard so that no two queens threaten (attack) each other. Chessboard consists of 64 squares with eight rows called 
  {\em ranks} labeled 1--8 and eight columns called {\em files} labeled a--h. The {\em queen} chess piece from its position observes
  all other squares on the rank, file and both diagonals that she currently occupies (see marked squares in Figure \ref{fig:8queens}a).
  Thus, a solution requires that no two queens share the same rank, file, or diagonal. In Figure \ref{fig:8queens}b we see a chessboard with
  four queens on it: queen on a1 attacks queen on f6, queen on f2 also attacks queen on f6, and since the relation of "attacks" is symmetrical,
  queen on f6 attacks both queens on a1 and f2. On the other hand, queen on c4 does not threaten any other queen on the board.
 
 \begin{figure}[ht!]
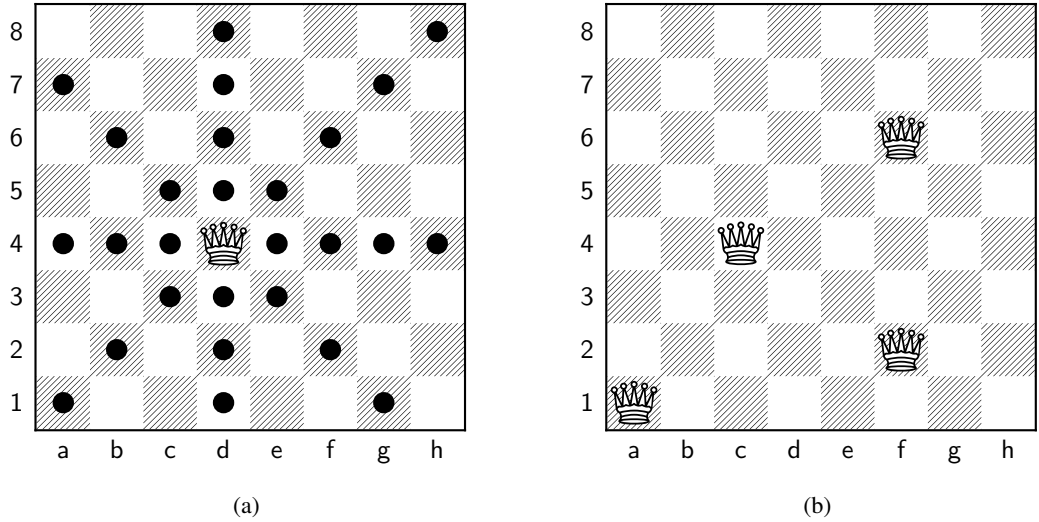

    \centering
    \subfloat[]{
      \def\nqueenssolution{Qd4}
      \setchessboard{showmover=false,labelleft=true,labelbottom=true,setpieces=\nqueenssolution}
      \chessboard[normalboard,pgfstyle={[fill]circle},padding=-0.8ex,color=black,
        markfields={d5,d6,d7,d8,d3,d2,d1,a4,b4,c4,e4,f4,g4,h4,a1,b2,c3,e5,f6,g7,h8,a7,b6,c5,e3,f2,g1}]
    }
    ~
    \subfloat[]{
      \def\nqueenssolution{Qa1, Qc4, Qf6, Qf2}
      \setchessboard{showmover=false,labelleft=true,labelbottom=true,setpieces=\nqueenssolution}
      \chessboard[normalboard]
    }
    \caption{Rules of the n-Queens puzzle: (a) all squares that a single queen can attack; (b) sample placement of queens on board}
    \label{fig:8queens}
  \end{figure}

  The 8-Queens Puzzle can be generalized to $n$-Queens Puzzle -- a problem of placing
  $n$ non-attacking queens on an $n \times n$ chessboard -- for which solutions exist for
  all natural numbers $n>3$. For simplicity, let us focus on the case $n=4$ and try to construct a clause
  set that is satisfiable, if and only if 4-Queens Puzzle has a solution.

  We would like to model our SAT instance in a way that allows easy extraction of a solution from each satisfying assignment.
  A natural way to achieve it is to introduce a variable $x_{i,j}$ for each square on the board,
  where $i \in \{a,b,c,d\}$ and $j \in \{1,2,3,4\}$. This way, we can relate a placement of a queen
  on a square $\{i,j\}$ with the assignment of the variable $x_{i,j}$. Now, we need to create a set of clauses that restricts
  the placement of queens according to the rules. We do this by adding a set of clauses $\psi$ that are satisfiable,
  if and only if exactly one variable is set to true for every set of variables
  that represents some rank and file,
  and at most one variable is true for each diagonal. For example, for the 1st rank we add:

  \small
  \begin{align*}
    & (\neg x_{a,1} \vee \neg x_{b,1}), (\neg x_{a,1} \vee \neg x_{c,1}), (\neg x_{a,1} \vee \neg x_{d,1}),
      (\neg x_{b,1} \vee \neg x_{c,1}), (\neg x_{b,1} \vee \neg x_{d,1}), (\neg x_{c,1} \vee \neg x_{d,1}), \\
    & (x_{a,1} \vee x_{b,1} \vee x_{c,1} \vee x_{d,1}),
  \end{align*}
  \normalsize

  \noindent and for the a2-c4 diagonal, we add:

  \small
  \begin{align*}
    & (\neg x_{a,2} \vee \neg x_{b,3}), (\neg x_{a,2} \vee \neg x_{c,4}), (\neg x_{b,3} \vee \neg x_{c,4}).
  \end{align*}
  \normalsize

  \noindent We proceed similarly for all other ranks, files and diagonals on the board. We do not show the full encoding for readability reasons.
  This would require printing exactly 84 clauses in total for $\psi$.
  This example illustrates that even for small instances of the considered problem, we can get large sets of clauses.

  For $n=4$, the n-Queens Puzzle has exactly two solutions, as presented in Figure \ref{fig:4queens}. As we can see,
  for both $S_1=\{x_{a,3}, x_{b,1}, x_{c,4}, x_{d,2}\}$ and $S_2=\{x_{a,2}, x_{b,4}, x_{c,1}, x_{d,3}\}$, setting
  either $S_1=1$ or $S_2=1$ (and the rest of the variables to 0) gives a satisfying assignment of $\psi$. Figuring
  out whether there is no other solution simply by examining $\psi$ would be cumbersome, but even the simplest SAT-solver
  would find the answer instantly.

  \begin{figure}[t!]
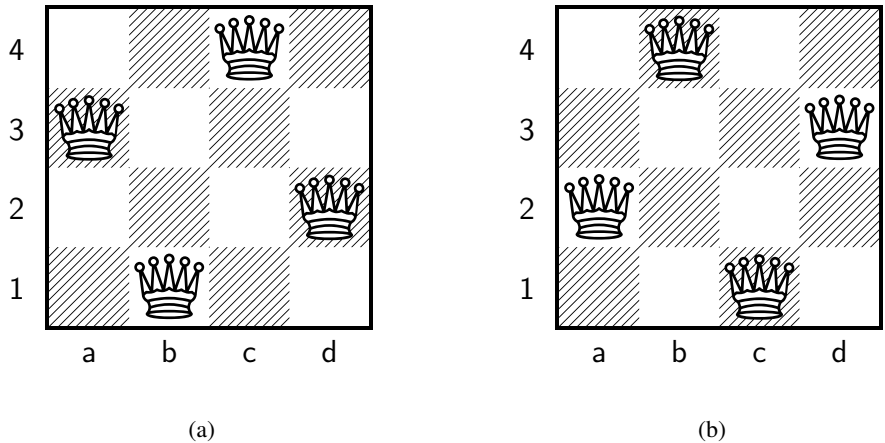

    \centering
    \subfloat[]{
      \def\nqueenssolution{Qa3, Qb1, Qc4, Qd2}
      \setchessboard{largeboard,showmover=false,labelleft=true,labelbottom=true,setpieces=\nqueenssolution,maxfield=d4}
      \chessboard
    }
    ~
    \subfloat[]{
      \def\nqueenssolution{Qa2, Qc1, Qb4, Qd3}
      \setchessboard{largeboard,showmover=false,labelleft=true,labelbottom=true,setpieces=\nqueenssolution,maxfield=d4}
      \chessboard
    }
    \caption{All solutions to 4-Queens puzzle}
    \label{fig:4queens}
  \end{figure}

  There are many more logic puzzles that could be modeled as SAT instances, for example: Sudoku, Magic squares, Nonograms etc.
  However, SAT-solving would not become so popular if it was used only for recreational purposes. Thus, we now turn to more
  practical applications.

  \paragraph{Timetabling.} In many real-world situations it is useful to have a chart showing
  some events scheduled to take place at particular times. Examples of that would be: a chart showing the departure
  and arrival times of trains, buses, or planes; a class schedule for students and teachers. Another example can be
  a shift schedule at a workplace. Let us consider the following hypothetical situation.

  Assume we own a restaurant and among all our employees we hire 5 waiters: Adam, Brian, Carl, Dan and Eddy. We want to create a shift
  schedule for them for the next week (starts on Monday, ends on Sunday) given the following constraints:

  \begin{enumerate}
    \item Carl cannot work with Dan, because they do not like each other.
    \item Eddy is still learning, therefore he has to work together with either Adam or Brian, who are more experienced.
    \item No waiter can work for three consecutive days.
    \item Brian cannot work on weekends, because of his studies.
    \item Each day, at least two waiters need to work, except Friday, when we expect a lot of customers.
          Then at least three waiters have to be present.
  \end{enumerate}

  We can easily model this problem as a SAT instance. First, we create variables indicating that a waiter is assigned to a day of the week.
  To this end we introduce $x_{v,i}$ for each waiter $v \in W = \{A,B,C,D,E\}$ and for each day $i \in \{1,2,3,4,5,6,7\}$ (1 - Monday, 2 - Tuesday, etc.).
  Now, if $x_{v,i}$ is true, then a waiter $v$ has to come to work on day $i$. We add the following clauses to encode the given constraints:

  \begin{enumerate}
    \item We simply add $(\neg x_{C,i} \vee \neg x_{D,i})$ for each day $i$.
    \item If Eddy works in a given day $i$, then either Adam or Brian has to work: $(x_{E,i} \impl x_{A,i} \vee x_{B,i})$.
    \item For each waiter $v \in W$ and day $i<6$ we add: $(\neg x_{v,i} \vee \neg x_{v,i+1} \vee \neg x_{v,i+2})$.
    \item Brian's absence on Saturday and Sunday can be handled by two singleton clauses: $(\neg x_{B,6}) \wedge (\neg x_{B,7})$,
    virtually setting $x_{B,6}$ and $x_{B,7}$ to 0.
    \item We add a clause $(x_{A,i} \vee x_{B,i} \vee x_{C,i} \vee x_{D,i} \vee x_{E,i})$ for each day $i$ to force at least
    one waiter to appear for work each day. To make at least two waiters go to work each day $i$, we add a clause for each waiter $v \in W$:
    $(x_{v,i} \impl \bigvee_{v' \in W-\{v\}} x_{v',i})$. To make three waiters come on Friday, we add the set of clauses:
    $((x_{v,5} \wedge x_{v',5}) \impl \bigvee_{v'' \in W-\{v,v'\}} x_{v'',5})$, for each pair of $v,v' \in W$, where $v\neq v'$.
  \end{enumerate}

  The union of all above clauses forms a CNF that is satisfiable, if and only if it is possible to create a shift schedule
  that satisfied all the constraints. The solution can be extracted by simply looking at the truth assignment of the variables. If no solution
  exists, then as managers we need to either relax some of the constraints or hire more staff.

  The problem of timetabling goes beyond our hypothetical considerations. For example, As{\'\i}n and Nieuwenhuis \cite{acha2014curriculum}
  present SAT encodings for Curriculum-based Course Timetabling and shows that experiments performed on real-world
  instances improves on what was then considered the state-of-the-art.

  \paragraph{Stable Marriages.} In examples above encodings were fairly straightforward,
  with a 1-to-1 correspondence between variable definitions and values' truth assignments.
  A slightly more complex example is shown here. It is based on a study by Gent et al. \cite{gent2001constraint}.

  In the {\em stable marriage problem with preference lists} we have $n$ men and $n$ women. Each man $i$ ranks the women in order
  of preference, and women similarly rank men, creating preference lists $L^m_{i}$ for man and $L^w_{i}$ for women. The closer a person is on the list
  the more desirable that person is as a mate. The problem is to marry men and women in a {\em stable} way, meaning that
  there is no incentive for any two individuals to elope and marry each other. A person is willing to leave his/her current partner
  for a new partner only if he/she is either unmatched or considers the new partner better than the current one. A pair who
  mutually prefer each other than their partners is a blocking pair, and a matching without blocking pairs is stable.

  We define a variable $x_{i,p}$ to be true, if and only if the man $i$ is either unmatched or matched to the women in position $p$ or later in
  his preference list, where $1 \leq p \leq |L^m_{i}|$. We also define $x_{i,|L^m_{i}|+1}$ which is true, if and only if man $i$ is unmatched.
  Likewise, we define variables $y_{j,q}$ for each women $j$. Note that each variable valuation corresponds to a set of possibilities and that marriages
  are modeled indirectly via the preference lists. We define the clause set as follows:

  \begin{itemize}
    \item Let $1 \leq i \leq n$. Each man or woman is either matched with someone in their preference list or is unmatched:
    $x_{i,1} \wedge y_{i,1}$.
    \item Let $2 \leq p \leq |L^m_{i}|$ and $2 \leq q \leq |L^w_{j}|$. If a man $i$ (women $j$) gets his (hers) $(p-1)$-th or better choice,
      then he (she) certainly gets his (hers) $(p)$-th or better choice:
      $(\neg x_{i,p} \impl \neg x_{i,p+1}) \wedge (\neg y_{j,p} \impl \neg y_{j,p+1})$.
    \item  Now we express the monogamy constraints. Let $p$ be the rank of women $j$ in
      the preference list of man $i$, and $q$ be the rank of man $i$ in the preference list of woman $j$. 
      If man $i$ has partner no better than women $j$ or is unmatched, and women $j$ has a partner she prefers to man $i$, then man $i$
      cannot be matched to women $j$: $(x_{i,p} \wedge \neg y_{j,q} \impl x_{i,p+1})$.
      Similarly: $(y_{j,q} \wedge \neg x_{i,p} \impl y_{j,q+1})$.
    \item Finally, we enforce stability by stating that if man $i$ is matched to a woman he ranks no better than woman $j$,
    then woman $j$ must be matched
    to a man she ranks no lower than man $i$, and vice-versa. Again, let $p$ be the rank of women $j$ in
      the preference list of man $i$, and $q$ is the rank of man $i$ in the preference list of woman $j$. Then, we add:
      $(x_{i,p} \impl \neg y_{j,q+1}) \wedge (y_{j,q} \impl \neg x_{i,p+1})$.
  \end{itemize}

  In the book by Biere et al. \cite{biere2009handbook} authors report on multitude
  of different problems originating from computer science,
  that can be handled by translation to SAT. The list of applications consists of, among others: software verification, bounded model checking,
  combinatorial design, and even statistical physics. Finally, before reviewing the history of SAT-solving, we take a look at
  the optimization version of SAT, called MaxSAT, by which not only decision problems but also optimization ones can be modeled with clauses.

  \paragraph{MaxSAT.} SAT-solvers usually either report on the found solution or inform the user, that the given instance is unsatisfiable.
  In practice, we would like to gain more insight about the unsatisfiable instance, for example, which set of clauses causes the
  unsatisfiability or what is the maximum number of clauses that can be satisfied by some truth assignment. In the example with creating
  a shift schedule for the restaurant, if we knew that the instance has no solutions, and clauses representing the first constraint makes
  the CNF unsatisfiable, then we could just remove them, tell Carl and Dan to act professionally regardless of their personal animosity, and
  move on with the schedule. We could do that if we modeled our problem in MaxSAT.

  {\em Maximum Satisfiability Problem}, or MaxSAT in short, consists of finding an assignment
  that maximizes the number of satisfied clauses in the given CNF. Even though MaxSAT is NP-hard,
  some success was made in finding approximate solutions. The first MaxSAT polynomial-time approximation
  algorithm, created in 1974, with a performance guarantee of $1/2$, is a greedy algorithm by Johnson \cite{johnson1974approximation},
  and the best theoretical result is of Karloff and Zwick \cite{karloff19977}, who gave a $7/8$ algorithm for $Max3SAT$
  (variant, where at most three literals are allowed per clause). We do not expect to get any better than that,
  unless $P=NP$ \cite{haastad2001some}. 

  Nevertheless, in many applications an exact solution is required. Similarly to SAT, 
  we would also like to solve MaxSAT instances quickly in practical applications.
  In recent years, there has been considerable interest in developing efficient algorithms and several
  families of algorithms (MaxSAT-solvers) have been proposed. For recent survey, see \cite{morgado2013iterative}.

  In the context of this section, one particular method is of interest to us: the SAT-based approach.
  It was first developed by Le Berre and Parrain \cite{le2010sat4j}. Given a MaxSAT instance $\psi=\{C_1,\dots,C_m\}$,
  a new blocking variable $v_i$, $1 \leq i \leq m$, is added to each clause $C_i$, and solving the MaxSAT problem for
  $\psi$ is reduced to minimize the number of satisfied blocking variables in $\psi'=\{C_1 \vee v_1, \dots, C_m \vee v_m\}$.
  Then, a SAT-solver that supports cardinality constraints (possibly by algorithms given in this thesis in Chapters 4--6)
  solves $\psi'$, and each time a satisfying assignment $A$ is found, a better satisfying assignment is searched by adding
  the cardinality constraint $v_1 + \dots + v_m < B(A)$, where $B(A)$ is the number of blocking variables satisfied by $A$.
  Once $\psi'$ with a newly created cardinality constraint is unsatisfiable, the latest satisfying assignment is declared to be an optimal
  solution.

\subsection{Progress in SAT-solving}

The success of SAT-solving and its expanding interest comes from the fact that SAT stands at the crossroads of many fields,
such as logic, graph theory, computer engineering and operations research. Thus, many problems with origin in one of these
fields usually have multiple translations to SAT. Additionally, with plethora of ever-improving SAT-solvers available for both individual, academic
and commercial use, one has almost limitless possibilities and modeling tools for solving a variety of scientific and practical problems.  

The first program to be considered a SAT-solver was devised in 1960 by Davis and Putnam \cite{davis1960computing}.
Their algorithm based on resolution is now called simply DP (Davis-Putnam). Soon after, an improvement to DP was proposed by
Davis, Logemann and Loveland \cite{davis1962machine}. The new algorithm called DPLL (Davis-Putnam-Logemann-Loveland) guarantees
linear worst-case space complexity.

The basic implementation of a SAT-solver consists of a backtracking algorithm that
for a given formula $\psi$ chooses a literal, assigns a value to it (let us say true), and simplifies all the clauses containing
that literal resulting in a new formula $\psi'$. Then, a recursive procedure checks the satisfiability of $\psi'$.
If $\psi'$ is satisfied, then $\psi$ is also satisfied. Otherwise, the same recursive check is done assuming
the opposite truth value (false) of the chosen literal. DPLL enhances this simple procedure by eagerly using {\em unit propagation}
and {\em pure literal elimination} at every step of the algorithm. Unit propagation eliminates {\em unit clauses}, i.e., clauses
that contain only a single unassigned literal. Such clause can be trivially satisfied by assigning the necessary value to make the literal true.
Pure literals are variables that occur in the formula in only one polarity. Pure literals can always be assigned in a way that makes all clauses
containing them true. Such clauses can be deleted from the formula without changing its satisfiability.

Most modern methods for solving SAT are refinements of the basic DPLL framework, and include
improvements to variable choice heuristics or early pruning of the search space. In the DPLL
algorithm it is unspecified how one should choose the next variable to process. Thus,
many heuristics have emerged over the years. For example, the unit propagation rule was
generalized to the {\em shortest clause rule}: choose a variable from a clause containing
the fewest free literals \cite{chao1990probabilistic}. Another example is the 
{\em majority rule} \cite{chao1986probabilistic}: choose a variable with
the maximum difference between the number of its positive and negative literals.
Later, it was observed that the activity of variable assignment was an important factor
in search space size. This led to the VSIDS (Variable State Independent Decaying Sum)
heuristic of Chaff \cite{moskewicz2001chaff} that assigns to each variable a score proportional
to the number of clauses that variable is in. As the SAT algorithm progresses, periodically,
all scores are divided by a constant. VSIDS selects the next variable with the highest score
to determine where the algorithm should branch.

Another refinement is a method of early detection of a sequence of variable decisions which results
in a lot of propagation, and therefore reduction of the formula. The idea is to run the backtracking procedure
for some constant number of steps and then rollback the calculations
remembering the sequence of variable decisions that reduced the input formula the most.
This way, the most promising parts of the search tree can be explored first, so the chance
of finding a solution early increases.
This framework has been established as {\em look-ahead} algorithms and the main representatives
of this trend are the breath-first technique by St{\aa}lmarck \cite{sheeran2000tutorial}
and depth-first technique (now called {\em restarts}) implemented in Chaff \cite{moskewicz2001chaff}.

Arguably, the biggest step forward in the field of SAT-solving was
construction of the CDCL algorithm ({\em conflict-driven clause-learning}).
This turned many problems that were considered
intractable into trivially solvable problems. The CDCL algorithm can
greatly reduce the search space by discovering early that certain branches
of the search tree do not need to be explored. In short, the goal of CDCL
is to deduce new clauses when discovering a conflict (unsatisfiable clause) during
the exploration of the search tree. Those clauses are constructed, so that the same conflict cannot be repeated.
First solver to successfully apply this technique was Chaff \cite{moskewicz2001chaff}.
Then in MiniSat \cite{een2003extensible} several improvements were implemented.
Now, many more top solvers are built upon CDCL as the main
heuristic, for example, Glucose \cite{audemard2009predicting}.

After CDCL, no major improvement has been made.
The current trend shifts toward parallelization of SAT-solvers.
The related work on this topic can be found, for example,
in the PhD thesis of Norbert Manthey \cite{manthey2016towards}.
See \cite{biere2009handbook} for further reading on the subject of satisfiability.

\section{Introduction to Constraint Programming}

Problem formulations in Section \ref{sec:sat:app} use expressions like: "{\bf exactly one} queen" or
"{\bf at least two} waiters". These are examples of {\em constraints}, and more specifically -- cardinality constraints. In the example
with MaxSAT such constraints were defined explicitly. In this section we introduce the field of {\em Constraint Programming} and
provide motivation behind its success. We base this section on parts of the book by Apt \cite{apt2003principles}.

Informally, a {\em constraint} on a sequence of variables is a relation on their domains. It can be viewed as a requirement
that states which combinations of values from the variable domains are allowed. To solve a given problem by means of
constraint programming we express it as a {\em constraints satisfaction problem} (CSP), which consists of a finite set of constraints.
To achieve it, we introduce a set of variables ranging over specific domains and constraints over these variables. Constraints are expressed in a specific language, for instance,
in SAT variables range over the Boolean domain and the only constraints
that the language allows are clauses.

\begin{examplebox}
\begin{example}
  In the n-Queens puzzle, exactly one queen must be placed on each row and column, and at most one on each diagonal of the chessboard.
  This can be modeled as a CSP by introducing the following constraints:

  \begin{itemize}
    \item $Eq_1(x_1,\dots,x_n)$ with semantics that exactly one out of $n$ propositional literals $\{x_1,\dots,x_n\}$ can be true,
    \item $Lt_1(x_1,\dots,x_n)$ stating that at most one out of $n$ propositional literals $\{x_1,\dots,x_n\}$ can be true. 
  \end{itemize}

  Returning to the example from Section \ref{sec:sat:app}, if $n=4$ then we can express the problem
  using the following set of constraints:

  \small
  \begin{align*}
    & Eq_1(x_{a,1},x_{b,1},x_{c,1},x_{d,1}), Eq_1(x_{a,2},x_{b,2},x_{c,2},x_{d,2}), Eq_1(x_{a,3},x_{b,3},x_{c,3},x_{d,3}), Eq_1(x_{a,4},x_{b,4},x_{c,4},x_{d,4}), \\
    & Eq_1(x_{a,1},x_{a,2},x_{a,3},x_{a,4}), Eq_1(x_{b,1},x_{b,2},x_{b,3},x_{b,4}), Eq_1(x_{c,1},x_{c,2},x_{c,3},x_{c,4}), Eq_1(x_{d,1},x_{d,2},x_{d,3},x_{d,4}), \\
    & Lt_1(x_{a,2}, x_{b,1}), Lt_1(x_{a,3}, x_{b,2}, x_{c,1}), Lt_1(x_{a,4}, x_{b,3}, x_{c,2}, x_{d,1}), Lt_1(x_{b,4}, x_{c,3}, x_{d,2}), Lt_1(x_{c,4}, x_{d,3}), \\
    & Lt_1(x_{a,3}, x_{b,4}), Lt_1(x_{a,2}, x_{b,3}, x_{c,4}), Lt_1(x_{a,1}, x_{b,2}, x_{c,3}, x_{d,4}), Lt_1(x_{b,1}, x_{c,2}, x_{d,3}), Lt_1(x_{c,1}, x_{d,2}). \\
  \end{align*}
  \normalsize

  The CNF from Section \ref{sec:sat:app} required 84 clauses. The above CSP is definitely more succinct.
\end{example}
\end{examplebox}

The next step is to solve the CSP by a dedicated {\em constraint solver} which can incorporate various domain specific methods
and general methods, depending on the type of constraints we are dealing with. The domain specific methods are usually
provided in the form of implementations of special purpose algorithms, for example, a program that solves systems of
linear equations, or a package for linear programming. On the other hand, the general methods are concerned with the ways of
reducing the search space. These algorithms maintain equivalence while simplifying the considered problem.
One of the aims of constraint programming is to search for efficient domain specific methods
that can be used instead of the general methods and to apply them into a general framework.
While solving CSPs we are usually interested in: determining whether the instance has a solution, finding all solutions, and
finding all (or some) optimal solutions w.r.t. some goal function.

One of the advantages of modeling problems with constraints over the development of classic algorithms is the following.
The classic computational problem is usually formulated in a very basic form and rarely can be applied to real-world situation as it is.
In fact, the shorter the description, the more "fundamental" the problem feels, and the bigger interest in the community.
For example: finding a matching in graph with some properties,
the shortest path between nodes in graph, the minimum spanning tree, the minimum cut, the number of sub-words in compressed text, etc.
In the real-world industrial applications, there are plenty of constraints to handle at once. The hope that someone would construct a classic algorithm
to solve some complicated optimization problem we throw at them is bleak. We will now take a peek at couple of such problems.

\subsection{Applications}

Apt \cite{apt2003principles} provides a long list of problems where CSPs were successfully applied. The list consists of:
interactive graphic systems, scheduling problems, molecular biology, business applications, electrical engineering, numerical computation,
natural language processing, computer algebra. It has been 15 years since the book was 
published and more applications for constraint programming have emerged
since then. One can find more recent (and more exotic) examples by studying the {\em Application Track}
of the {\em Principles and Practice of Constraint Programming} conference. We reference some of them from the recent years here.

\paragraph{Facade-Layout Synthesis.} This problem occurs when buildings are renovated to improve their thermal insulation and reduce
the impact of heating on the environment. It involves covering a facade with a set of disjoint and configurable insulating panels.
This can be viewed as a constrained rectangle packing problem for which the number of rectangles to be used and their size are not known
{\em a priori}. Barco et al. \cite{barco2015open} devise a CSP for the facade-layout problem. They point out that 
buildings energetic consumption represents more than one third of total energy consumption in developed countries,
which provides great motivation for studying this problem.

\paragraph{Differential Harvesting.} In grape harvesting, the machines (harvesters) are supplied with two hoppers --
tanks that are able to differentiate between two types of grape quality. Optimizing harvest consists on minimizing the working
time of a grape harvester. Given estimated qualities and quantities on the different areas of the vineyard, the problem is to optimize
the routing of the harvester under several constraints. Briot et al. \cite{briot2015constraint} model the differential
harvest problem as a constraint optimization problem and present experimental results on real data.

\paragraph{Transit Crew Rescheduling.} Scheduling urban and trans-urban transportation is an important issue for industrial societies.
The urban transit crew scheduling problem is one of the most important optimization problems related to this issue.
Lorca et al. \cite{lorca2016using} point out that this problem has been intensively studied from a tactical point of view,
but the operational aspect has been neglected while the problem becomes more and more complex and prone to disruptions.
In their paper, they present how the constraint programming technologies are able to recover the tactical plans at the operational level
in order to efficiently help in answering regulation needs after disruptions.

\paragraph{Reserve Design.} An interesting problem originates from the field of ecology -- the delineation of areas of high ecological
or biodiversity value. The selection of optimal areas to be preserved necessarily results from a compromise between the complexity of
ecological processes and managers' constraints. A paper by Justeau-Allaire et al. \cite{justeau2018unifying} shows that
constraint programming can be the basis of a unified, flexible and extensible framework for planning the reserve. They use their 
model on a real use-case addressing the problem of rainforest fragmentation in New Caledonia.

\subsection{Types of Constraints}

Over the years there have been many constraint types used in CSPs. This thesis studies {\em cardinality constraints} over
Boolean domain. For $k \in \nat$, $n$ propositional literals $\{x_1,\dots,x_n\}$ and a relation $\# \in \{<,\leq,=,\geq,>\}$,
a cardinality constraints takes the form:

  \[
    x_1 \;\; + \;\; x_2 \;\; + \;\; \cdots \;\; + \;\; x_n  \;\; \# \;\; k.
  \]

\noindent Informally, it means that at least (at most, or exactly) $k$ out of $n$ propositional literals can be true.
The main contribution of this thesis is the presentation of new methods to translate such constraints into CNF formulas.
For the historical review on this subject we dedicate entire Chapter \ref{ch:history}.

We are also interested in a closely related generalization of cardinality constraints called {\em Pseudo-Boolean constraints},
or PB-constraints, in short. We define them in a similar way, but with additional integer coefficients $\{a_1,\dots,a_n\}$:

  \[
    a_1x_1 \;\; + \;\; a_2x_2 \;\; + \;\; \cdots \;\; + \;\; a_nx_n \;\; \# \;\; k.
  \]

\noindent From other types of constraints heavily studied by the community, we choose to mention the following:

\begin{itemize}
  \item $all\_different(x_1,\dots,x_n)$ -- a constraint stating that each variable --
    defined with its own domain -- need to be assigned a unique value.
  \item Linear inequalities over reals, i.e., the language of Linear Programming \cite{schrijver1998theory}.
\end{itemize}

\subsection{Clausal Encoding}

There are many native CSP solvers available for different types of constraints. This thesis focuses on another
approach in which we {\em encode} or {\em translate} a CSP (in our case a set of cardinality constraints) into
a CNF formula. The generality and success of SAT-solvers in recent years has led to many CSPs being encoded
and solved via SAT \cite{ansotegui2004mapping,barahona2014efficient,barahona2014representative,nguyen2013application,prestwich2009cnf,tamura2009compiling}.
The idea of encoding cardinality constraints into SAT is captured by the following definition.

\begin{definition}[clausal encoding]\label{def:cl_encoding}
  Let $k,n \in \nat$. A clause set $E$ over variables $V=\{x_1,\dots,x_n,$ $s_1,\dots,s_m\}$
  is a {\em clausal encoding} of $x_1 + x_2 + \cdots + x_n \leq k$
  if for all assignments $\alpha : \{x_1,\dots,x_n\} \rightarrow \{0,1\}$ there is an extension of
  $\alpha$ to $\alpha^* : V \rightarrow \{0,1\}$ that satisfies $E$ if and only if $\alpha$ satisfies the original constraint
  $x_1 + x_2 + \cdots + x_n \leq k$, i.e., if and only if at most $k$ out of the variables $x_i$ are set to $1$ by $\alpha$.
\end{definition}

\noindent The similar notions can be defined for relations other than $\leq$ and other types of constraints,
for example, Pseudo-Boolean constraints, but we omit that to avoid repetition.

The main idea in developing algorithms for solving constraint satisfaction problems is to reduce a given CSP
to another one that is equivalent but easier to solve. The process is called {\em constraint propagation} and
the algorithms that achieve this reduction are called {\em constraint propagation algorithms}. In case of SAT-solving
we use {\em unit propagation}.

Informally, Unit Propagation (UP) is a process, where for a given CNF formula and a partial assignment (initially -- empty),
clauses are sought in which all literals but one are false (say $l$) and $l$ is undefined (initially
only clauses of size one satisfy this condition). This literal $l$ is set to true and
the process is iterated until a fix point is reached. A formal definition is presented
in Chapter~\ref{ch:pre}.

We try to construct encodings that guarantee better propagation of values using unit propagation.
We use is the notion of {\em general arc-consistency} (GAC), 
often shortened to just {\em arc-consistency} (which usually has different meaning in CP theory and deals with binary constraints).
We use this notion in the context of cardinality constraints and unit propagation in SAT. Informally,
an encoding of $x_1 + x_2 + \cdots + x_n \leq k$ is arc-consistent if as soon as $k$ input variables are fixed to $1$,
unit propagation will fix all other input variables to $0$.
A formal definition is presented in Chapter~\ref{ch:pre}. If the encoding is arc-consistent,
then this has a positive impact on the practical efficiency of SAT-solvers.

\section{Thesis Contribution and Organization}\label{sec:org}

At the highest level, the thesis improves the state-of-the-art of encoding cardinality constraints
by introducing several algorithms based on selection networks.
The structure of the thesis and the contribution is the following. The first part consists of three chapters, one of them is this introduction, and
the other two are:

\begin{itemize}
  \item In Chapter \ref{ch:pre} we introduce the necessary definitions, notation and conventions used throughout the thesis.
  We define comparator networks and introduce basic constructions used in encoding of cardinality constraints.
  We give the definition of a {\em standard encoding} of cardinality constraints.
  We show how a single comparator can be encoded using a set of clauses and how to generalize
  a comparator to directly select $m$ elements from $n$
  inputs. This is the main building block of our fastest networks presented here -- the generalized selection networks.
  We also present a proof of arc-consistency for all encodings presented in the later parts.
  In fact, we give the first rigorous proof of a more stronger statement,
  that any standard encoding based on generalized selection networks preserves arc-consistency.
  There are several results where researchers use properties of
  their constructions to prove arc-consistency, which is usually long and technical
  (see, for example, \cite{abio2013parametric,asin2011cardinality,codish2010pairwise}).
  In \cite{karpinski2015smaller} we relieve some of this burden
  by proving that the standard encoding of any selection network preserves
  arc-consistency. Here we generalize our previous proof
  to the extended model of selection networks.

  \item In Chapter \ref{ch:history} we present a historic review of methods
  for translating cardinality constraints and Pseudo-Boolean constraints into SAT.
\end{itemize}

The second part consists of two chapters dedicated to the pairwise selection networks:

\begin{itemize}
  \item We begin Chapter \ref{ch:cp15} with the presentation of the {\em Pairwise Selection Network} (PSN)
  by Codish and Zazon-Ivry \cite{codish2010pairwise}, which is based on the {\em Pairwise Sorting Network}
  by Parberry \cite{parberry1992pairwise}. The goal of this chapter is to improve the PSN
  by introducing two new classes of selection networks called {\em Bitonic Selection Networks}
  and {\em Pairwise (Half-)Bitonic Selection Networks}. We prove that we have produced a smaller selection
  network in terms of the number of comparators. We estimate also the size of our networks and compute the difference in sizes
  between our selection networks and the PSN. The difference
  can be as big as $n\log n / 2$ for $k = n/2$.

  \item In Chapter \ref{ch:mw} we show construction of the {\em $m$-Wise Selection Network}.
  This is the first attempt at creating an encoding that is based on the generalized selection networks introduced in Chapter \ref{ch:pre}.
  The algorithm uses the same {\em pairwise} idea as in Chapter \ref{ch:cp15}, but the main difference is that
  the basic component of the new network is an $m$-selector (for $m \geq 2$).
  The new algorithm works as follows. The inputs are organized into $m$ columns, in which elements are recursively selected
  and, after that, columns are merged using a dedicated merging network. 
  The construction can be directly applied to any values of $k$ and $n$
  (in contrast to the previous algorithms from Chapter \ref{ch:cp15}).
  We show the high-level algorithm for any $m$, but we only present a complete construction (which includes a merging network) for $m=4$ 
  for theoretical evaluation, where we prove that using 4-column merging networks produces smaller encodings than their 2-column counterpart.
\end{itemize}

The third part focuses on the generalized version of the odd-even selection networks. Here we show our best construction
for encoding cardinality constraints, as well as the description of a PB-solver based on the same algorithm:

\begin{itemize}
  \item In Chapter \ref{ch:cp18} we show more efficient construction using generalized comparator networks model.
  We call our new network the {\em $m$-Odd-Even Selection Network}. It generalizes the standard odd-even algorithm
  similarly to how {\em $m$-Wise Selection Network} generalizes PSN in Chapter \ref{ch:mw}. The inputs are organized into $m$ columns,
  and after recursive calls, resulting elements are merged using a dedicated merging network.
  The calculations show that encodings based on our merging networks use less number of
  variables and clauses, when compared to the classic 2-column construction.
  In addition, we investigate the influence of our encodings on the execution times of SAT-solvers
  to be sure that the new algorithms can be used in practice.
  We show that generalized comparator networks are superior to standard selection
  networks previously proposed in the literature, in the context
  of translating cardinality constraints into propositional formulas. We also conclude that
  although encodings based on pairwise approach use less number of variables than odd-even encodings,
  in practice, it is the latter that achieve better reduction in SAT-solving runtime.
  It is a helpful observation, because from the practical point of view, implementing odd-even
  networks is a little bit easier.

  \item In Chapter \ref{ch:pos18} we explore the possibility of using our algorithms in encoding Pseudo-Boolean constraints, which
  are more expressive than simple cardinality constraints. We describe the system for solving PB-problems
  based on the popular \textsc{MiniSat+} solver by E{\'e}n and S\"orensson \cite{minisatp}. 
  Recent research have favored the approach that uses Binary Decision Diagrams (BDDs),
  which is evidenced by several new constructions and optimizations \cite{abio2012,sakai2015}.
  We show that encodings based on comparator networks can still be very competitive.
  We have extended \textsc{MiniSat+} by adding a construction of selection network called 4-Way Merge Selection Network
  (an improved version of $4$-Odd-Even Selection Network from Chapter \ref{ch:cp18}),
  with a few optimizations based on other solvers. In Chapter \ref{ch:cp18} we show a top-down, divide-and-conquer
  algorithm for constructing 4-Odd-Even Selection Network. The difference in our new implementation is that
  we build our network in a bottom-up manner, which results in the easier and cleaner implementation.
  Experiments show that on many instances of popular benchmarks our technique outperforms other state-of-the-art PB-solvers.

  \item Finally, in the last chapter we summarize the results presented in this thesis and show the possibilities for future work.
\end{itemize}

The contributions mentioned above are based on several scientific papers which are mostly the joint work of the author and the supervisor
(\cite{karpinski2018encoding,karpinski2015smaller,karpinski2017encoding,karpinski2018multiway,pos18}).
Note that we do not explore the topic of SAT computation itself. 
Although the use of constantly improving SAT-solvers is an inseparable part
of our work, we focus solely on translating constraints into CNFs. The advantage of such approach is that our techniques
are not bound by the workings of a specific solver. As new, faster SAT-solvers are being produced, we can just swap
the one we use in order to get better results.

We would like to point out that our constructions are not optimal in the sense of computational complexity.
Our algorithms are based on classical sorting networks which use $O(n\log^2 n)$ comparators. From the point of view
of the $O$ notation, we do not expect to breach this barrier. We are aware of an optimal sorting network
which uses only $O(n\log n)$ comparators by the celebrated result of Ajtai, Koml{\'o}s and Szemer{\'e}di \cite{ajtai19830},
but the constants hidden in $O(n\log n)$ are so large, that CNF encodings based on such networks would never
be used in practice -- and this is a very important point for the Constraint Programming community. There has been many improvements
to the original $O(n\log n)$ sorting network \cite{paterson1990improved,pippenger1991selection,goodrich2014zig},
but at the time of writing of this thesis none is yet applicable to encoding of constraints.

\chapter{Preliminaries}\label{ch:pre}

    \def\nqueenssolution{Qd4, Qe2}
    \setchessboard{smallboard,labelleft=false,labelbottom=false,showmover=false,setpieces=\nqueenssolution}

    \begin{tikzpicture}[remember picture,overlay]
      \node[anchor=east,inner sep=0pt] at (current page text area.east|-0,3cm) {\chessboard};
    \end{tikzpicture}

  In this chapter we introduce definitions and notations used in the rest of the thesis.
  We take a special care in introducing comparator networks, as they are the central mechanism
  in all our encodings. To this end we present several different approaches to define comparator networks and
  explain the choices of comparator models we make for presenting our main results. As an example, we show
  different ways to construct a classic odd-even sorting network by Batcher \cite{batcher1968sorting}. Next, we present how
  to encode a comparator network into a set of clauses and how to make encodings of sorting (selection)
  networks enforce cardinality constraints. Then, we extend the model of comparator network so that
  the atomic operation does not handle only two inputs, but any fixed number of inputs. This way we
  construct networks which we call {\em Generalized Selection Networks} (GSNs). Finally, we show
  that any encoding of cardinality constraints based on GSNs preserves arc-consistency.

  \section{The Basics}

  Let $X$ denote a totally ordered set, for example the set of natural numbers $\nat$ or the set
  of binary values $\{0, 1\}$. We introduce the auxiliary "smallest" element
  $\bot \notin X$, such that for all $x \in X$ we have $\bot < x$.
  Thus, $X \cup \{\bot\}$ is totally ordered. The element $\bot$ is used in the later chapters
  to simplify presentation of algorithms.

  \begin{definition}[sequences]
  A sequence of length $n$, say $\bar{x} = \tuple{x_1, \dots, x_n}$, is an element of
  $X^n$. In particular, an element of $\{0, 1\}^n$ is called a {\em binary} sequence.
  Length of a sequence is denoted by $|\bar{x}|$.
  We say that a sequence $\bar{x} \in X^n$ is {\em sorted} if $x_i \geq x_{i+1}$, $1 \le i <
  n$. Given two sequences $\bar{x} = \tuple{x_1, \dots, x_n}$ and $\bar{y} = \tuple{y_1,
  \dots, y_m}$ we define several operations and notations:

  \begin{itemize}
    \item {\em concatenation} as $\bar{x} :: \bar{y} = \tuple{x_1, \dots, x_n, y_1, \dots, y_m}$,
    \item {\em domination} relation: $\bar{x} \succeq \bar{y} \Iff
          \forall_{i \in \{1,\dots,n\}} \forall_{j \in \{1,\dots,m\}} \; x_i \geq y_j$,
    \item {\em weak domination} relation (if $n=m$): $\bar{x} \succeq_{w} \bar{y} \Iff \forall_{i \in \{1,\dots,n\}} x_i \geq y_i$,
    \item $\bar{x}_{\odd} = \tuple{x_1, x_3, \dots}$, $\bar{x}_{\even} = \tuple{x_2, x_4, \dots}$,
    \item $\bar{x}_{a,\dots,b} = \tuple{x_a, \dots, x_b}$, $1 \le a \leq b \le n$,
    \item $\bar{x}_{\leftt} = \bar{x}_{1,\dots,\floor{n/2}}$, $\bar{x}_{\rightt} = \bar{x}_{\floor{n/2} + 1,\dots,n}$,
    \item {\em prefix/suffix} operators: $\pref(i,\bar{x}) ~=~ \bar{x}_{1,\dots,i}$ and
          $\suff(i,\bar{x}) ~=~ \bar{x}_{i,\dots,n}$, $1 \leq i \leq n$,
    \item the number of occurrences of a given value $b$ in $\bar{x}$ is denoted by $|\bar{x}|_b$,
    \item and the result of removing all occurrences of $b$ in $\bar{x}$ is written as $\remove(b,\bar{x})$.
  \end{itemize}
  \end{definition}

  \begin{definition}[top k sorted sequence]
    A sequence $\bar{x} \in X^n$ is top $k$ sorted, with $k \leq n$, if $\tuple{x_1,\dots,x_k}$ is sorted and
    $\tuple{x_1,\dots,x_k} \succeq \tuple{x_{k+1},\dots,x_n}$.
  \end{definition}

  \begin{definition}[bitonic sequence]
    A sequence $\bar{x} \in X^n$ is a bitonic sequence if $x_1 \leq \ldots \leq x_i \geq \ldots \geq x_n$ 
    for some $i$, where $1 \leq i \leq n$, or a circular shift of such sequence. We distinguish a special case of a bitonic sequence:

    \begin{itemize}
      \item {\em v-shaped}, if $x_1 \geq \ldots \geq x_i \leq \ldots \leq x_n$
    \end{itemize}
    
    \noindent and among v-shaped sequences there are two special cases:
    
    \begin{itemize}
      \item {\em non-decreasing}, if $x_1 \leq \ldots \leq x_n$,
      \item {\em non-increasing}, if $x_1 \geq \ldots \geq x_n$.
    \end{itemize}
    \label{def:bit}
  \end{definition}

  \begin{definition}[zip operator]
  For $m \ge 1$ given sequences (column vectors) $\bar{x}^i = \tuple{x^i_1,\dots,
  x^i_{n_i}}$, $1 \le i \le m$ and $n_1 \ge n_2 \ge \dots \ge n_m$, let us define the
  $\zip$ operation that outputs the elements of the vectors in row-major order:
    \[
    \zip(\bar{x}^1, \dots, \bar{x}^m) = \begin{cases}
    \bar{x}^1 & \mbox{if~} m = 1 \\ 
    \zip(\bar{x}^1, \dots, \bar{x}^{m-1}) & \mbox{if~} |\bar{x}^m| = 0 \\ 
    \tuple{x^1_1, x^2_1, \dots, x^m_1} ::
    \zip(\bar{x}^1_{2,\dots, n_1}, \dots, \bar{x}^m_{2, \dots, n_m}) & \mbox{otherwise}
    \end{cases}
  \]
  \end{definition}

  \section{Comparator Networks}
  
  \begin{figure}[t!]
  \centering
  \subfloat[]{
    \includegraphics[width=0.25\textwidth]{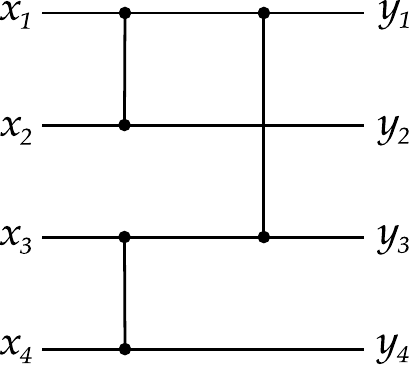}
  }
  ~~~~~~~~
  \subfloat[]{
    \centering
    \raisebox{20pt}{\includegraphics[width=0.3\textwidth]{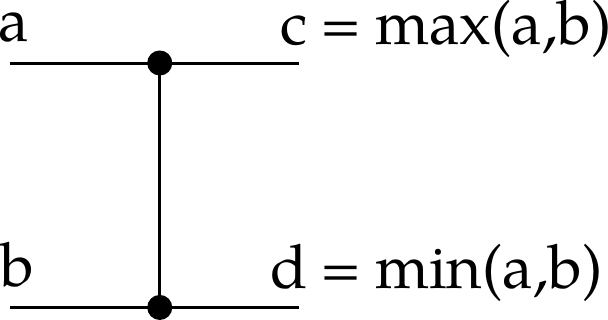}}
  }
  \caption{a) An example of comparator network; b) a single comparator}
  \label{fig:max}
  \end{figure}


  We construct and use comparator networks in this thesis. Traditionally comparator
  networks are presented as circuits that receive $n$ inputs and permute them using
  comparators (2-sorters) connected by "wires". Each comparator has two inputs and two
  outputs. The "upper" output is the maximum of inputs, and the "lower" one is the minimum.
  The standard definitions and properties of them can be found, for example, in
  \cite{knuth}. The only difference is that we assume that the output of any sorting
  operation or comparator is in a non-increasing order. We begin with presenting different ways
  to model comparator networks and explain strengths and weaknesses of such models.
  We use the network from Figure \ref{fig:max}a as a running example.

  \subsection{Functional Representation}

  In our first representation we model comparators as functions and comparator networks as a composition of
  comparators. This way networks can be represented in a clean, strict way. 

  \begin{definition}[comparator as a function]\label{def:comparatorf}
  Let $\bar{x} \in X^n$ and let $i,j \in \nat$, where $1\leq i < j \leq n$. A comparator is a function $c^n_{i,j}$ defined as:
  
  \[
    c^n_{i,j}(\bar{x})=\bar{y} \Iff y_i= \max\{x_i,x_j\} \wedge y_j= \min\{x_i,x_j\} \wedge \forall_{k \neq i,j} \; x_k=y_k
  \]
  \end{definition}

  \begin{examplebox}
  \begin{example}
    Notice that a comparator is defined as a function of the type $X^n \rightarrow X^n$, that is, it takes a sequence of length
    $n$ as an input and outputs the same sequence with at most one pair of elements swapped. For example, let $X=\{0,1\}$
    and $\bar{x}=\tuple{1,1,0,0,1}$. Then $c^5_{3,5}(\bar{x}) = \tuple{1,1,1,0,0}$.
  \end{example}
  \end{examplebox}

  \begin{definition}[comparator network as a composition of functions]
  We say that $f^n: X^n \rightarrow X^n$ is a comparator network of order $n$, if it can be represented as the composition of
  finite number of comparators, namely, $f^n=c^n_{i_1,j_1} \circ \cdots \circ c^n_{i_k,j_k}$.
  The number of comparators in a network is denoted by $|f^n|$.
  Comparator network of size $0$ is denoted by $id^n$.
  \end{definition}

  \begin{examplebox}
  \begin{example}
    Figure \ref{fig:max}a is an example of a simple comparator network consisting of 3 comparators.
    It outputs the maximum from 4 inputs on the top horizontal line,
    namely, $y_1=\max\{x_1,x_2,x_3,x_4\}$. Using our notation, we can
    say that this comparator network is constructed by composing three comparators:
    $\max^4 = c_{1,3}^4 \circ c_{3,4}^4 \circ c_{1,2}^4$.
  \end{example}
  \end{examplebox}

  Most comparator networks which are build for sorting (selection) purposes are presented using divide-and-conquer
  paradigm. In particular, the classic sorting networks are variations of the merge-sort algorithm.
  In such setup, to present comparator networks as composition of functions,
  we require that all smaller networks created using recursive calls be of the same order as the resulting one.
  In other words, each of them has to have same number of inputs. In 
  order to achieve that goal, we define a {\em rewire} operation. Let $[n]=\{1,\dots,n\}$, for any $n \in \nat$.
  Rewire is a map $\rho: [m] \rightarrow [n]$ ($m<n$) that is a monotone injection. 
  We use $\rho^*$ as a map from order $m$ comparator network to order $n$ comparator network,
  applying $\rho$ to each comparator in the network, that is,
  $\rho^*(c^m_{i_1,j_1} \circ \cdots \circ c^m_{i_k,j_k})=c^n_{\rho(i_1),\rho(j_1)} \circ \cdots \circ c^n_{\rho(i_k),\rho(j_k)}$.
  Some useful rewirings are presented below (rewirings of type $[n] \rightarrow [2n]$).
  Examples of how they work are shown in Figure \ref{fig:rewire}.

  \[
    left(f^n) = \rho^*_1(f^n) \quad \text{(where $\rho_1(i)=i$)}, \quad\quad right(f^n)= \rho^*_2(f^n) \quad \text{(where $\rho_2(i)=n+i$)}
  \]
  \[
    odd(f^n) = \rho^*_3(f^n) \quad \text{(where $\rho_3(i)=2i-1$)}, \quad\quad even(f^n) = \rho^*_4(f^n) \quad \text{(where $\rho_4(i)=2i$)}
  \]

  \begin{figure}[t!]
    \centering
    \subfloat[$left(max^4)$\label{fig:rewire_left}]{
      \includegraphics[width=0.3\textwidth]{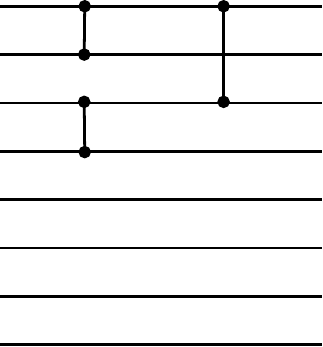}
    }
    ~
    \subfloat[$odd(max^4)$\label{fig:rewire_odd}]{
      \includegraphics[width=0.3\textwidth]{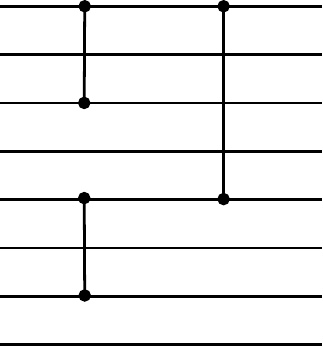}
    }
    ~
    \subfloat[$even(max^4)$\label{fig:rewire_even}]{
      \includegraphics[width=0.3\textwidth]{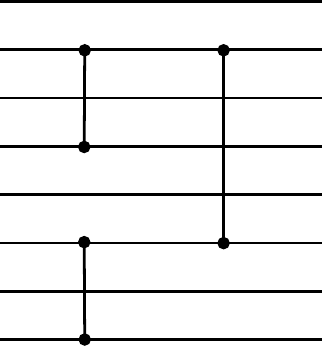}
    }
    \caption{Rewirings of comparator network $max^4$ to order $8$ comparator networks}
    \label{fig:rewire}
  \end{figure}

  \begin{definition}[k-selection network]
    A comparator network $sel^n_k$ is a {\em k-selection network} (of order $n$), if for each $\bar{x} \in X^n$, $sel^n_k(\bar{x})$ is top $k$ sorted.
  \end{definition}

  \noindent Notice that by the definition of a selection network, $sort^n = sel^n_n$ is a {\em sorting network} (of order $n$), that is,
  for each $\bar{x} \in X^n$, $sort^n(\bar{x})$ is sorted.

  \begin{examplebox}
  \begin{example}\label{ex:oe_func}
    Let us try to recreate the odd-even sorting network by Batcher \cite{batcher1968sorting} using the functional representation.
    To simplify the presentation we assume that $n$ is a power of $2$ and $X=\{0,1\}$.

    Construction of the {\em odd-even} sorting network uses the idea of merge-sort algorithm: in order to sort a list of $n$ elements, first
    partition the list into two lists (each of size $n/2$), recursively sort those lists, then merge the two sorted lists.
    Using functional representation it looks like this:

    \begin{align*}
      oe\_sort^1 &= id^1 \\
      oe\_sort^{n} &= merge^{n} \circ left(oe\_sort^{n/2}) \circ right(oe\_sort^{n/2})
    \end{align*}

    Network $merge^{n}$ merges two sorted sequences into one sorted sequence, that is: if $\bar{x}\in \bool^{n/2}$ and $\bar{y} \in \bool^{n/2}$
    are both sorted, then $merge^{n}(\bar{x}::\bar{y})$ is also sorted. Notice that when unfolding the recursion,
    the sorting network can be viewed as a composition of mergers, so we only need to specify how a single merger is constructed.
    In the {\em odd-even} approach, merger uses the idea of {\em balanced sequences}. Sequence $\bar{x} \in \bool^n$ is called balanced, if
    $\bar{x}_{\odd}$ and $\bar{x}_{\even}$ are sorted and $0 \leq |\bar{x}_{\odd}|_1 - |\bar{x}_{\even}|_1 \leq 2$. We define a
    balanced merger $bal\_merge^n$ that sorts a given balanced sequence. The balanced merger can be constructed in a straightforward way:

    \begin{align*}
      bal\_merge^2&=id^2 \\
      bal\_merge^n&=c^n_{2,3} \circ c^n_{4,5} \circ \cdots \circ c^n_{n-2,n-1}
    \end{align*}
    
    \noindent Finally, we get the merger for the odd-even sorting network:

    \begin{align*}
      oe\_merge^2 &= c^2_{1,2} \\
      oe\_merge^{2n} &= bal\_merge^{2n} \circ odd(oe\_merge^n) \circ even(oe\_merge^n)
    \end{align*}



    The number of comparators used in the odd-even sorting network is:

    \[
      |oe\_sort^n| = \frac{1}{4}n(\log n)(\log n - 1) + n - 1.
    \]

  \end{example}
  \end{examplebox}

  Functional representation allows for comparator networks to be presented in a formal, rigorous way.
  As seen in the example above, algorithms written in such form are very clean.
  One can also see the work of Codish and Zazon-Ivry \cite{zazonpairwise}
  to confirm that proofs of properties of such networks are -- in certain sense -- elegant. Unfortunately
  this is only true if the structure of the algorithm is simple, like in the odd-even sorting network.
  Networks presented in the later chapters are more complex and therefore need a more practical representation.

  \subsection{Declarative Representation}

  Another approach is to view comparators as relations on their inputs and outputs.
  Figure \ref{fig:max}b depicts a single comparator with inputs $a$ and $b$, and outputs $c$ and $d$.

  

  \begin{definition}[comparator as a relation]
  Let $a,b,c,d \in X$. A comparator is a relation defined as:
  
  \[
    \mathit{comp}(\tuple{a,b},\tuple{c,d}) \Iff c = \max(a, b) \wedge d = \min(a,b)
  \]
  \end{definition}

  \begin{definition}[comparator network as a relation]\label{def:net:decl}
  Let $\bar{x} \in X^n$, $\bar{y} \in X^m$ and $\bar{z} \in X^p$. We say that a relation $\mathit{net}(\bar{x},\bar{y},\bar{z})$ is a comparator network,
  if it is the conjunction of a finite number of comparators, namely,
  $\mathit{net}(\bar{x},\bar{y},\bar{z}) = \mathit{comp}(\bar{a}_1,\bar{b}_1) \wedge \dots \wedge \mathit{comp}(\bar{a}_k,\bar{b}_k)$,
  where elements from $\bar{x}$ appear exactly once in $\bar{a}_1 :: \dots :: \bar{a}_k :: \bar{y}$, elements from $\bar{y}$ appear exactly once in
  $\bar{b}_1 :: \dots :: \bar{b}_k :: \bar{x}$, and elements from $\bar{z}$ appear exactly once in $\bar{b}_1 :: \dots :: \bar{b}_k$ and at most once
  in $\bar{a}_1 :: \dots :: \bar{a}_k$.
  \end{definition}

  In the definition above, for readability we split the sequence of parameters into three groups: $\bar{x}$ are inputs of the network,
  $\bar{y}$ are outputs, and $\bar{z}$ are auxiliary elements, which can appear once as an input in some comparator and once as an output in some other comparator. 

  \begin{examplebox}
  \begin{example}
    In the declarative representation, a comparator network from Figure \ref{fig:max}a can be written as:

    \begin{align*}
      max&(\tuple{x_1,x_2,x_3,x_4}, \tuple{y_1,y_2,y_3,y_4}, \tuple{z_1,z_2}) = \\
      &= \mathit{comp}(\tuple{x_1,x_2},\tuple{z_1,y_2}) \wedge \mathit{comp}(\tuple{x_3,x_4},\tuple{z_2,y_4})
      \wedge \mathit{comp}(\tuple{z_1,z_2},\tuple{y_1,y_3})
    \end{align*} 
  \end{example}
  \end{examplebox}

  Definition \ref{def:net:decl} implies that comparator networks can be written in terms of propositional logic, where the only non-trivial
  component is a comparator. We can see the advantage over functional approach -- we do not need rewiring operations to specify networks.

  \begin{tcolorbox}[colback=white,colframe=black,sharp corners,enhanced jigsaw,breakable,break at=7cm/10cm]
  \begin{example}
    Let us again see the construction of the odd-even sorting network, this time using declarative representation. The entire
    algorithm can be written as a relation $oe\_sort$ described below. To simplify the presentation we assume that $n$
    is a power of $2$.

    \begin{align*}
      oe\_sort&(\tuple{x},\tuple{x}, \tuple{}), \\
      oe\_sort&(\tuple{x_1,\dots,x_{2n}},\tuple{y_1,\dots,y_{2n}},\tuple{z_1,\dots,z_{2n}} :: S_1 :: S_2 :: M) \; = \\
        &oe\_sort(\tuple{x_1,\dots,x_{n}},\tuple{z_1,\dots,z_{n}}, S_1) \; \wedge \\
        &oe\_sort(\tuple{x_{n+1},\dots,x_{2n}},\tuple{z_{n+1},\dots,z_{2n}}, S_2) \; \wedge \\
        &oe\_merge(\tuple{z_1,\dots,z_{n},z_{n+1},\dots,z_{2n}},\tuple{y_1,\dots,y_{2n}}, M).
    \end{align*}

    \begin{align*}
      oe\_merge&(\tuple{x_1,x_2},\tuple{y_1,y_2}, \tuple{}) = \mathit{comp}(\tuple{x_1,x_2},\tuple{y_1,y_2}), \\
      oe\_merge&(\tuple{x_1,\dots,x_{2n}},\tuple{y_1,\dots,y_{2n}},\tuple{z_1,z'_1,\dots,z_{n},z'_n} :: M_1 :: M_2) \; = \\
        &oe\_merge(\tuple{x_1,x_3,\dots,x_{2n-1}},\tuple{z_1,\dots,z_{n}}, M_1) \; \wedge \\
        &oe\_merge(\tuple{x_{2},x_4,\dots,x_{2n}},\tuple{z'_{1},\dots,z'_{n}}, M_2) \; \wedge \\
        &bal\_merge(\tuple{z_1,z'_1,\dots,z_{n},z'_{n}}, \tuple{y_1,\dots,y_{2n}}).
    \end{align*}

    \begin{align*}
      bal\_merge&(\tuple{x_1,x_{2}}, \tuple{x_1,x_{2}}), \\
      bal\_merge&(\tuple{x_1,\dots,x_{n}}, \tuple{x_1} :: \tuple{y_2,\dots,y_{n-1}} :: \tuple{x_{n}}) \; = \\
        &\mathit{comp}(\tuple{x_2,x_3},\tuple{y_2,y_3}) \wedge \dots \wedge \mathit{comp}(\tuple{x_{n-2},x_{n-1}},\tuple{y_{n-2},y_{n-1}}).
    \end{align*}

    The semantics of $sort(\bar{x},\bar{y},\bar{z})$ is that $\bar{y}$ is sorted and is a permutation of $\bar{x}$,
    while $\bar{z}$ is a sequence of auxiliary elements.
  \end{example}
  \end{tcolorbox}

  Networks given so far have been presented in a formal, rigorous way, but as we will see in further chapters,
  it is easier to reason about and prove properties of comparator networks when presented as pseudo-code or
  a list of procedures. We remark that it is possible to restate all the algorithms presented in this thesis
  in the form of composition of functions or as relations, but this would vastly complicate most of the proofs, which are
  already sufficiently formal.

  \subsection{Procedural Representation}

  Although functional and declarative approaches has been used in the context of encoding cardinality constraints
  (functional in \cite{zazonpairwise}, declarative in \cite{asin2009cardinality}), they are not very practical,
  as modern solvers are written in general-purpose, imperative, object-oriented programming languages like Java or C++.
  In order to make the implementation of the algorithms presented in this thesis easier (and to simplify the proofs),
  we propose the procedural representation of comparator networks \cite{karpinski2017encoding,karpinski2018multiway,pos18},
  in which a network is an algorithm written in pseudo-code with a restricted set of operations.
  What we want, for a given network, is a procedure that generates its declarative representation, i.e.,
  the procedure should define a set of comparators. At the same time we want to treat such procedures as oblivious sorting
  algorithms, so that we can easily prove their correctness. We explain our approach with the running example.

  \begin{algorithm}[ht!]
    \caption{$max^4$}\label{ex:net:max}
    \begin{algorithmic}[1]
      \Require {$\tuple{x_1,x_2,x_3,x_4} \in X^4$}
      \Ensure{The output is top 1 sorted and is a permutation of the inputs}
      \State $\bar{z} \gets sort^2(x_1,x_2)$
      \State $\bar{z'} \gets sort^2(x_3,x_4)$
      \State \Return $sort^2(z_1,z'_1) :: \tuple{z_2,z'_2}$
    \end{algorithmic}
  \end{algorithm}

  \begin{examplebox}
  \begin{example}
    A comparator network from Figure \ref{fig:max}a can be expressed as pseudo-code like in Algorithm \ref{ex:net:max}.
    The algorithm showcases several key properties of our representation:

    \begin{itemize}
      \item The only allowed operation that compares elements is $sort^2$, which puts two given elements in non-increasing order.
      \item New sequences can be defined and assigned values, for example, as a result of $sort^2$ or a sub-procedure.
      \item The algorithm returns a sequence that is a permutation of the input sequence in order to highlight the fact
        that it represents a comparator network. For example, in Algorithm \ref{ex:net:max} we concatenate $\tuple{z_2,z'_2}$
        to the returned sequence.
    \end{itemize}
  \end{example}
  \end{examplebox}

  \begin{algorithm}[ht!]
    \caption{$oe\_sort^{n}$}\label{ex:net:oddeven}
    \begin{algorithmic}[1]
      \Require {$\bar{x} \in \bool^{n}$; $n$ is a power of $2$}
      \Ensure{The output is sorted and is a permutation of the inputs}
      \If {$n=1$}
        \Return $\bar{x}$
      \EndIf
      \State $\bar{y} \gets oe\_sort^{n/2}(\bar{x}_{\leftt})$
      \State $\bar{y'} \gets oe\_sort^{n/2}(\bar{x}_{\rightt})$
      \State \Return $oe\_merge^n(\bar{y},\bar{y}')$
    \end{algorithmic}
  \end{algorithm}

  \begin{algorithm}[ht!]
    \caption{$oe\_merge^{n}$}\label{ex:net:oemerge}
    \begin{algorithmic}[1]
      \Require {$\bar{x}^1,\bar{x}^2 \in \bool^{n/2}$; $\bar{x}^1$ and $\bar{x}^2$ are sorted;
        $n$ is a power of $2$}
      \Ensure{The output is sorted and is a permutation of the inputs}
      \If {$n=2$}
        \Return $sort^2(x^1_1,x^2_1)$
      \EndIf
      \State $\bar{y}  \gets oe\_merge^{n/2}(\bar{x}^1_{\odd},\bar{x}^2_{\odd})$
      \State $\bar{y}' \gets oe\_merge^{n/2}(\bar{x}^1_{\even},\bar{x}^2_{\even})$
      \State $z_1 \gets y_1$; $z_n \gets y'_n$
      \ForAll {$i \in \{1, \dots, n/2-1\}$}
        $\tuple{z_{2i},z_{2i+1}} \gets sort^2(y'_{i},y_{i+1})$
      \EndFor
      \State \Return $\bar{z}$
    \end{algorithmic}
  \end{algorithm}

  \begin{examplebox}
  \begin{example}
    The odd-even sorting network is presented in Algorithm \ref{ex:net:oddeven}.
    It uses Algorithm \ref{ex:net:oemerge} -- the merger -- as a sub-procedure.
    From this example we can also see that the code convention allows for loops, conditional statements and recursive calls,
    but they cannot depend on the results of comparisons between elements in sequences.
  \end{example}
  \end{examplebox}

  \section{Encoding Cardinality Constraints}

  What we want to achieve using comparator networks, is to produce a clausal encoding for a given
  cardinality constraint. Notice that a cardinality constraint in Definition \ref{def:cl_encoding}
  is defined in terms of "$\leq$" relation. We now show that it is in fact the only type of relation we need to be concerned about.

  \begin{observation}\label{obs:card}
  Let $k,n \in \nat$ where $k \leq n$, and let $\tuple{x_1,\dots,x_n}$ be a sequence of Boolean literals. Then:

  \begin{enumerate}
    \item $x_1 + x_2 + \dots + x_n \geq k$ is eqivalent to $\neg x_1 + \neg x_2 + \dots + \neg x_n \leq n-k$,
    \item $x_1 + x_2 + \dots + x_n > k$ is eqivalent to $\neg x_1 + \neg x_2 + \dots + \neg x_n \leq n-k-1$,
    \item $x_1 + x_2 + \dots + x_n < k$ is equvalent to $x_1 + x_2 + \dots + x_n \leq k-1$,
    \item $x_1 + x_2 + \dots + x_n = k$ is equvalent to $x_1 + x_2 + \dots + x_n \leq k$
      and $\neg x_1 + \neg x_2 + \dots + \neg x_n \leq n-k$.
  \end{enumerate}
  \end{observation}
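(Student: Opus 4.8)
The plan is to reduce all four equivalences to one arithmetic identity together with the integrality of the sums involved. Fix an arbitrary complete assignment $I$ and let $s = I(x_1) + \cdots + I(x_n)$ denote the number of the literals $x_1,\dots,x_n$ that $I$ sets to $1$. Since every literal takes a value in $\bool$ and $I(\neg x_i) = 1 - I(x_i)$, summing gives the complement identity $I(\neg x_1) + \cdots + I(\neg x_n) = n - s$. This is the only property of literals the proof uses; the rest is manipulation of the integer $s$, and each claimed equivalence will be established by showing that both sides hold under $I$ exactly when the same arithmetic condition on $s$ holds.

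First I would dispatch parts (1) and (3) directly. For (1), $x_1 + \cdots + x_n \geq k$ holds under $I$ iff $s \geq k$, which by subtracting from $n$ and reversing the inequality is equivalent to $n - s \leq n - k$, i.e. to $\neg x_1 + \cdots + \neg x_n \leq n - k$ via the complement identity. For (3), because $s \in \nat$, the strict inequality $s < k$ holds iff $s \leq k-1$, which is precisely the asserted form. Parts (2) and (4) then follow by reusing these: for (2), integrality gives $s > k \iff s \geq k+1$, and applying part (1) with threshold $k+1$ rewrites this as $\neg x_1 + \cdots + \neg x_n \leq n-(k+1) = n-k-1$; for (4), $s = k$ iff $s \leq k$ and $s \geq k$, where the first conjunct already has the desired shape and part (1) converts $s \geq k$ into $\neg x_1 + \cdots + \neg x_n \leq n-k$.

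There is no genuine difficulty here, as the statement is elementary; the only places deserving care are the two uses of integrality — converting the strict relations $>$ and $<$ into $\geq k+1$ and $\leq k-1$ — and the reliance on the $\{0,1\}$ range of literals for the complement identity. It is worth flagging that both steps are specific to Boolean literals and integer thresholds (consistent with the hypothesis $k \leq n$), so the observation would not survive verbatim over an arbitrary ordered domain $X$.
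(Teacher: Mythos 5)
Your proof is correct, and the paper in fact states this as an unproved observation, so your elementary argument via the complement identity $\sum I(\neg x_i)=n-s$ and integrality of $s$ is exactly the intended justification. Nothing is missing.
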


  \noindent As we can see, in every case we can reduce the cardinality constraint to the equivalent
  one which uses the "$\leq$" relation. In case of equality relation this produces two cardinality constraints,
  but then we handle them (encode them) separately. Therefore from now on, when we mention a cardinality constraint,
  we mean the one in the form:

  \[
    x_1 \;\; + \;\; x_2 \;\; + \;\; \dots \;\; +  \;\; x_n  \;\; \leq \;\; k.
  \]

  We now describe how to translate a cardinality constraint into equisatisfiable set of clauses.
  We begin with an encoding of a single comparator. If we use the notation as in Figure \ref{fig:max}b,
  then the maximum on the upper output is translated into a disjunction of the inputs ($a \vee b$) and the
  minimum on the lower output is translated into a conjunction of the inputs ($a \wedge b$). Therefore a single
  comparator with inputs $\tuple{a,b}$ and outputs $\tuple{c,d}$ can be encoded using the formula:

  \[
    (c \iff a \vee b ) \wedge (d \iff a \wedge b),
  \]

  \noindent which is equivalent to the following six clauses:

  \begin{center}
  \begin{tabular}{c@{\hskip 1in}c@{\hskip 1in}c}
  {$\!\begin{aligned}
    a &\Rightarrow c \\
    b &\Rightarrow c
  \end{aligned}$}
  &
  {$\!\begin{aligned}
    a &\wedge b \Rightarrow d \\
    c &\Rightarrow a \vee b
  \end{aligned}$}
  &
  {$\!\begin{aligned}
    d &\Rightarrow a \\
    d &\Rightarrow b
  \end{aligned}$}
  \end{tabular}
  \end{center}

  Encoding consists of translating every comparator into a set of clauses.
  Thus different representations of a comparator network in the previous section can be viewed as
  different procedures that outputs the same set of comparators. Since the cardinality constraints are over Boolean domain,
  we are using comparator networks in the context of Boolean formulas, therefore
  we limit the domain of the inputs to 0-1 values.

  The clausal encoding of cardinality constraints is defined as follows. First, we build
  a sorting network with inputs $\tuple{x_1,\ldots,x_n}$ and outputs $\tuple{y_1,\ldots,y_n}$.
  Since it is a sorting network, if exactly $k$ inputs are set to $1$,
  we will have $\tuple{y_1,\ldots,y_k}$ set to $1$ and $\tuple{y_{k+1},\ldots,y_n}$ set to $0$. Therefore, in order to enforce
  the constraint $x_1 + x_2 + \dots + x_n \leq k$, we need to do two things: translate each comparator into a set of clauses, and 
  add a unit clause $\neg y_{k+1}$, which virtually sets $y_{k+1}$ to $0$. This way the resulting CNF is satisfiable if and only if
  at most $k$ input variables are set to $1$.

  Two improvements to this basic encoding can be made. Notice that we do not need to
  sort the entire input sequence, but only the first $k+1$ largest elements. Therefore rather than
  using sorting networks, we can use {\em selection networks}.

  \begin{figure}[ht!]
    \centering
    \subfloat[]{
      \includegraphics[width=0.45\textwidth]{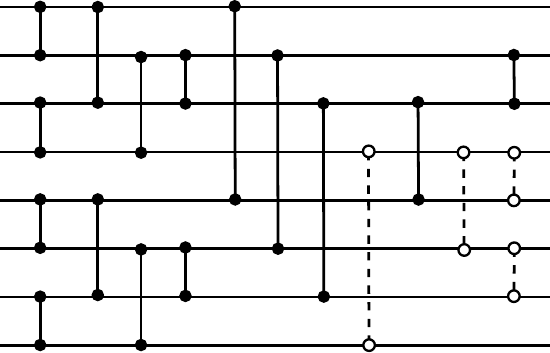}
    }
    ~
    \subfloat[]{
      \includegraphics[width=0.45\textwidth]{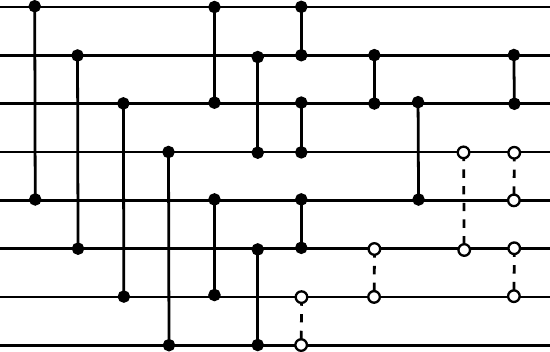}
    }
    \caption{a) Odd-Even Selection Network; b) Pairwise Selection Network; $n=8$, $k=3$}
    \label{fig:oe-vs-pw}
  \end{figure}

  \begin{examplebox}
  \begin{example}
    The first selection network used in the context of encoding cardinality constraints is called {\em Pairwise Selection Network} by
    Codish and Zazon-Ivry \cite{zazonpairwise}. They construct selection networks recursively, just like sorting networks: in order
    to get top $k$ sorted sequence, first we split input in half and get top $k$ sorted sequences for both parts,
    then we merge the results. In reality, the only thing we are actually doing is cutting out "unnecessary"
    comparators from sorting networks. In Figure \ref{fig:oe-vs-pw} we present odd-even and pairwise selection networks
    for $n=8$ and $k=3$. Removed comparators from sorting networks are marked with dashed lines. Notice that
    with pairwise approach we save 5 comparators, where with odd-even it is only 4.
  \end{example}
  \end{examplebox}
  
  Another improvement lies in the encoding of a single comparator. In the encoding of $x_1 + x_2 + \dots + x_n \leq k$
  one can use 3 out of 6 clauses to encode a single comparator:

  \begin{equation}\label{eq:3c}
    a \Rightarrow c, \quad b \Rightarrow c, \quad a \wedge b \Rightarrow d.
  \end{equation}
 
  \noindent Notice that this encoding only guarantees that if at least $i \in \{0,1,2\}$ inputs in a single comparator are set to 1,
  then at least $i$ top outputs must be set to 1. This generalizes to the entire network, that is, if at least $k$ inputs of a selection network are set to 1, then
  at least $k$ top outputs must be set to $1$. Although this set of clauses are not equivalent to the comparator
  of Definition \ref{def:comparatorf}, it is enough to be used in the encoding of $x_1 + x_2 + \dots + x_n \leq k$
  while still maintaining the arc-consistency (Definition \ref{def:gac}). Thanks to that, we can use half as many clauses to
  encode a network, which significantly reduces the size of an instance passed to a SAT-solver.

  \subsection{Generalized Selection Networks}\label{sec:standard}

  The encoding of cardinality constraints using comparator networks has been known for some time now.
  In \cite{abio2013parametric,asin2011cardinality,codish2010pairwise,minisatp} authors are using sorting (selection) networks to encode
  cardinality constraints in the same way as we: inputs and outputs of a comparator are Boolean variables
  and comparators are encoded as a CNF formula. In addition, the $(k+1)$-th greatest output
  variable $y_{k+1}$ of the network is forced to be $0$ by adding $\neg y_{k+1}$ as a clause to the
  formula that encodes $x_1 + \dots + x_n \leq k$. The novelty in most of our constructions is that
  rather than using simple comparators (2-sorters), we also use comparators of higher order
  as building blocks ($m$-sorters, for $m\geq 2$). Since a sorter is just a special case of a selector,
  so we only need one definition. For a selector we want to output top $m$ sorted elements from the inputs.
  The following definition captures this notion.

  \begin{definition}[m-selector of order n]\label{def:msel}
    Let $n,m \in \nat$, where $m \leq n$. Let $\bar{x}=\tuple{x_1,\dots,x_n}$ and $\bar{y}=\tuple{y_1,\dots,y_m}$
    be sequences of Boolean variables. The Boolean formula $s^n_m(\bar{x},\bar{y})$ which consists of the set of clauses
    $\{x_{i_1} \wedge \dots \wedge x_{i_p} \Rightarrow y_p \, : \, 1 \leq p \leq m, 1 \leq i_1 < \dots < i_p \leq n\}$
    is an {\em $m$-selector} of order $n$.
  \end{definition}

  Notice that we are identifying a selector with its clausal encoding. We do this often in this thesis, in non-ambiguous
  context. Observe that a $2$-selector of order $2$ gives the same set of clauses as in Eq. \ref{eq:3c}. In fact, Definition \ref{def:msel}
  is a natural generalization of the encoding of a single comparator. 

  \begin{examplebox}
  \begin{example}\label{ex:a}
    We would like to encode the network that selects maximum out of four inputs into a set of clauses.
    We can use the network from Figure \ref{fig:max}a to do it. If we name the input variables of
    the longer comparator as $\{z_1,z_2\}$, then the entire network can be encoded by encoding each 2-sorter separately.
    This produces the clause set $\{x_1 \impl z_1, x_2 \impl z_1, x_1 \wedge x_2 \impl y_2\} \cup
    \{x_3 \impl z_2, x_4 \impl z_2, x_3 \wedge x_4 \impl y_4\} \cup
    \{z_1 \impl y_1, z_2 \impl y_1, z_1 \wedge z_2 \impl y_3\}$. This approach uses 6 auxiliary
    variables (not counting $x_i$'s) and 9 clauses. Another way to encode the same network is to
    simply use a single $1$-selector of order $4$. This gives the clause set
    $\{x_1 \impl y_1, x_2 \impl y_1, x_3 \impl y_1, x_4 \impl y_1\}$, where we only need 1 additional
    variable and 4 clauses. Notice that to achieve $y_1=\max\{x_1,x_2,x_3,x_4\}$
    we are only interested in the value of the top output variable,
    therefore we do not need to assert other output variables.
  \end{example}
  \end{examplebox}

  Informally, {\em Generalized Selection Networks} (GSNs) are selection networks
  that use selectors as building blocks. For example, in declarative representation (Definition \ref{def:net:decl})
  one would substitute the relation $\mathit{comp}(\tuple{x_1,x_2},\tuple{z_1,y_2})$
  for a more general $\mathit{sel}(\tuple{x_1,\dots,x_n},\tuple{y_1,\dots,y_m})$ which
  is true if and only if $\tuple{y_1,\dots,y_m}$ is sorted and contains $m$ largest elements from $\tuple{x_1,\dots,x_n}$.
  In procedural representation we generalize $sort^2$ operation to $sel^n_m$ which outputs $m$ sorted, largest elements from $n$ inputs.
  In this context, $sort^n$ is the same operation as $sel^n_n$. In the end, those modifications are the means to obtain a set of selectors (for a given network),
  which then we encode as in Definition \ref{def:msel}.

  Encodings of cardinality constraints using (generalized) selection networks where each selector is encoded
  as described in Definition \ref{def:msel} and additional clause $\neg y_{k+1}$ is added are said to be
  encoded in a {\em standard} way.


  \subsection{Arc-Consistency of The Standard Encoding} 

  We formally define the notion of arc-consistency. A partial assignment $\sigma$ is consistent with a CNF formula
  $\phi = \{C_1,\dots,C_m\}$, if for each $1 \leq i \leq m$, $\sigma(C_i)$ is either true or undefined.

  \begin{definition}[unit propagation]\label{def:up}
    Unit propagation (UP) is a process that extends a partial, consistent assignment $\sigma$ of some CNF formula $\phi$ into
    a partial assignment $\sigma'$ using the following rule repeatedly until reaching a fix point: if there
    exists a clause $(l \vee l_1 \vee \dots \vee l_k)$ in $\phi$ where $l$ is undefined and either $k=0$
    or $l_1,\dots,l_k$ are fixed to $0$, then extend $\sigma$ by fixing $\sigma(l)=1$.
  \end{definition}

  \begin{definition}[arc-consistency]\label{def:gac}
    Let $n,k \in \nat$ and let $\bar{x}=\{x_1,\dots,x_n\}$ be a set of propositional literals. Encoding $\phi$ of 
    the cardinality constraint $x_1 + x_2 + \dots + x_n \leq k$ is {\em arc-consistent} if the two following conditions hold:

    \begin{itemize}
      \item in any partial assignment consistent with $\phi$, at most $k$ propositional variables from $\bar{x}$ are assigned to $1$, and
      \item in any partial assignment consistent with $\phi$, if exactly $k$ variables from $\bar{x}$ are assigned to $1$, then all the other
      variables occurring in $\bar{x}$ must be assigned to $0$ by unit propagation.
    \end{itemize}
  \end{definition}
  
  We already mentioned in the previous chapter that arc-consistency property guarantees better propagation
  of values using unit propagation and that this has a positive impact on the practical efficiency of SAT-solvers.
  Here we show our first result of this thesis, that is, we prove that any encoding based on the standard
  encoding of GSNs preserves arc-consistency. Our proof is the generalization the proof of arc-consistency for selection networks
  \cite{karpinski2015smaller}. For the sake of the proof we give a precise, clausal definition
  of a GSN, which will help us formally prove the main theorem. We define networks as a sequence of {\em layers} and
  a layer as a sequence of selectors. But first, we prove two technical lemmas regarding selectors.
  We introduce the convention, that $\tuple{x_1,\ldots,x_n}$ will denote the input and $\tuple{y_1,\ldots,y_m}$ will denote
  the output of some order $n$ comparator network (or GSN). We would also like to view them as
  sequences of Boolean variables, that can be set to either true ($1$), false ($0$) or undefined.

  The following lemma shows how 0-1 values propagates through a single selector.

  \begin{lemma}\label{lma:pp}
    Let $m \leq n$. A single $m$-selector of order $n$, say $s^n_m$, with inputs $\tuple{x_1,\ldots,x_n}$
    and outputs $\tuple{y_1,\ldots,y_m}$ has the following {\em propagation properties}, for any partial assignment $\sigma$
    consistent with $s^n_m$:

    \begin{enumerate}
      \item If $k$ input variables ($1 \leq k \leq n$) are set to $1$ in $\sigma$, then UP sets all $y_1,\dots,y_{min(k,m)}$ to $1$.
      \item If $y_k=0$ (for $1 \leq k \leq m$) and exactly $k-1$ input variables are set to $1$ in $\sigma$,
        then UP sets all the rest input variables to $0$.
    \end{enumerate}
  \end{lemma}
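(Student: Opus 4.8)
The plan is to prove each propagation property directly from the clausal definition of an $m$-selector (Definition \ref{def:msel}), using the unit propagation rule (Definition \ref{def:up}). Recall that the selector $s^n_m$ consists precisely of the clauses $x_{i_1} \wedge \dots \wedge x_{i_p} \Rightarrow y_p$ for all $1 \leq p \leq m$ and all index sets $1 \leq i_1 < \dots < i_p \leq n$. The entire argument is about tracing how these implications fire under UP, so the proof should be short and combinatorial rather than relying on the network structure.

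Let me think about what I'd actually do.

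For property 1: suppose exactly $k$ inputs are set to 1, say with indices $i_1 < \dots < i_k$. I want to show UP forces $y_1, \dots, y_{\min(k,m)}$ to 1. Fix any $p$ with $1 \le p \le \min(k,m)$. Among the selector's clauses there is the implication $x_{i_1} \wedge \dots \wedge x_{i_p} \Rightarrow y_p$ (this is a valid clause since $p \le m$ and the indices are distinct and increasing). Written as a disjunction, this clause is $\neg x_{i_1} \vee \dots \vee \neg x_{i_p} \vee y_p$. Since all of $x_{i_1}, \dots, x_{i_p}$ are fixed to 1, every literal except $y_p$ is fixed to 0, so UP fires and sets $y_p = 1$. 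Since $p$ was arbitrary in the range, all of $y_1, \dots, y_{\min(k,m)}$ get set.

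For property 2: suppose $y_k = 0$ for some $1 \le k \le m$, and exactly $k-1$ inputs are set to 1. Let $x_j$ be any currently-free input; I want UP to set it to 0. The key observation is that the free input $x_j$ together with the $k-1$ already-true inputs forms a set of $k$ distinct indices. So there is a selector clause of the form $x_{i_1} \wedge \dots \wedge x_{i_{k-1}} \wedge x_j \Rightarrow y_k$ (the $k-1$ true inputs plus $x_j$, reordered to be increasing — this is a legal clause since we're selecting $k \le m$ indices). As a disjunction this is $\neg x_{i_1} \vee \dots \vee \neg x_{i_{k-1}} \vee \neg x_j \vee y_k$. Now $y_k$ is fixed to 0, and all the $x_{i_t}$ are fixed to 1 (so their negations are 0), leaving $\neg x_j$ as the only undefined literal — UP fires and sets $\neg x_j = 1$, i.e. $x_j = 0$. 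Applying this to every free input finishes the argument.

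**The main obstacle** I anticipate is bookkeeping with the index ordering: the selector clauses are stated only for strictly increasing index tuples $i_1 < \dots < i_p$, so in property 2 I must be careful that inserting the free index $j$ among the $k-1$ true indices and re-sorting still yields a valid clause of the selector — which it does, because the definition ranges over \emph{all} increasing $p$-tuples for every $p \le m$, and the $k$ indices involved are distinct. A secondary subtlety is confirming that in property 1 the clause $x_{i_1} \wedge \dots \wedge x_{i_p} \Rightarrow y_p$ is genuinely present, which again follows since $p \le \min(k,m) \le m$. Beyond these indexing checks the proof is mechanical, so I would state each property's firing clause explicitly and invoke Definition \ref{def:up} once per case.
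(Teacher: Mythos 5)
Your proof is correct and follows essentially the same route as the paper's: for property 1 you exhibit, for each $p \le \min(k,m)$, the clause built from $p$ of the true inputs and let UP fire it, and for property 2 you combine each free input with the $k-1$ true ones to obtain a clause that forces it to $0$ given $y_k=0$. The only difference is that you explicitly note the increasing-index bookkeeping, which the paper leaves implicit.
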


  \begin{proof}
    Let $\pi \! : \! \{1,\dots,k\} \rightarrow \{1,\dots,n\}$ be
    a 1-1 function such that input variables $x_{\pi(1)},\dots,x_{\pi(k)}$ are set to $1$.
    The following clauses exist: $(x_{\pi(1)} \Rightarrow y_1)$, $(x_{\pi(1)} \wedge x_{\pi(2)} \Rightarrow y_2)$,
    $\dots$, $(x_{\pi(1)} \wedge \dots \wedge x_{\pi(\min(k,m))} \Rightarrow y_{\min(k,m)})$.
    Therefore UP will set all $y_1,\dots,y_{min(k,m)}$ to $1$.

    Now assume that $y_k=0$ and let $\pi \, : \, \{1,\dots,k-1\} \rightarrow \{1,\dots,n\}$ be a 1-1 function
    such that input variables $I = \{x_{\pi(1)},\dots,x_{\pi(k-1)}\}$ are set to $1$. For each input variable
    $u \not\in I$ there exist a clause $u \wedge x_{\pi(1)} \wedge \dots \wedge x_{\pi(k-1)} \Rightarrow y_k$.
    Therefore $u$ will be set to $0$ by UP.
  \end{proof}

  We say that a formula $\phi$ can be reduced to $\phi'$ by a partial assignment $\sigma$,
  if we do the following procedure repeatedly, until reaching the fix point:
  take clause $C$ (after applying assignment $\sigma$) of $\phi$. If it is satisfied, then remove it. If it is of the form
  $C = (0 \vee \psi)$, then exchange it to $C' = \psi$. Otherwise, do nothing. Notice that $\phi$ is satisfied iff $\phi'$ is satisfied.

  \begin{lemma}\label{lma:red_s}
    Let $n,m \in \nat$ and let $s^{n}_{m}$ be an $m$-selector of order $n$ with inputs $\tuple{x_1,\dots,x_{n}}$ and outputs $\tuple{y_1,\dots,y_{m}}$.
    Let $1 \leq i \leq n$. If we set $x_i=1$, then $s^{n}_{m}$ can be reduced to $s^{n-1}_{m-1}$ by the partial assignment used in unit propagation, where
    $s^{n-1}_{m-1}$ is an $(m-1)$-selector of order $n-1$.
  \end{lemma}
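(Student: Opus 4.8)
The plan is to track, clause by clause, what the reduction described just above does to the clause set of Definition~\ref{def:msel} when we set $x_i=1$ and then run unit propagation. Recall that $s^n_m$ is the set $\{x_{i_1}\wedge\dots\wedge x_{i_p}\impl y_p : 1\le p\le m,\ 1\le i_1<\dots<i_p\le n\}$, so each clause is indexed by an output position $p$ together with a $p$-element set of input indices. The first step is the propagation itself: the clause $x_i\impl y_1$ (the case $p=1$ with the single index set $\{i\}$) has $x_i$ as its only antecedent, so once $x_i=1$ it becomes the unit clause $y_1$ and unit propagation fixes $y_1=1$. No other clause can become unit from this alone, since for $p\ge 2$ a clause would need $p-1\ge 1$ further inputs fixed to $1$; hence the partial assignment produced by UP is exactly $\{x_i=1,\ y_1=1\}$.

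Next I would apply the reduction rules to every clause of $s^n_m$ under this assignment, splitting the clauses into three groups. First, every clause with $p=1$ has consequent $y_1$, so $y_1=1$ satisfies it and it is removed. Second, every clause with $p\ge 2$ whose index set contains $i$ has its false literal $\neg x_i$ stripped (the clause is of the form $(0\vee\psi)$), turning $x_{i_1}\wedge\dots\wedge x_{i_p}\impl y_p$ into a clause in which the remaining $p-1$ antecedents imply $y_p$. Third, every clause with $p\ge 2$ whose index set omits $i$ is untouched, keeping all $p$ antecedents. Relabelling the $n-1$ surviving inputs as $\tuple{x_1,\dots,x_{n-1}}$ and the outputs $y_2,\dots,y_m$ as $\tuple{y_1,\dots,y_{m-1}}$ (so $y_p$ becomes the $(p-1)$-st output), the second group becomes precisely $\{x_{j_1}\wedge\dots\wedge x_{j_q}\impl y_q : 1\le q\le m-1,\ 1\le j_1<\dots<j_q\le n-1\}$, that is, the clause set of $s^{n-1}_{m-1}$.

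It remains to deal with the third group, and this is the one step that needs care: after relabelling, each such clause is a conjunction of $q+1$ inputs implying $y_q$ (with $q=p-1$), so it carries one antecedent more than any clause of $s^{n-1}_{m-1}$. I would argue that every clause of this group is subsumed by a clause already produced from the second group: dropping any one antecedent yields a $q$-element conjunction implying the same $y_q$, which is a clause of $s^{n-1}_{m-1}$ whose literal set is a subset of the longer clause. Thus the reduced formula is the clause set of $s^{n-1}_{m-1}$ together with clauses that are each subsumed by (hence logically redundant given) a clause of $s^{n-1}_{m-1}$; the two formulas are therefore equivalent and, since a subsumed clause can never trigger unit propagation before the clause subsuming it does, they induce identical propagation. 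This is exactly the sense in which $s^n_m$ reduces to $s^{n-1}_{m-1}$, which is all that the subsequent arc-consistency argument uses. I expect the bookkeeping of the three groups to be routine; the only genuine subtlety is recognising that the "omits $i$" clauses survive the reduction and justifying that they are harmless via subsumption, rather than claiming a literal syntactic identity.
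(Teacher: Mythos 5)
Your proof is correct and follows essentially the same route as the paper's: unit propagation fixes $y_1=1$, the $p=1$ clauses are satisfied and removed, and stripping $x_i$ from the clauses that contain it yields exactly the clause set of $s^{n-1}_{m-1}$ after relabelling. The one point where you are more careful than the paper --- the clauses whose index set omits $i$ survive the reduction with one surplus antecedent and must be argued away --- is glossed over in the paper's proof, which simply asserts that ``the remaining clauses form an $(m-1)$-selector of order $n-1$''; your observation that these leftover clauses are subsumed by clauses of $s^{n-1}_{m-1}$ and hence contribute nothing to satisfiability or unit propagation is the right way to close that small gap.
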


  \begin{proof}
    By Lemma \ref{lma:pp}, UP sets $y_1=1$. Then each clause which contains $x_i$
    can be reduced: $(x_i \Rightarrow y_1)$ is satisfied and can be removed; $(x_i \wedge x_q \Rightarrow y_2)$ can be reduced
    to $(x_q \Rightarrow y_2)$, for each $q \neq i$; and in general, let $Q$ be any subset of $\{1,\dots,n\} \setminus \{i\}$,
    then clause $(x_i \wedge (\bigwedge_{q \in Q} x_q) \Rightarrow y_{|Q|+1})$ can be reduced to $(\bigwedge_{q \in Q} x_q \Rightarrow y_{|Q|+1})$.
    The remaining clauses form an $(m-1)$-selector of order $n-1$.
  \end{proof}
  
  \begin{definition}[layer]\label{def:layer}
    Let $r \in \nat$, $n_1,m_1,\dots,n_r,m_r \in \nat$, $S = \tuple{s^{n_1}_{m_1}(\bar{x}^1,\bar{y}^1),\dots,s^{n_r}_{m_r}(\bar{x}^r,\bar{y}^r)}$
    and let $\bar{x}=\tuple{x_1,\dots,x_n}$, $\bar{y}=\tuple{y_1,\dots,y_m}$, $\bar{x}^i=\tuple{x^i_1,\dots,x^i_{n_i}}$, $\bar{y}^i=\tuple{y^i_1,\dots,y^i_{m_i}}$
    be sequences of Boolean variables, for $1 \leq i \leq r$. A Boolean formula $L^{(n,m)}(\bar{x}, \bar{y}, S) = \bigwedge s^{n_i}_{m_i}(\bar{x}^i,\bar{y}^i)$
    is a {\em layer} of order $(n,m)$, for some $n,m \in \nat$, if:

    \begin{enumerate}
      \item $n = \sum n_i$, $m = \sum m_i$,
      \item for each $1 \leq i \leq r$, $s^{n_i}_{m_i}(\bar{x}^i,\bar{y}^i)$ is an $m_i$-selector of order $n_i$,
      \item for each $1 \leq i < j \leq r$, $\bar{x}^i$ and $\bar{x}^j$ are disjoint; $\bar{y}^i$ and $\bar{y}^j$ are disjoint,
      \item for each $1 \leq i \leq r$, $\bar{x}^i$ is a subsequence of $\bar{x}$ and $\bar{y}^i$ is a subsequence of $\bar{y}$,
    \end{enumerate}
  \end{definition}
  
  \begin{definition}[generalized network]\label{def:gn}
    Let $n,m \in \nat$, where $m \leq n$, and let $n_1,m_1,\dots,n_r,m_r \in \nat$. Let $\bar{x}=\tuple{x_1,\dots,x_n}$,
    $\bar{y}=\tuple{y_1,\dots,y_m}$, $\bar{x}^i=\tuple{x^i_1,\dots,x^i_{n_i}}$, $\bar{y}^i=\tuple{y^i_1,\dots,y^i_{m_i}}$
    be sequences of Boolean variables, and let $S_i$ be a sequence of selectors, for $1 \leq i \leq r$.
    Let $L = \tuple{L^{(n_1,m_1)}_1(\bar{x}^1, \bar{y}^1, S_1), \dots,$ $L^{(n_r,m_r)}_r(\bar{x}^r, \bar{y}^r, S_r)}$.
    A Boolean formula $f^n_m(\bar{x},\bar{y},L) = \bigwedge L^{(n_i,m_i)}_i$ is a {\em generalized network} of order $(n,m)$, if:

    \begin{enumerate}
      \item $\bar{x}=\bar{x}^1$, $\bar{y}=\bar{y}^r$, and $\bar{y}^i=\bar{x}^{i+1}$, for $1 \leq i < r$,
      \item for each $1 \leq i \leq r$, $L^{(n_i,m_i)}(\bar{x}^i, \bar{y}^i, S_i)$ is a layer of order $(n_i,m_i)$,
    \end{enumerate}
  \end{definition}
  
  \begin{definition}[generalized selection network]\label{def:gsn}
    Let $n,m \in \nat$, where $m \leq n$. Let $\bar{x}=\tuple{x_1,\dots,x_n}$ and $\bar{y}=\tuple{y_1,\dots,y_m}$
    be sequences of Boolean variables and let $L$ be a sequence of layers.
    A generalized network $f^n_m(\bar{x},\bar{y},L)$ is a {\em generalized selection network} of order $(n,m)$,
    if for each 0-1 assignment $\sigma$ that satisfies $f^n_m$, $\sigma(\bar{y})$ is sorted and $|\sigma(\bar{y})|_1 \geq \min(|\sigma(\bar{x})|_1,m)$.
  \end{definition}

  To encode a cardinality constraint $x_1 + \dots + x_n \leq k$ we build a GSN $f^n_{k+1}(\bar{x},\bar{y},L)$
  and we add to it a singleton clause $\neg y_{k+1}$ which practically
  sets the variable $y_{k+1}$ to false. Such encoding we will call a {\em standard encoding}.
  
  \begin{definition}[standard encoding]\label{def:se}
    Let $k,n \in \nat$, where $k \leq n$ and let $\bar{x}=\tuple{x_1,\dots,x_n}$
    and $\bar{y}=\tuple{y_1,\dots,y_{k+1}}$ be sequences of Boolean variables. A Boolean formula
    $\phi^n_k(\bar{x},\bar{y},L) = f^n_{k+1}(\bar{x},\bar{y},L) \wedge \neg y_{k+1}$
    is a {\em standard encoding} of the constraint $x_1 + \dots + x_n \leq k$, if
    $f^n_{k+1}(\bar{x},\bar{y},L)$ is a generalized selection network of order $(n,k+1)$ (for some sequence $L$ of layers).
  \end{definition}

  We use the following convention regarding equivalences of Boolean variables:
  the symbol ``$=$'' is used as an assignment operator or value equivalence.
  We use the symbol ``$\equiv$'' as equivalence of variables, that is, if $a \equiv b$ then $a$ is an alias for $b$ (and vice-versa).
  We define $V[\phi]$ as the set of Boolean variables in formula $\phi$.

  Let $n,k,r \in \nat$, $n_1,k_1,\dots,n_r,k_r \in \nat$. Let $\bar{x}=\tuple{x_1,\dots,x_n}$,
  $\bar{y}=\tuple{y_1,\dots,y_{k+1}}$, $\bar{x}^i=\tuple{x^i_1,\dots,x^i_{n_i}}$, $\bar{y}^i=\tuple{y^i_1,\dots,y^i_{k_i}}$
  be sequences of Boolean variables, for $1 \leq i \leq r$. For the rest of the chapter assume
  that $f^n_{k+1}(\bar{x},\bar{y},L)$ is a GSN of order $(n,k+1)$, where
  $L = \tuple{L^{(n_1,k_1)}_1(\bar{x}^1, \bar{y}^1, S_1), \dots, L^{(n_r,k_r)}_r(\bar{x}^r, \bar{y}^r, S_r)}$ is
  a sequence of layers, $S_i$ is a sequence of selectors (for $1 \leq i \leq r$).
  
  We will now define a notion that captures a structure of propagation through the network for a single variable.

  \begin{definition}[path]
    A {\em path} is a sequence of Boolean variables $\tuple{z_1,\dots,z_p}$ such that $\forall_{1\leq i \leq p}$
    $z_i \in V[f^n_{k+1}]$ and for all $1 \leq i < p$ there exists an $m$-selector of order $n'$ in $f^n_{k+1}$ with inputs
    $\tuple{a_1,\dots,a_{n'}}$ and outputs $\tuple{b_1,\dots,b_m}$ such that $z_i \in \tuple{a_1,\dots,a_{n'}}$ and
    $z_{i+1} \in \tuple{b_1,\dots,b_m}$.
  \end{definition}

  \begin{definition}[propagation path]
    Let $x$ be an undefined variable. A path $\bar{z}_x = \tuple{z_1,\dots,z_p}$ is a {\em propagation path}, if
    $z_1 \equiv x$ and $p$ is the largest integer such that $\tuple{z_2,\dots,z_p}$
    is a sequence of variables that would be set to $1$ by UP, if we set $z_1=1$.
  \end{definition}

  \begin{lemma}\label{lma:pl}
    Let $1 \leq i \leq n$ and let $z_{x_i}=\tuple{z_1,\dots,z_p}$ be the propagation path, for some $p \in \nat$.
    Then (i) each $z_j$ is an input to a layer $L^{(n_j,k_j)}$, for $1 \leq j < r$, and (ii) $z_p \equiv z_r \equiv y_1$.
  \end{lemma}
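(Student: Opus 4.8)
The plan is to prove both parts simultaneously by induction on the position $j$ along the path, tracking how unit propagation carries the single $1$ placed on $x_i$ through the feed-forward structure of $f^n_{k+1}$. The engine of the argument is the first propagation property of Lemma~\ref{lma:pp}: if one input of a selector is fixed to $1$, then UP fixes that selector's top output to $1$.

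I would establish part (i) by induction. The base case holds because $z_1 \equiv x_i$ and $\bar{x}=\bar{x}^1$, so $z_1$ is an input to $L^{(n_1,k_1)}_1$. For the step, assume $z_j$ is an input to $L^{(n_j,k_j)}_j$, i.e. $z_j$ occurs in $\bar{x}^j$. By Definition~\ref{def:layer} the selector inputs of the layer are pairwise disjoint and their lengths sum to $n_j$, so $z_j$ belongs to exactly one selector $s$ of $L_j$. Setting $z_j=1$ and applying Lemma~\ref{lma:pp}(1), UP forces the top output of $s$ to $1$; by the definition of a propagation path this newly forced variable is precisely $z_{j+1}$. Since the selector outputs partition $\bar{y}^j$ and $\bar{y}^j=\bar{x}^{j+1}$ by Definition~\ref{def:gn}, the variable $z_{j+1}$ is an input to the next layer $L_{j+1}$. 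Thus the path advances exactly one layer at each step, and each $z_j$ is a layer input, which is (i).

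For part (ii) I would first observe that the propagated $1$ stays \emph{unique} at every level and never branches. If a layer's input carries exactly one $1$, namely $z_j$, then every other selector of that layer has only undefined inputs and forces nothing, while $s$ forces only its top output; hence the layer output $\bar{y}^j$ contains a single UP-forced $1$, equal to $z_{j+1}$. Carrying this to the final layer, the terminal variable $z_p$ is the unique UP-forced $1$ lying in $\bar{y}^r=\bar{y}$, and propagation halts there because $z_p$ is a network output and so is the input of no further selector. It remains to identify $z_p$ with $y_1$. Complete the UP closure by setting every still-undefined variable to $0$; as the selector clauses of Definition~\ref{def:msel} are definite Horn clauses, this completion satisfies $f^n_{k+1}$, and its input part has $|\sigma(\bar{x})|_1\geq 1$. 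By the selection-network property (Definition~\ref{def:gsn}) the resulting $\sigma(\bar{y})$ is sorted, yet among the outputs only $z_p$ is set to $1$. A sorted Boolean sequence with a single $1$ must carry that $1$ in its first coordinate, for otherwise $y_1=0$ would sit above a later $1$ and violate sortedness; therefore $z_p\equiv y_1$, which is (ii).

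The step I expect to be the main obstacle is the last one -- pinning the terminal output to $y_1$ rather than to some lower output $y_t$. The bare propagation argument only delivers the chain to \emph{some} selector's top output in the last layer; the decisive extra input is the combination of the ``exactly one forced $1$ per layer'' structural fact with the sortedness clause of Definition~\ref{def:gsn}, applied to the all-zero completion of the UP closure. A secondary point requiring care is the bookkeeping between the path index and the layer index, so that the final path element is correctly matched to the output wire $y_1$ of the network.
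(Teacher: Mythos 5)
Your proof is correct and follows essentially the same route as the paper's: induction on the path index using the disjointness of selector inputs within a layer and Lemma~\ref{lma:pp}(1) for part (i), and then the all-zero completion of the unit-propagation closure combined with the sortedness guarantee of Definition~\ref{def:gsn} to pin the terminal variable to $y_1$ for part (ii). The extra observations you add (the Horn-clause justification that the all-zero extension satisfies $f^n_{k+1}$, and the uniqueness of the forced $1$ per layer) are sound and merely make explicit what the paper leaves implicit.
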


  \begin{proof}
    We prove (i) by induction on $j$. If $j=1$ then $z_1 \equiv x_i$ is the input to the first layer,
    by the definition of generalized network. Take any $j \geq 1$ and assume that $z_j$ is an input to a layer $L^{(n_j,k_j)}$.
    From the definition of a layer, inputs of selectors of $L^{(n_j,k_j)}$ are disjoint, therefore $z_j$ is an input to a unique
    selector $s^{n'}_{m'} \in S_j$, for some $n' \geq m'$. By Lemma \ref{lma:pp}, if $z_j=1$ then $z_{j+1}$ -- the output of $s^{n'}_{m'}$ -- is set to $1$.
    Since $j<r$, $z_{j+1}$ is an input to layer $L^{(n_{j+1},k_{j+1})}$, by the definition of generalized network. This ends the inductive step, therefore (i) is true.

    Using (i) we know that $z_{r-1}$ is an input to layer $L^{(n_{r-1},k_{r-1})}$. Using similar argument as in the inductive step in previous paragraph, we conclude that
    $z_r \equiv y_t$, for some $1 \leq t \leq k+1$. Let $\sigma$ be a partial assignment which fixed variables $\tuple{z_1,\dots,z_p}$ by unit propagation.
    We can extend $\sigma$ so that every other variable in $f^n_{k+1}$ is set to $0$,
    then by Definition \ref{def:gsn} the output $\sigma(\bar{y})$ is sorted and $|\sigma(\bar{y})|_1 \geq 1$, therefore we conclude that $t=1$, so (ii) is true.
  \end{proof}
  
  \begin{lemma}\label{lma:red_n}
    Let $1 \leq i \leq n$. If we set $x_i=1$ in a partial assignment $\sigma$ (where the rest of the variables are undefined),
    then unit propagation will set $y_1=1$. Furthermore,
    $f^n_{k+1}$ can be reduced to $f^{n-1}_{k}$ by $\sigma$, where $f^{n-1}_{k}$ is a generalized selection network of order $(n-1,k)$.
  \end{lemma}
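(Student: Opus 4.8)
The plan is to build the reduction layer by layer, using the propagation path of Lemma~\ref{lma:pl} as a skeleton and the single-selector reduction of Lemma~\ref{lma:red_s} as the local step. First I would dispose of the claim that $y_1=1$. Setting $x_i=1$ initiates the propagation path $\bar{z}_{x_i}=\tuple{z_1,\dots,z_p}$ with $z_1\equiv x_i$; by the definition of a propagation path every $z_2,\dots,z_p$ is forced to $1$ by unit propagation, and by Lemma~\ref{lma:pl}(ii) we have $z_p\equiv y_1$. Hence UP sets $y_1=1$, which is exactly the first assertion.

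Next I would carry out the reduction along this path. By Lemma~\ref{lma:pl}(i), in each layer $L^{(n_j,k_j)}$ the path variable $z_j$ occurs as an input, and by the disjointness of selector inputs within a layer (condition 3 of Definition~\ref{def:layer}) it enters a \emph{unique} selector $s^{n'}_{m'}\in S_j$, whose top output is the next path variable $z_{j+1}$. Since only this single input of $s^{n'}_{m'}$ is fixed to $1$, Lemma~\ref{lma:pp}(1) forces only its top output, so no other variable of the layer is disturbed, and Lemma~\ref{lma:red_s} reduces $s^{n'}_{m'}$ to an $(m'-1)$-selector of order $n'-1$ by the assignment used in unit propagation. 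This deletes exactly one input ($z_j$) and one output ($z_{j+1}$) from the layer while leaving the remaining selectors of $S_j$ intact, so $L^{(n_j,k_j)}$ reduces to a layer of order $(n_j-1,k_j-1)$.

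I would then check that these reduced layers still chain into a generalized network in the sense of Definition~\ref{def:gn}. The key point is that the output deleted from layer $j$ and the input deleted from layer $j+1$ are the \emph{same} variable $z_{j+1}$, because $\bar{y}^j=\bar{x}^{j+1}$; thus the identification $\bar{y}^j=\bar{x}^{j+1}$ survives the deletion at every junction. The first deleted input is $x_i$ from $\bar{x}=\bar{x}^1$ and the last deleted output is $y_1$ from $\bar{y}=\bar{y}^r$, so the result is a generalized network $f^{n-1}_{k}$ whose input $\bar{x}'$ is $\bar{x}$ with $x_i$ removed (length $n-1$) and whose output is $\bar{y}'=\tuple{y_2,\dots,y_{k+1}}$ (length $k$). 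I expect this structural bookkeeping --- confirming that precisely one variable leaves each layer at the position that keeps the chaining condition of Definition~\ref{def:gn} intact --- to be the main technical obstacle, since it is where the global structure of the network, rather than a single selector, is manipulated.

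Finally I would verify the selection property of Definition~\ref{def:gsn} for $f^{n-1}_{k}$. Because reduction preserves satisfiability, any $0$--$1$ assignment $\tau$ satisfying $f^{n-1}_{k}$ extends, by setting the deleted path variables (in particular $x_i$ and $y_1$) to $1$, to an assignment $\tau^*$ satisfying $f^{n}_{k+1}$. Using that $f^{n}_{k+1}$ is a GSN, $\tau^*(\bar{y})$ is sorted and $|\tau^*(\bar{y})|_1\geq\min(|\tau^*(\bar{x})|_1,k+1)$; since $x_i=1$ forces $y_1=1$, the suffix $\bar{y}'=\tuple{y_2,\dots,y_{k+1}}$ is sorted, and subtracting the one extra $1$ on each side gives
\[
  |\tau(\bar{y}')|_1 = |\tau^*(\bar{y})|_1 - 1 \;\geq\; \min(|\tau^*(\bar{x})|_1,k+1) - 1 \;=\; \min(|\tau(\bar{x}')|_1,k),
\]
which is exactly the selection condition of Definition~\ref{def:gsn} for order $(n-1,k)$. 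This completes the argument that $f^{n-1}_{k}$ is a generalized selection network of order $(n-1,k)$.
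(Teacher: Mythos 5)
Your proof is correct and follows essentially the same route as the paper's: propagate along the path of Lemma~\ref{lma:pl}, reduce one selector per layer via Lemma~\ref{lma:red_s}, and observe that the surviving clauses form a GSN of order $(n-1,k)$. You actually supply more detail than the paper does, in particular the explicit check that the chaining condition of Definition~\ref{def:gn} survives the deletions and the verification of the selection property of Definition~\ref{def:gsn} for the reduced network, both of which the paper leaves implicit.
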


  \begin{proof}
    First part is a simple consequence of Lemma \ref{lma:pl}.
    Take partial assignment $\sigma$ (after unit propagation set $y_1=1$). The assigned variables are
    exactly the ones from the propagation path $z_{x_i}=\tuple{z_1,\dots,z_r}$. Therefore, by Lemma \ref{lma:pl},
    we need to consider clauses from a single selector of each layer. Take each $1 \leq j \leq r$ and $s^{n'}_{m'} \in S_j$ such
    that $z_j$ is an input of $s^{n'}_{m'}$. By Lemma \ref{lma:red_s}, $s^{n'}_{m'}$ can be reduced to a smaller selector $s^{n'-1}_{m'-1}$.
    Thus $f^n_{k+1}$ can be reduced to $f^{n-1}_{k}$ with inputs $\{x_1,\dots,x_n\} \setminus \{x_i\}$ and outputs $\{y_2,\dots,y_{k+1}\}$,
    which is a generalized selection network of order $(n-1,k)$.
  \end{proof}
  
  \begin{lemma}\label{lma:fp}
    Let $1 \leq i \leq k+1$ and let $\pi \, : \, \{1,\dots,i\} \rightarrow \{1,\dots,n\}$ be a 1-1 function.
    Assume that inputs $x_{\pi(1)},\dots,x_{\pi(i)}$ are set to $1$, and the rest of the variables
    are undefined. Unit propagation will set variables $y_1,\dots,y_{i}$ to $1$.
  \end{lemma}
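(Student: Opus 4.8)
The plan is to prove the statement by induction on $i$, peeling off the fixed input variables one at a time and using the reduction machinery of Lemma \ref{lma:red_n} to shrink the network at each step. It is convenient to read the claim as a property of \emph{arbitrary} generalized selection networks rather than of the single fixed $f^n_{k+1}$, so that the inductive hypothesis may later be applied to a smaller network produced by the reduction.

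For the base case $i=1$, a single input $x_{\pi(1)}$ is set to $1$ and every other variable is undefined, so the first part of Lemma \ref{lma:red_n} immediately gives that unit propagation sets $y_1=1$. For the inductive step I would assume the claim for any generalized selection network in which $i-1$ inputs are fixed to $1$, and consider $i$ inputs $x_{\pi(1)},\dots,x_{\pi(i)}$ fixed to $1$. First I would isolate the effect of $x_{\pi(i)}=1$: by Lemma \ref{lma:red_n}, unit propagation sets $y_1=1$ and reduces $f^n_{k+1}$ to a generalized selection network $f^{n-1}_{k}$ of order $(n-1,k)$, whose inputs are $\{x_1,\dots,x_n\}\setminus\{x_{\pi(i)}\}$ and whose outputs are $\tuple{y_2,\dots,y_{k+1}}$. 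In this reduced network the remaining inputs $x_{\pi(1)},\dots,x_{\pi(i-1)}$ are still fixed to $1$, and since $i \leq k+1$ we have $1 \leq i-1 \leq k$, so the inductive hypothesis applies to $f^{n-1}_{k}$ and forces unit propagation to set its first $i-1$ outputs, namely $y_2,\dots,y_{i}$, to $1$. Together with $y_1=1$ obtained above, this shows $y_1,\dots,y_i$ are all set to $1$, completing the induction.

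The step I expect to require the most care is the composition of the propagation processes across the reduction. The reduction of Lemma \ref{lma:red_n} literally replaces the clause set of $f^n_{k+1}$ under the partial assignment produced by propagating $x_{\pi(i)}=1$ with the smaller clause set of $f^{n-1}_k$, so to glue the two phases together I would appeal to the confluence and monotonicity of unit propagation: running UP on the original formula with all $i$ inputs fixed yields the same fix point as first propagating $x_{\pi(i)}=1$ to obtain the reduction and then propagating the remaining $i-1$ fixed inputs inside $f^{n-1}_k$. The only other point to watch is the index bookkeeping, since the outputs of the reduced network are indexed $y_2,\dots,y_{k+1}$; hence the inductive hypothesis delivers exactly $y_2,\dots,y_i$, and after adjoining $y_1$ we recover $y_1,\dots,y_i$ as required.
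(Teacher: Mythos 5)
Your proof is correct and follows essentially the same route as the paper, whose entire argument is the one-line remark that the claim follows ``by setting $1$ to variables $x_{\pi(1)},\dots,x_{\pi(i)}$ in an (arbitrarily) ordered way and repeated application of Lemma~\ref{lma:red_n}.'' Your explicit induction, together with the appeal to confluence and monotonicity of unit propagation to justify composing the stage-by-stage reductions, simply fills in the details that the paper leaves implicit.
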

  
  \begin{proof}
    By setting $1$ to variables $x_{\pi(1)},\dots,x_{\pi(i)}$ in an (arbitrarily) ordered way and repeated application of Lemma \ref{lma:red_n}.
  \end{proof}

  The process of propagating $1$'s given in Lemma \ref{lma:fp} we call a {\em forward propagation}.
  We are ready to prove the main result.

  \begin{theorem}
    Consider the standard encoding $\phi^n_{k}(\bar{x},\bar{y},L) = f^n_{k+1}(\bar{x},\bar{y},L) \wedge \neg y_{k+1}$.
    Assume that $k$ inputs are set to $1$ and forward propagation has been performed which set all $y_1,\dots,y_{k}$ to $1$.
    Then the unit propagation will set all undefined input variables to $0$.
  \end{theorem}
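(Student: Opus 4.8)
The plan is to reduce the statement to a single backward-propagation fact about a one-output network. First I would observe that the hypotheses already carry us most of the way: by Lemma~\ref{lma:fp} (equivalently, by applying Lemma~\ref{lma:red_n} once for each of the $k$ inputs fixed to $1$), the forward propagation that sets $y_1,\dots,y_k$ to $1$ simultaneously reduces $f^n_{k+1}$ to a generalized selection network $g = f^{n-k}_1$ of order $(n-k,1)$ whose inputs are exactly the still-undefined input variables and whose unique output is $y_{k+1}$. Since $\neg y_{k+1}$ is a clause of the standard encoding, we have $y_{k+1}=0$. The whole theorem then follows from the claim: in any generalized selection network of order $(N,1)$ in which every input is undefined and the single output is fixed to $0$, unit propagation fixes all inputs to $0$.

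To prove this claim I would propagate the $0$ backward along propagation paths. By Lemma~\ref{lma:pl} applied to $g$ (its proof uses only the generalized-network structure, so it is valid for any GSN), every input $x$ of $g$ has a propagation path $\tuple{z_1,\dots,z_r}$ with $z_1\equiv x$ and $z_r\equiv y_{k+1}$, in which each $z_{j+1}$ is the \emph{top} output of the selector of layer $L_j$ that receives $z_j$; this is because setting a single selector input to $1$ forces, by Lemma~\ref{lma:pp}(1) with $k=1$, only the first output to $1$. The key point is that a $0$ on the top output does propagate backward: the top output is produced precisely by the unit implications $x_i\Rightarrow z_{j+1}$, so once $z_{j+1}=0$ unit propagation immediately sets every input of that selector to $0$. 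Formally this is Lemma~\ref{lma:pp}(2) with $k=1$, whose hypothesis ``exactly $0$ inputs set to $1$'' holds at every selector of $g$ because $g$ contains no variable fixed to $1$ (the forward phase absorbed all the $1$'s into the reduction, and backward propagation only produces $0$'s). I would then argue by downward induction along the path: $z_r=y_{k+1}=0$ forces $z_{r-1}=0$, which is again a top output, forcing $z_{r-2}=0$, and so on down to $z_1\equiv x=0$. As this holds for every input, all undefined inputs are set to $0$.

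The step I expect to be the main obstacle is justifying that the relevant connection from each input to the output runs through \emph{top} outputs only. Backward propagation of a $0$ fails through a non-top output $b_l$ with $l\ge 2$, since the clauses producing it have the form $x_{i_1}\wedge\dots\wedge x_{i_l}\Rightarrow b_l$ and unit propagation cannot fire on them while no input is set to $1$. The propagation paths of Lemma~\ref{lma:pl} are exactly what circumvents this, as they are built from successive top outputs; making this compatibility between the ``forward'' definition of a propagation path and the ``backward'' direction of the argument explicit is the technical heart of the proof. An alternative I would keep in reserve is a direct induction on $k$: peel off one input fixed to $1$, use Lemma~\ref{lma:red_n} to reduce the encoding to the standard encoding $f^{n-1}_k\wedge\neg y_{k+1}$ of $x'\le k-1$ (the clause $\neg y_{k+1}$ is precisely the last-output clause of the reduced encoding), and invoke the induction hypothesis, with the base case $k=0$ being exactly the one-output claim above.
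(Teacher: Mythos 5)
Your proof is correct and follows the paper's own argument almost step for step: both first reduce $f^n_{k+1}$ to a single-output generalized selection network $f^{n-k}_1$ by repeated application of Lemma~\ref{lma:red_n}, fix its unique output $y_{k+1}$ to $0$ via the asserted unit clause, and then push the $0$ backward using the unit implications $u \Rightarrow y_1$ of $1$-selectors, i.e.\ Lemma~\ref{lma:pp}(2) with $k=1$. The only difference is organizational: the paper runs the backward induction layer by layer (the last layer is a single $1$-selector, so all its inputs become $0$, and these are the outputs of the previous layer, and so on), whereas you run it along each input's propagation path from Lemma~\ref{lma:pl}; both versions fire exactly the same clauses, and the concern you raise about non-top outputs is harmless in either formulation since only the top-output implications are ever needed.
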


  \begin{proof}
    First, reduce $f^n_{k+1}$ to $f^{n'}_1$ ($n'=n-k$), by repeated application of Lemma \ref{lma:red_n}.
    Let new input variables be $\tuple{x'_1,\dots,x'_{n'}}$ and let the single output be $y'_1 \equiv y_{k+1}$.
    Since the standard encoding contains a singleton clause $\neg y_{k+1}$, then $y'_1 = y_{k+1} = 0$.
    Let $\tuple{L'^{(n'_1,k'_1)}_1, \dots, L^{(n'_r,k'_r)}_r}$ be the sequence of layers in $f^{n'}_1$,
    where $n'_1,k'_1,\dots,n'_r,k'_r \in \nat$. Let $0 \leq i < r$, we prove the following statement by induction on $i$:

    \[
      S(i) = \text{unit propagation sets all input variables of layer $L^{(n'_{r-i},k'_{r-i})}_{r-i}$ to $0$.} 
    \]
    
    When $i=0$, then we consider the last layer $L^{(n'_r,k'_r)}_r$. Since it has only one output, namely $y'_1$,
    it consists of a single selector $s^{n'_r}_1$. We know that $y'_1 = 0$, then by Lemma \ref{lma:pp} all input variables are set to $0$ by UP,
    therefore $S(0)$ holds. Take any $i \geq 0$ and assume that $S(i)$ is true. That means that all input variables
    of $L^{(n'_{r-i},k'_{r-i})}_{r-i}$ are set to $0$. From the definition of generalized network,
    those inputs are the outputs of the layer $L^{(n'_{r-i-1},k'_{r-i-1})}_{r-i-1}$. Therefore each selector of
    layer $L^{(n'_{r-i-1},k'_{r-i-1})}_{r-i-1}$ has all outputs set to 0. By Lemma \ref{lma:pp} all input variables of this selectors are set to 0 by UP, and therefore
    all inputs of $L^{(n'_{r-i-1},k'_{r-i-1})}_{r-i-1}$ are set to 0. Thus we conclude that $S(i+1)$ holds. This completes the induction step.

    We know that $S(r-1)$ is true, which means that all input variables of layer $L^{(n'_{1},k'_{1})}_{1}$ are set to 0,
    those are exactly the input variables $\tuple{x'_1,\dots,x'_{n'}}$ of $f^{n'}_1$. Those variables are previously undefined
    input variables of $f^n_{k+1}$, which completes the proof.
  \end{proof}
  
  \section{Summary}

  In this chapter we have given definitions and conventions used in the rest of the thesis.
  We have presented several comparator network models and we conclude that for the purpose of
  this thesis the procedural representation is sufficient to present all our encodings.
  We deconstructed a comparator into a set of clauses and showed how 0-1 values are being propagated therein.

  We have also defined a model based on layers of selectors and we showed the first rigorous
  proof that any standard encoding based on generalized selection networks
  preserves arc-consistency. We believe that this result will relieve future researchers of this topic from the burden
  of proving that their encodings are arc-consistent, which was usually a long and technical endeavor.

\chapter[Encodings of Boolean Cardinality Constraints]{Encodings of Boolean \\ Cardinality Constraints}\label{ch:history}

    \def\nqueenssolution{Qd4, Qe2, Qf8}
    \setchessboard{smallboard,labelleft=false,labelbottom=false,showmover=false,setpieces=\nqueenssolution}

    \begin{tikzpicture}[remember picture,overlay]
      \node[anchor=east,inner sep=0pt] at (current page text area.east|-0,3cm) {\chessboard};
    \end{tikzpicture}

The practical importance of encoding cardinality constraints into SAT resulted in a large number
of research papers published in the last 20 years. In this chapter we review some of the most significant
methods found in the literature for three types of constraints: {\em at-most-k}, {\em at-most-one} and {\em Pseudo-Boolean}.
Although Pseudo-Boolean constraints are more general than cardinality constraints, many methods for efficiently
encoding cardinality constraints were first developed for Pseudo-Boolean constraints.

\section{At-Most-k Constraints} \label{sec:atmk}

Boolean cardinality constraints -- which are the main focus of this thesis -- are also called at-most-k constraints
in the literature \cite{frisch2010sat}, which is understandable, since we require that {\bf at most}
$k$ out of $n$ propositional literals can be true, and constraints with other relations (at least, or exactly)
can be easily reduced to the "at most" case (Observation \ref{obs:card}).
Here we present the most influential ideas in the topic of translating such constraints into
propositional formulas. The overview is presented in the chronological order.

\paragraph{Binary adders.} Warners \cite{warners1998linear} considered encoding based on using adders where numbers are represented in binary.
The original method was devised for PB-constraints and involved using binary addition of terms of the LHS (left-hand side) of the constraint
$a_1x_1 + a_2x_2 + \dots + a_nx_n \leq k$ using {\em adders} and {\em multiplicators}, then comparing the resulting sum to the binary representation
of $k$. On the highest level, the algorithm uses a divide-and-conquer strategy, partitioning the terms of the LHS into two sub-sums, recursively encoding them into binary numbers, then encoding the summation of those two numbers using an adder. An adder is simply a formula that models the classical addition in columns:

\begin{center}
  {\renewcommand{\arraystretch}{2}%
  \begin{tabular}{c@{\;\;}c@{\;\;}c@{\;\;}c@{\;\;}c@{\;\;}c@{\;\;}c}
    & $p_{M-1}^A$ & $p_{M-2}^A$ & $\dots$ & $p_{2}^A$ & $p_{1}^A$ & $p_{0}^A$ \\
$+$ & $p_{M-1}^B$ & $p_{M-2}^B$ & $\dots$ & $p_{2}^B$ & $p_{1}^B$ & $p_{0}^B$ \\
\hline
$p_{M}^C$ & $p_{M-1}^C$ & $p_{M-2}^C$ & $\dots$ & $p_{2}^C$ & $p_{1}^C$ & $p_{0}^C$ \\
\end{tabular}}
\end{center}

\noindent In the above we compute $A+B=C$, where $p_{i}^X$ is a propositional variable representing $i$-th bit of number $X$, for $1 \leq i \leq M$.
We assume that $M$ is a constant bounding the number of bits in $C$. The addition is done column-by-column, from right to left, taking into
consideration the potential carries (propositional variables $c_{i,j}$'s): 

\[
  \left(p_0^C \iff (p_0^A \iff \neg p_0^B)\right) \wedge \left(c_{0,1} \iff (p_0^A \wedge p_0^B)\right) \wedge \bigwedge_{j=1}^M \left(p_j^C \iff (p_j^A \iff p_j^B \iff c_{j-1,j})\right) \; \wedge
\]

\[
  \bigwedge_{j=1}^{M-1} \left(c_{j,j+1} \iff (p_j^A \wedge p_j^B) \vee (p_j^A \wedge c_{j-1,j}) \vee (p_j^B \wedge c_{j-1,j})\right) \wedge \left(c_{M-1,M} \iff p_{M}^C\right).
\]

\noindent Notice that $(p_j^A \iff p_j^B \iff c_{j-1,j})$ is true if and only if $1$ or $3$ variables from the set $\{p_j^A,p_j^B,c_{j-1,j}\}$
are true.

For the base case, we have to compute the multiplication of $a_i$ and $x_i$. This is also done in a straightforward way.
In the following, let $B_{a_i}$ be the set containing indices of all 1's in the binary representation of $a_i$. We get:

\[
  \bigwedge_{k \in B_{a_i}} \left( p_k^{a_i} \iff x_i \right) \wedge \bigwedge_{k \not\in B_{a_i}} \neg p_k^{a_i}
\]

\noindent Finally, we have to enforce the constraint $(\leq k)$, which is done like so:

\[
  \bigwedge_{i \not\in B_k}\left(p_i^{LHS} \Rightarrow \neg \bigwedge_{j \in B_k: \, j>i} p_j^{LHS}\right),
\]

\noindent where $p_0^{LHS},p_1^{LHS},\dots$ are propositional variables representing the binary number of the sum of the LHS.

\begin{examplebox}
\begin{example}\label{ex:warners}
  Let $k=26$, as in the example from \cite{warners1998linear}. Then $B_k = \{1,3,4\}$. In order to enforce $LHS \leq k$,
  we add the following clauses:  

  \small
  \[
    \left(p^{LHS}_0 \Rightarrow \neg(p_1^{LHS} \wedge p_3^{LHS} \wedge p_4^{LHS})\right) \; \wedge \; 
    \left(p^{LHS}_2 \Rightarrow \neg(p_3^{LHS} \wedge p_4^{LHS})\right).
  \]
  \normalsize 

  \noindent We can check that this indeed enforces the constraint for some sample values of the LHS. For example, if $LHS = 25$, then
  $\tuple{p^{LHS}_0,p^{LHS}_1,p^{LHS}_2,p^{LHS}_3,p^{LHS}_4} = \tuple{1,0,0,1,1}$, and we can see that the formula above is satisfied. 
  However, for $LHS = 30$, we get $\tuple{p^{LHS}_0,p^{LHS}_1,p^{LHS}_2,p^{LHS}_3,p^{LHS}_4} = \tuple{0,1,1,1,1}$, and the second implication
  evaluates to false.
\end{example}
\end{examplebox}

In Lemma 2 of \cite{warners1998linear} it is shown that number of variables and clauses of the proposed encoding
is bounded by $2n(1+ \log(a_{max}))$ and $8n(1+ 2\log(a_{max}))$, respectively ($a_{max}=\max\{a_i\}$).
Therefore, in case of cardinality constraints, the encoding uses $O(n)$ variables and clauses. This encoding is small,
but does not preserve arc-consistency.

\paragraph{Totalizers.} Bailleux and Boufkhad \cite{bailleux2003efficient} presented an encoding based on the idea of a totalizer.
The totalizer is a binary tree, where the leaves are the constraint literals $x_i$'s. With each inner node the number $s$
is associated which represents the sum of the leaves in the corresponding sub-tree. The number $s$ is represented in unary,
by $s$ auxiliary variables. The encoding is arc-consistent and uses $O(n\log n)$ variables and $O(n^2)$ clauses. Here we briefly
explain the idea of a totalizer as described in \cite{bailleux2003efficient}.

We begin with defining unary representation of a number $v$ such that $0 \leq v \leq n$.
An integer $v$ can be modeled by a set $V=\{v_1,v_2,\dots,v_n\}$ of $n$ propositional variables.
Each possible value of $v$ is encoded as a complete instantiation of $V$, such that if $v=x$, then $x$ 1's follow $(n-x)$ 0's, i.e.,
$v_1=1,v_2=1,\dots,v_x=1,v_{x+1}=0,\dots,v_n=0$. A partial instantiation of $V$ is said to be {\em pre-unary} if for each $v_i=1$,
$v_j=1$ for any $j<i$ and for each $v_i=0$, $v_j=0$ for any $j,i \leq j \leq n$. In other words, $V$ has its prefix set with $1$'s and
its suffix with $0$'s (prefix and suffix cannot overlap in this context).

The advantage of using the unary representation is that the integer can be specified as belonging to an interval.
The inequality $x \leq v \leq y$ is specified by the partial pre-unary instantiation of $V$ that fixes to 1 any $v_i$
such that $i \leq x$ and fixes to 0 any $v_j$ such that $j \geq y+1$.

\begin{examplebox}
\begin{example}
  Following an example from \cite{bailleux2003efficient}, consider $n=6$ and a partial instantiation such that $v_1=v_2=1$,
  $v_5=v_6=0$ and $v_3,v_4$ are free. Then the corresponding integer $v$ is such that $2 \leq v \leq 4$. 
\end{example}
\end{examplebox}

The totalizer is a CNF formula defined on 3 sets of variables: $X={x_1,\dots,x_n}$ -- inputs, $Y={y_1,\dots,y_n}$ -- outputs, and a set S
of {\em linking variables}. These sets can be described by a binary tree built as follows. We start from an isolated root node labeled $n$ and we
proceed iteratively: to each node labeled by $m>1$, we connect two children labeled by $\floor{m/2}$ and $\ceil{m/2}$, respectively. This
produces a binary tree with $n$ leaves labeled 1. Next, each variable in $X$ is allocated to a leaf in a bijective way. Set $Y$ is allocated
to the root node. To each internal node labeled by an integer $m$, a set of $m$ new variables is allocated that is used to represent an unary
value between $1$ and $m$. All those internal variables produce the set $S$.

We now define an encoding that ensures that $m=m_1+m_2$ in any complete instantiation of the variables belonging to
some sub-tree of a totalizer labeled $m$ with two children labeled $m_1$ and $m_2$. Let $S=\{s_1,\dots,s_m\}$,
$S_1=\{s_1^1,\dots,s_{m_1}^1\}$ and $S_2=\{s_1^2,\dots,s_{m_2}^2\}$ be the sets of variables related to $m$, $m_1$ and $m_2$,
respectively. We add the following set of clauses:

\[
  \bigwedge_{\substack{0 \leq a \leq m_1 \\ 0 \leq b \leq m_2 \\ 0 \leq c \leq m \\ a+b=c}} \left((s_a^1 \wedge s_b^2 \Rightarrow s_c) \wedge (s_{c+1} \Rightarrow s_{a+1}^1 \vee s_{b+1}^2)\right),
\]

\noindent where $s^1_0=s^2_0=s_0=1,s^1_{m_1+1}=s^2_{m_2+1}=s_{m+1}=0$.

\begin{tcolorbox}[colback=white,colframe=black,sharp corners,enhanced jigsaw,breakable,break at=7cm]
\begin{example}
  Borrowing an example from \cite{bailleux2003efficient}, for $n=5$, $X=\{x_1,x_2,x_3,x_4,x_5\}$ and $Y=\{y_1,y_2,y_3,y_4,y_5\}$ the following
  tree is obtained by the totalizer procedure:

  \begin{center}
    \begin{tikzpicture}[level/.style={sibling distance=60mm/#1}]
      \node [text width=4em, text centered] (z){$(Y,5)$}
      child {node [text width=4em, text centered] (a) {$(\{s^1_1, s^1_2\},2)$}
        child {node [text width=4em, text centered] (b) {$(\{x_1\},1)$}}
        child {node [text width=4em, text centered] (g) {$(\{x_2\},1)$}}
      }
      child {node [text width=4em, text centered] (j) {$(\{s^2_1, s^2_2,s^2_3\},3)$}
        child {node [text width=4em, text centered] (k) {$(\{x_3\},1)$}}
        child {node [text width=6em, text centered] (l) {$(\{s^3_1, s^3_2\},2)$}
          child {node [text width=4em, text centered] (o) {$(\{x_4\},1)$}}
          child {node [text width=4em, text centered] (p) {$(\{x_5\},1)$}}
        }
      };
    \end{tikzpicture}
  \end{center}

  \noindent Here, the set of linking variables is $S=\{s^1_1,s^1_2,s^2_1,s^2_2,s^2_3,s^3_1,s^3_2\}$.
  Let us encode the unary addition on a node labeled 3 in the totalizer above (clauses with constants 0's and 1's are already simplified):

  \small
  \begin{align*}
    & (s^3_1 \impl s^2_1) \wedge (s^3_2 \impl s^2_2) \wedge (x_3 \impl s^2_1) \wedge (x_3 \wedge s^3_1 \impl s^2_2) \wedge (x_3 \wedge s^3_2 \impl s^2_3)\; \wedge \\
    & (s^2_1 \impl x_3 \vee s^3_1) \wedge (s^2_2 \impl x_3 \vee s^3_2) \wedge (s^2_3 \impl x_3) \wedge (s^2_2 \impl s^3_1) \wedge (s^2_3 \impl s^3_2).
  \end{align*}
  \normalsize

  \noindent Notice that the first row of clauses represents the relation $c \geq a + b$ and the second row of clauses represents the relation
  $c \leq a + b$, where $c$, $a$ and $b$ are possible values of the unary representations of nodes labeled $3$ and its children, respectively.
\end{example}
\end{tcolorbox}

B{\"u}ttner and Rintanen \cite{buttner2005satisfiability} made an improvement to the encoding of Bailleux and Boufkhad \cite{bailleux2003efficient}
by noticing that counting up to $k+1$ suffices to enforce the constraint. Therefore, they reduced the number of variables and clauses used in each node
of the totalizer. Their encoding improves the previous result for small values of $k$ as it requires $O(nk)$ variables and $O(nk^2)$ clauses. They
also proposed a novel encoding based on encoding the injective mapping between the true $x_i$ variables and $k$ elements. This idea requires
$O(nk)$ variables and clauses, but is not arc-consistent.

\paragraph{Counters.} Sinz \cite{sinz2005towards} proposed two encodings based on counters.
The first uses a sequential counter where numbers are represented in unary and the second
uses a parallel counter with numbers represented in binary.
The first encoding uses $O(nk)$ variables and clauses and the second encoding uses $O(n)$
variables and clauses. Only the encoding based on a sequential counter is arc-consistent.
We present here the construction of a sequential counter.

\begin{figure}[ht!]
  \centering
  \includegraphics[scale=1.3]{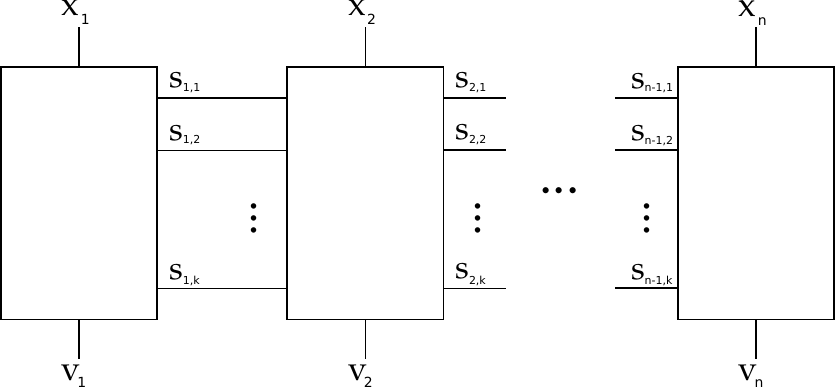}
  \caption{Schema of a sequential counter}
  \label{fig:seqc}
\end{figure}

The general idea is to build a count-and-compare hardware circuit and then translate this circuit to a CNF.
Such counter is presented in Figure \ref{fig:seqc} and consists of $n$ sub-circuits, each computing a partial sum $s_i = \sum_{j=1}^ix_j$.
The values of all $s_i$'s are represented as unary numbers, i.e., we have a sequence of variables $\tuple{s_{i,1},\dots,s_{i,k}}$ for each $s_i$.
Additional variable $v_i$ (the overflow bit) is related to each
sub-circuit and is set to true if the partial sum $s_i$ is greater than $k$. To convert the circuit to a CNF,
we first build a series of implications for the partial sum bits $s_{i,j}$'s ($j$-th bit of $s_i$) and the overflow bits $v_i$.
We can then simplify the formula noting that all overflow bits have to be zero, in order to enforce the constraint $\leq k$.
The resulting set of clauses is as follows:

\begin{align*}
  (x_1 \impl s_{1,1}) &\wedge (x_n \impl \neg s_{n-1,k}) \wedge \bigwedge_{1<j\leq k} \left(\neg s_{1,j} \right) \; \wedge \\
      \bigwedge_{1 < i < n} \bigg( &(x_i \impl s_{i,1}) \wedge (s_{i-1,1} \impl s_{i,1}) \wedge (x_i \impl \neg s_{i-1,k}) \; \wedge \\ 
        & \bigwedge_{1 < j \leq k} \Big( (x_i \wedge s_{i-1,j-1} \impl s_{i,j}) \wedge (s_{i-1,j} \impl s_{i,j}) \Big) \bigg).
\end{align*}

\paragraph{Sorting Networks.} One of the most influential ideas was introduced by E{\'e}n and S\"orensson \cite{minisatp}.
They proposed an encoding of Pseudo-Boolean constraints based on the odd-even sorting networks.
A PB-constraint is decomposed into a number of interconnected sorting networks,
where each sorter represents an adder of digits in a mixed radix base. Detailed explanation of this technique is done in Section \ref{sec:pb}.
In case of cardinality constraints a single sorting network is required and we have already explained the details of this technique
in the previous chapter. This encoding (using sorting networks) requires $O(n \log^2 n)$ variables and clauses and is arc-consistent.

It is worth noting that although we reduce cardinality constraints to the form $x_1 + x_2 + \dots + x_n \leq k$, and we use
3-clause representation for each comparator ($\{a \Rightarrow c, b \Rightarrow c, a \wedge b \Rightarrow d\}$) 
and assert the output $\neg y_{k+1}$, we do this only to simplify the presentation. In practice, when dealing with
other types of cardinality constraints, one should do the following:

\begin{itemize}
  \item For $x_1 + x_2 + \dots + x_n \geq k$ encode each comparator with the set of clauses $\{d \Rightarrow a, d \Rightarrow b, c \Rightarrow a \vee b\}$ and add a clause $y_{k}$.
  \item For $x_1 + x_2 + \dots + x_n = k$ encode each comparator with 6-clause representation $\{a \Rightarrow c, b \Rightarrow c, a \wedge b \Rightarrow d, d \Rightarrow a, d \Rightarrow b, c \Rightarrow a \vee b\}$ and assert both $y_{k}$ and $\neg y_{k+1}$.
\end{itemize}

The situation where reduction from one inequality to the other is beneficial is when $k > \floor{n/2}$, for example, given
$x_1 + x_2 + \dots + x_n \geq k$ we reduce it to $\neg x_1 + \neg x_2 + \dots + \neg x_n \leq n-k$ (following Observation \ref{obs:card}).
In the resulting cardinality constraint $n-k < \ceil{n/2}$. This does not make the encoding smaller in case of sorting networks, but
in case of selection networks this might vastly reduce the size of the resulting CNF.

\paragraph{Selection Networks.} Further improvements in encoding cardinality constraints
are based on the aforementioned idea of E{\'e}n and S\"orensson \cite{minisatp}.
Basically, in order to make more efficient encodings, more efficient sorting networks are required. It was observed that we do not need to
sort the entire input sequence, but only the first $k+1$ largest elements. Hence, the use of {\em selection networks} allowed to achieve
the complexity $O(n \log^2 k)$ in terms of the number of variables and clauses. In the last years several selection networks were proposed
for encoding cardinality constraints and experiments proved their efficiency. They were based
mainly on the odd-even or pairwise comparator networks. Codish and Zazon-Ivry
\cite{codish2010pairwise} introduced Pairwise Selection Networks that used the concept of Parberry's
Pairwise Sorting Network \cite{parberry1992pairwise}. Their construction was later
improved (we show this result in Chapter \ref{ch:cp15}).
Ab\'io, As\'in, Nieuwenhuis, Oliveras and
Rodr\'iguez-Carbonell \cite{asin2011cardinality,abio2013parametric} defined encodings that implemented
selection networks based on the odd-even sorting networks.
In \cite{abio2013parametric} the authors proposed a mixed
parametric approach to the encodings, where so called {\em Direct Cardinality Networks} are chosen for small
sub-problems and the splitting point is optimized when large problems are divided into two
smaller ones. They proposed to minimize the function $\lambda \cdot num\_vars +
num\_clauses$ in the encodings. The constructed encodings are small and efficient.

It's also worth noting that using encodings based on selection networks give an
extra edge in solving optimization problems for which we need to solve a sequence
of problems that differ only in the decreasing bound of a cardinality constraint.
In this setting we only need to add one more clause $\neg y_{k}$ for a new value
of $k$, and the search can be resumed keeping all previous clauses as it is. This
works because if a comparator network is a $k$-selection network, then it is also
a $k'$-selection network for any $k'<k$. This property is called {\em incremental
strengthening} and most state-of-the-art SAT-solvers provide a user interface for it.

\section{At-Most-One Constraints}

Much research has been done on cardinality constraints for small values of $k$.
The special case of at-most-k constraint is when $k=1$, which results in at-most-one constraint (AMO, in short),
which could be viewed as the simplest type of constraint and at the same time, the most useful one. 
It is due to the fact that AMO constraints are most widely used constraints during the process
of translating a practical problem into a propositional satisfiability instance.
We reference some of those encodings here.

For convenience, we denote $AMO(X)$ and $ALO(X)$
to be at-most-one and at-least-one clauses for the set of propositional variables $X=\{x_1,\dots,x_n\}$,
respectively, and we define $EO(X) = AMO(X) \wedge ALO(X)$.
We use a running example $AMO(x_1,\dots,x_8)$ to illustrate the encodings.

\paragraph{Binomial Encoding.} The simplest encoding is the {\em binomial} encoding, sometimes also called the {\em naive} encoding.
It is referenced in many papers, for example in \cite{frisch2010sat}. The idea of this encoding is to express that all possible combinations
of two variables are not simultaneously assigned to true. This requires $\binom{n}{2}$ clauses:

\[
  \bigwedge_{i=1}^{n-1}\bigwedge_{j=i+1}^{n} (\neg x_i \vee \neg x_j).
\]

The encoding does not require any additional variables, but the quadratic number of clauses
makes it impractical for large values of $n$. Nevertheless, due to its simplicity, it is
widely used in practice. Notice that we used this encoding to translate the 4-Queens Puzzle
to SAT in Section \ref{sec:sat:app}.

\begin{examplebox}
\begin{example}
  In the running example, the binomial encoding produces the following set of clauses:

  \small
  \begin{align*}
    & (\neg x_1 \vee \neg x_2) \wedge (\neg x_1 \vee \neg x_3) \wedge (\neg x_1 \vee \neg x_4) \wedge (\neg x_1 \vee \neg x_5) \wedge
    (\neg x_1 \vee \neg x_6) \wedge (\neg x_1 \vee \neg x_7) \wedge (\neg x_1 \vee \neg x_8) \\
    & (\neg x_2 \vee \neg x_3) \wedge (\neg x_2 \vee \neg x_4) \wedge (\neg x_2 \vee \neg x_5) \wedge
    (\neg x_2 \vee \neg x_6) \wedge (\neg x_2 \vee \neg x_7) \wedge (\neg x_2 \vee \neg x_8) \\
    & (\neg x_3 \vee \neg x_4) \wedge (\neg x_3 \vee \neg x_5) \wedge
    (\neg x_3 \vee \neg x_6) \wedge (\neg x_3 \vee \neg x_7) \wedge (\neg x_3 \vee \neg x_8) \\
    & (\neg x_4 \vee \neg x_5) \wedge
    (\neg x_4 \vee \neg x_6) \wedge (\neg x_4 \vee \neg x_7) \wedge (\neg x_4 \vee \neg x_8) \\
    & (\neg x_5 \vee \neg x_6) \wedge (\neg x_5 \vee \neg x_7) \wedge (\neg x_5 \vee \neg x_8) \\
    & (\neg x_6 \vee \neg x_7) \wedge (\neg x_6 \vee \neg x_8) \\
    & (\neg x_7 \vee \neg x_8) \\
  \end{align*}
  \normalsize
\end{example}
\end{examplebox}

\paragraph{Binary Encoding.} The {\em binary} encoding \cite{frisch2006solving} uses $\ceil{\log n}$ auxiliary variables
$\{b_1,\dots,b_{\ceil{\log n}}\}$ to reduce the number of clauses to $n\log n$. The idea is to create a mapping between
each label of the variables $\{x_1,\dots,x_n\}$ to its binary representation using the auxiliary variables ($b_j$ represent j-th bit of the number,
for $1 \leq j \leq \ceil{\log n}$) so that the truth assignment of one input variable $x_i$ implies that the rest of the variables evaluate to false:

\[
  \bigwedge_{i=1}^{n}\bigwedge_{j=1}^{\ceil{\log n}} (x_i \Rightarrow B(i,j)),
\]

\noindent where $B(i,j) \equiv b_j$, if $j$-th bit of $i-1$ (represented in binary) is 1, otherwise $B(i,j) \equiv \neg b_j$.

\begin{examplebox}
\begin{example}
  In the running example, the binary encoding produces the following set of clauses:

  \small
  \begin{align*}
    & (x_1 \Rightarrow \neg b_1) \wedge (x_1 \Rightarrow \neg b_2) \wedge (x_1 \Rightarrow \neg b_3) \, \wedge \\
    & (x_2 \Rightarrow b_1) \wedge (x_2 \Rightarrow \neg b_2) \wedge (x_2 \Rightarrow \neg b_3) \, \wedge \\
    & (x_3 \Rightarrow \neg b_1) \wedge (x_3 \Rightarrow b_2) \wedge (x_3 \Rightarrow \neg b_3) \, \wedge \\
    & (x_4 \Rightarrow b_1) \wedge (x_4 \Rightarrow b_2) \wedge (x_4 \Rightarrow \neg b_3) \, \wedge \\
    & (x_5 \Rightarrow \neg b_1) \wedge (x_5 \Rightarrow \neg b_2) \wedge (x_5 \Rightarrow b_3) \, \wedge \\
    & (x_6 \Rightarrow b_1) \wedge (x_6 \Rightarrow \neg b_2) \wedge (x_6 \Rightarrow b_3) \, \wedge \\
    & (x_7 \Rightarrow \neg b_1) \wedge (x_7 \Rightarrow b_2) \wedge (x_7 \Rightarrow b_3) \, \wedge \\
    & (x_8 \Rightarrow b_1) \wedge (x_8 \Rightarrow b_2) \wedge (x_8 \Rightarrow b_3)
  \end{align*}
  \normalsize
\end{example}
\end{examplebox}

\paragraph{Commander Encoding.}  In the {\em commander} encoding \cite{klieber2007efficient}
one splits the input variables into $m$ disjoint sets $\{G_1,\dots,G_m\}$
and introduce $m$ auxiliary commander variables $\{c_1,\dots,c_m\}$, one for each set. The constraint is enforced
by adding clauses so that exactly one variable from $G_i \cup {\neg c_i}$ is true and at most one of the commander variables are true:

\[
  \bigwedge_{i=1}^m EO(\{\neg c_i\} \cup G_i) \wedge AMO(c_1,\dots,c_m)
\]

\noindent where $EO(\{\neg c_i\} \cup G_i) = AMO(\{\neg c_i\} \cup G_i) \wedge ALO(\{\neg c_i\} \cup G_i)$ by definition. ALO part can be easily
translated into a single clause and AMO parts can be encoded either recursively or by another AMO encoding (like the binomial encoding).

\begin{examplebox}
\begin{example}
  In the running example, if we set $m=4$ and we divide the input set $X=\{x_1,\dots,x_8\}$ into subsets $G_1=\{x_1,x_2\}$, 
  $G_2=\{x_3,x_4\}$, $G_3=\{x_5,x_6\}$ and $G_4=\{x_7,x_8\}$, and we use the binomial encoding for the AMO parts, then the 
  commander encoding produces the following set of clauses:

  \small
  \begin{align*}
    & (c_1 \vee \neg x_1) \wedge (c_1 \vee \neg x_2) \wedge (\neg x_1 \vee \neg x_2) \wedge (\neg c_1 \vee x_1 \vee x_2) \, \wedge \\
    & (c_2 \vee \neg x_3) \wedge (c_2 \vee \neg x_4) \wedge (\neg x_3 \vee \neg x_4) \wedge (\neg c_2 \vee x_3 \vee x_4) \, \wedge \\
    & (c_3 \vee \neg x_5) \wedge (c_3 \vee \neg x_6) \wedge (\neg x_5 \vee \neg x_6) \wedge (\neg c_3 \vee x_5 \vee x_6) \, \wedge \\
    & (c_4 \vee \neg x_7) \wedge (c_4 \vee \neg x_8) \wedge (\neg x_7 \vee \neg x_8) \wedge (\neg c_4 \vee x_7 \vee x_8) \\
    & \wedge \\
    & (\neg c_1 \vee \neg c_2) \wedge (\neg c_1 \vee \neg c_3) \wedge (\neg c_1 \vee \neg c_4) \wedge (\neg c_2 \vee \neg c_3) \wedge
    (\neg c_2 \vee \neg c_4) \wedge (\neg c_3 \vee \neg c_4).
  \end{align*}
  \normalsize
\end{example}
\end{examplebox}

\paragraph{Product Encoding.} Chen \cite{chen2010new} proposed the {\em product} encoding, where the idea is to arrange the input variables
into a 2-dimensional array and to enforce that in at most one column and at most one row the variable can be set to true. Let $p,q \in \nat$
such that $p \times q \geq n$. We introduce the row variables $R=\{r_1,\dots,r_p\}$ and column variables $C=\{c_1,\dots,c_q\}$. We map
the inputs into a 2-dimensional array and enforce the constraint in the following way:

\[
  AMO(R) \wedge AMO(C) \wedge \bigwedge_{1 \leq i \leq p, \, 1 \leq j \leq q}^{1 \leq k \leq n, \, k=(i-1)q + j} (x_k \Rightarrow r_i \wedge c_j),
\]

\noindent where $AMO(R)$ and $AMO(C)$ can be computed recursively or by another encoding.

\begin{examplebox}
\begin{example}
  In the running example, if we set $p=q=3$, then the arrangement of the variables can be illustrated as follows:

  \Large
  \begin{center}
    \begin{tabular}{ccccc}
                           & $c_1$                      & $c_2$                      & $c_3$                      &  \\ \cline{2-4}
\multicolumn{1}{l|}{$r_1$} & \multicolumn{1}{l|}{$x_1$} & \multicolumn{1}{l|}{$x_2$} & \multicolumn{1}{l|}{$x_3$} &  \\ \cline{2-4}
\multicolumn{1}{l|}{$r_2$} & \multicolumn{1}{l|}{$x_4$} & \multicolumn{1}{l|}{$x_5$} & \multicolumn{1}{l|}{$x_6$} &  \\ \cline{2-4}
\multicolumn{1}{l|}{$r_3$} & \multicolumn{1}{l|}{$x_7$} & \multicolumn{1}{l|}{$x_8$} & \multicolumn{1}{l|}{}      &  \\ \cline{2-4}
    \end{tabular}
  \end{center}
  \normalsize

  \noindent If we use the binomial encoding for $AMO(R)$ and $AMO(C)$, then the product encoding produces the following set of clauses:

  \small
  \begin{align*}
    & (\neg r_1 \vee \neg r_2) \wedge (\neg r_1 \vee \neg r_3) \wedge (\neg r_2 \vee \neg r_3) \, \wedge \\
    & (\neg c_1 \vee \neg c_2) \wedge (\neg c_1 \vee \neg c_3) \wedge (\neg c_2 \vee \neg c_3) \\
    & \wedge \\
    & (x_1 \Rightarrow r_1 \wedge c_1) \wedge (x_2 \Rightarrow r_1 \wedge c_2) \wedge (x_3 \Rightarrow r_1 \wedge c_3) \, \wedge \\
    & (x_4 \Rightarrow r_2 \wedge c_1) \wedge (x_5 \Rightarrow r_2 \wedge c_2) \wedge (x_6 \Rightarrow r_2 \wedge c_3) \, \wedge \\
    & (x_7 \Rightarrow r_3 \wedge c_1) \wedge (x_8 \Rightarrow r_3 \wedge c_2).
  \end{align*}
  \normalsize
\end{example}
\end{examplebox}

\paragraph{Bimander Encoding.} Recently, hybrid approaches have emerged, for example the {\em bimander encoding} \cite{nguyen2015new}
borrows ideas from both {\em binary} and {\em commander} encodings, and the experiments show that the new encoding
is very competitive compared to other state-of-art encodings. The encoding is obtained as follows: we partition a set of
input variables $X=\{x_1,\dots,x_n\}$ into $m$ disjoint subsets $\{G_1,\dots,G_m\}$ such that each subset consists of
$g = \ceil{n/m}$ variables. This step is similar to the commander encoding, but instead of using $m$ commander variables,
we introduce auxiliary variables $\{b_1,\dots,b_{\ceil{\log m}}\}$, just like in the binary encoding. The new variables
take over the role of commander variables in the new encoding. The bimander encoding produces the following set of clauses:

\[
  \bigwedge_{i=1}^{m} AMO(G_i) \wedge \bigwedge_{i=1}^{m}\bigwedge_{h=1}^{g}\bigwedge_{j=1}^{\ceil{\log m}} (x_{i,h} \Rightarrow B(i,j)),
\]

\noindent where $x_{i,h}$ is the $h$-th element in $G_i$ and $B(i,j)$ is defined the same as in binary encoding.

\begin{examplebox}
\begin{example}
  In the running example, if we set $m=3$, thus obtaining $G_1=\{x_1,x_2,x_3\}$, $G_2=\{x_4,x_5,x_6\}$ and $G_3=\{x_7,x_8\}$,
  and we use the binomial encoding for the AMO part, then the bimander encoding produces the following set of clauses:

  \small
  \begin{align*}
    & (\neg x_1 \vee \neg x_2) \wedge (\neg x_1 \vee \neg x_3) \wedge (\neg x_2 \vee \neg x_3) \, \wedge \\
    & (\neg x_4 \vee \neg x_5) \wedge (\neg x_4 \vee \neg x_6) \wedge (\neg x_5 \vee \neg x_6) \, \wedge \\
    & (\neg x_7 \vee \neg x_8) \\
    & \wedge \\
  & (x_1 \Rightarrow \neg b_1) \wedge (x_1 \Rightarrow \neg b_2) \wedge (x_2 \Rightarrow \neg b_1) \wedge (x_2 \Rightarrow \neg b_2) \, \wedge \\
  & (x_3 \Rightarrow \neg b_1) \wedge (x_3 \Rightarrow \neg b_2) \wedge (x_4 \Rightarrow b_1) \wedge (x_4 \Rightarrow \neg b_2) \, \wedge \\
  & (x_5 \Rightarrow b_1) \wedge (x_5 \Rightarrow \neg b_2) \wedge (x_6 \Rightarrow b_1) \wedge (x_6 \Rightarrow \neg b_2) \, \wedge \\
  & (x_7 \Rightarrow \neg b_1) \wedge (x_7 \Rightarrow b_2) \wedge (x_8 \Rightarrow \neg b_1) \wedge (x_8 \Rightarrow b_2).
  \end{align*}
  \normalsize
\end{example}
\end{examplebox}

\section{Pseudo-Boolean Constraints} \label{sec:pb}

The current trend in encoding cardinality constraints involve comparator networks. The experiments show vast superiority
over other approaches. Nevertheless, some methods for encoding PB-constraints are worth mentioning here
(several were already referenced in Section \ref{sec:atmk}), as PB-constraints are a superset
of cardinality constraints. For example, E{\'e}n and S\"orensson \cite{minisatp} developed a PB-solver called \textsc{MiniSat+},
where the solver chooses between three techniques to generate SAT encodings for Pseudo-Boolean constraints.
These convert the constraint to: a BDD structure, a network of binary adders, a network of sorters.
The network of adders is the most concise encoding, but it can have poor propagation properties and often
leads to longer computations than the BDD based encoding. We introduce two techniques that are the basis for
what is considered to be the current state-of-the-art in PB-solving.

\paragraph{Reduced Ordered BDDs.} Recent development in PB-solvers show superiority of encodings
based on {\em Binary Decision Diagrams} (BDDs). The main advantage of BDD-based encodings is that
the resulting size of the formula is not dependent on the size of the coefficients of a PB-constraint.
The first to apply BDDs in the context of encoding PB-constraints were Bailleux, Boufkhad and Roussel \cite{bailleux}.
In the worst case, the size of the resulting CNF formula of their BDD encoding is exponential with respect
to the size of the encoded PB-constraint, but when applied to cardinality constraints,
the encoding is arc-consistent and uses $O(n^2)$ variables and clauses.

Ab{\'\i}o et al. \cite{abio2012} show a construction of {\em Reduced Ordered BDDs} (ROBDDs),
which produce arc-consistent, efficient encoding for PB-constraints. Here we briefly describe their method.
A Reduced Ordered BDD for a PB-constraint $a_1x_1 + a_2x_2 + \dots + a_nx_n \leq k$ is obtained as follows.
An ordering of the variables is established, suppose that it is $\tuple{x_1,x_2,\dots,x_n}$, for convenience.
We build a directed graph with a root node $x_1$. A node has two children: false child and true child. False child
represent the PB-constraint assuming $x_1=0$ (i.e., $a_2x_2 + a_3x_3 + \dots a_nx_n \leq k$),
and its true child represents $a_2x_2 + a_3x_3 \leq k-a_1$. The process is repeated until we reach the last variable.
Then, a constraint of the form $0 \leq K$ is the true node (1) if $K \geq 0$, and the false node (0) if $K < 0$.
This results in what is called an Ordered BDD. For obtaining a Reduced Ordered BDD, two reductions are applied (until fix-point):
removing nodes with identical children and merging isomorphic subtrees. This reduces the size of the initial BDD. We encode BDDs
into CNFs by introducing an auxiliary variable $a$ for every node. If the select variable of the node is $x$ and the auxiliary variables for
the false and true child are $f$ and $t$, respectively, then add the {\em if-then-else} clauses:

\begin{center}
\begin{tabular}{c@{\hskip 1in}c@{\hskip 1in}c}
 {$\!\begin{aligned}
    \neg x \wedge \neg f &\Rightarrow \neg a \\
    \neg x \wedge f &\Rightarrow a
 \end{aligned}$}
 &
 {$\!\begin{aligned}
    x \wedge \neg t &\Rightarrow \neg a \\
    x \wedge t &\Rightarrow a
 \end{aligned}$}
 &
 {$\!\begin{aligned}
    \neg f \wedge \neg t &\Rightarrow \neg a \\
    f \wedge t &\Rightarrow a
 \end{aligned}$}
\end{tabular}
\end{center}

\begin{examplebox}
\begin{example}
  This example is taken from Section 2 of \cite{abio2012}. Consider a PB-constraint $2x_1 + 3x_2 + 5x_3 \leq 6$
  and the ordering $\tuple{x_1,x_2,x_3}$. The Ordered BDD for this constraint looks like in the left figure:

  \begin{center}
    \begin{tabular}{ccc}
        \includegraphics[scale=0.9]{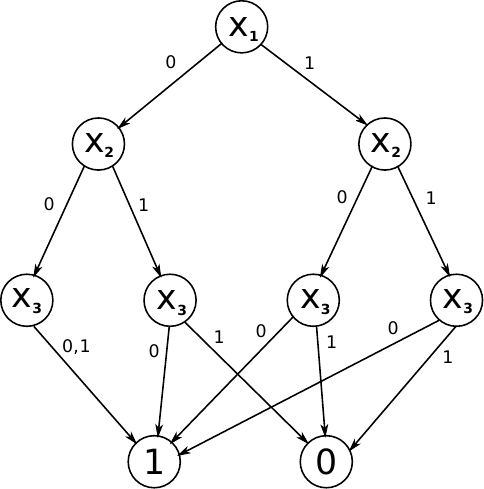} &
        ~~~~~~~~~ &
        \includegraphics[scale=0.9]{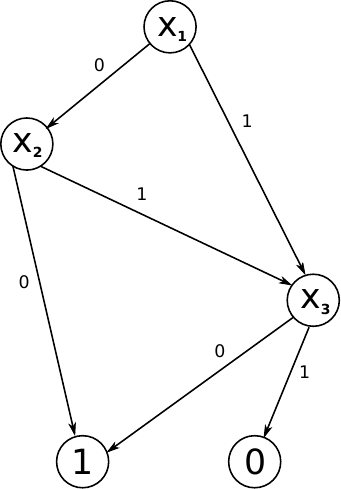}
    \end{tabular}
    \end{center}


  \noindent The root node has $x_1$ as selector variable. Its false child represent the PB-constraint assuming $x_1=0$ (i.e., $3x_2 + 5x_3 \leq 6$),
  and its true child represents $2 + 3x_2 + 5x_3 \leq 6$, that is, $3x_2 + 5x_3 \leq 4$.
  The two children have the next variable ($x_2$) as selector, and the process is repeated until we reach the last variable. Then,
  a constraint of the form $0 \leq K$ is the true node (1 on the graph) if $K \geq 0$, and the false node (0 on the graph) if $K < 0$.
  The Reduced Ordered BDD for this constraint is presented in the right figure above.
  
\end{example}
\end{examplebox}

In \cite{abio2012} authors show how to produce polynomial-sized ROBDDs and how to encode them
into SAT with only 2 clauses per node, and present experimental results that confirm that
their approach is competitive with other encodings and state-of-the-art Pseudo-Boolean solvers.
They present a proof that there are PB-constraints that admit no polynomial-size ROBDD,
regardless of the variable order, but they also show how to overcome the possible exponential
blowup of BDDs by carefully decomposing the coefficients of a given PB-constraint.

For further improvements, one can look at the work of Sakai and Nabeshima \cite{sakai2015},
where they extend the ROBDD construction to support constraints in the band form:
$l \leq \tuple{\text{Linear term}}\leq h$. They also propose an incremental SAT-solving strategy
of binary/alternative search for minimizing values of a given goal function and their experiments
show significant speed-up in SAT-solver runtime.

\begin{figure}[t!]
  \centering
  \includegraphics[scale=1.0]{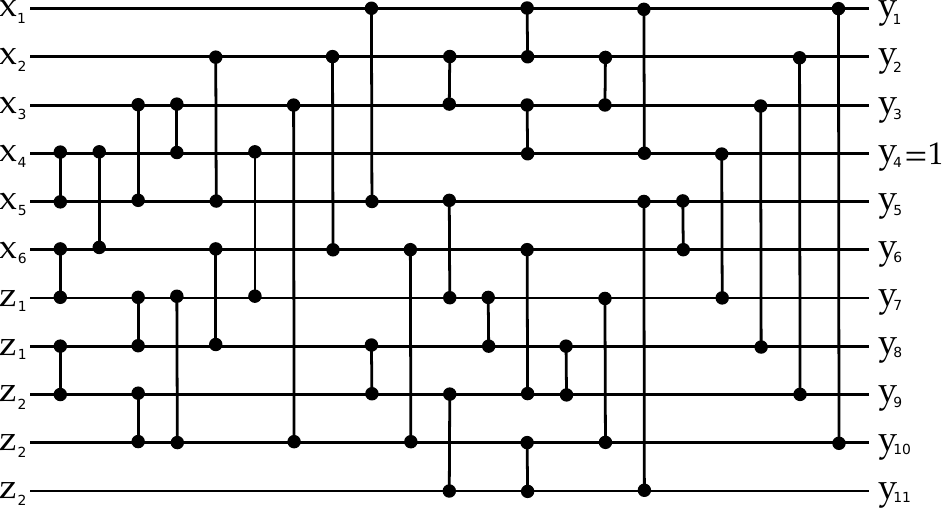}
  \caption{Sorting network with 11 inputs. Fourth output is set to 1 in order to assert the $\geq 4$ constraints. The network uses 35 comparators,
    which is almost optimal (the current best known lower bound is 33 \cite{codish2014twenty}).}
  \label{fig:sortmixed}
\end{figure}

\paragraph{Sorting Networks.} We revisit the concept of using sorting networks,
which have been successfully applied to encode cardinality constraints.
To demonstrate how sorters can be used to translate PB-constraints, consider the following
example from \cite{minisatp}:

\[
  x_1 + x_2 + x_3 + x_4 + x_5 + x_6 + 2z_1 + 3z_2 \geq 4
\]

\noindent The sum of coefficients is 11. We build a sorting network of size 11, feeding $z_1$ into two of the inputs,
$z_2$ into three of the inputs, and all the signals $x_i$ into one input each. To assert the constraint, one just asserts
the fourth output bit of the sorter, like in Figure \ref{fig:sortmixed}.

The shortcoming of this approach is that the resulting size of a CNF after transformation of the sorting network can get exponential if
the coefficients get bigger. Consider an example from \cite{abio2012}: both constraints $3x_1 + 2x_2 + 4x_3 \leq 5$ and
$30001x_1 + 19999x_2 + 39998x_3 \leq 50007$ are equivalent. The Boolean function they represent can be expressed, for example, by
the clauses $\bar{x}_1 \vee \bar{x}_3$ and $\bar{x}_2 \vee \bar{x}_3$. But clearly, a sorting network for the left constraint
would be smaller.

To remedy this situation the authors of \textsc{MiniSat+} propose a method to decompose the constraint into a number of
interconnected sorting networks, where sorters play the role of adders on unary numbers in a {\em mixed radix representation}.

In the classic base $r$ radix system, positive integers are represented as finite sequences of digits $\mathbf{d} = \tuple{d_0,\dots,d_{m-1}}$
where for each digit $0 \leq d_i < r$, and for the most significant digit, $d_{m-1} > 0$. The integer value associated with
$\mathbf{d}$ is $v=d_0 + d_1r + d_2r^2+\dots + d_{m-1}r^{m-1}$. A mixed radix system is a generalization where a base
$\mathbf{B}$ is a sequence of positive integers $\tuple{r_0,\dots,r_{m-1}}$.
The integer value associated with $\mathbf{d}$ is $v = d_0w_0 + d_1w_1 + d_2w_2 + \dots + d_{m-1}w_{m-1}$
where $w_0=1$ and for $i\geq 0$, $w_{i+1} = w_ir_i$. For example, the number $\tuple{2,4,10}_{\mathbf{B}}$ in base $\mathbf{B}=\tuple{3,5}$ is interpreted as
$2 \times \mathbf{1} + 4 \times \mathbf{3} + 10 \times \mathbf{15} = 164$ (values of $w_i$'s in boldface).

The decomposition of a PB-constraint into sorting networks is roughly as follows: first, 
find a "suitable" finite base $\mathbf{B}$ for the set of coefficients,
for example, in \textsc{MiniSat+} base is chosen so that the sum of all the digits of
the coefficients written in that base, is as small as possible. Then for
each element $r_i$ of $\mathbf{B}$ construct a sorting network where the inputs
to $i$-th sorter are those digits $\mathbf{d}$ (from the coefficients) where $d_i$
is non-zero, plus the potential carry bits from the $(i-1)$-th sorter.

\begin{tcolorbox}[colback=white,colframe=black,sharp corners,enhanced jigsaw,breakable,break at=6cm/20cm]
\begin{example}
  We show a construction of a sorting network system using an example from \cite{codish2011}, where authors show
  a step-by-step process of translating a PB-constraint $\psi = 2x_1+2x_2+2x_3+2x_4+5x_5+18x_6 \geq 23$.
  Let $\mathbf{B}=\tuple{2,3,3}$ be the considered mixed radix base. The representation of the coefficients of $\psi$
  in base $\mathbf{B}$ is illustrated as a $6 \times 4$ matrix:

  \begin{equation*}
      \begin{pmatrix}
      0 & 1 & 0 & 0 \\
      0 & 1 & 0 & 0 \\
      0 & 1 & 0 & 0 \\
      0 & 1 & 0 & 0 \\
      1 & 2 & 0 & 0 \\
      0 & 0 & 0 & 1 \\
     \end{pmatrix}
  \end{equation*}

  \noindent The rows of the matrix correspond to the representation of the coefficients in base $\mathbf{B}$.
  Weights of the digit positions of base $\mathbf{B}$ are $\bar{w}=\tuple{1,2,6,18}$. Thus, the decomposition
  of the LHS of $\psi$ is:

  \[
    \mathbf{1} \cdot (x_5) + \mathbf{2} \cdot (x_1 + x_2 + x_3 + x_4 + 2x_5) + \mathbf{6} \cdot (0) + \mathbf{18} \cdot (x_6)
  \]

  Now we construct a series of four sorting networks in order to encode the sums at each digit position of $\bar{w}$.
  Given values for the variables, the sorted outputs from these networks represent unary numbers $d_1$,$d_2$,$d_3$,$d_4$
  such that the LHS of $\psi$ takes the value $\mathbf{1} \cdot d_1 + \mathbf{2} \cdot d_2 + \mathbf{6} \cdot d_3 + \mathbf{18} \cdot d_4$.

  The final step is to encode the carry operation from each digit position to the next. The first three outputs must
  represent valid digits (in unary) for $\mathbf{B}$. In our example the single potential violation to this is $d_2$, which
  is represented in 6 bits. To this end we add two components to the encoding: (1) each third output of the second network
  is fed into the third network as carry input; and (2) a {\em normalizer} $R$ is added to encode that the output of the second network is to
  be considered modulo 3. The full construction is illustrated below:

  \begin{center}
    \includegraphics[scale=1.3]{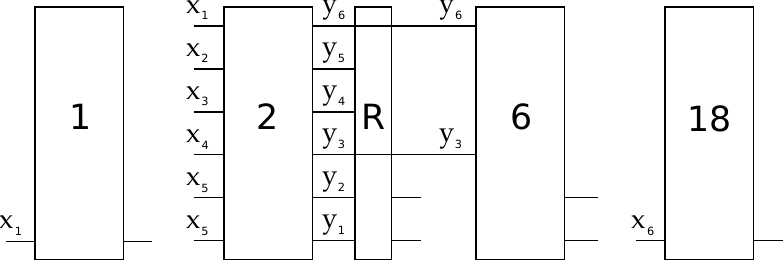}
  \end{center}

In the end, to enforce the constraint, we have to add clauses representing the relation $\geq 23$ (in base $\mathbf{B}$).
It is done by lexicographical comparison of bits representing LHS to bits representing $23_{(\mathbf{B})}$. See \cite{minisatp}
for a detailed description of the algorithm.
\end{example}
\end{tcolorbox}

On a final note, some research has been done on finding optimal mixed radix base for the aforementioned construction.
For example, Codish et al. \cite{codish2011} present an algorithm which scales to find an optimal
base consisting of elements with values up to $1,000,000$ and they consider several measures of optimality for finding the base.
They show experimentally that in many cases finding a better base leads also to better SAT-solving time.

\section{Summary}

The list of encodings presented here is not exhaustive, as many more encodings have been proposed in the past
for different types of constraints. For at-most-one constraints
one can look into the log encoding \cite{walsh2000sat}, ladder encoding \cite{gent2004new}, and
generalizations of the bimander encoding \cite{barahona2014representative}. For at-most-k constraints
there exists, for example, the partial sum encoding \cite{aloul2002generic} and perfect hashing encoding \cite{ben2012perfect}.

Here we present the comparison of the encodings introduced in the previous sections. All at-most-k constraints
can be reduced to at-most-one constraints by setting $k=1$. On the other hand, some encodings of at-most-one constraints
have generalized constructions for the at-most-k constraints, for example, Firsch and Giannaros \cite{frisch2010sat} give generalizations
for binary, commander and product encodings. For the binomial encoding the number of clauses grows significantly when considering the at-most-k constraint.
In the worst case of $k=\ceil{n/2}-1$ it requires $O(2^n/\sqrt{n/2})$ clauses. We summarize the encodings in Table \ref{tbl:encodings}.

\begin{table}[ht!]\renewcommand*{\arraystretch}{1.0}
\centering
\begin{tabular}{>{\centering}p{19mm}|>{\centering}p{7mm}|>{\centering}p{29mm}|>{\centering}p{29mm}|>{\centering}p{35mm}|p{4mm}}
Method                    & Type     & Origin                         & New vars. & Clauses & AC \\ \hline
\multirow{2}{*}{binomial} & $\leq 1$ & \multirow{2}{*}{folklore}      & 0     & $\binom{n}{2}$   & yes             \\ \cline{2-2} \cline{4-6}
                          & $\leq k$ &                                & 0     & $\binom{n}{k+1}$ & yes             \\ \hline
\multirow{2}{*}{binary}   & $\leq 1$ & Frisch et al. \cite{frisch2006solving} & $O(\log n)$ & $O(n \log n)$ & yes          \\ \cline{2-6}
                          & $\leq k$ & \makecell{Firsch \& \\ Giannaros \cite{frisch2010sat}} & $O(kn)$ & $O(kn \log n)$ & no           \\ \hline
\multirow{2}{*}{commander}& $\leq 1$ & \makecell{Kwon \& \\ Klieber \cite{klieber2007efficient}} & $n/2$ & $3.5n$ & yes \\ \cline{2-6}
                          & $\leq k$ & \makecell{Firsch \& \\ Giannaros \cite{frisch2010sat}} & $kn/2$ & $\left(\binom{2k+2}{k+1}+\binom{2k+2}{k-1}\cdot n/2\right)$ & yes \\ \hline
\multirow{2}{*}{product}  & $\leq 1$ & Chen \cite{chen2010new}                   & $2\sqrt{n}+O(\sqrt[4]{n})$ & $2n + 4\sqrt{n} + O(\sqrt[4]{n})$ & yes \\ \cline{2-6}
                          & $\leq k$ & \makecell{Firsch \& \\ Giannaros \cite{frisch2010sat}} & $(k+1)O(\sqrt[k]{n})$ & $(k+1)(n + O(k\sqrt[k]{n}))$ & yes \\ \hline
bimander                  & $\leq 1$ & \makecell{Mai \& \\ Nguyen \cite{nguyen2015new}} & $\ceil{\log m}$ & $n^2/2m + n\ceil{\log m} - n/2$ & yes \\ \hline
\multirow{2}{*}{adders}   & $\leq k$ & \multirow{2}{*}{Warners \cite{warners1998linear}} & $2n$ & $8n$ & no \\ \cline{2-2} \cline{4-6}
                          & PB       &                                                   & $2n(1+\log(a_{\max}))$ & $8n(1+2\log(a_{\max}))$ & no \\ \hline
totalizers                & $\leq k$ & \makecell{B{\"u}ttner \& \\ Rintanen \cite{buttner2005satisfiability}} & $O(kn)$ & $O(k^2n)$ & yes \\ \hline
\multirow{2}{*}{seq. counter}& $\leq 1$ & \multirow{3}{*}{Sinz \cite{sinz2005towards}} & $n-1$ & $3n-4$ & yes \\ \cline{2-2} \cline{4-6}
                             & $\leq k$ &                                              & $k(n-1)$ & $2nk + n - 3k - 1$ & yes \\ \cline{1-2} \cline{4-6}
par. counter                 & $\leq k$ &                                              & $2n-2$  & $7n-3\floor{\log n}-6$  & no  \\ \hline
\multirow{2}{*}{BDDs}        & $\leq k$ & Bailleux et al. \cite{bailleux} & $O(n^2)$ & $O(n^2)$ & yes \\ \cline{2-6}
                             & PB       & Ab{\'\i}o et al. \cite{abio2012} & $O(n^3\log(a_{\max}))$ & $O(n^3\log(a_{\max}))$ & yes \\ \hline
\multirow{2}{*}{sort. net.}  & $\leq k$ & \multirow{2}{*}{\makecell{E{\'e}n \& \\ S\"orensson \cite{minisatp}}} & $O(n\log^2 n)$ & $O(n\log^2 n)$ & yes \\ \cline{2-2} \cline{4-6}
                             & PB       &   & $O((\sum a_i) \log^2 (\sum a_i))$ & $O((\sum a_i) \log^2 (\sum a_i))$ & yes \\ \hline
sel. net.                    & $\leq k$ & \cite{codish2010pairwise,karpinski2015smaller,asin2011cardinality,abio2013parametric} & $O(n\log^2 k)$ & $O(n\log^2 k)$ & yes \\
\end{tabular}
\caption{Comparison of different encodings for at-most-one, at-most-k and Pseudo-Boolean constraints. We report on the number of new variables that needs to be introduced, the number of generated clauses and whether an encoding achieves some form of arc-consistency.}
\label{tbl:encodings}
\end{table}

One can also compare different encodings based on other measures. For example, Chen \cite{chen2010new} reports that his product
AMO encoding is better than sequential AMO encoding and binary AMO encoding in terms of total number of literals appearing
in the clauses. Chen's product encoding requires $4n + 8\sqrt{n} + O(\sqrt[4]{n})$ literals, while sequential encoding and binary
encoding requires $6n-8$ and $2n\log n$ literals, respectively.

For at-most-one constraints one can also take a look at a very interesting, recent, theoretical result by Ku{\v{c}}era et al. \cite{kuvcera2018lower}.
In their paper authors show a lower bound for the number of clauses the encoding for AMO constraints needs to have in order to preserve the
{\em complete propagation} property -- a generalization of arc-consistency in which we not only require consistency enforced on the input variables
(as in Definition \ref{def:gac}) but for the auxiliary variables as well. The lower bound is $2n + \sqrt{n} - O(1)$
and the product encoding is the closest to that barrier.

\part{Pairwise Selection Networks}

\chapter[Pairwise Bitonic Selection Networks]{Pairwise Bitonic \\ Selection Networks}\label{ch:cp15}

    \def\nqueenssolution{Qd4, Qe2, Qf8, Qa5}
    \setchessboard{smallboard,labelleft=false,labelbottom=false,showmover=false,setpieces=\nqueenssolution}

    \begin{tikzpicture}[remember picture,overlay]
      \node[anchor=east,inner sep=0pt] at (current page text area.east|-0,3cm) {\chessboard};
    \end{tikzpicture}

It has already been observed that using selection networks instead of sorting
networks is more efficient for the encoding of cardinality constraints. Codish and
Zazon-Ivry \cite{codish2010pairwise} introduced Pairwise Cardinality Networks, which are
networks derived from pairwise sorting networks that express cardinality
constraints. Two years later, same authors \cite{zazonpairwise} reformulated the 
definition of Pairwise Selection Networks and proved that their sizes are never 
worse than the sizes of corresponding Odd-Even Selection Networks. To show the 
difference they plotted it for selected values of $n$ and $k$.

In this chapter we give a new construction of smaller selection networks that are 
based on the pairwise selection ones and we prove that the construction is correct. 
We estimate also the size of our networks and compute the difference in sizes 
between our selection networks and the corresponding pairwise ones. The difference 
can be as big as $n\log n / 2$ for $k = n/2$.

To simplify the presentation we assume that $n$ and $k$ are powers of $2$.
The networks in this chapter are presented such that the inputs can be over
any totally ordered set $X$. In the context of encoding Boolean constraints we would
like to set $X=\{0,1\}$, but the proofs in this chapter are general enough to work with
any $X$.

\section{Pairwise Selection Network}\label{sec:psn}

Here we present the basis for our constructions in this chapter (and the next chapter). It is called
the {\em Pairwise Selection Network} and it was created by Codish and Zazon-Ivry \cite{zazonpairwise}.
This class of networks uses a component called a {\em splitter}.

\begin{definition}[splitter]
  A comparator network $f^n$ is a {\em splitter} if for any sequence $\bar{x}~\in~X^n$,
  if $\bar{y} = f^n(\bar{x})$, then $\bar{y}_{\leftt}$ weakly dominates $\bar{y}_{\rightt}$.
\end{definition}

\begin{observation}
  The splitter (on $n$ inputs) -- denoted as $split^n$ from now on -- can be constructed by comparing
  inputs $i$ and $i + n/2$, for $i=1..n/2$ (see Figure \ref{fig:splitters}a).
\end{observation}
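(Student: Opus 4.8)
The plan is to unwind the definitions directly, exploiting the fact that the $n/2$ comparators comprising $split^n$ act on pairwise disjoint index pairs. Writing $split^n = c^n_{1,1+n/2} \circ \cdots \circ c^n_{n/2,n}$, I would fix an arbitrary input $\bar{x} \in X^n$ and set $\bar{y} = split^n(\bar{x})$. Since in this chapter $n$ is a power of $2$, we have $\floor{n/2} = n/2$, so $\bar{y}_{\leftt} = \tuple{y_1,\dots,y_{n/2}}$ and $\bar{y}_{\rightt} = \tuple{y_{n/2+1},\dots,y_n}$ are of equal length. By the definition of weak domination it therefore suffices to show $y_i \geq y_{i+n/2}$ for every $i \in \{1,\dots,n/2\}$.

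The one structural point to note is that the comparator index pairs $\{i, i+n/2\}$, for $i = 1,\dots,n/2$, are pairwise disjoint and together partition $\{1,\dots,n\}$. Hence the comparators commute, and every output coordinate is affected by exactly one of them: the comparator $c^n_{i,i+n/2}$ alone determines $y_i$ and $y_{i+n/2}$, leaving all other coordinates fixed. By Definition \ref{def:comparatorf} this comparator sets $y_i = \max\{x_i, x_{i+n/2}\}$ and $y_{i+n/2} = \min\{x_i, x_{i+n/2}\}$.

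The conclusion then drops out immediately: for each $i \in \{1,\dots,n/2\}$,
\[
  y_i = \max\{x_i, x_{i+n/2}\} \geq \min\{x_i, x_{i+n/2}\} = y_{i+n/2},
\]
which is exactly $\bar{y}_{\leftt} \succeq_w \bar{y}_{\rightt}$, so $split^n$ is a splitter. Because this is a one-line computation, there is no real obstacle; the only thing requiring (slight) care is the disjointness observation above, since it is what guarantees that no comparator overwrites the value placed by another and that the order of composition is irrelevant — allowing each $y_i$ and $y_{i+n/2}$ to be read off from a single comparator rather than from the full composition.
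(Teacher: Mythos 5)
Your argument is correct and is exactly the direct verification the paper implicitly relies on (the paper states this as an unproved Observation): the comparators act on disjoint index pairs, so $y_i = \max\{x_i, x_{i+n/2}\} \geq \min\{x_i, x_{i+n/2}\} = y_{i+n/2}$ for each $i$, which is precisely weak domination of $\bar{y}_{\leftt}$ over $\bar{y}_{\rightt}$. No gaps; your remark about disjointness justifying that each output coordinate is read off from a single comparator is the only point needing care, and you handle it.
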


  \begin{algorithm}[t!]
    \caption{$pw\_sel^{n}_k$}\label{net:pw}
    \begin{algorithmic}[1]
      \Require {$\bar{x} \in X^{n}$; $n$ and $k$ are powers of $2$; $1 \leq k \leq n$}
      \Ensure{The output is top $k$ sorted and is a permutation of the inputs}
      \If {$k=1$}
        \Return $max^n(\bar{x})$
      \EndIf
      \If {$k=n$}
        \Return $oe\_sort^n(\bar{x})$
      \EndIf
      \State $\bar{y} \gets split^n(\bar{x})$
      \State $\bar{l} \gets pw\_sel^{n/2}_{\min(n/2,k)}(\bar{y}_{\leftt})$
      \State $\bar{r} \gets pw\_sel^{n/2}_{\min(n/2,k/2)}(\bar{y}_{\rightt})$
      \State \Return $pw\_merge^{n}_k(\bar{l} :: \bar{r})$
    \end{algorithmic}
  \end{algorithm}

  \begin{algorithm}[t!]
    \caption{$pw\_merge^{n}_k$}\label{net:pwmerge}
    \begin{algorithmic}[1]
      \Require {$\bar{l} :: \bar{r} \in X^{n}$; $|l|=|r|$; $\bar{l}$ is top $k$ sorted,
        $\bar{r}$ is top $k/2$ sorted and $\pref(k/2,\bar{l}) \succeq_{w} \pref(k/2,\bar{r})$;
        $n$ and $k$ are powers of $2$; $1 \leq k < n$}
      \Ensure{The output is top $k$ sorted and is a permutation of the inputs}
      \If {$n \leq 2$ {\bf or} $k=1$}
        \Return $\zip(\bar{l},\bar{r})$
      \EndIf
      \State $\bar{y}  \gets pw\_merge^{n/2}_{k/2}(\bar{l}_{\odd} :: \bar{r}_{\odd})$
      \State $\bar{y}' \gets pw\_merge^{n/2}_{k/2}(\bar{l}_{\even} :: \bar{r}_{\even})$
      \State $\bar{z}  \gets \zip(\bar{y}, \bar{y}')$, $z'_1=z_1$, $\bar{z}'_{2k..n}=\bar{z}_{2k..n}$
      \ForAll {$i \in \{1, \dots, k-1\}$}
        $\tuple{z'_{2i},z'_{2i+1}} \gets sort^2(z_{2i},z_{2i+1})$
      \EndFor
      \State \Return $\bar{z}'$
    \end{algorithmic}
  \end{algorithm}

  The construction is presented in Algorithm \ref{net:pw}. The sub-procedures used are: 
  $max^n$ -- select maximum element out of $n$ inputs, and
  $pw\_merge^n_k$ -- a {\em Pairwise Merging Network} (Algorithm \ref{net:pwmerge}).
  If $k=n$ we need to sort the input sequence, therefore we use the
  odd-even sorting network (Algorithm \ref{ex:net:oddeven}) in this case.
  The last step of Algorithm \ref{net:pw} produces a top $k$ sorted sequence given
  the outputs of the recursive calls.

  Notice that since we introduced a splitter (Step 3), in the recursive calls
  we need to select $k$ top elements from the first half of $\bar{y}$, but only $k/2$
  elements from the second half. The reason: $r_{k/2+1}$ cannot be one of the first
  $k$ largest elements of $\bar{l} :: \bar{r}$. First, $r_{k/2+1}$ is smaller
  than any one of $\tuple{r_1,\ldots,r_{k/2}}$ (by the definition of top $k$ sorted sequence),
  and second, $\tuple{l_1,\ldots,l_{k/2}}$ weakly dominates $\tuple{r_1,\ldots,r_{k/2}}$,
  so $r_{k/2+1}$ is smaller than any one of $\tuple{l_1,\ldots,l_{k/2}}$.
  From this argument we make the following observation:

  \begin{observation}
    If $\bar{l} \in X^{n/2}$ is top $k$ sorted,
    $\bar{r} \in X^{n/2}$ is top $k/2$ sorted and $\tuple{l_1,\ldots,l_{k/2}}$
    weakly dominates $\tuple{r_1,\ldots,r_{k/2}}$, then $k$ largest elements of $\bar{l} :: \bar{r}$ are in
    $\tuple{l_1,\ldots,l_k} :: \tuple{r_1,\ldots,r_{k/2}}$.
    \label{obs:second_step}
  \end{observation}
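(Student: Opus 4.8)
The plan is to prove the slightly stronger statement that every element lying outside the candidate prefix $P = \tuple{l_1,\ldots,l_k} :: \tuple{r_1,\ldots,r_{k/2}}$ is dominated by at least $k$ elements of $P$; this certifies that some selection of the $k$ largest elements of $\bar{l} :: \bar{r}$ lies entirely in $P$. First I would fix notation: regard $P$ as the multiset of $3k/2$ candidate elements and let $T = \{l_{k+1},\ldots,l_{n/2}\} \cup \{r_{k/2+1},\ldots,r_{n/2}\}$ be the discarded tail, so that $\bar{l}::\bar{r}$ is the disjoint union $P \uplus T$. The key reduction is the elementary fact that if the $k$-th largest value occurring in $P$ is $\geq t$ for every $t \in T$, then the $k$ largest elements of $P$ dominate all of $T$ as well as the remaining $k/2$ elements of $P$, and hence constitute a valid choice of the $k$ largest elements of $\bar{l}::\bar{r}$ sitting inside $P$.

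Next I would bound the tail elements coming from each half separately. For the $\bar{l}$-tail this is immediate: since $\bar{l}$ is top $k$ sorted, each $l_j$ with $j>k$ satisfies $l_i \geq l_j$ for all $i \leq k$, so the $k$ elements $l_1,\ldots,l_k \in P$ already witness that the $k$-th largest value of $P$ is $\geq l_j$. For the $\bar{r}$-tail I would single out $r_{k/2+1}$, the largest discarded element of $\bar{r}$, exactly as in the discussion preceding the statement. Since $\bar{r}$ is top $k/2$ sorted I have $r_1,\ldots,r_{k/2} \geq r_{k/2+1}$, and combining $l_i \geq r_i$ (weak domination of $\tuple{l_1,\ldots,l_{k/2}}$ over $\tuple{r_1,\ldots,r_{k/2}}$) with $r_i \geq r_{k/2+1}$ yields $l_i \geq r_{k/2+1}$ for $i \leq k/2$. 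Thus the $k$ elements $l_1,\ldots,l_{k/2},r_1,\ldots,r_{k/2}$, all belonging to $P$, are $\geq r_{k/2+1}$, so the $k$-th largest value of $P$ is $\geq r_{k/2+1}$. As $\bar{r}$ is sorted on its prefix, every later $r_j$ is $\leq r_{k/2+1}$, so the same $k$ witnesses handle the whole $\bar{r}$-tail.

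Finally I would assemble the two cases: every $t \in T$ is dominated by $k$ elements of $P$, so by the reduction the $k$ largest elements of $\bar{l}::\bar{r}$ lie inside $P$, which is the claim. The main obstacle is not any single inequality but the bookkeeping around ties and the precise reading of ``the $k$ largest elements are in $P$'': weak domination is only a componentwise relation on the first $k/2$ coordinates, so the dominating set for an $\bar{r}$-tail element must be taken to be $l_1,\ldots,l_{k/2},r_1,\ldots,r_{k/2}$ rather than $l_1,\ldots,l_k$ — the elements $l_{k/2+1},\ldots,l_k$ cannot be compared to the $\bar{r}$-tail from the hypotheses alone. Stating the reduction lemma so that it is robust to repeated values is the part I would write most carefully.
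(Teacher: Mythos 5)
Your argument is correct and is essentially the paper's own justification, which appears informally in the paragraph preceding the observation: each $r_j$ with $j>k/2$ is dominated by the $k$ elements $l_1,\dots,l_{k/2},r_1,\dots,r_{k/2}$ (using top $k/2$ sortedness of $\bar{r}$ and the weak domination $l_i \geq r_i$), the $\bar{l}$-tail is handled by top $k$ sortedness of $\bar{l}$, and your remark that $l_{k/2+1},\dots,l_k$ cannot serve as witnesses against the $\bar{r}$-tail is exactly the right point of care. One small slip: ``every later $r_j$ is $\leq r_{k/2+1}$'' does not follow from the definition of top $k/2$ sorted (the tail of $\bar{r}$ need not be sorted), but this is harmless because your $k$ witnesses dominate each tail element $r_j$ directly via $r_i \geq r_j$ and $l_i \geq r_i \geq r_j$ for $1 \leq i \leq k/2$, with no detour through $r_{k/2+1}$.
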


  We would like to note, that the number of comparators used in the merger is:
  $|pw\_merge^{n}_k|=k\log k - k + 1$. The detailed proof of correctness of network $pw\_sel^n_k$ can
  be found in Section 6 of \cite{zazonpairwise}. The networks in \cite{zazonpairwise}
  are given in the functional representation.

  \begin{theorem}
    Let $n,k \in \nat$, where $1 \leq k \leq n$ and let $n$ and $k$ be powers of 2.
    Then $|pw\_sel^n_k| \leq |oe\_sel^n_k|$, where $oe\_sel^n_k$ is the Odd-Even Selection Network,
    for which $|oe\_sel^n_k|= (n/4)(\log^2 k + 3\log k + 4) - k \log k - 1$.
  \end{theorem}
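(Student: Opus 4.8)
The plan is to read a size recurrence off Algorithm~\ref{net:pw}, bound its recursive terms by the \emph{closed form} supplied for the odd-even network, and watch the whole comparison collapse to one elementary inequality. For $1<k<n$ (which, as $n,k$ are powers of $2$, forces $k\le n/2$, so $\min(n/2,k)=k$ and $\min(n/2,k/2)=k/2$), the splitter contributes $|split^n|=n/2$ comparators, the two recursive calls contribute $|pw\_sel^{n/2}_k|$ and $|pw\_sel^{n/2}_{k/2}|$, and the merger contributes $|pw\_merge^n_k|=k\log k-k+1$. Hence
\[
  |pw\_sel^n_k| \;=\; \frac{n}{2} + |pw\_sel^{n/2}_k| + |pw\_sel^{n/2}_{k/2}| + k\log k - k + 1,
\]
with base cases $|pw\_sel^n_1|=|max^n|=n-1$ and $|pw\_sel^n_n|=|oe\_sort^n|$.

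First I would check that the two families coincide on the boundary. Substituting $k=1$ into the given formula gives $|oe\_sel^n_1|=\frac{n}{4}\cdot 4-1=n-1=|pw\_sel^n_1|$, and $k=n$ gives $|oe\_sel^n_n|=\frac{n}{4}\log n(\log n-1)+n-1=|oe\_sort^n|=|pw\_sel^n_n|$. So the inequality is an equality whenever $k\in\{1,n\}$; the remaining cases $1<k<n$ I would settle by induction on $n$ over powers of $2$ (the base $n\le 2$ being entirely covered by $k\in\{1,n\}$), with inductive hypothesis $|pw\_sel^{m}_{k'}|\le|oe\_sel^{m}_{k'}|$ for every $m<n$ and every admissible $k'$.

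For the inductive step with $1<k<n$, applying the hypothesis to both recursive terms of the recurrence reduces the goal to the purely odd-even statement
\[
  |oe\_sel^n_k| - |oe\_sel^{n/2}_k| - |oe\_sel^{n/2}_{k/2}| \;\ge\; \frac{n}{2} + k\log k - k + 1 .
\]
Plugging in $|oe\_sel^n_k|=\frac{n}{4}(\log^2 k+3\log k+4)-k\log k-1$ and using $\log(k/2)=\log k-1$ together with $\tfrac{k}{2}(\log k-1)=\tfrac12 k\log k-\tfrac12 k$, the $\log^2 k$, the $n$, and the $k\log k$ contributions all cancel, and the difference (left-hand side minus right-hand side) factors as
\[
  (\log k - 1)\,\frac{n-2k}{4}.
\]

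The only genuine obstacle is establishing that this residual quantity is nonnegative, and it is: since $k>1$ we have $\log k-1\ge 0$, and since $k<n$ with both powers of $2$ we have $k\le n/2$, i.e.\ $n-2k\ge 0$. This factorization also explains \emph{why} a term-by-term comparison is impossible and a detour is needed: the pairwise recursion splits the selection parameter as $k\mapsto(k,k/2)$ whereas the odd-even recursion splits it as $k\mapsto(k,k)$, so the two recursion trees have different shapes and cannot be matched subproblem-by-subproblem. Bounding the pairwise recurrence against the odd-even closed form rather than against a matching recurrence is exactly the device that dissolves this mismatch, collapsing the whole discrepancy into $(\log k-1)(n-2k)/4\ge 0$; strict inequality occurs precisely when $k>2$ and $k<n/2$.
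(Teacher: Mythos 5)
Your proposal is correct: the recurrence you read off Algorithm~\ref{net:pw} is exactly the one the paper records in \eqref{eq:pw}, your boundary checks at $k=1$ and $k=n$ against the closed form are right, and I have verified the algebra in the inductive step --- with $L=\log k$ one indeed gets $|oe\_sel^n_k|-|oe\_sel^{n/2}_k|-|oe\_sel^{n/2}_{k/2}| = \frac{n}{4}(L+1)+\frac{k}{2}L-\frac{k}{2}+1$, and subtracting $\frac{n}{2}+kL-k+1$ leaves precisely $(L-1)(n-2k)/4\ge 0$. However, the route is genuinely different from the paper's, because the paper offers no proof at all: it simply defers to Theorems 11 and 14 of Codish and Zazon-Ivry \cite{zazonpairwise}, where the comparison is carried out in the functional-representation framework of that reference. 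Your argument is self-contained, works directly with the procedural recurrence used elsewhere in this thesis, and has the pedagogical advantage of isolating the entire discrepancy between the two networks in the single nonnegative residual $(\log k-1)(n-2k)/4$; your observation that the mismatched parameter splits ($k\mapsto(k,k/2)$ versus $k\mapsto(k,k)$) force a comparison against the closed form rather than a recurrence-to-recurrence matching is a genuine insight that the citation hides. One small caveat: your closing claim that strict inequality holds ``precisely when $k>2$ and $k<n/2$'' characterizes positivity of the top-level residual only; overall strictness of $|pw\_sel^n_k|<|oe\_sel^n_k|$ holds whenever \emph{some} node of the recursion tree has a positive residual, so the ``precisely'' should be weakened or the condition restated over the whole tree. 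This does not affect the theorem, which asserts only the non-strict inequality.
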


  \begin{proof}
    See Theorems 11 and 14 of \cite{zazonpairwise}.
  \end{proof}

  \section{Bitonic Selection Network}

  We now present the construction of the {\em Bitonic Selection Network}.
  We use it to estimate the sizes of our improved pairwise selection network from the next section.
  We begin with a useful property of splitters and bitonic sequences
  proved by Batcher:

  \begin{lemma}\label{lma:split_bit}
    If $\bar{b} \in X^n$ is bitonic and $\bar{y}=split^n(\bar{b})$,
    then $\bar{y}_{\leftt}$ and $\bar{y}_{\rightt}$ are bitonic and
    $\bar{y}_{\leftt} \succeq \bar{y}_{\rightt}$.
  \end{lemma}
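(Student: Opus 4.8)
The plan is to first record exactly what the half-cleaner $split^n$ does to an input, then reduce the statement to the Boolean case $X=\bool$ by thresholding, and finally settle the Boolean case by a direct arc-folding argument. First I would write the splitter explicitly: with the max-on-top convention of this thesis, comparing position $i$ with position $i+n/2$ yields, for $1 \leq i \leq n/2$,
\[
  y_i = \max(b_i, b_{i+n/2}), \qquad y_{n/2+i} = \min(b_i, b_{i+n/2}),
\]
so $\bar{y}_{\leftt}$ is the coordinate-wise maximum and $\bar{y}_{\rightt}$ the coordinate-wise minimum of the two halves of $\bar{b}$.

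Next comes the reduction to $\bool$. For a threshold $t \in X$ let $\bar{b}^{(t)}$ denote the Boolean sequence with $b^{(t)}_i = 1$ iff $b_i \geq t$. I would invoke two standard facts: (a) since $x \mapsto [x \geq t]$ is monotone and every comparator commutes with a monotone map applied coordinate-wise, $\bigl(split^n(\bar{b})\bigr)^{(t)} = split^n(\bar{b}^{(t)})$; and (b) a sequence over $X$ is bitonic iff every threshold cut $\bar{b}^{(t)}$ is a bitonic Boolean sequence, while a domination $\bar{u} \succeq \bar{v}$ holds iff for every $t$ the cut of $\bar{u}$ dominates the cut of $\bar{v}$ (because $\min \bar{u} \geq \max \bar{v}$ can be tested threshold by threshold). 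Granting (a) and (b), it suffices to prove the lemma for $\bar{b} \in \bool^n$.

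The crux is therefore the Boolean case, where bitonicity of $\bar{b}$ (Definition \ref{def:bit}) means its $1$'s occupy a single contiguous arc $A$ of the cyclic order on $\{1,\dots,n\}$. Identifying $i$ with $i+n/2$ folds $\{1,\dots,n\}$ onto $\{1,\dots,n/2\}$, and then $y_i = 1$ iff at least one preimage of $i$ lies in $A$, while $y_{n/2+i} = 1$ iff both preimages lie in $A$. I would split on $|\bar{b}|_1 = |A|$. If $|A| \leq n/2$ the arc contains no antipodal pair, so no bottom entry is $1$ and $\bar{y}_{\rightt} = \tuple{0,\dots,0}$, whereas the singly-covered positions are the projection of a contiguous arc of length at most $n/2$, hence a contiguous arc, so $\bar{y}_{\leftt}$ is bitonic; domination is immediate since every entry of $\bar{y}_{\rightt}$ is $0$. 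If $|A| \geq n/2$ the complementary $0$-arc has length at most $n/2$, and the symmetric argument gives $\bar{y}_{\leftt} = \tuple{1,\dots,1}$, while $\bar{y}_{\rightt}$ has its $1$'s on the doubly-covered positions, a contiguous arc of length $|A| - n/2$; domination is immediate since every entry of $\bar{y}_{\leftt}$ is $1$.

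The main obstacle is not any single computation but the handling of the circular-shift clause of the definition: an arc $A$ may wrap around the end of the sequence, so both fact (b) and the claim that folding an arc of length $\leq n/2$ produces an arc must be argued cyclically rather than by where the arc starts. I would avoid case-splitting on the wrap point by working throughout on $\mathbb{Z}_n$ with the folding map $i \mapsto i \bmod (n/2)$, which makes "contiguous arc" a fold-invariant notion and lets both halves be treated uniformly.
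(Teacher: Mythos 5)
Your proof is correct, but it is genuinely different from what the paper does: the thesis does not prove this lemma at all, it simply cites Appendix~B of Batcher's 1968 paper, where the claim is established by a direct case analysis over an arbitrary totally ordered domain (locating the crossover index at which the comparisons between $b_i$ and $b_{i+n/2}$ change outcome, and checking the two resulting configurations). You instead reduce to the Boolean case by thresholding and then settle that case by folding a cyclic arc onto $\{1,\dots,n/2\}$, which replaces Batcher's element-level bookkeeping with a short combinatorial argument and has the virtue of being self-contained and uniform over the circular-shift clause of Definition~\ref{def:bit}. The price is the two ``standard facts'' you invoke: fact~(a) and the forward direction of~(b) are immediate (comparators commute with coordinatewise monotone maps, and monotone maps preserve bitonicity), but the converse direction of~(b) --- that a sequence all of whose threshold cuts are bitonic Boolean sequences is itself bitonic --- is the one ingredient that actually needs a line of justification (a sequence is unimodal iff every superlevel set is an interval; cut the circle at a position achieving the minimum and apply this to the nested arcs $\{i : b_i \geq t\}$). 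It is true and routine, but since the whole weight of the reduction rests on it, you should either prove it or point to a precise reference rather than leave it as an unexamined equivalence; with that addition your argument is a complete and arguably cleaner substitute for the external citation.
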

  
  \begin{proof}
    See Appendix B of \cite{batcher1968sorting}.
  \end{proof}

  \begin{definition}[bitonic splitter]\label{def:bitonic_splitter}
    A comparator network $f^n$ is a bitonic splitter 
    if for any two sorted sequences $\bar{x},\bar{y}~\in~X^{n/2}$,
    if $\bar{z} = bit\_split^n(\bar{x}::\bar{y})$, then (1) $\bar{z}_{\leftt} \succeq \bar{z}_{\rightt}$
    and (2) $\bar{z}_{\leftt}$ and $\bar{z}_{\rightt}$ are bitonic.
  \end{definition}

  \begin{observation}
    We can construct a bitonic splitter $bit\_split^n$ by joining inputs $\tuple{i,n-i+1}$,
    for $i=1..n/2$, with a comparator (see Figure \ref{fig:splitters}c).
    Notice that this is a consequence of Lemma \ref{lma:split_bit}, because
    given two sorted sequences $\tuple{x_1,\dots,x_n}$ and $\tuple{y_1,\dots,y_n}$,
    a sequence $\tuple{x_1,\dots,x_n} :: \tuple{y_n,\dots,y_1}$ is bitonic.
    Size of a bitonic splitter is $|bit\_split^n| = n/2$.
  \end{observation}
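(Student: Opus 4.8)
The plan is to reduce the claim to Lemma \ref{lma:split_bit} by exhibiting a precise correspondence between the proposed network $bit\_split^n$ acting on $\bar{x}::\bar{y}$ and the ordinary splitter $split^n$ acting on a related bitonic input. First I would fix two sorted (i.e.\ non-increasing) sequences $\bar{x},\bar{y}\in X^{n/2}$ and form $\bar{b} = \bar{x} :: \tuple{y_{n/2},\dots,y_1}$, that is, $\bar{x}$ followed by $\bar{y}$ read backwards. Since $\bar{x}$ is non-increasing and $\tuple{y_{n/2},\dots,y_1}$ is non-decreasing, the sequence $\bar{b}$ is v-shaped, hence bitonic by Definition \ref{def:bit}; this is exactly the remark recorded in the statement.

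Next I would check that the two networks compare the very same pairs of values. The comparator of $bit\_split^n$ joining positions $i$ and $n-i+1$ sees the values $x_i$ and $y_{n/2-i+1}$ (the latter because position $n/2+j$ of $\bar{x}::\bar{y}$ holds $y_j$). The $i$-th comparator of $split^n$ joins positions $i$ and $i+n/2$ of $\bar{b}$, which hold $\bar{b}_i = x_i$ and $\bar{b}_{i+n/2}=y_{n/2-i+1}$ -- the same pair. Tracking where each comparator deposits its two outputs then shows that the left halves coincide, $bit\_split^n(\bar{x}::\bar{y})_{\leftt} = split^n(\bar{b})_{\leftt}$, while the right half of $bit\_split^n$ is the reversal of $split^n(\bar{b})_{\rightt}$, since $bit\_split^n$ writes the smaller value to position $n-i+1$ rather than to $i+n/2$.

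Finally I would invoke Lemma \ref{lma:split_bit}, which yields that $split^n(\bar{b})_{\leftt}$ and $split^n(\bar{b})_{\rightt}$ are bitonic and that $split^n(\bar{b})_{\leftt} \succeq split^n(\bar{b})_{\rightt}$. For condition (2) of Definition \ref{def:bitonic_splitter} I would use that the reversal of a bitonic sequence is again bitonic (reversing a mountain, or a circular shift of one, gives the same shape), so both halves of $bit\_split^n(\bar{x}::\bar{y})$ are bitonic. For condition (1), the domination relation $\succeq$ quantifies over all pairs of elements and is therefore invariant under reordering either operand; hence reversing the right half while leaving the left half unchanged preserves $split^n(\bar{b})_{\leftt} \succeq split^n(\bar{b})_{\rightt}$, giving $bit\_split^n(\bar{x}::\bar{y})_{\leftt} \succeq bit\_split^n(\bar{x}::\bar{y})_{\rightt}$.

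The size bound $|bit\_split^n| = n/2$ is then immediate, as the construction uses exactly one comparator per pair $\{i,\,n-i+1\}$ for $i=1,\dots,n/2$. The only step needing genuine care is the index bookkeeping in the correspondence, together with the observation that both the bitonic property and the domination relation survive the reversal of the right half; once that correspondence is established, the rest is mechanical.
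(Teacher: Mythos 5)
Your proposal is correct and follows essentially the same route as the paper, which justifies the observation in one sentence by noting that $\bar{x}::\tuple{y_{n/2},\dots,y_1}$ is bitonic and invoking Lemma \ref{lma:split_bit}; you merely make explicit the comparator-by-comparator correspondence with $split^n$ and the (true) facts that reversal preserves bitonicity and that $\succeq$ is order-insensitive in each operand. The index bookkeeping checks out, so nothing further is needed.
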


  \begin{figure}[t!]
    \centering
    \subfloat[]{
      \includegraphics[width=0.21\textwidth]{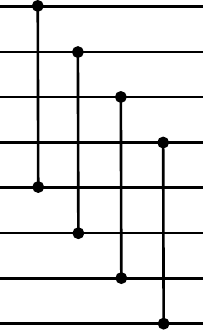}
    }
    ~~~~~~~~~
    \subfloat[]{
      \includegraphics[width=0.21\textwidth]{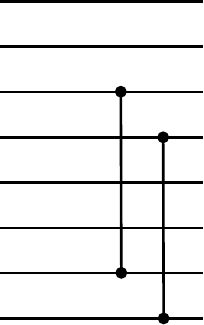}
    }
    ~~~~~~~~~
    \subfloat[]{
      \includegraphics[width=0.21\textwidth]{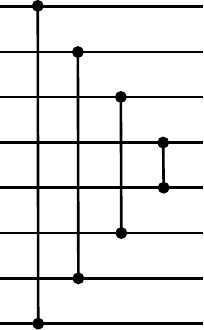}
    }
    \caption{a) splitter; b) half-splitter; c) bitonic splitter}
    \label{fig:splitters}
  \end{figure}

  We now present the procedure for construction of the Bitonic Selection Network.
  We use the odd-even sorting network $oe\_sort$ and the network $bit\_merge$ 
  (also by Batcher \cite{batcher1968sorting}) for sorting bitonic sequences, as black-boxes.
  As a reminder: $bit\_merge^n$ consists of two steps, first we use $\bar{y} = split^n(\bar{x})$,
  then recursively compute $bit\_merge^{n/2}$ for $\bar{y}_{\leftt}$ and $\bar{y}_{\rightt}$
  (base case, $n=2$, consists of a single comparator). In Figure \ref{fig:bit-vs-hbit}a we present $bit\_merge^{16}$.
  Size of this network is: $|bit\_merge^n|=n\log n/2$.
  Bitonic Selection Network $bit\_sel^n_k$ is constructed by the procedure given in Algorithm \ref{net:bit}.

  \begin{algorithm}[ht!]
    \caption{$bit\_sel^n_k$}\label{net:bit}
    \begin{algorithmic}[1]
      \Require {$\bar{x} \in X^{n}$; $n$ and $k$ are powers of $2$; $1 \leq k \leq n$}
      \Ensure{The output is top $k$ sorted and is a permutation of the inputs}
      \State {\bf let} $l=n/k$ {\bf and} $\bar{r} = \tuple{}$
      \ForAll {$i \in \{0, \dots, l-1\}$}
        $B_{i+1} \gets oe\_sort^k(\tuple{x_{ik+1},\dots,x_{(i+1)k}})$
      \EndFor
      \While {$l>1$}
        \ForAll {$i \in \{1,3,\ldots,l-1\}$}
          \State $\bar{y}^i \gets bit\_split^{2k}(B_i :: B_{i+1})$
          \State $B_{\lceil i/2 \rceil}' \gets bit\_merge^k(\bar{y}^i_{\leftt})$
          \State $\bar{r} \gets \bar{r} :: \bar{y}^i_{\rightt}$ \Comment{ residue elements}
          \State {\bf let} $l=l/2$ {\bf and} relabel $B_i'$ to $B_i$, for $1 \leq i \leq l$.
        \EndFor
      \EndWhile
      \State \Return $B_1 :: \bar{r}$
    \end{algorithmic}
  \end{algorithm}

  First, we partition input $\bar{x}$ into $l$ consecutive blocks, each of size $k$,
  then we sort each block with $oe\_sort^k$, obtaining $B_1,\ldots,B_l$. Then, we collect blocks
  into pairs $\tuple{B_1,B_2},\ldots,\tuple{B_{l-1},B_{l}}$ and perform a bitonic splitter on each of them.
  By Lemma \ref{lma:split_bit} $k$ largest elements in $\bar{y}^i$ are in $\bar{y}^i_{\leftt}$,
  and $\bar{y}^i_{\leftt}$ is bitonic, therefore we can use bitonic merger (Step 6) to sort it.
  The algorithm continues until one block remains.

  \begin{theorem}
    A comparator network $bit\_sel^n_k$ is a selection network.
  \end{theorem}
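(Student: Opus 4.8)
The plan is to treat Algorithm \ref{net:bit} as an oblivious procedure and prove directly that the returned sequence $B_1 :: \bar{r}$ is top $k$ sorted. Since every operation used ($oe\_sort$, $bit\_split$, $bit\_merge$) is built from comparators, the output is automatically a permutation of $\bar{x}$, so by the definition of a selection network it suffices to show that the final block $B_1$ is sorted of length $k$ and that $B_1 \succeq \bar{r}$. Indeed, once the output is a length-$n$ permutation with $|B_1| = k$ and $B_1 \succeq \bar{r}$, the block $B_1$ must consist of the $k$ largest inputs, which together with sortedness of $B_1$ is exactly top $k$ sortedness.

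First I would set up a loop invariant for the \textbf{while}-loop, stated over the current block list $B_1, \dots, B_l$ and the accumulated residue $\bar{r}$: (I) every $B_i$ is sorted and has length $k$; and (II) for every residue element $e \in \bar{r}$ there is a current block $B_j$ with $e \leq \min(B_j)$. The base case holds because the initial blocks are outputs of $oe\_sort^k$ (hence sorted, of length $k$) and $\bar{r}$ is empty. For the inductive step, fix an odd $i$ and consider the pair $(B_i, B_{i+1})$; by (I) both are sorted, so Definition \ref{def:bitonic_splitter} applies to $\bar{y}^i = bit\_split^{2k}(B_i :: B_{i+1})$, giving that $\bar{y}^i_{\leftt}$ is bitonic and $\bar{y}^i_{\leftt} \succeq \bar{y}^i_{\rightt}$. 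Hence $B'_{\lceil i/2 \rceil} = bit\_merge^k(\bar{y}^i_{\leftt})$ is the sorted form of a bitonic sequence of length $k$, re-establishing (I) for the new blocks, and every freshly ejected element of $\bar{y}^i_{\rightt}$ is $\leq \min(\bar{y}^i_{\leftt}) = \min(B'_{\lceil i/2 \rceil})$, so it is dominated by a new block.

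The main obstacle is propagating clause (II) for the \emph{old} residue elements through a split, and this is where the key lemma lives. I would prove a domination-propagation lemma: if $B_j$ and $B_{j'}$ are length-$k$ blocks, $B''$ is the survivor half of $bit\_split^{2k}(B_j :: B_{j'})$ (so $|B''| = k$ and $B'' \succeq R''$ for the residue half $R''$), and $e \leq \min(B_j)$, then $e \leq \min(B'')$. The proof is by contradiction: if some $s \in B''$ had $s < e \leq \min(B_j)$, then $s$ lies strictly below every element of $B_j$; any $b \in B_j$ sitting in $R''$ would violate $B'' \succeq R''$ (it forces $s \geq b > s$), so all of $B_j$ lands in $B''$, whence $B'' = B_j$ and the impossibility $s \in B_j$ with $s < \min(B_j)$. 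Applying this lemma with $B_j$ the block guaranteed by (II) and $B''$ its survivor image shows each old residue element stays dominated by a current block, closing the induction. When the loop terminates $l = 1$, so (II) specializes to $B_1 \succeq \bar{r}$ while (I) gives $B_1$ sorted of length $k$; combined with the permutation property this proves that $bit\_sel^n_k(\bar{x})$ is top $k$ sorted, as required.
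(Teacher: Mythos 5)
Your proof is correct and follows essentially the same route as the paper's: an induction over the iterations of the \textbf{while}-loop whose inductive step rests on the bitonic-splitter property that $\bar{y}^i_{\leftt}$ is bitonic and $\bar{y}^i_{\leftt} \succeq \bar{y}^i_{\rightt}$, so that $bit\_merge^k$ sorts the surviving half and the discarded half is harmless. The only difference is bookkeeping: the paper's invariant says the current blocks still contain the $k$ largest elements of $\bar{x}$, whereas you track the dual statement that every residue element is dominated by a full current block (hence your extra domination-propagation lemma, which is sound); both yield the same conclusion at termination.
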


  \begin{proof}
    Let $\bar{x} \in X^n$ be the input to $bit\_sel^n_k$. After Step 2 we get sorted sequences
    $B_1,\ldots,B_l$, where $l=n/k$. Let $l_m$ be the value of $l$ after $m$ iterations of the loop in Step 3.
    Let $B^m_1,\ldots,B^m_{l_m}$ be the blocks after $m$ iterations. We prove
    by induction that:

    \begin{center}
    {\em $P(m)$: if $B_1,\ldots,B_{l}$ are sorted and are containing $k$ largest elements of $\bar{x}$,
      then after $m$-th iteration of the loop in Step 3: $l_m=l/2^m$, $B^m_1,\ldots,B^m_{l_m}$
      are sorted and are containing $k$ largest elements of $\bar{x}$.}
    \end{center}

    \noindent If $m=0$, then $l_0=l$, so $P(m)$ holds. We show that
    $\forall_{m\geq 0}$ $(P(m) \Rightarrow P(m+1))$. Consider $(m+1)$-th iteration of the while loop.
    By the induction hypothesis $l_m=l/2^m$, $B^m_1,\ldots,B^m_{l_m}$
    are sorted and are containing $k$ largest elements of $\bar{x}$.  We show that
    $(m+1)$-th iteration does not remove any element from $k$ largest elements of $\bar{x}$.
    To see this, notice that if $\bar{y}^i = bit\_split^{2k}(B^m_i :: B^m_{i+1})$
    (for $i \in \{1,3,\ldots,l_m-1\}$), then $\bar{y}^i_{\leftt} \succeq \bar{y}^i_{\rightt}$ and
    that $\bar{y}^i_{\leftt}$ is bitonic (by Definition \ref{def:bitonic_splitter}).
    Because of those two facts, $\bar{y}^i_{\rightt}$ is discarded and $\bar{y}^i_{\leftt}$
    is sorted using $bit\_merge^k$. After this, $l_{m+1}=l_m/2=l/2^{m+1}$ and 
    blocks $B^{m+1}_1,\ldots,B^{m+1}_{l_{m+1}}$ are sorted. Thus $P(m+1)$ is true.

    Since $l=n/k$, then by $P(m)$ we see that the while loop terminates after $m=\log \frac{n}{k}$
    iterations and that $B_1$ is sorted and contains $k$ largest elements of $\bar{x}$.
  \end{proof}

  \begin{figure}[ht]
    \begin{center}
      \includegraphics[scale=0.5]{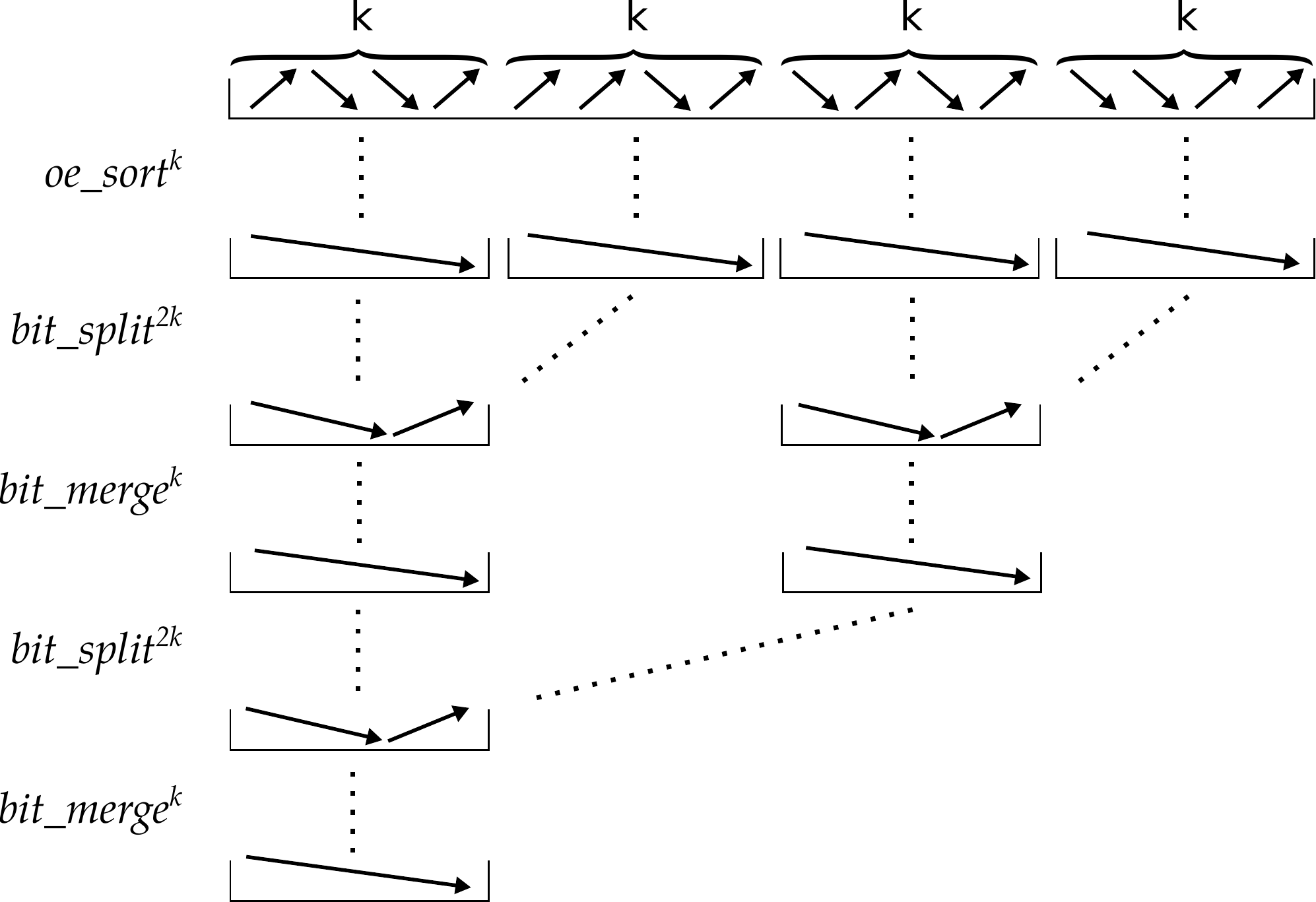}
    \end{center}
    \caption{A Bitonic Selection Network -- a construction diagram}
    \label{fig:bit-sel}
  \end{figure}

  A construction diagram of the Bitonic Selection Network is shown in Figure \ref{fig:bit-sel}. The size of the Bitonic Selection Network is:

  \begin{align}
    |bit\_sel^n_k| &= \frac{n}{k}|oe\_sort^k| + \left(\frac{n}{k}-1\right)(|bit\_split^{2k}|+|bit\_merge^k|) \notag \\ 
    &= \frac{1}{4}n\log^2k + \frac{1}{4}n\log k + 2n - \frac{1}{2}k\log k - k - \frac{n}{k} \label{eq:bit}
  \end{align}

  \noindent The $|oe\_sort^k|$ was already shown in Example \ref{ex:oe_func}. The rest is a straightforward calculation.
  
  In Figure \ref{fig:oe-sel} we present a Bitonic Selection Network for $n=8$ and $k=2$.

  \begin{figure}[ht]
    \begin{center}
      \includegraphics[width=0.30\textwidth]{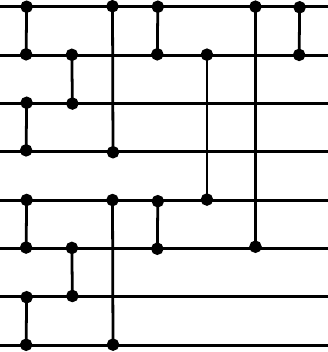}
    \end{center}
    \caption{A Bitonic Selection Network for $n=8$ and $k=2$}\label{fig:oe-sel}
  \end{figure}

\section{Pairwise Bitonic Selection Network}

  As mentioned in Section \ref{sec:psn}, only the first $k/2$ elements from the second half
  of the input are relevant when we get to the merging step in $pw\_sel^n_k$.
  We exploit this fact to create a new, smaller merger.
  We use the concept of bitonic sequences, therefore we call the new merger
  $pw\_bit\_merge^n_k$ and the new selection network $pw\_bit\_sel^n_k$
  (the {\em Pairwise Bitonic Selection Network}). The network $pw\_bit\_sel^n_k$ is generated
  by substituting the last step of $pw\_sel^n_k$ with $pw\_bit\_merge^n_k$.
  The new merger is presented as Algorithm \ref{net:pw_merge}.

  \begin{algorithm}[t!]
    \caption{$pw\_bit\_merge^n_k$}\label{net:pw_merge}
    \begin{algorithmic}[1]
      \Require {$\bar{l} :: \bar{r} \in X^{n}$; $\bar{l}$ is top $k$ sorted,
        $\bar{r}$ is top $k/2$ sorted;
        $\pref(k/2,\bar{l}) \succeq_{w} \pref(k/2,\bar{r})$; $k$ is a power of $2$}
      \Ensure{The output is top $k$ sorted and is a permutation of the inputs}
      \State $\bar{y} \gets bit\_split^k(l_{k/2+1},\ldots,l_k,r_1,\ldots,r_{k/2})$
      \State $\bar{b} \gets \tuple{l_1,\ldots,l_{k/2}} :: \tuple{y_1,\ldots,y_{k/2}}$
      \State $\bar{p} \gets \suff(k/2,\bar{y}) :: \suff(n/2-k,\bar{l}) :: \suff(n/2-k/2,\bar{r})$ \Comment{ residue elements}
      \State \Return $bit\_merge^k(\bar{b}) :: \bar{p}$
    \end{algorithmic}
  \end{algorithm}

  \begin{theorem}\label{thm:pw_merge}
    The output of Algorithm \ref{net:pw_merge} consists of sorted $k$
    largest elements from input $\bar{l} :: \bar{r}$, assuming
    that $\bar{l} \in X^{n/2}$ is top $k$ sorted and $\bar{r} \in X^{n/2}$
    is top $k/2$ sorted and $\tuple{l_1,\ldots,l_{k/2}}$ weakly dominates $\tuple{r_1,\ldots,r_{k/2}}$.
  \end{theorem}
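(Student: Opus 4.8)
The plan is to establish the two components of ``top $k$ sorted'' separately: that the output is a permutation of $\bar{l} :: \bar{r}$, and that its first $k$ entries are sorted and dominate the remaining $n-k$ entries. The permutation part is routine, since $bit\_split^k$ and $bit\_merge^k$ are comparator networks and hence permute their inputs, while Steps 2--3 partition every input element into either $\bar{b}$ or the residue $\bar{p}$; so the returned sequence $bit\_merge^k(\bar{b}) :: \bar{p}$ is a permutation of the input. The substance lies in the second part. First I would invoke Observation \ref{obs:second_step}: under the hypotheses the $k$ largest elements of $\bar{l} :: \bar{r}$ all lie in $\tuple{l_1,\ldots,l_k} :: \tuple{r_1,\ldots,r_{k/2}}$, so the tails $\tuple{l_{k+1},\ldots,l_{n/2}}$ and $\tuple{r_{k/2+1},\ldots,r_{n/2}}$ routed into $\bar{p}$ cannot belong to the top $k$ and need only be shown dominated at the end.

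Next I would analyse the bitonic splitter of Step 1. Its input is $\tuple{l_{k/2+1},\ldots,l_k} :: \tuple{r_1,\ldots,r_{k/2}}$, two sorted sequences of length $k/2$ (the first a suffix of the sorted prefix of $\bar{l}$, the second the sorted prefix of $\bar{r}$), so Definition \ref{def:bitonic_splitter} applies and gives $\bar{y}_{\leftt} \succeq \bar{y}_{\rightt}$ with both halves bitonic. Writing out the comparisons $\tuple{i,k+1-i}$ explicitly, $y_i = \max(l_{k/2+i}, r_{k/2+1-i})$ is the pointwise maximum of a non-increasing and a non-decreasing sequence, hence $\bar{y}_{\leftt}$ is in fact \emph{v-shaped} (Definition \ref{def:bit}). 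I would then separate the candidate set into a group $\{l_1,\ldots,l_{k/2}\}$ kept verbatim and a group $\{l_{k/2+1},\ldots,l_k,r_1,\ldots,r_{k/2}\}$ fed to the splitter, and show the discarded half $\bar{y}_{\rightt}$ is dominated both by $\bar{y}_{\leftt}$ (splitter property) and by $l_{k/2}$, since each of its entries is $\le l_{k/2+i} \le l_{k/2}$. This identifies the $k$ elements forming $\bar{b}$ as precisely the $k$ largest candidates.

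The crux, and the step I expect to be the main obstacle, is showing that $\bar{b} = \tuple{l_1,\ldots,l_{k/2}} :: \bar{y}_{\leftt}$ is bitonic, so that $bit\_merge^k$ genuinely sorts it. Mere bitonicity of $\bar{y}_{\leftt}$ would not suffice, because prepending a non-increasing block to an arbitrary bitonic sequence can destroy bitonicity; this is exactly why the v-shaped refinement above is needed. The gluing inequality is $l_{k/2} \ge y_1$, which follows from $l_{k/2} \ge l_{k/2+1}$ (sortedness of $\bar{l}$) and $l_{k/2} \ge r_{k/2}$ (the weak-domination hypothesis $\pref(k/2,\bar{l}) \succeq_{w} \pref(k/2,\bar{r})$), since $y_1 = \max(l_{k/2+1}, r_{k/2})$. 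With $\tuple{l_1,\ldots,l_{k/2}}$ non-increasing, $l_{k/2} \ge y_1$, and $\bar{y}_{\leftt}$ v-shaped, the concatenation $\bar{b}$ is non-increasing down to the valley of $\bar{y}_{\leftt}$ and then non-decreasing, i.e. v-shaped and therefore bitonic.

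It then remains to close the argument. Since $\bar{b}$ is bitonic, $bit\_merge^k(\bar{b})$ outputs its elements sorted, and by the previous paragraph these elements are the $k$ largest of $\bar{l} :: \bar{r}$, so the first $k$ outputs are correct. Finally I would verify $bit\_merge^k(\bar{b}) \succeq \bar{p}$ by checking that $\min(l_{k/2}, \min \bar{y}_{\leftt})$ dominates each residue block: $\bar{y}_{\rightt}$ by the splitter property, the $l$-tail because every $y_i \ge l_{k/2+i} \ge l_k \ge l_{k+1}$ while $l_{k/2} \ge l_{k+1}$, and the $r$-tail because every $y_i \ge r_{k/2+1-i} \ge r_{k/2} \ge r_{k/2+1}$ while $l_{k/2} \ge r_{k/2+1}$. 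Combining the permutation property, the sortedness of the first $k$ outputs, and this domination yields that the output is top $k$ sorted, as required.
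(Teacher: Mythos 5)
Your proof is correct and follows essentially the same route as the paper's: establish that $\bar{b}$ is v-shaped (hence bitonic) via the structure of the bitonic splitter output and the gluing inequality $l_{k/2}\geq y_1$, show $\bar{b}$ holds the $k$ largest candidates via Observation \ref{obs:second_step} and domination of $\bar{y}_{\rightt}$, then let $bit\_merge^k$ finish. Your justification of $l_{k/2}\geq y_1$ as $\max(l_{k/2+1},r_{k/2})$ using both sortedness and weak domination is in fact slightly more careful than the paper's shorthand $y_1=l_{k/2+1}$, but this is a cosmetic difference only.
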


  \begin{proof}
    We have to prove two things: (1) $\bar{b}$ is bitonic and (2) $\bar{b}$ consists of
    $k$ largest elements from $\bar{l} :: \bar{r}$.

    (1) Let $j$ be the last index in the sequence $\tuple{k/2+1,\ldots,k}$, for which $l_j > r_{k-j+1}$.
    If such $j$ does not exist, then $\tuple{y_1,\ldots,y_{k/2}}$ is non-decreasing,
    hence $\bar{b}$ is bitonic (non-decreasing). Assume that $j$ exists, then
    $\tuple{y_{j-k/2+1},\ldots,y_{k/2}}$ is non-decreasing and $\tuple{y_1,\ldots,y_{k-j}}$
    is non-increasing. Adding the fact that $l_{k/2} \geq l_{k/2+1} = y_1$ proves,
    that $\bar{b}$ is bitonic (v-shaped).

    (2) By Observation \ref{obs:second_step}, it is sufficient to prove
    that $\bar{b} \succeq \tuple{y_{k/2+1},\ldots,y_k}$.
    Since $\forall_{k/2<j\leq k}$ $l_{k/2} \geq l_j \geq \min\{l_j, r_{k-j+1}\}=y_{3k/2-j+1}$,
    then $\tuple{l_1,\ldots,l_{k/2}} \succeq \tuple{y_{k/2+1},\ldots,y_k}$
    and by Definition \ref{def:bitonic_splitter}:
    $\langle y_1,$ $\ldots,$ $y_{k/2}\rangle \succeq \tuple{y_{k/2+1},\ldots,y_k}$. 
    Therefore $\bar{b}$ consists of $k$ largest elements from $\bar{l} :: \bar{r}$.

    The bitonic merger in Step 4 receives a bitonic sequence, so it outputs a sorted sequence,
    which completes the proof.
  \end{proof}

  The first step of improved pairwise merger is illustrated in Figure \ref{fig:pw_bit}.
  We use $k/2$ comparators in the first step and $k\log k/2$
  comparators in the last step. We get a merger of size $k\log k/2 + k/2$, which is better
  than the previous approach ($k\log k - k + 1$). In the following we show
  that we can do even better and eliminate the $k/2$ term.

  \begin{figure}[ht]
    \begin{center}
      \includegraphics[width=\textwidth]{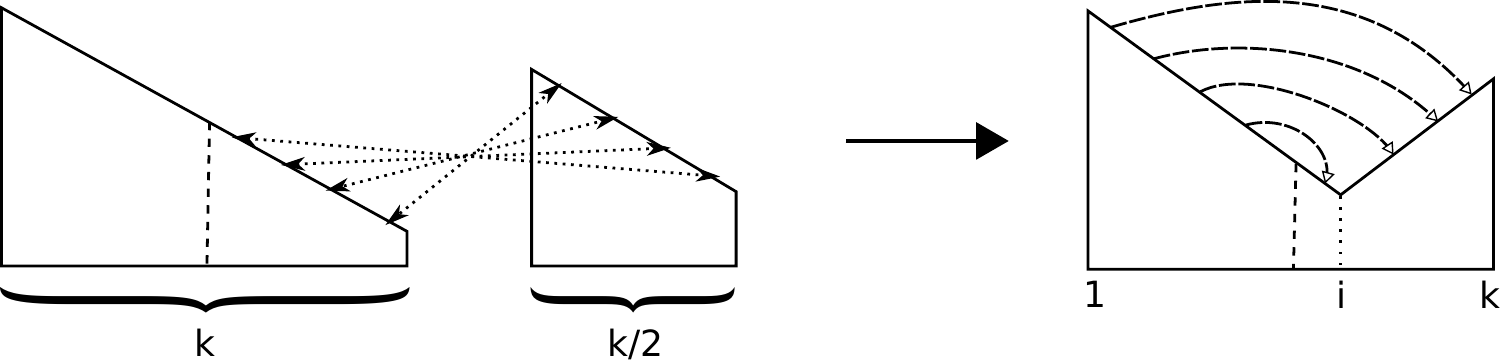}
    \end{center}
    \caption{Constructing a bitonic sequence. Arrows on the right picture show directions of inequalities.
      Sequence on the right is v-shape s-dominating at point $i$.}
    \label{fig:pw_bit}
  \end{figure}

  The main observation is that the result of the first step of $pw\_bit\_merge$ operation 
  $\tuple{b_1,\ldots,b_k}$ is not only bitonic, but what we call {\em v-shape s-dominating}.

  \begin{definition}[s-domination]\label{def:sd}
    A sequence $\bar{b} =\tuple{b_1,\ldots,b_k}$ is s-dominating if 
    $\forall_{1\leq j \leq k/2}$ $b_j \geq b_{k-j+1}$.
  \end{definition} 

  \begin{lemma}\label{lma:sd}
    If $\bar{b} =\tuple{b_1,\ldots,b_k}$ is v-shaped and s-dominating, then (1) $\bar{b}$ is non-increasing
    or (2) $\exists_{k/2<i<k} \; b_i < b_{i+1}$.
  \end{lemma}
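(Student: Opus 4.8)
The plan is to use the obvious dichotomy between case (1) and case (2), the key device being to isolate the \emph{first} strict ascent of $\bar{b}$. First I would dispose of case (1): if $\bar{b}$ is non-increasing there is nothing to prove. So assume $\bar{b}$ is not non-increasing; then there is at least one index with $b_i < b_{i+1}$, and I would let $i$ be the smallest such index. The goal is then to pin $i$ into the open interval $(k/2, k)$.

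Because $i$ is the smallest strict-ascent index, the prefix $\tuple{b_1,\ldots,b_i}$ is non-increasing; and because $\bar{b}$ is v-shaped, once it stops strictly decreasing it can only increase, so the suffix $\tuple{b_i,\ldots,b_k}$ is non-decreasing. The crux of the argument is to show that this first ascent cannot occur in the first half, i.e. $i > k/2$.

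To prove $i > k/2$ I would argue by contradiction. Suppose $i \leq k/2$. Then $j = i$ is a legal index for the s-domination property (Definition \ref{def:sd}), which yields $b_i \geq b_{k-i+1}$. On the other hand, $i \leq k/2$ forces $k-i+1 \geq k/2+1 \geq i+1$, so the mirror index $k-i+1$ lies strictly to the right of $i$; since the suffix starting at $i$ is non-decreasing we get $b_{k-i+1} \geq b_{i+1} > b_i$. This contradicts $b_i \geq b_{k-i+1}$, so in fact $i > k/2$.

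Finally $i < k$ holds automatically, since $b_{i+1}$ must exist. Combining, $k/2 < i < k$ with $b_i < b_{i+1}$, which is exactly condition (2). The only delicate points are the index bookkeeping -- checking that $j=i$ falls in the admissible range $\{1,\dots,k/2\}$ of s-domination, and that the mirror index $k-i+1$ lands on the non-decreasing branch of the v-shape -- so I expect the main (and only modest) obstacle to be marshalling the monotonicity of the v-shape together with the single s-domination inequality at the well-chosen index $j=i$, rather than any real computation.
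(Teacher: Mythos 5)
Your proposal is correct and follows essentially the same route as the paper: assume a strict ascent $b_i<b_{i+1}$ occurs at some $i\leq k/2$, use the v-shape to place $b_{i+1},\ldots,b_{k-i+1}$ on the non-decreasing branch so that $b_{k-i+1}\geq b_{i+1}>b_i$, and contradict the s-domination inequality $b_i\geq b_{k-i+1}$. Your extra step of taking the \emph{first} ascent is harmless but not needed, since the argument works for any ascent index.
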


  \begin{proof}
    Assume that $\bar{b}$ is not non-increasing. Then $\exists_{1\leq j<k} \; b_j < b_{j+1}$.
    Assume that $j\leq k/2$. Since $\bar{b}$ is v-shaped, $b_{j+1}$ must be in non-decreasing
    part of $\bar{b}$. If follows that $b_j < b_{j+1} \leq \ldots \leq b_{k/2} \leq \ldots \leq b_{k-j+1}$.
    That means that $b_j<b_{k-j+1}$. On the other hand, $\bar{b}$ is s-dominating, thus $b_j \geq b_{k-j+1}$
    -- a contradiction.
  \end{proof}
  
  We say that a sequence $\bar{b}$ is {\em v-shape s-dominating at point $i$}
  if $i$ is the smallest index greater than $k/2$ such that $b_i < b_{i+1}$ or $i=k$
  for a non-increasing sequence.

  \begin{lemma}
    Let $\bar{b}=\tuple{b_1,\ldots,b_k}$ be v-shape s-dominating at point $i$, then
    $\tuple{b_1,\ldots,b_{k/4}} \succeq \langle b_{k/2+1},$ $\ldots,$ $b_{3k/4}\rangle$.
    \label{lma:transit}
  \end{lemma}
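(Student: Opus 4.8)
The plan is to unfold the domination $\tuple{b_1,\ldots,b_{k/4}} \succeq \tuple{b_{k/2+1},\ldots,b_{3k/4}}$ into its pointwise meaning and reduce it to a single scalar comparison. By the definition of $\succeq$, the claim is equivalent to $\min_{1 \le s \le k/4} b_s \ge \max_{k/2 < t \le 3k/4} b_t$. First I would exploit the shape of $\bar{b}$: since $\bar{b}$ is v-shape s-dominating at point $i$, Lemma \ref{lma:sd} guarantees that the first strict ascent occurs past the midpoint, so the prefix $b_1 \ge \cdots \ge b_i$ is non-increasing with $i > k/2$, and in particular this non-increasing run already covers index $k/2+1$. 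Hence the minimum over the left block is simply $b_{k/4}$, and the whole statement collapses to proving $b_{k/4} \ge b_t$ for every $t$ with $k/2 < t \le 3k/4$.

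Next I would observe that the right block $\tuple{b_{k/2+1},\ldots,b_{3k/4}}$ is a contiguous window of a v-shaped sequence, so its maximum must be attained at one of the two endpoints: $\max_{k/2 < t \le 3k/4} b_t = \max(b_{k/2+1}, b_{3k/4})$. It therefore suffices to bound these two values by $b_{k/4}$. The first bound, $b_{k/4} \ge b_{k/2+1}$, is immediate from the non-increasing prefix, since $k/4 < k/2+1 \le i$.

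The main obstacle is the second bound, $b_{k/4} \ge b_{3k/4}$, because $b_{3k/4}$ may sit in the ascending tail where the values grow again. Here I would combine s-domination with the v-shape. Taking $j = k/4$ in Definition \ref{def:sd} gives $b_{k/4} \ge b_{k-k/4+1} = b_{3k/4+1}$. If $3k/4 \le i$ then $b_{3k/4}$ still lies in the non-increasing part and $b_{3k/4} \le b_{k/2+1} \le b_{k/4}$; otherwise $3k/4 > i$ places $b_{3k/4}$ in the non-decreasing tail, whence $b_{3k/4} \le b_{3k/4+1} \le b_{k/4}$. The delicate point is exactly this off-by-one: s-domination naturally controls $b_{3k/4+1}$ rather than $b_{3k/4}$, and it is the monotonicity of the tail that lets one pass from index $3k/4+1$ down to $3k/4$. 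With both endpoint bounds in hand, $b_{k/4} \ge \max(b_{k/2+1}, b_{3k/4}) \ge b_t$ for all $t$ in the window, which yields the desired domination.
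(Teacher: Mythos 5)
Your proof is correct and follows essentially the same route as the paper's: both arguments rest on the non-increasing prefix $b_1 \geq \dots \geq b_i$ with $i > k/2$, a case split on whether $3k/4$ falls before or after $i$, and the chain $b_{k/4} \geq b_{3k/4+1} \geq b_{3k/4}$ obtained from s-domination at $j = k/4$ together with the monotone tail. Your repackaging via ``$\min$ of the left block versus $\max$ of the right block, attained at an endpoint of a v-shaped window'' is a clean way to organize the same steps.
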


  \begin{proof}
    If $\bar{b}$ is non-increasing, then the lemma holds. From Lemma \ref{lma:sd}: $k/2<i<k$. 
    If $i > 3k/4$, then by Definition \ref{def:bit}: $b_1 \geq \ldots \geq b_{3k/4} \geq \ldots \geq b_i$,
    so lemma holds. If $k/2 < i \leq 3k/4$, then by Definition \ref{def:bit}:  $b_1 \geq \ldots \geq b_i$,
    so $\tuple{b_1,\ldots,b_{k/4}} \succeq \tuple{b_{k/2+1},\ldots,b_{i}}$.
    Since $b_i < b_{i+1} \leq \ldots \leq b_{3k/4}$, it suffices to prove that $b_{k/4} \geq b_{3k/4}$.
    By Definition \ref{def:sd} and \ref{def:bit}: $b_{k/4} \geq b_{3k/4+1} \geq b_{3k/4}$.
  \end{proof}

  \begin{definition}[half-splitter]
    A {\em half-splitter} is a comparator network constructed by comparing inputs 
    $\tuple{k/4+1,3k/4+1},\ldots,\tuple{k/2,k}$ (normal splitter with first $k/4$ comparators removed; see Figure \ref{fig:splitters}b).
    We call it $half\_split^k$.
    \label{def:hs}
  \end{definition}

  \begin{lemma}
    If $\bar{b}$ is v-shape s-dominating, then $half\_split^k(\bar{b})=split^k(\bar{b})$.
    \label{lma:hs_s}
  \end{lemma}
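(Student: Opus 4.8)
The plan is to compare $split^k$ and $half\_split^k$ comparator by comparator and show that the comparators present in one but not the other perform no swap on the input $\bar{b}$. By Definition \ref{def:hs}, the half-splitter is obtained from the full splitter by deleting exactly its first $k/4$ comparators, namely those that compare position $i$ with position $i+k/2$ for $i \in \{1,\dots,k/4\}$; all remaining comparators (indexed $i = k/4+1,\dots,k/2$) are shared by both networks. Since the $k/2$ comparators of a splitter act on pairwise disjoint pairs of positions, the order of their application is irrelevant, and the outputs of the two networks can differ only through the effect of these $k/4$ deleted comparators. Hence it suffices to argue that each of them leaves $\bar{b}$ unchanged.

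First I would recall the comparator convention: a comparator with upper input at position $i$ and lower input at position $i+k/2$ places the maximum on the upper wire, so it leaves the sequence unchanged precisely when $b_i \geq b_{i+k/2}$. The key step is then to invoke Lemma \ref{lma:transit}, which asserts $\tuple{b_1,\dots,b_{k/4}} \succeq \tuple{b_{k/2+1},\dots,b_{3k/4}}$ for any v-shape s-dominating $\bar{b}$. By the definition of the domination relation $\succeq$, this in particular forces $b_i \geq b_{i+k/2}$ for every $i \in \{1,\dots,k/4\}$ (pairing the $i$-th element of each side). Therefore each of the $k/4$ deleted comparators performs no swap, and the two networks compute the same output, giving $half\_split^k(\bar{b}) = split^k(\bar{b})$.

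The only point that requires care is the index bookkeeping: confirming that the comparators removed in passing from $split^k$ to $half\_split^k$ are indeed those indexed $i = 1,\dots,k/4$, and that these compare the first quarter of positions against the third quarter, which is exactly the range addressed by Lemma \ref{lma:transit}. Once this alignment is verified, the argument is immediate. I do not anticipate any genuine difficulty in this lemma, since the substantive combinatorial content has already been discharged in establishing Lemma \ref{lma:transit}; here we merely translate its domination conclusion into the statement that the first quarter of comparators is redundant.
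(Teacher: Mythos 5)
Your proof is correct and follows exactly the paper's route: the paper's own proof is the one-line ``Directly from Lemma \ref{lma:transit}'', and you have simply unpacked that reference by checking that the $k/4$ deleted comparators pair positions $1,\dots,k/4$ with positions $k/2+1,\dots,3k/4$ and that the domination $\tuple{b_1,\dots,b_{k/4}} \succeq \tuple{b_{k/2+1},\dots,b_{3k/4}}$ makes each of them a no-op. The index bookkeeping you flag does check out against Definition \ref{def:hs}, so there is nothing to add.
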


  \begin{proof}
    Directly from Lemma \ref{lma:transit}.
  \end{proof}

  \begin{lemma}\label{lma:dom}
    Let $\bar{b}=\tuple{b_1,\ldots,b_k}$ be v-shape s-dominating. Let $\bar{w}=half\_split^k(\bar{b})$. 
    The following statements are true: (1) $\bar{w}_{\leftt}$ is v-shape s-dominating;
    (2) $\bar{w}_{\rightt}$ is bitonic; (3) $\bar{w}_{\leftt}$ $\succeq \bar{w}_{\rightt}$.
  \end{lemma}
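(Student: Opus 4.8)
The plan is to collapse the half-splitter onto the ordinary splitter first, separate the claim into the two easy parts (2)--(3) and the genuinely new part (1), and spend the effort on (1).

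First I would invoke Lemma \ref{lma:hs_s}: since $\bar{b}$ is v-shape s-dominating, $half\_split^k(\bar{b}) = split^k(\bar{b})$, so for the whole proof I may treat $\bar{w} = split^k(\bar{b})$ and write $w_j = \max(b_j, b_{j+k/2})$ together with $w_{j+k/2} = \min(b_j, b_{j+k/2})$ for $1 \le j \le k/2$. Every v-shaped sequence is bitonic by Definition \ref{def:bit}, so $\bar{b}$ is bitonic and Batcher's Lemma \ref{lma:split_bit} applies verbatim, yielding that both $\bar{w}_{\leftt}$ and $\bar{w}_{\rightt}$ are bitonic and that $\bar{w}_{\leftt} \succeq \bar{w}_{\rightt}$. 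This already discharges (3) and the bitonicity of $\bar{w}_{\rightt}$ in (2), and it gives as a free byproduct that $\bar{w}_{\leftt}$ is bitonic, which is the starting point for (1).

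The substance is (1): strengthening ``bitonic'' to ``v-shape s-dominating'' for $\bar{w}_{\leftt}$. Here I would use Lemma \ref{lma:sd} to split into two cases. If $\bar{b}$ is non-increasing, then $b_j \ge b_{j+k/2}$ for every $j$, so $w_j = b_j$ and $\bar{w}_{\leftt} = \tuple{b_1,\ldots,b_{k/2}}$ is non-increasing, hence trivially v-shaped and s-dominating. Otherwise $\bar{b}$ is v-shape s-dominating at a point $i$ with $k/2 < i < k$. In this case I would track which argument the maximum $w_j = \max(b_j, b_{j+k/2})$ selects as $j$ ranges over $1,\ldots,k/2$, using $i$ and the valley position of $\bar{b}$ to locate the transition; combining the v-shape of $\bar{b}$ with Lemma \ref{lma:transit} I would argue that $\bar{w}_{\leftt}$ first decreases to a single valley and then increases, i.e. is v-shaped. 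For s-domination I would verify $w_j \ge w_{k/2-j+1}$ for $1 \le j \le k/4$ by expanding both maxima and comparing the four relevant entries $b_j,\, b_{j+k/2},\, b_{k/2-j+1},\, b_{k-j+1}$ via the s-domination inequalities $b_l \ge b_{k-l+1}$ (Definition \ref{def:sd}) and the v-shape of $\bar{b}$.

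The main obstacle will be precisely this case analysis inside (1): pinning down the valley of $\bar{w}_{\leftt}$ as a function of the point $i$ and the valley of $\bar{b}$, and then checking that the reflected comparisons required for s-domination survive the $\max$ operation. Parts (2)--(3) are essentially automatic once Lemma \ref{lma:hs_s} identifies the half-splitter with the splitter; all the real work lies in showing that the stronger invariant (1) is preserved, which is exactly what lets the recursion in the pairwise bitonic selection network go through.
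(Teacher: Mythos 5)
Your proposal is correct and follows essentially the same route as the paper: collapse the half-splitter onto the ordinary splitter via Lemma \ref{lma:hs_s}, obtain (2) and (3) immediately from Batcher's Lemma \ref{lma:split_bit}, and prove (1) by locating where $\max\{b_j,b_{j+k/2}\}$ switches branches (using the v-shape of $\bar{b}$) and checking $w_j \ge w_{k/2-j+1}$ by expanding both maxima against the s-domination and monotonicity inequalities. The paper's proof of (1) organizes the case analysis around the first ascent index of $\bar{w}_{\leftt}$ itself rather than around the point $i$ of $\bar{b}$, but this is only a cosmetic difference.
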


  \begin{proof}
    (1) Let $\bar{y}=\bar{w}_{\leftt}$. First, we show that $\bar{y}$ is v-shaped.
    If $\bar{y}$ is non-increasing, then it is v-shaped. Otherwise, let $j$ be the first index
    from the range $\{1,\ldots,k/2\}$, where $y_{j-1}<y_j$. Since $y_j=\max\{b_j,b_{j+k/2}\}$ and
    $y_{j-1} \geq b_{j-1} \geq b_j$, thus $b_j < b_{j+k/2}$.
    Since $\bar{b}$ is v-shaped, element $b_{j+k/2}$ must be in non-decreasing part of $\bar{b}$.
    It follows that $b_{j} \geq \ldots \geq b_{k/2}$ and $b_{j+k/2} \leq \ldots \leq b_k$.
    From this we can see that $\forall_{j\leq j' \leq k/2}$ $y_{j'}=\max\{b_{j'},b_{j'+k/2}\} = b_{j'+k/2}$,
    so $y_j \leq \ldots \leq y_{k/2}$. Therefore $\bar{y}$ is v-shaped.

    Next, we show that $\bar{y}$ is s-dominating. Consider any $j$, where $1 \leq j \leq k/4$.
    By Definition \ref{def:bit} and \ref{def:sd}: $b_j \geq b_{k/2-j+1}$ and $b_j \geq b_{k-j+1}$,
    therefore $y_j = b_j \geq \max\{b_{k/2-j+1},b_{k-j+1}\} = y_{k/2-j+1}$,
    thus proving that $\bar{y}$ is s-dominating. Concluding: $\bar{y}$ is v-shape s-dominating.
    
    (2) Let $\bar{z}=\bar{w}_{\rightt}$. By Lemma \ref{lma:hs_s}:
    $\bar{z}=split^k(\bar{b})_{\rightt}$. We know that $\bar{b}$ is a special case of bitonic sequence,
    therefore using Lemma \ref{lma:split_bit} we get that $\bar{z}$ is bitonic.

    (3) By Lemma \ref{lma:hs_s}: $\bar{w}=split^k(\bar{b})$.
    We know that $\bar{b}$ is a special case of bitonic sequence, therefore using
    Lemma \ref{lma:split_bit} we get $\bar{w}_{\leftt} \succeq \bar{w}_{\rightt}$.
  \end{proof}

  Using $half\_split$ and Batcher's $bit\_merge$ and successively applying Lemma \ref{lma:dom} to the
  resulting v-shape s-dominating half of the output, we have all the tools needed to construct the improved
  pairwise merger using half-splitters, which we present as Algorithms \ref{net:pw_merge2}
  and \ref{net:hbit_merge}.

  \begin{algorithm}[t!]
    \caption{$pw\_hbit\_merge^n_k$}\label{net:pw_merge2}
    \begin{algorithmic}[1]
      \Require {$\bar{l} :: \bar{r} \in X^{n}$; $\bar{l}$ is top $k$ sorted,
        $\bar{r}$ is top $k/2$ sorted;
        $\pref(k/2,\bar{l}) \succeq_{w} \pref(k/2,\bar{r})$; $k$ is a power of $2$}
      \Ensure{The output is top $k$ sorted and is a permutation of the inputs}
      \State $\bar{y} \gets bit\_split^k(l_{k/2+1},\ldots,l_k,r_1,\ldots,r_{k/2})$
      \State $\bar{b} \gets \tuple{l_1,\ldots,l_{k/2}} :: \tuple{y_1,\ldots,y_{k/2}}$
      \State $\bar{p} \gets \suff(k/2,\bar{y}) :: \suff(n/2-k,\bar{l}) :: \suff(n/2-k/2,\bar{r})$ \Comment{ residue elements}
      \State \Return $half\_bit\_merge^k(\bar{b}) :: \bar{p}$
    \end{algorithmic}
  \end{algorithm}

  \begin{algorithm}[t!]
    \caption{$half\_bit\_merge^k$}\label{net:hbit_merge}
    \begin{algorithmic}[1]
      \Require {$\bar{b} \in X^{k}$; $\bar{b}$ is v-shaped s-dominating, $k$ is a power of $2$}
      \Ensure{The output is sorted and is a permutation of the inputs}
      \If {$k=2$}
        \Return $\bar{b}$
      \EndIf
      \State $\bar{b'} \gets half\_split(b_1,\ldots,b_k)$
      \State $\bar{l'} \gets half\_bit\_merge^{k/2}(\bar{b'}_{\leftt})$
      \State $\bar{r'} \gets bit\_merge^{k/2}(\bar{b'}_{\rightt})$
      \State \Return $\bar{l'}::\bar{r'}$
    \end{algorithmic}
  \end{algorithm}

  In Figure \ref{fig:bit-vs-hbit} the difference between bitonic and half-bitonic merger is shown for $n=16$.
  The following theorem states that the construction of $pw\_hbit\_merge^n_k$ is correct.
  
  \begin{figure}[ht]
    \centering
    \subfloat[]{
      \includegraphics[width=0.45\textwidth]{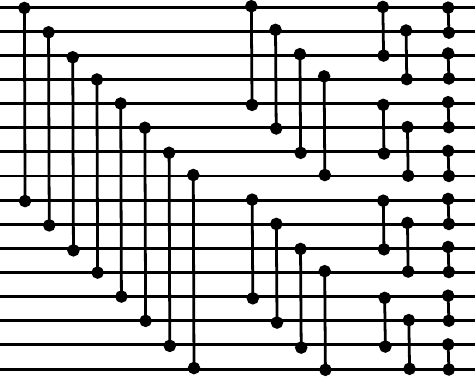}
    }
    ~
    \subfloat[]{
      \includegraphics[width=0.435\textwidth]{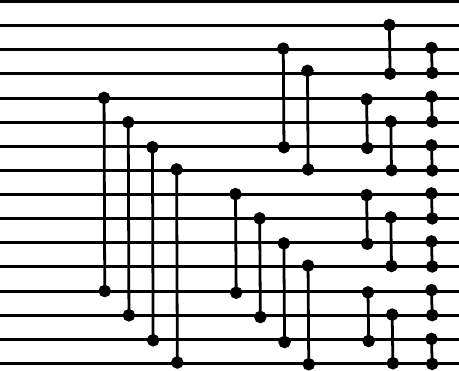}
    }
    \caption{a) bitonic merging network; b) half-bitonic merging network; $n=16$}
    \label{fig:bit-vs-hbit}
  \end{figure}

  \begin{theorem}\label{thm:pw_bit_merge}
    The output of Algorithm \ref{net:pw_merge2} consists of sorted $k$ largest elements from input
    $\bar{l} :: \bar{r}$, assuming that $\bar{l} \in X^{n/2}$ is top $k$ sorted
    and $\bar{r} \in X^{n/2}$ is top $k/2$ sorted and $\tuple{l_1,\ldots,l_{k/2}}$
    weakly dominates $\tuple{r_1,\ldots,r_{k/2}}$. Also, $|pw\_hbit\_merge^n_k|=k \log k/2$.
  \end{theorem}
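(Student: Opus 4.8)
The plan is to reduce the claim to two structural facts about the sequence $\bar{b}$ assembled in Steps 1--2, exploiting that Algorithm \ref{net:pw_merge2} is obtained from the already-verified Algorithm \ref{net:pw_merge} by replacing only the final call $bit\_merge^k$ with $half\_bit\_merge^k$. The hypotheses here are identical to those of Theorem \ref{thm:pw_merge}, which already establishes that $\bar{b}$ is v-shaped and that it contains the $k$ largest elements of $\bar{l} :: \bar{r}$, the remaining elements being collected into the residue $\bar{p}$. Since the whole network permutes its inputs, it therefore suffices to prove (A) that $\bar{b}$ is in fact \emph{v-shape s-dominating}, and (B) that $half\_bit\_merge^k$ sorts any v-shape s-dominating sequence; then the first $k$ outputs, namely $half\_bit\_merge^k(\bar{b})$, are exactly the $k$ largest elements in sorted order.

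For (A) I would verify s-domination against Definition \ref{def:sd}. Writing $b_j = l_j$ for $1 \le j \le k/2$ and $b_{k/2+i} = y_i$ for $1 \le i \le k/2$, and reading off the bitonic-splitter semantics (Definition \ref{def:bitonic_splitter}) that $y_i = \max\{l_{k/2+i}, r_{k/2-i+1}\}$, one finds for every $1 \le j \le k/2$ that $b_{k-j+1} = y_{k/2-j+1} = \max\{l_{k-j+1}, r_j\}$. Top-$k$-sortedness of $\bar{l}$ gives $l_j \ge l_{k-j+1}$, while the weak-domination hypothesis $\pref(k/2,\bar{l}) \succeq_{w} \pref(k/2,\bar{r})$ gives $l_j \ge r_j$, so $b_j = l_j \ge \max\{l_{k-j+1}, r_j\} = b_{k-j+1}$. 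This is the step I would treat most carefully, as it is the only place where the two structural hypotheses on $\bar{l}$ and $\bar{r}$ are combined, and s-domination is precisely the new property that powers the whole half-splitter optimisation. Together with the v-shapedness inherited from Theorem \ref{thm:pw_merge}, this shows $\bar{b}$ is v-shape s-dominating.

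For (B) I would induct on $k$, with the earlier Lemma \ref{lma:dom} doing all the work. The base case $k=2$ is immediate, since s-domination forces $b_1 \ge b_2$, so the length-$2$ input is already sorted. For the inductive step, Lemma \ref{lma:dom} guarantees that $\bar{w} = half\_split^k(\bar{b})$ has a v-shape s-dominating left half $\bar{w}_{\leftt}$ (part 1), a bitonic right half $\bar{w}_{\rightt}$ (part 2), and $\bar{w}_{\leftt} \succeq \bar{w}_{\rightt}$ (part 3). The induction hypothesis sorts $\bar{w}_{\leftt}$ via $half\_bit\_merge^{k/2}$, Batcher's $bit\_merge^{k/2}$ sorts the bitonic $\bar{w}_{\rightt}$, and the domination relation ensures their concatenation $\bar{l}' :: \bar{r}'$ is globally sorted. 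Thus $half\_bit\_merge^k$ returns a sorted permutation of $\bar{b}$, completing correctness.

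For the size I would solve the recurrence $|half\_bit\_merge^k| = |half\_split^k| + |half\_bit\_merge^{k/2}| + |bit\_merge^{k/2}|$ with base value $0$ at $k=2$. Substituting $|half\_split^k| = k/4$ (the splitter of Definition \ref{def:hs}, with its first $k/4$ comparators removed) and $|bit\_merge^{k/2}| = (k/2)\log(k/2)/2 = (k/4)\log k - k/4$, the two $k/4$ terms cancel and the recurrence collapses to $|half\_bit\_merge^k| = |half\_bit\_merge^{k/2}| + (k/4)\log k$, which telescopes to $(k/2)\log k - k/2$. Adding the $|bit\_split^k| = k/2$ comparators of Step 1 yields $|pw\_hbit\_merge^n_k| = k\log k/2$, as claimed. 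This final computation is routine rather than an obstacle; the genuine content of the proof sits in the earlier lemmas and in verifying (A).
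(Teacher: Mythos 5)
Your proposal is correct and follows essentially the same route as the paper's proof: it reuses Theorem \ref{thm:pw_merge} for v-shapedness and the top-$k$ property, verifies s-domination via $b_j = l_j \ge \max\{l_{k-j+1}, r_j\} = b_{k-j+1}$, and inducts on $k$ with Lemma \ref{lma:dom} supplying the three needed facts about $half\_split^k$. The only cosmetic difference is that you compute the size by solving the recurrence for $|half\_bit\_merge^k|$ directly, whereas the paper counts the comparators removed from $bit\_merge^k$; both are routine and give $k\log k/2$.
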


  \begin{proof}
    Since Step 1 in Algorithm \ref{net:pw_merge2} is the same as in Algorithm \ref{net:pw_merge},
    we can reuse the proof of Theorem \ref{thm:pw_merge} to deduce, that $\bar{b}$ is v-shaped and is
    containing $k$ largest elements from $\bar{l} :: \bar{r}$. Also, since $\forall_{1\leq j \leq k/2}$
    $l_j \geq l_{k-j+1}$ and $l_j \geq r_j$, then $b_j = l_j \geq \max\{l_{k-j+1},r_j\} = b_{k-j+1}$,
    so $\bar{b}$ is s-dominating.
    
    We prove by the induction on $k$, that if $\bar{b}$ is v-shape s-dominating,
    then $half\_bit\_merge^k(\bar{b})$ is sorted.
    For the base case, consider $k=2$ and a v-shape s-dominating sequence $\tuple{b_1,b_2}$.
    By Definition \ref{def:sd} this sequence is already sorted and we are done. For the induction step,
    consider $\bar{b'} = half\_split^k(\bar{b})$. By Lemma \ref{lma:dom} we get that $\bar{b'}_{\leftt}$
    is v-shape s-dominating and $\bar{b'}_{\rightt}$ is bitonic. Using the induction
    hypothesis we sort $\bar{b'}_{\leftt}$ and using bitonic merger we sort $\bar{b'}_{\rightt}$.
    By Lemma \ref{lma:dom}: $\bar{b'}_{\leftt} \succeq \bar{b'}_{\rightt}$, which completes
    the proof of correctness. 

    As mentioned in Definition \ref{def:hs}: $half\_split^k$ is just $split^k$ with the first $k/4$
    comparators removed. So $half\_bit\_merge^k$ is just $bit\_merge^k$ with some of the comparators removed.
    Let us count them: in each level of recursion step we take half of comparators from $split^k$ and
    additional one comparator from the base case ($k=2$). We sum them together to get:

    \begin{equation*}
      1 + \sum_{i=0}^{\log k - 2}\frac{k}{2^{i+2}} = 1 + \frac{k}{4}\left(\sum_{i=0}^{\log k - 1}\left(\frac{1}{2}\right)^i - \frac{2}{k}\right) = 1 + \frac{k}{4}\left(2 - \frac{2}{k} - \frac{2}{k} \right) = \frac{k}{2}
    \end{equation*}

   \noindent Therefore, we have:
 
    \[
      |pw\_hbit\_merge^n_k| = k/2 + k \log k/2 - k/2 = k \log k/2
    \]

  \end{proof}
  
  The only difference between $pw\_sel$ and our $pw\_hbit\_sel$ is the use of improved merger
  $pw\_hbit\_merge$ rather than $pw\_merge$. By Theorem \ref{thm:pw_bit_merge},
  we can conclude that $|pw\_merge^n_k|$ $\geq$ $|pw\_hbit\_merge^n_k|$, so it follows that:

  \begin{corollary}
    For $1 \leq k \leq n$, $|pw\_hbit\_sel^n_k| \leq |pw\_sel^n_k|$.
  \end{corollary}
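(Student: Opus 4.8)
The plan is to exploit the fact that $pw\_hbit\_sel^n_k$ is obtained from $pw\_sel^n_k$ by replacing, at every level of the recursion, the merging network $pw\_merge$ with the improved merger $pw\_hbit\_merge$, while leaving the rest of the construction untouched. Concretely, both algorithms unfold into the same recursion tree: each terminates with $max^n$ when $k=1$ and with $oe\_sort^n$ when $k=n$, and in the recursive case each applies the same splitter $split^n$ followed by recursive calls with the same parameters $\min(n/2,k)$ and $\min(n/2,k/2)$. The sole difference is that the final step at each internal node of this tree composes $pw\_merge^{n'}_{k'}$ in the first network and $pw\_hbit\_merge^{n'}_{k'}$ in the second, where the whole $pw\_hbit\_merge$ subcircuit (its internal $bit\_split$ and $half\_bit\_merge$) is treated as a black box via its total comparator count.

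First I would argue that the comparator count of each network decomposes as a common part -- the combined contribution of all the base cases ($max^n$ and $oe\_sort^n$) together with all the splitters $split^n$ -- plus a sum, taken over the internal nodes $(n',k')$ of the recursion tree, of the sizes of the mergers used at those nodes. Since the base cases, the splitters, and the recursion parameters coincide, the common part is \emph{exactly equal} for both networks, so that
\[
  |pw\_sel^n_k| - |pw\_hbit\_sel^n_k| = \sum_{(n',k')} \left( |pw\_merge^{n'}_{k'}| - |pw\_hbit\_merge^{n'}_{k'}| \right),
\]
where the sum ranges over the internal nodes of the shared recursion tree (those with $2 \leq k' < n'$).

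It then remains to verify that every summand is nonnegative. Using the size $|pw\_merge^{n'}_{k'}| = k'\log k' - k' + 1$ recalled in Section \ref{sec:psn} together with the identity $|pw\_hbit\_merge^{n'}_{k'}| = k'\log k'/2$ established in Theorem \ref{thm:pw_bit_merge}, each summand equals $k'\log k'/2 - k' + 1$. I would check that this quantity is nonnegative for every power of two $k' \geq 1$: it is $0$ for $k' \in \{1,2\}$ and strictly positive for $k' \geq 4$, since there $k'\log k'/2 \geq 2k' > k' - 1$. Summing over all internal nodes then yields $|pw\_sel^n_k| - |pw\_hbit\_sel^n_k| \geq 0$, which is the claim; the boundary cases $k=1$ and $k=n$ contain no merger and give equality.

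I expect the only delicate point to be the bookkeeping of the recursion rather than any inequality: one must be certain that substituting the merger does not perturb the shape of the tree, so that the non-merger contributions genuinely cancel term by term. Once that structural identity is pinned down, the result reduces to the elementary per-node comparison $k'\log k'/2 \leq k'\log k' - k' + 1$, whose tightness at $k'=2$ also explains why $pw\_hbit\_sel$ and $pw\_sel$ can coincide in size on the smallest merging instances.
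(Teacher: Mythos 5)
Your proposal is correct and follows essentially the same route as the paper: the paper's proof simply observes that the two networks differ only in the merger used at each recursive step and invokes Theorem \ref{thm:pw_bit_merge} to conclude $|pw\_merge^n_k| \geq |pw\_hbit\_merge^n_k|$, which is exactly your per-node comparison $k'\log k'/2 \leq k'\log k' - k' + 1$ made explicit. Your additional bookkeeping of the shared recursion tree is a more careful write-up of the same argument, not a different one.
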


  \section{Sizes of New Selection Networks}

  In this section we estimate the size of $pw\_hbit\_sel^n_k$. To this end we show that
  the size of $pw\_hbit\_sel^n_k$ is upper-bounded by the size of $bit\_sel^n_k$ and
  use this fact in our estimation. We also compute the exact difference between
  sizes of $pw\_sel^n_k$ and $pw\_hbit\_sel^n_k$ and show that it can be as big as $n\log n/2$.
  
  We have the recursive formula for the number of comparators of $pw\_hbit\_sel^n_k$:

  \begin{equation}
    |pw\_hbit\_sel^n_k| = \left\{ 
    \begin{array}{l l}
      |pw\_hbit\_sel^{n/2}_k| + |pw\_hbit\_sel^{n/2}_{k/2}|+ &  \\ 
      + |split^n| + |pw\_hbit\_merge^k| & \quad \text{if $k<n$}\\
      |oe\_sort^k| & \quad \text{if $k=n$} \\
      |max^n| & \quad \text{if $k=1$} \\
    \end{array} \right.
  \label{eq:pw_cp15}
  \end{equation}

  \begin{lemma}
    For $1 \leq k < n$ (both powers of 2), $|pw\_hbit\_sel^n_k| \leq |bit\_sel^n_k|$.
    \label{lma:xyz}
  \end{lemma}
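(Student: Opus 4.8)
The plan is to prove the inequality by induction on $n$, reducing the inductive step to a purely arithmetic comparison of the two closed-form sizes. To make the recursion close cleanly it is convenient to establish the slightly stronger statement for all powers of two with $1 \le k \le n$: at the two boundaries $k=1$ and $k=n$ the two networks coincide, since $pw\_hbit\_sel^n_1 = max^n$ and $pw\_hbit\_sel^n_n = oe\_sort^n$, and one checks directly from \eqref{eq:bit} that $|bit\_sel^n_1| = n-1 = |max^n|$ and that \eqref{eq:bit} evaluated at $k=n$ equals $|oe\_sort^n|$. These two boundaries therefore serve as base cases, with equality holding in both.

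For the inductive step assume $1 < k < n$ and that the (strengthened) claim holds for all orders smaller than $n$. Expanding the recursion \eqref{eq:pw_cp15} and inserting $|split^n| = n/2$ together with $|pw\_hbit\_merge^k| = k\log k/2$ from Theorem \ref{thm:pw_bit_merge} gives
\[ |pw\_hbit\_sel^n_k| = |pw\_hbit\_sel^{n/2}_k| + |pw\_hbit\_sel^{n/2}_{k/2}| + \frac{n}{2} + \frac{k\log k}{2}. \]
Both recursive calls have order $n/2 < n$, with selection parameters $k \le n/2$ and $k/2 \le n/4$ respectively, so the induction hypothesis applies to each. Hence $|pw\_hbit\_sel^n_k| \le |bit\_sel^{n/2}_k| + |bit\_sel^{n/2}_{k/2}| + n/2 + k\log k/2$.

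It then remains to verify the closed-form inequality $|bit\_sel^{n/2}_k| + |bit\_sel^{n/2}_{k/2}| + n/2 + k\log k/2 \le |bit\_sel^n_k|$. Substituting \eqref{eq:bit} three times (using $\log(k/2) = \log k - 1$) and cancelling the matching $\tfrac14 n \log^2 k$ terms, this reduces after a routine simplification to
\[ \log k \, (n-k) - 2n + k + \frac{2n}{k} \ge 0. \]
Since $k<n$ forces $n \ge 2k$, and the coefficient of $n$ on the left, namely $\log k - 2 + 2/k$, is nonnegative for every power of two $k \ge 2$, the left-hand side is minimized over $n \ge 2k$ at $n = 2k$, where it equals $k(\log k - 3) + 4$; this is $\ge 0$ for all such $k$ (it is $0$ at $k=2,4$ and strictly positive for $k \ge 8$). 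This closes the induction and, restricted to $k<n$, gives the lemma.

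I expect the genuine difficulty to lie not in the algebra but in the bookkeeping at the boundary $k = n/2$: there the first recursive call $pw\_hbit\_sel^{n/2}_k$ is a full sorting network rather than a proper selection network, so an induction on the literal statement ($1 \le k < n$) would not quite close. Proving the $1 \le k \le n$ version with equality at $k=n$, and checking that \eqref{eq:bit} continues to give $|oe\_sort^n|$ when $k=n$, is precisely what repairs this gap.
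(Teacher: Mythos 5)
Your proof is correct and follows essentially the same route as the paper's: unfold one level of the recursion \originaleqref{eq:pw_cp15}, replace the two recursive calls by $bit\_sel$ via the induction hypothesis (the paper names this intermediate quantity $aux\_sel^n_k$), and then verify exactly the closed-form inequality of Lemma \ref{lma:main1}, which you reduce to $(n-k)\log k - 2n + k + 2n/k \ge 0$ just as the paper does (your monotonicity-in-$n$ argument is in fact more explicit than the paper's ``easily proved by induction''). Your one genuine addition is strengthening the induction to $1 \le k \le n$ with equality at $k=n$; this cleanly handles the recursive call $pw\_hbit\_sel^{n/2}_{k}$ when $k = n/2$, a boundary case that the paper's induction (stated only for $k<n$, with $aux\_sel$ left undefined at $k=n$) silently glosses over.
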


  \begin{proof}
    Let $aux\_sel^n_k$ be the comparator network that is generated by substituting recursive calls in $pw\_hbit\_sel^n_k$
    by calls to $bit\_sel^n_k$. Size of this network (for $1<k<n$) is:

    \begin{equation}
      |aux\_sel^n_k| = |bit\_sel^{n/2}_k| + |bit\_sel^{n/2}_{k/2}| + |split^n| + |pw\_hbit\_merge^k|
      \label{eq:aux}
    \end{equation}

    \noindent Lemma \ref{lma:xyz} follows from Lemma \ref{lma:main1} and Lemma \ref{lma:main2} below, where we show that:

    \[
      |pw\_hbit\_sel^n_k| \leq |aux\_sel^n_k| \leq |bit\_sel^n_k|
    \]
  \end{proof}

  \begin{lemma}
    For $1 < k < n$ (both powers of 2), $|aux\_sel^n_k| \leq |bit\_sel^n_k|$.
    \label{lma:main1}
  \end{lemma}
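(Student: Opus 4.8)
The plan is a direct computation: expand both sides of the claimed inequality using the closed forms already established, and reduce to an elementary inequality in $n$ and $k$.

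First I would substitute into Eq. \ref{eq:aux} the explicit size of the bitonic selection network from Eq. \ref{eq:bit}, together with $|split^n| = n/2$ and $|pw\_hbit\_merge^k| = k\log k / 2$ (Theorem \ref{thm:pw_bit_merge}). For the two recursive terms I would use Eq. \ref{eq:bit} with $n$ replaced by $n/2$ (and additionally $k$ replaced by $k/2$ in the second), using $\log(k/2) = \log k - 1$. Writing $K = \log k$ for brevity, the two summands become
\begin{align*}
|bit\_sel^{n/2}_k| &= \tfrac{n}{8}K^2 + \tfrac{n}{8}K + n - \tfrac12 kK - k - \tfrac{n}{2k}, \\
|bit\_sel^{n/2}_{k/2}| &= \tfrac{n}{8}K^2 - \tfrac{n}{8}K + n - \tfrac{k}{4}K - \tfrac{k}{4} - \tfrac{n}{k}.
\end{align*}
Adding these two with $|split^n|$ and $|pw\_hbit\_merge^k|$, the linear-in-$K$ contributions cancel and I obtain a clean closed form for $|aux\_sel^n_k|$.

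The second step is to subtract. Comparing with $|bit\_sel^n_k|$ from Eq. \ref{eq:bit}, the $K^2$ terms cancel and, after collecting, the difference reduces to
\[
|bit\_sel^n_k| - |aux\_sel^n_k| = \tfrac{n}{4}\log k - \tfrac{n}{2} - \tfrac14 k\log k + \tfrac14 k + \tfrac{n}{2k}.
\]
Thus it suffices to show that $g(n,k) := (n-k)\log k - 2n + k + 2n/k \geq 0$ for $1 < k < n$ with both powers of $2$; note this constraint forces $n \geq 2k$, since $2k$ is the next power of $2$ above $k$.

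The last step is the inequality itself, which I would treat as linear in $n$ for fixed $k$: the coefficient of $n$ is $\log k - 2 + 2/k$. For $k = 2$ this coefficient is $0$ and the constant part $k(1-\log k)$ is also $0$, so $g$ vanishes identically. For $k \geq 4$ we have $\log k \geq 2$, so the coefficient is at least $2/k > 0$, meaning $g$ is increasing in $n$ and attains its minimum over the admissible range at $n = 2k$; there a short calculation gives $g(2k,k) = k\log k - 3k + 4$. This residual equals $0$ for $k \in \{2,4\}$ and, for $k \geq 8$, is bounded below by $k(\log k - 3) + 4 \geq 4 > 0$, which closes the argument. The only real obstacle is keeping the bookkeeping of the $\log k$ and $n/k$ terms correct through the expansion; once the difference is reduced to $g(n,k)$, the nonnegativity is a routine linear-in-$n$ argument with a two-value base case.
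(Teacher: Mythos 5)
Your proposal is correct and follows essentially the same route as the paper: expand $|aux\_sel^n_k|$ via Eq.~\originaleqref{eq:aux} and $|bit\_sel^n_k|$ via Eq.~\originaleqref{eq:bit}, subtract, and reduce to the elementary inequality $(n-k)\log k \geq 2n - k - 2n/k$, which is exactly (twice) the inequality the paper arrives at. The only difference is that the paper dismisses this last step as "easily proved by induction," whereas you actually supply a clean argument (linearity in $n$, minimum at $n=2k$, residual $k\log k - 3k + 4 \geq 0$), so your write-up is if anything more complete than the original.
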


  \begin{proof}
    We compute both values from Eq. \ref{eq:bit} and Eq. \ref{eq:aux}:

    \begin{align*}
      |aux\_sel^n_k|&= \frac{1}{4}n\log^2k + \frac{5}{2}n - \frac{1}{4}k\log k - \frac{5}{4}k - \frac{3n}{2k} \\
      |bit\_sel^n_k|&= \frac{1}{4}n\log^2k + \frac{1}{4}n\log k + 2n - \frac{1}{2}k\log k - k - \frac{n}{k}
    \end{align*}

    \noindent We simplify both sides to get the following inequality:

    \[
      n - \frac{1}{2}k - \frac{n}{k} \leq \frac{1}{2}(n-k)\log k
    \]

    \noindent which can be easily proved by induction.
  \end{proof}

  \begin{lemma}
    For $1 \leq k < n$ (both powers of 2), $|pw\_hbit\_sel^n_k| \leq |aux\_sel^n_k|$.
    \label{lma:main2}
  \end{lemma}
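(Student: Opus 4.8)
The plan is to prove the inequality by induction on $n$, exploiting the fact that $aux\_sel^n_k$ and $pw\_hbit\_sel^n_k$ differ only in their two recursive calls: the former uses $bit\_sel$ exactly where the latter uses $pw\_hbit\_sel$, while the additive terms $|split^n|$ and $|pw\_hbit\_merge^k|$ are identical. Thus, after expanding $|pw\_hbit\_sel^n_k|$ via Eq. \ref{eq:pw_cp15} and $|aux\_sel^n_k|$ via Eq. \ref{eq:aux} for $1 < k < n$, these two common summands cancel and the claim reduces to

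\[
  |pw\_hbit\_sel^{n/2}_k| + |pw\_hbit\_sel^{n/2}_{k/2}| \;\leq\; |bit\_sel^{n/2}_k| + |bit\_sel^{n/2}_{k/2}|.
\]

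First I would dispose of the base case $k=1$, where $pw\_hbit\_sel^n_1 = max^n$ has no recursive calls, so $aux\_sel^n_1$ coincides with it and the inequality holds with equality. For the inductive step I would assume that Lemma \ref{lma:xyz}, i.e. $|pw\_hbit\_sel^m_j| \leq |bit\_sel^m_j|$, has already been established for all orders $m < n$. This is legitimate because Lemma \ref{lma:xyz} is obtained at a given order by combining Lemma \ref{lma:main1} (a self-contained closed-form comparison that needs no induction) with the present lemma at that same order; hence Lemma \ref{lma:main2} and Lemma \ref{lma:xyz} are really established simultaneously by a single induction on $n$. Granting the hypothesis, each recursive term above is bounded by its $bit\_sel$ counterpart via Lemma \ref{lma:xyz} at order $n/2 < n$, since $k \leq n/2$ and $k/2 < n/2$ hold whenever $k < n$ with both powers of $2$.

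The one point requiring care, and the main obstacle, is the boundary case $k = n/2$. Here the first recursive call is $pw\_hbit\_sel^{n/2}_{n/2}$, which is a full sort and hence equals $oe\_sort^{n/2}$; this lies outside the scope of Lemma \ref{lma:xyz}, which is stated only for $k < n$, so the induction hypothesis does not apply directly to it. I would handle it by observing that $bit\_sel^{n/2}_{n/2}$ also reduces to $oe\_sort^{n/2}$: in Algorithm \ref{net:bit} one has $l = n/k = 1$, so the \textbf{while} loop is never entered and a single sorted block $oe\_sort^{n/2}$ is returned. Thus the first term contributes an equality rather than an inequality, while the second term $pw\_hbit\_sel^{n/2}_{k/2}$ with $k/2 < n/2$ is covered by the induction hypothesis as usual. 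Combining the (possibly trivial) equality on the first term with the inductive inequality on the second yields the reduced inequality, which completes the inductive step and hence the proof.
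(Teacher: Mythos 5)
Your proof is correct and takes essentially the same route as the paper's: the paper also argues by induction (on pairs $(n,k)$ in lexicographic order), expands both sides via \eqref{eq:pw_cp15} and \eqref{eq:aux}, cancels the common $|split^n|$ and $|pw\_hbit\_merge^k|$ terms, and bounds the recursive calls first by the induction hypothesis and then by Lemma~\ref{lma:main1} --- which is exactly your invocation of Lemma~\ref{lma:xyz} at order $n/2$, unpacked into its two constituent inequalities. Your explicit treatment of the boundary case $k=n/2$, where the first recursive call degenerates to $oe\_sort^{n/2}$ and coincides with $bit\_sel^{n/2}_{n/2}$ (since $l=1$ in Algorithm~\ref{net:bit}), patches a scope issue that the paper's proof silently glosses over, and you handle it correctly.
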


  \begin{proof}
    By induction. For the base case, consider $1=k<n$. If follows by definitions that
    $|pw\_hbit\_sel^n_k|=|aux\_sel^n_k|=n-1$.
    For the induction step assume that for each $(n',k') \prec (n,k)$
    (in lexicographical order) the lemma holds, we get:

    \[
    \begin{tabular}{l r}
      \multicolumn{2}{l}{$|pw\_hbit\_sel^n_k|$} \\
      \multicolumn{2}{l}{$=|pw\_hbit\_sel^{n/2}_{k/2}| + |pw\_hbit\_sel^{n/2}_k|+|split^n|+|pw\_hbit\_merge^k|$} \\
      & \small{\bf (by the definition of $pw\_hbit\_sel$)} \\
      \multicolumn{2}{l}{$\leq|aux\_sel^{n/2}_{k/2}| + |aux\_sel^{n/2}_k|+|split^n|+|pw\_hbit\_merge^k|$} \\
      & \small{\bf (by the induction hypothesis)} \\
      \multicolumn{2}{l}{$\leq|bit\_sel^{n/2}_{k/2}| + |bit\_sel^{n/2}_k|+|split^n|+|pw\_hbit\_merge^k|$} \\
      & \small{\bf (by Lemma \ref{lma:main1})} \\
      \multicolumn{2}{l}{$= |aux\_sel^n_k|$} \\
      & \small{\bf (by the definition of $aux\_sel$)} \\
    \end{tabular}
    \]
  \end{proof}

  Let $N=2^n$ and $K=2^k$. We compute the upper bound for $P(n,k)=|pw\_hbit\_sel^N_K|$ using $B(n,k)=|bit\_sel^N_K|$.
  First, we prove a technical lemma below. The value $P(n,k,m)$ denotes the number of comparators used in the network
  $pw\_hbit\_sel^N_K$ after $m$ levels of recursion (of Eq. \ref{eq:pw_cp15}). Notice that if $0<k<n$, then:

  \begin{equation}\label{eq:reccc}
    P(n,k) = P(n-1,k) + P(n-1,k-1) + k2^{k-1} + 2^{n-1}
  \end{equation}
 
  \noindent Term $k2^{k-1}$ corresponds to $|pw\_hbit\_merge^K|$ and $2^{n-1}$ to $|split^N|$.

  \begin{lemma}
    Let:
    
    \[
      P(n,k,m) = \sum_{i=0}^{m-1}\sum_{j=0}^i\binom{i}{j}\left((k-j)2^{k-j-1}+2^{n-i-1}\right) + \sum_{i=0}^{m}\binom{m}{i}P(n-m,k-i).
    \]

    \noindent Then $\forall_{0 \leq m \leq \min(k,n-k)}$ $P(n,k,m)=P(n,k)$.
    \label{lma:unfold}
  \end{lemma}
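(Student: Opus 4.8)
The plan is to prove this by induction on $m$, showing that each level of unfolding leaves the quantity unchanged, i.e. that $P(n,k,m) = P(n,k,m+1)$ for every $0 \le m < \min(k,n-k)$. The base case is immediate: for $m=0$ the double sum is empty and the second sum collapses to $\binom{0}{0}P(n,k) = P(n,k)$, so $P(n,k,0) = P(n,k)$ directly from the definition. Since the assertion is that all the $P(n,k,m)$ equal the common value $P(n,k)$, it then suffices to show that passing from level $m$ to level $m+1$ preserves the value.

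For the inductive step I would split $P(n,k,m)$ into the accumulated ``merger/splitter'' part $A(m) = \sum_{i=0}^{m-1}\sum_{j=0}^{i}\binom{i}{j}\bigl((k-j)2^{k-j-1}+2^{n-i-1}\bigr)$ and the ``frontier'' part $B(m) = \sum_{i=0}^{m}\binom{m}{i}P(n-m,k-i)$. The key move is to apply the recurrence \eqref{eq:reccc} to every term $P(n-m,k-i)$ occurring in $B(m)$. Before doing so one must verify that \eqref{eq:reccc} is actually applicable, i.e. that $0 < k-i < n-m$ holds for all $i \in \{0,\dots,m\}$: since $m \le \min(k,n-k)-1$, we get $k-i \ge k-m \ge 1 > 0$ and $k-i \le k < n-m$, so the recurrence is valid throughout. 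This applicability check is the one genuinely delicate point, since it is exactly where the hypothesis $m < \min(k,n-k)$ is consumed; I expect it to be the main obstacle, the rest being routine bookkeeping.

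After substituting \eqref{eq:reccc}, the frontier part $B(m)$ splits into four sums. The two sums carrying $P(n-m-1,\cdot)$ I would combine by shifting the summation index in the second one by one and invoking Pascal's identity $\binom{m}{i}+\binom{m}{i-1}=\binom{m+1}{i}$, with the endpoints $i=0$ and $i=m+1$ handled separately; this collapses them into precisely $B(m+1) = \sum_{i=0}^{m+1}\binom{m+1}{i}P(n-(m+1),k-i)$. The remaining two sums, namely $\sum_{i=0}^{m}\binom{m}{i}(k-i)2^{k-i-1}$ and $\sum_{i=0}^{m}\binom{m}{i}2^{n-m-1}=2^{n-1}$ (using $\sum_{i}\binom{m}{i}=2^{m}$), are exactly the newly added layer of the double sum: a direct computation gives $A(m+1)-A(m) = \sum_{j=0}^{m}\binom{m}{j}\bigl((k-j)2^{k-j-1}+2^{n-m-1}\bigr)$, which matches these two leftover sums term by term.

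Assembling these identities yields $B(m) = B(m+1) + \bigl(A(m+1)-A(m)\bigr)$, whence $P(n,k,m) = A(m)+B(m) = A(m+1)+B(m+1) = P(n,k,m+1)$, which closes the induction and establishes $P(n,k,m)=P(n,k)$ for all $0 \le m \le \min(k,n-k)$. Beyond the boundary-condition verification flagged above, the only tools needed are the single-layer recurrence \eqref{eq:reccc}, Pascal's identity for the reindexing of the binomial coefficients, and the elementary identity $\sum_{i}\binom{m}{i}=2^{m}$; none of these should present difficulty once the applicability of \eqref{eq:reccc} to each frontier term is secured.
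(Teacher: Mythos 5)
Your proposal is correct and follows essentially the same route as the paper's proof: induction on $m$, with the inductive step carried out by applying the recurrence \eqref{eq:reccc} to the frontier terms and regrouping the two resulting $P(n-m-1,\cdot)$ sums via Pascal's identity (the paper's equation \originaleqref{eq:997} is exactly this reindexing, just read in the opposite direction, since the paper collapses $P(n,k,m+1)$ down to $P(n,k,m)$ while you expand $P(n,k,m)$ up to $P(n,k,m+1)$). Your explicit check that $0<k-i<n-m$ so that \eqref{eq:reccc} applies is a detail the paper leaves implicit, but otherwise the two arguments coincide.
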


  \begin{proof}
    By induction on $m$. If $m=0$, then $P(n,k,0)=P(n,k)$. Choose any $m$ such that $0 \leq m < \min(k,n-k)$ and assume that
    $P(n,k,m)=P(n,k)$. We show that $P(n,k,m+1)=P(n,k)$. We have:

    \begin{align*}
      P(n,k,m+1) &= \sum_{i=0}^{(m-1)+1}\sum_{j=0}^i\binom{i}{j}\left((k-j)2^{k-j-1}+2^{n-i-1}\right) + \underbrace{\sum_{i=0}^{m+1}\binom{m+1}{i}P(n-m-1,k-i)}_{(\ref{eq:997})} \\
                 &= \sum_{i=0}^{m-1}\sum_{j=0}^i\binom{i}{j}\left((k-j)2^{k-j-1}+2^{n-i-1}\right) + \sum_{i=0}^m\binom{m}{i}\left((k-i)2^{k-i-1}+2^{n-m-1}\right) \\
                 &\quad+ \sum_{i=0}^{m}\binom{m}{i}\left(P(n-m-1,k-i) + P(n-m-1,k-i-1)\right) \\
                 &\stackrel{(\ref{eq:reccc})}{=} \sum_{i=0}^{m-1}\sum_{j=0}^i\binom{i}{j}\left((k-j)2^{k-j-1}+2^{n-i-1}\right) + \sum_{i=0}^{m}\binom{m}{i}P(n-m,k-i) \\
                 &= P(n,k,m) \stackrel{IH}{=} P(n,k)
    \end{align*}

    \begin{align}
      \sum_{i=0}^{m+1}&\binom{m+1}{i}P(n-m-1,k-i) \label{eq:997} \\
      &= P(n-m-1,k) + P(n-m-1,k-m-1) + \sum_{i=1}^{m}\left(\binom{m}{i} + \binom{m}{i-1}\right)P(n-m-1,k-i) \nonumber \\
      &= \left(P(n-m-1,k) + \sum_{i=1}^{m}\binom{m}{i}P(n-m-1,k-i)\right) \nonumber \\
       &\quad + \left(P(n-m-1,k-m-1) + \sum_{i=0}^{m-1}\binom{m}{i}P(n-m-1,k-i-1)\right) \nonumber \\
      &= \sum_{i=0}^{m}\binom{m}{i}\left(P(n-m-1,k-i) + P(n-m-1,k-i-1)\right) \nonumber
    \end{align}

  \end{proof}

  \begin{lemma}
    $P(n,k,m) \leq 2^{n-2}\!\left(\!\!\left(k-\frac{m}{2}\right)^2\! + k + \frac{7m}{4} + 8 \!\right)\!\! + 2^k\!\left(\frac{3}{2}\right)^m\!\left(\frac{k}{2} - \frac{m}{6}\right) - 2^k(k+1) - 2^{n-k}\left(\frac{3}{2}\right)^m$.
    \label{lma:bigineq}
  \end{lemma}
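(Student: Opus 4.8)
The plan is to evaluate the explicit expression for $P(n,k,m)$ supplied by Lemma \ref{lma:unfold} term by term, replacing the unknown recursive values $P(n-m,k-i)$ by the closed-form upper bound coming from the Bitonic Selection Network. Concretely, I would split the double sum into its two additive pieces, writing $P(n,k,m) = D_1 + D_2 + T$ with $D_1 = \sum_{i=0}^{m-1}\sum_{j=0}^i \binom{i}{j}(k-j)2^{k-j-1}$, $D_2 = \sum_{i=0}^{m-1}\sum_{j=0}^i \binom{i}{j}2^{n-i-1}$, and $T = \sum_{i=0}^m \binom{m}{i}P(n-m,k-i)$. The term $D_2$ collapses immediately, since the inner sum is $\sum_{j=0}^i \binom{i}{j} = 2^i$, giving $D_2 = \sum_{i=0}^{m-1} 2^i\cdot 2^{n-i-1} = m\,2^{n-1}$; this will furnish exactly the excess $2m$ inside the $2^{n-2}(\cdots)$ factor of the target.

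For the tail $T$, I would invoke Lemma \ref{lma:xyz} to bound each $P(n-m,k-i) = |pw\_hbit\_sel^{2^{n-m}}_{2^{k-i}}|$ by $|bit\_sel^{2^{n-m}}_{2^{k-i}}|$ and then substitute the size from \eqref{eq:bit}. With $N=2^{n-m}$ and $K=2^{k-i}$ that size becomes $2^{n-m-2}((k-i)^2+(k-i)+8) - 2^{k-i-1}(k-i) - 2^{k-i} - 2^{n-m-k+i}$, so $T \le \sum_{i=0}^m \binom{m}{i}\left[\,\cdots\,\right]$. Evaluating this needs two families of binomial identities. The polynomial-weight sums $\sum_i \binom{m}{i} = 2^m$, $\sum_i i\binom{m}{i} = m2^{m-1}$, $\sum_i i^2\binom{m}{i} = m(m+1)2^{m-2}$ handle the $2^{n-m-2}$ block; after extracting $2^{n-m-2}\cdot 2^m = 2^{n-2}$ they yield $2^{n-2}\big((k-\tfrac m2)^2 + k - \tfrac m4 + 8\big)$. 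The geometric-weight evaluations of $\sum_i \binom{m}{i}x^i = (1+x)^m$ and $\sum_i i\binom{m}{i}x^i = mx(1+x)^{m-1}$ at $x=\tfrac12$ and $x=2$ produce the $(\tfrac32)^m$ and $3^m$ factors — explicitly $\sum_i \binom{m}{i}2^{k-i}(k-i) = 2^k(\tfrac32)^m(k-\tfrac m3)$, $\sum_i \binom{m}{i}2^{k-i} = 2^k(\tfrac32)^m$, and $\sum_i \binom{m}{i}2^{n-m-k+i} = 2^{n-k}(\tfrac32)^m$.

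It remains to compute $D_1$. Using the identity $\sum_{j=0}^i \binom{i}{j}(k-j)x^j = (1+x)^{i-1}\big(k(1+x)-ix\big)$ at $x=\tfrac12$, the inner sum equals $2^{k-1}(\tfrac32)^i(k-\tfrac i3)$, whence $D_1 = 2^{k-1}\sum_{i=0}^{m-1}(\tfrac32)^i(k-\tfrac i3)$. I would close this with the geometric sum $\sum_{i=0}^{m-1}(\tfrac32)^i = 2\big((\tfrac32)^m-1\big)$ and the arithmetic-geometric sum $\sum_{i=0}^{m-1} i(\tfrac32)^i = 6 + (2m-6)(\tfrac32)^m$, which collapse $D_1$ to $2^k(\tfrac32)^m(k-\tfrac m3+1) - 2^k(k+1)$. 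Adding $D_1 + D_2 + T$ and collecting the four resulting families — the $2^{n-2}$-polynomial, the $2^k(\tfrac32)^m$, the bare $2^k$, and the $2^{n-k}(\tfrac32)^m$ — reproduces the claimed bound, the decisive cancellations being $(k-\tfrac m3+1)-(\tfrac k2-\tfrac m6)-1 = \tfrac k2-\tfrac m6$ in the $2^k(\tfrac32)^m$ coefficient and $-\tfrac m4 + 2m = \tfrac{7m}4$ in the polynomial block.

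I expect the main obstacle to be purely the bookkeeping of the arithmetic-geometric sums and the final collection of like terms: several near-identical $(\tfrac32)^m$ contributions from $D_1$ and from $T$ must cancel precisely, and a single off-by-a-constant slip in $\sum_{i=0}^{m-1} i(\tfrac32)^i$ would corrupt both the polynomial-in-$m$ and the $(\tfrac32)^m$ coefficients. I would also verify the boundary indices $k-i\in\{0,\,n-m\}$, where $|bit\_sel|$ degenerates to $|max|$ or $|oe\_sort|$, to confirm that the closed form of \eqref{eq:bit} still applies — it does, since that formula stays algebraically valid at $\log K = 0$ and reduces there to $2^{n-m}-1 = |max^{2^{n-m}}|$.
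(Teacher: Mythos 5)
Your proposal is correct and follows essentially the same route as the paper's proof: apply Lemma \ref{lma:unfold}, bound each recursive term $P(n-m,k-i)$ by $|bit\_sel^{2^{n-m}}_{2^{k-i}}|$ via Lemma \ref{lma:xyz}, substitute the closed form from \eqref{eq:bit}, and evaluate the resulting binomial and arithmetic-geometric sums. All the intermediate values you compute ($D_2=m2^{n-1}$, $D_1=2^k(\tfrac32)^m(k+1-\tfrac m3)-2^k(k+1)$, and the three blocks of $T$) agree with the paper's equations \originaleqref{eq:1.1}, \originaleqref{eq:1.2} and \originaleqref{eq:2}, and the final collection of terms is the stated bound.
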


\begin{proof}
  The first inequality below is a consequence of Lemma \ref{lma:unfold} and \ref{lma:xyz}. We also use the following known equations:
  $\sum_{k=0}^n\binom{n}{k}x^{k-1}k = n (1 + x)^{n-1}$, 
  $\sum_{k=0}^n\binom{n}{k}k^2 =  n(n+1)2^{n-2}$,
  $\sum_{k=0}^{n-1}x^{k-1}k = \frac{(1-x)(-nx^{n-1})+(1-x^n)}{(1-x)^2}$.
  
    \begin{align*}
      P(n,k,m) &\leq \underbrace{\sum_{i=0}^{m-1}\sum_{j=0}^i\binom{i}{j}\left((k-j)2^{k-j-1}+2^{n-i-1}\right)}_{(\ref{eq:1})} + \underbrace{\sum_{i=0}^{m}\binom{m}{i}B(n-m,k-i)}_{(\ref{eq:2})}\\
         &= 
         \left(2^{k}\left(\frac{3}{2}\right)^{m}\left(k+1-\frac{m}{3}\right)-2^k(k+1)+m2^{n-1}\right)
          \\
         &\quad+ 2^{n-2}\left( k^2 - km + \frac{m(m-1)}{4} + k + 8 \right) \\
         &\quad+ 2^k\left(\frac{3}{2}\right)^m\left(-\frac{k}{2} + \frac{m}{6} - 1\right) 
         - 
         2^{n-k}\left(\frac{3}{2}\right)^m\\
 &= 2^{n-2}\left( \left(k-\frac{m}{2}\right)^2 + k + \frac{7m}{4} + 8 \right) + 2^k\left(\frac{3}{2}\right)^m\left(\frac{k}{2} - \frac{m}{6}\right) \\
  &\quad- 2^k(k+1) - 2^{n-k}\left(\frac{3}{2}\right)^m
    \end{align*}

    \begin{align}
      \sum_{i=0}^{m-1}\sum_{j=0}^i&\binom{i}{j}\left((k-j)2^{k-j-1}+2^{n-i-1}\right) \label{eq:1} \\
  &= \underbrace{\sum_{i=0}^{m-1}\sum_{j=0}^{i}\binom{i}{j}(k-j)2^{k-j-1}}_{(\ref{eq:1.1})}
  + \underbrace{\sum_{i=0}^{m-1}\sum_{j=0}^{i}\binom{i}{j}2^{n-i-1}}_{(\ref{eq:1.2})} \nonumber \\
  &= \left(2^{k}\middle(\frac{3}{2}\middle)^{m}(k+1-\frac{m}{3})-2^k(k+1)\right) + 
  (m2^{n-1}) \nonumber
    \end{align}

    \begin{align}
      \sum_{i=0}^{m-1}\sum_{j=0}^{i}\binom{i}{j}&(k-j)2^{k-j-1}
      = k2^{k-1}\sum_{i=0}^{m-1}\sum_{j=0}^{i}\binom{i}{j}2^{-j}- 2^{k-1}\sum_{i=0}^{m-1}\sum_{j=0}^{i}\binom{i}{j}2^{-j}j \label{eq:1.1} \\
      &= k2^{k-1}\sum_{i=0}^{m-1}\left(\frac{3}{2}\right)^i - 2^{k-1}\frac{1}{2}\sum_{i=0}^{m-1}\left(\frac{3}{2}\right)^{i-1}i \nonumber \\
      &= k2^{k-1}2\left(\left(\frac{3}{2}\right)^m-1\right) - 2^{k-1}\left(2 - \left(\frac{3}{2}\right)^{m-1}(3-m)\right) \nonumber \\
      &= 2^k\left(\frac{3}{2}\right)^m\left(k + 1 - \frac{m}{3}\right) - 2^k(k+1) \nonumber
    \end{align}

    \begin{equation}
      \sum_{i=0}^{m-1}\sum_{j=0}^{i}\binom{i}{j}2^{n-i-1}=\sum_{i=0}^{m-1}\left(2^{n-i-1}\sum_{j=0}^{i}\binom{i}{j}\right)=\sum_{i=0}^{m-1}2^{n-i-1}2^i=m2^{n-1}
      \label{eq:1.2}
    \end{equation}

    \begin{align}
      &\sum_{i=0}^{m}\binom{m}{i}(2^{n-m-2}(k-i)^2 + 2^{n-m-2}(k-i) - 2^{k-i-1}(k-i) \nonumber \\
      &\quad\quad\quad - 2^{n-m-k+i} + 2^{n-m+1}-2^{k-i}) \label{eq:2} \\
      &= \sum_{i=0}^{m}\binom{m}{i}2^{n-m-2}(k-i)^2 +\sum_{i=0}^{m}\binom{m}{i}2^{n-m-2}(k-i) -\sum_{i=0}^{m}\binom{m}{i}2^{k-i-1}(k-i)\nonumber \\
      &\quad- \sum_{i=0}^{m}\binom{m}{i}2^{n-m-k+i}+\sum_{i=0}^{m}\binom{m}{i}2^{n-m+1} - \sum_{i=0}^{m}\binom{m}{i}2^{k-i}\nonumber \\
      &= 2^{n-m-2}(k^22^m-km2^m+m(m+1)2^{m-2})+2^{n-m-2}(k2^m-m2^{m-1}) \nonumber \\
      &\quad- 2^{k-1}\left(k\left(\frac{3}{2}\right)^m-2^{-m}3^{m-1}m\right)- 2^{n-m-k}3^m + 2^{n+1} - 2^k\left(\frac{3}{2}\right)^m \nonumber \\
      &= 2^{n-2}\left( k^2 - km + \frac{m(m-1)}{4} + k + 8 \right) \nonumber \\
      &\quad+ 2^k\left(\frac{3}{2}\right)^m\left(-\frac{k}{2} + \frac{m}{6} - 1\right) - 2^{n-k}\left(\frac{3}{2}\right)^m \nonumber
    \end{align}

\end{proof}

\begin{theorem}
  For $m= \min(k,n-k)$, $P(n,k) \leq 2^{n-2}\left(\!\!\left(k - \frac{m}{2} - \frac{7}{4}\right)^2\! + \frac{9k}{2} + \frac{79}{16}\!\right)$ $+ 2^k\left(\frac{3}{2}\right)^m\!\left(\frac{k}{2} - \frac{m}{6}\right) - 2^k(k+1) - 2^{n-k}\left(\frac{3}{2}\right)^m$.
  \label{thm:upper-bound}
\end{theorem}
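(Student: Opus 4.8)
The plan is to obtain the bound by chaining the two immediately preceding lemmas at the extremal parameter value $m=\min(k,n-k)$ and then finishing with a short algebraic simplification. The two lemmas have been arranged precisely so that they interlock at this value: Lemma~\ref{lma:unfold} identifies $P(n,k,m)$ with $P(n,k)$ for every $0 \leq m \leq \min(k,n-k)$, and Lemma~\ref{lma:bigineq} furnishes an explicit upper bound for the very same quantity $P(n,k,m)$. Since $m=\min(k,n-k)$ lies at the top of the admissible range for Lemma~\ref{lma:unfold}, I am free to instantiate both lemmas there.

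Concretely, I would first apply Lemma~\ref{lma:unfold} with $m=\min(k,n-k)$ to write $P(n,k)=P(n,k,m)$, and then substitute the right-hand side of Lemma~\ref{lma:bigineq} for $P(n,k,m)$, yielding
\[
  P(n,k) \leq 2^{n-2}\!\left(\!\left(k-\tfrac{m}{2}\right)^2 + k + \tfrac{7m}{4} + 8\right) + 2^k\!\left(\tfrac{3}{2}\right)^m\!\left(\tfrac{k}{2}-\tfrac{m}{6}\right) - 2^k(k+1) - 2^{n-k}\!\left(\tfrac{3}{2}\right)^m.
\]
At this point the last three terms already match the target expression verbatim, so the only remaining task is to transform the quadratic factor multiplying $2^{n-2}$ into the completed-square form claimed in the theorem.

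That final step is a routine computation: setting $u=k-\tfrac{m}{2}$, I would complete the square on $u^2$, using $u^2 = (u-\tfrac{7}{4})^2 + \tfrac{7}{2}u - \tfrac{49}{16}$. Substituting back $u=k-\tfrac{m}{2}$, the contribution $\tfrac{7}{2}\cdot(-\tfrac{m}{2})=-\tfrac{7m}{4}$ exactly cancels the stray $+\tfrac{7m}{4}$ term, the linear part collapses to $\tfrac{7}{2}k+k=\tfrac{9k}{2}$, and the constants combine as $-\tfrac{49}{16}+8=\tfrac{79}{16}$. This produces precisely the factor $\left(k-\tfrac{m}{2}-\tfrac{7}{4}\right)^2+\tfrac{9k}{2}+\tfrac{79}{16}$ and completes the argument.

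I do not expect any genuine obstacle here, since all the combinatorial effort—the recursive unfolding of $P(n,k)$ and the closed-form evaluation of the nested sums—has already been discharged in Lemmas~\ref{lma:unfold} and~\ref{lma:bigineq}. The single point deserving care is to verify that $m=\min(k,n-k)$ is simultaneously within the hypotheses of both lemmas, so that the equality $P(n,k)=P(n,k,m)$ and the inequality of Lemma~\ref{lma:bigineq} may legitimately be composed at that one value; once this is checked, the remainder is purely mechanical.
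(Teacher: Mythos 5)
Your proposal is correct and matches the paper's own proof, which simply cites Lemmas \ref{lma:unfold} and \ref{lma:bigineq} directly; your completed-square computation verifying that $\left(k-\frac{m}{2}\right)^2 + k + \frac{7m}{4} + 8 = \left(k-\frac{m}{2}-\frac{7}{4}\right)^2 + \frac{9k}{2} + \frac{79}{16}$ is accurate and merely makes explicit the algebra the paper leaves implicit.
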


\begin{proof}
  Directly from Lemmas \ref{lma:unfold} and \ref{lma:bigineq}.
\end{proof}

  We now present the {\em size difference} $SD(n,k)$ between Pairwise Selection Network and our network.
  Merging step in $pw\_sel^N_K$ costs $2^{k}k - 2^k + 1$ and in $pw\_hbit\_sel^N_K$: $2^{k-1}k$,
  so the difference is given by the following equation:

  \begin{equation}
    SD(n,k) = \left\{ 
    \begin{array}{l l}
      0 & \quad \text{if $n=k$} \\
      0 & \quad \text{if $k=0$} \\
      2^{k-1}k - 2^k + 1 + & \\
      + SD(n-1,k) + SD(n-1,k-1) & \quad \text{if $0<k<n$}
    \end{array} \right.
  \label{eq:pw_size_diff}
  \end{equation}

  \begin{theorem}
    Let $S_{n,k}=\sum_{j=0}^k\binom{n-k+j}{j}2^{k-j}$. Then:

    \[
      SD(n,k) = \binom{n}{k}\frac{n+1}{2} - S_{n,k}\frac{n-2k+1}{2} - 2^k(k-1)-1
    \]
  \end{theorem}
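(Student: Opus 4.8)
The plan is to prove the closed form by induction on $n$, exploiting the fact that the recurrence \eqref{eq:pw_size_diff} together with the two boundary conditions $SD(n,n)=0$ and $SD(n,0)=0$ determines $SD(n,k)$ uniquely on the triangle $0\le k\le n$ (each row is fixed by the previous row plus the boundary, and row $0$ is the base). Writing $F(n,k)$ for the proposed right-hand side, it therefore suffices to verify that $F$ satisfies the same two boundary conditions and the same recurrence; the equality $SD=F$ then follows by uniqueness.

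First I would isolate the combinatorial core, namely the behaviour of the auxiliary sum $S_{n,k}=\sum_{j=0}^k\binom{n-k+j}{j}2^{k-j}$. Two elementary identities drive the whole argument. Splitting $\binom{n-k+j}{j}=\binom{n-1-k+j}{j}+\binom{n-1-k+j}{j-1}$ by Pascal's rule and reindexing the second sum gives the Pascal-type recurrence $S_{n,k}=S_{n-1,k}+S_{n-1,k-1}$; peeling off the top term $j=k$ (which equals $\binom{n}{k}$) and recognising the remainder as $2S_{n-1,k-1}$ gives $S_{n,k}=\binom{n}{k}+2S_{n-1,k-1}$. Both are one-line manipulations of binomial coefficients. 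The boundary checks are then direct substitutions: for $k=0$ one has $S_{n,0}=1$, so the two $\tfrac{n+1}{2}$-terms of $F(n,0)$ cancel and the leftover $-2^0(0-1)-1$ vanishes; for $k=n$ one uses $S_{n,n}=\sum_{j=0}^{n}2^{n-j}=2^{n+1}-1$, and after collecting the $2^{n}$-terms (which cancel) and the rational terms (which sum to $1-1$) one obtains $F(n,n)=0$.

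The heart of the proof is the inductive step for $0<k<n$. Expanding $F(n-1,k)+F(n-1,k-1)$, the binomial parts combine through Pascal's rule into $\tfrac{n}{2}\binom{n}{k}$, the $S$-weighted parts combine (using $S_{n-1,k}+S_{n-1,k-1}=S_{n,k}$) into $-\tfrac{n-2k}{2}S_{n,k}-S_{n-1,k-1}$, and the pure power-of-two parts into $-2^{k}(k-1)-2^{k-1}(k-2)-2$. Subtracting this from $F(n,k)$, and rewriting the forcing term via $2^{k-1}k-2^{k}=2^{k-1}(k-2)$, the whole discrepancy collapses to the single requirement
\[
  \tfrac12\binom{n}{k}-\tfrac12 S_{n,k}+S_{n-1,k-1}=0,
\]
which is precisely the second $S$-identity rearranged. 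Hence $F(n,k)-F(n-1,k)-F(n-1,k-1)=2^{k-1}(k-2)+1$, matching the inhomogeneous term $2^{k-1}k-2^{k}+1$ of \eqref{eq:pw_size_diff}, and the induction closes.

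I expect the main obstacle to be purely organisational: keeping the three families of terms (the binomial term, the $S_{n,k}$-weighted term, and the pure powers of two) cleanly separated while expanding $F(n-1,k)+F(n-1,k-1)$, so that the Pascal recurrence for $S_{n,k}$ and the top-term identity $S_{n,k}=\binom{n}{k}+2S_{n-1,k-1}$ can each be applied at exactly the right moment. Once those two identities are established, no genuinely difficult step remains, and the verification is routine bookkeeping.
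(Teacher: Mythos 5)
Your proposal is correct and follows essentially the same route as the paper: both verify the boundary cases via $S_{n,0}=1$ and $S_{n,n}=2^{n+1}-1$, and both close the induction on the recurrence \eqref{eq:pw_size_diff} using exactly the two identities $S_{n,k}=S_{n-1,k}+S_{n-1,k-1}$ and $S_{n-1,k-1}=\tfrac12\bigl(S_{n,k}-\binom{n}{k}\bigr)$. The only cosmetic difference is that you phrase the argument as uniqueness of the solution to the recurrence (and spell out the derivations of the $S$-identities that the paper merely asserts), whereas the paper substitutes the induction hypothesis directly; the bookkeeping is identical.
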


  \begin{proof}
    By straightforward calculation one can verify that $S_{n,0} = 1$, $S_{n,n}
    = 2^{n+1} - 1, S_{n-1,k-1} = \frac{1}{2}(S_{n,k} - \binom{n}{k})$ and
    $S_{n-1,k-1} + S_{n-1,k} = S_{n,k}$. It follows that the theorem is true
    for $k = 0$ and $k = n$. We prove the theorem by induction on pairs
    $(k,n)$. Take any $(k,n)$, $0 < k < n$, and assume that theorem holds for
    every $(k',n') \prec (k,n)$ (in lexicographical order). Then we have:

    \begin{align*}
      SD(n,k)&= 2^{k-1}k - 2^k + 1 + SD(n-1,k) + SD(n-1,k-1) \\
      &= 2^{k-1}k - 2^k + 1 + \binom{n-1}{k}\frac{n}{2} + \binom{n-1}{k-1}\frac{n}{2} - 2^k(k-1)-1 \\
      &\quad -2^{k-1}(k-2)-1 - (S_{n-1,k}\frac{n-2k}{2} + S_{n-1,k-1}\frac{n-2k+2}{2}) \\
      &= \binom{n}{k}\frac{n}{2} - S_{n,k}\frac{n-2k}{2} - S_{n-1,k-1} - 2^k(k-1)-1\\
      &= \binom{n}{k}\frac{n+1}{2}- S_{n,k}\frac{n-2k+1}{2} - 2^k(k-1)-1
    \end{align*}
  \end{proof}
  
\begin{corollary}
  $|pw\_sel^N_{N/2}| - |pw\_hbit\_sel^N_{N/2}| = N\frac{\log N - 4}{2} + \log N + 2$, for $N=2^n$, where $n \in \nat$.
\end{corollary}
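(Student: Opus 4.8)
The plan is to specialize the closed-form expression for the size difference $SD(n,k)$ established in the preceding theorem to the case $K=N/2$, and simplify. Since $N=2^n$ and $K=2^k$, taking $K=N/2=2^{n-1}$ means $k=n-1$, so the quantity sought is exactly $SD(n,n-1)$. In the $\log$-notation of the statement we have $\log N = n$, so the target value is $2^n\frac{n-4}{2}+n+2$, and it suffices to verify $SD(n,n-1)=2^n\frac{n-4}{2}+n+2$.

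First I would evaluate each of the four summands of
\[
  SD(n,k) = \binom{n}{k}\frac{n+1}{2} - S_{n,k}\frac{n-2k+1}{2} - 2^k(k-1) - 1
\]
at $k=n-1$. The binomial coefficient collapses to $\binom{n}{n-1}=n$, giving a first summand $\frac{n(n+1)}{2}$; the factor $\frac{n-2k+1}{2}$ becomes $\frac{3-n}{2}$; and the third summand becomes $-2^{n-1}(n-2)$.

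The only nonroutine ingredient -- and the step where I expect the main risk of error -- is obtaining a closed form for
\[
  S_{n,n-1} = \sum_{j=0}^{n-1}\binom{1+j}{j}\,2^{\,n-1-j} = \sum_{j=0}^{n-1}(j+1)\,2^{\,n-1-j}.
\]
Here I would use $\binom{1+j}{j}=j+1$ and reindex by $i=n-1-j$, turning the sum into $\sum_{i=0}^{n-1}(n-i)2^i = n(2^n-1) - \sum_{i=0}^{n-1} i\,2^i$. Applying the standard arithmetic--geometric identity $\sum_{i=0}^{n-1} i\,2^i = (n-2)2^n + 2$ then yields the compact value $S_{n,n-1} = 2^{n+1} - n - 2$.

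Finally I would substitute this value back and collect terms, being careful with the sign of the factor $3-n$. Writing the middle summand as $(2^{n+1}-n-2)\frac{n-3}{2} = 2^n(n-3) - \frac{n^2-n-6}{2}$, the purely polynomial contribution $\frac{n^2+n}{2} - \frac{n^2-n-6}{2} - 1$ reduces to $n+2$, while the exponential contribution $2^n(n-3) - 2^{n-1}(n-2)$ factors through $2^{n-1}$ to give $2^{n-1}(n-4) = 2^n\frac{n-4}{2}$. Adding these produces $2^n\frac{n-4}{2} + n + 2 = N\frac{\log N-4}{2} + \log N + 2$, as claimed. The argument is thus a single direct computation resting on the theorem above; the only delicate points are the evaluation of $S_{n,n-1}$ and the sign bookkeeping in $\frac{n-2k+1}{2}$.
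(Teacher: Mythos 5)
Your computation is correct and is exactly the intended derivation: the paper states this as an immediate corollary of the preceding theorem, so specializing $SD(n,k)$ at $k=n-1$, evaluating $S_{n,n-1}=2^{n+1}-n-2$, and simplifying is the whole argument. All the intermediate values you report ($\binom{n}{n-1}=n$, the factor $\frac{3-n}{2}$, and the final split into $2^{n-1}(n-4)$ plus $n+2$) check out.
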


\section{Summary}

We have constructed a new family of selection networks, which are based on the
pairwise selection ones, but require less comparators to merge subsequences. The
difference in sizes grows with $k$ and is equal to $n\frac{\log n - 4}{2} + \log n
+ 2$ for $k = n/2$. Less comparators means less variables and clauses generated when
translating cardinality constraints into CNFs, which is the goal for proposing new (smaller) networks.

  The shortcoming of our new networks is that the size of
  an input sequence to the merging procedure is required to be a power of 2, since the construction
  uses modified version of bitonic merging networks \cite{batcher1968sorting} which require $k$ to be the power of 2.
  One could replace $k$ with the nearest power of 2 (by padding the input sequences),
  but it would have a negative impact on the encoding efficiency.

\chapter[Generalized Pairwise Selection Networks]{Generalized Pairwise \\ Selection Networks}\label{ch:mw}

    \def\nqueenssolution{Qd4, Qe2, Qf8, Qa5, Qc1}
    \setchessboard{smallboard,labelleft=false,labelbottom=false,showmover=false,setpieces=\nqueenssolution}

    \begin{tikzpicture}[remember picture,overlay]
      \node[anchor=east,inner sep=0pt] at (current page text area.east|-0,3cm) {\chessboard};
    \end{tikzpicture}

  Here we show a construction of the {\bf $m$-Wise Selection Network} -- the first generalized selection network of this thesis.
  As in the previous chapter, we use the pairwise approach. The idea is to split inputs into $m$ columns, perform a pre-processing on them
  (generalized form of a splitter), recursively select elements in the columns, 
  and then return a top $k$ sorted sequence from the selected elements using a dedicated merging network.
  This network works for any values of $n$ and $k$, in contrast to the networks from the previous chapter.
  We show a complete construction for $m=4$ and perform a theoretical comparison with the Pairwise Selection Network.

  \section{m-Wise Sequences}

  In the Pairwise Sorting Network \cite{parberry1992pairwise}, the first step is to sort pairs and split the input into two equally
  sized sequences $\bar{x}$ and $\bar{y}$, such that for any $i$, $x_i \geq y_i$ (weak domination).
  Then $\bar{x}$ and $\bar{y}$ are sorted independently and finally merged. Even after sorting,
  the property $x_i \geq y_i$ is maintained (see \cite{parberry1992pairwise}). For example,
  the pair of sequences $\tuple{110,100}$ are pairwise sorted. We would like to extend this notion
  to cover larger number of sequences, suppose $m \geq 2$. For example, the tuple
  $\tuple{111,110,100,000}$ is 4-wise (sorted). Furthermore, the sequences we define may not be of equal size.
  It is because in the Pairwise Selection Network we merge two sorted sequences of size
  $\min(n/2, k)$ and $\min(n/2, k/2)$, and in our construction, the i-th sequence (to be merged)
  is of size at most $\floor{k/i}$, where $1 \leq i \leq m$. But it can happen that for some $i$
  we have $\floor{n/m} \leq \floor{k/i}$, i.e., there are at most as many elements in a sequence as the
  number of largest elements to be returned. Therefore we should consider the minimum of the two values.
  The variable $c$ in the following definition serves this purpose.
  
  \begin{definition}[m-wise sequences]\label{def:mws}
  Let $c, k, m \in \nat$, $1 \le k$ and $k/m \le c$. Moreover, let $k_i=\min(c,
  \floor{k/i})$ and $\bar{x}^i \in X^{k_i}$, $1 \leq i \leq m$. The tuple
  $\tuple{\bar{x}^1, \dots, \bar{x}^m}$ is $m$-wise of order $(c,k)$ if:
  \begin{enumerate}
    \item $\forall_{1 \leq i \leq m}$ $\bar{x}^i$ is sorted,
    \item $\forall_{1 \leq i \leq m-1}$ $\forall_{1 \leq j \leq k_{i+1}}$ $x^i_j \geq x^{i+1}_j$.
  \end{enumerate}
  \end{definition}

  \begin{observation} \label{obs:mws}
    Let a tuple $\tuple{\bar{x}^1, \dots, \bar{x}^m}$ be $m$-wise of order 
    $(c,k)$. Then:
    \begin{enumerate}
      \item $|\bar{x}^1| \ge |\bar{x}^2| \ge  \dots \ge |\bar{x}^m| = \floor{k/m}$,
      \item $\sum_{i=1}^{m} |\bar{x}^i| \ge k$,
      \item if $|\bar{x}^{i-1}| > |\bar{x}^i| + 1$ then $|\bar{x}^i| = 
      \floor{k/i}$. 
    \end{enumerate}
  \end{observation}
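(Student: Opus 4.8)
The plan is to observe that all three claims are purely arithmetic statements about the lengths $|\bar{x}^i| = k_i = \min(c, \floor{k/i})$, so that neither the sortedness nor the domination conditions of Definition~\ref{def:mws} actually play a role; only the standing hypothesis $k/m \le c$ (equivalently $k \le cm$) is needed. I would record this reduction first and then treat the three parts separately, each as a fact about the integer sequence $k_i$.

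For the first part, I would use that $\floor{k/i}$ is non-increasing in $i$ (since $k/i$ is), so that $k_i = \min(c, \floor{k/i})$, being the pointwise minimum of a constant and a non-increasing sequence, is itself non-increasing; this gives $|\bar{x}^1| \ge \dots \ge |\bar{x}^m|$. The final equality $|\bar{x}^m| = \floor{k/m}$ would follow from the hypothesis: $\floor{k/m} \le k/m \le c$, so the cap is inactive at $i = m$ and $k_m = \floor{k/m}$.

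For the third part, the key remark is that the cap, once active, stays active as $i$ decreases: if $k_i = c$ then $\floor{k/i} \ge c$, hence $\floor{k/(i-1)} \ge \floor{k/i} \ge c$ and $k_{i-1} = c = k_i$. Taking the contrapositive, any strict decrease $k_{i-1} > k_i$ (in particular $k_{i-1} > k_i + 1$) forces the cap to be inactive at $i$, i.e. $k_i = \floor{k/i} = |\bar{x}^i|$.

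The main work is the second part. Here I would identify the threshold $a = \floor{k/c}$ at which the cap turns off: one checks that $k_i = c$ precisely when $i \le \floor{k/c}$, and $k \le cm$ guarantees $a \le m$. If $a = m$ then every column has height $c$ and $\sum_i |\bar{x}^i| = mc \ge k$. Otherwise $a < m$, and writing $\rho = k - ac \in \{0, \dots, c-1\}$ for the remainder of $k$ modulo $c$, it suffices to show that the single extra term $\floor{k/(a+1)}$ already accounts for $\rho$. Since $\rho$ is an integer this reduces to $k \ge (a+1)\rho$, which rearranges to $ac \ge a\rho$, i.e. $c \ge \rho$, true by the definition of $\rho$. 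Hence $\sum_{i=1}^m k_i \ge ac + \floor{k/(a+1)} \ge ac + \rho = k$. I expect this inequality $\floor{k/(a+1)} \ge k \bmod c$ to be the one subtle point; everything else is bookkeeping with floors, and the hypothesis $k \le cm$ is exactly what keeps the threshold $a$ within range.
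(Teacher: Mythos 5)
Your proof is correct. Parts 1 and 3 match the paper's argument essentially verbatim (the paper proves part 3 by contradiction, which is exactly the contrapositive you state), and your opening reduction --- that the observation is purely arithmetic in the lengths $k_i=\min(c,\floor{k/i})$ under the standing hypothesis $k\le cm$ --- is sound, since Definition~\ref{def:mws} fixes these lengths outright. The only genuine divergence is in part 2. The paper also splits at the first index $i$ where the cap is inactive, but instead of evaluating the capped prefix it uses only the crude bound $k_j\ge k_i+1$ for $j<i$ to get $\sum_{j=1}^{i}k_j\ge i\floor{k/i}+i-1$, and closes with the standard division-with-remainder inequality $i\floor{k/i}+i-1\ge k$. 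You instead pin down the threshold exactly as $a=\floor{k/c}$, compute the capped prefix sum as $ac$, and show that the single next term $\floor{k/(a+1)}$ covers the remainder $\rho=k-ac$ because $c>\rho$. Both hinge on a one-line floor inequality; the paper's version is marginally more robust in that it never uses the exact value $c$ of the cap, only that each capped entry strictly exceeds the first uncapped one, whereas yours makes the shape of the sequence $k_i$ fully explicit and yields the slightly sharper statement $ac+\floor{k/(a+1)}\ge k$. Either route is acceptable.
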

  
  \begin{proof}
    The first statement is obvious. To prove the second one let $k_i = |\bar{x}^i|$, $1
    \le i \le m$. Then, if $k_i=c$ for each $1 \leq i \leq m$, we have $\sum_{i=1}^{m} k_i
    = m c \geq m k/m = k$ and we are done. Let $1 \leq i \leq m$ be the first index such
    that $k_i\neq c$, therefore $k_i=\floor{k/i} < c$. Thus, for each $1 \leq j < i$: $k_j
    = c > k_i$, therefore $k_j \geq k_i + 1$. From this we get that $\sum_{j=1}^{m} k_j
    \geq \sum_{j=1}^{i} k_j \geq i\floor{k/i} + i - 1 \geq k$.

    The third one can be easily proved by contradiction.
    Assume that $k_i = c$, then from the first property $k_{i-1} \geq k_i = c$. By Definition \ref{def:mws}, $k_{i-1} = \min(c,
    \floor{k/(i-1)}) \leq c$, so $k_{i-1}=c$, a contradiction. 
  \end{proof}

  \begin{definition}[m-wise merger]\label{def:mwm}
    A comparator network $f^s_{(c,k)}$ is an {\em m-wise merger of order $(c,k)$},
    if for each $m$-wise tuple $T = \tuple{\bar{x}^1, \dots, \bar{x}^m}$ of order
    $(c,k)$, such that $s = \sum_{i=1}^{m} |\bar{x}^i|$, $f^s_{(c,k)}(T)$ is top
    $k$ sorted.
  \end{definition}

  \section{m-Wise Selection Network}

  Now we present the algorithm for constructing the $m$-Wise Selection Network
  (Algorithm \ref{net:mw_sel}). In Algorithm \ref{net:mw_sel} we
  use $mw\_merge^{s}_{(c,k)}$, that is, an $m$-Wise Merger of order $(c,k)$, as a black box.
  We give detailed constructions of $m$-Wise Merger for $m=4$ in the next section.

  The idea is as follows: first, we split the input sequence into $m$ columns of non-increasing sizes (lines 2--5)
  and we sort rows using sorters (lines 6--8). Then we recursively run the selection
  algorithm on each column (lines 9--11), where at most $\floor{k/i}$ items are selected
  from the $i$-th column. In obtained outputs,
  selected items are sorted and form prefixes of the columns. The
  prefixes are padded with zeroes (with $\bot$'s) in order to get the input sizes required by the $m$-wise
  property (Definition \ref{def:mws}) and, finally, they are passed to the merging procedure
  (line 12--13).

\begin{examplebox}
\begin{example}\label{ex:mw_sel}
  In Figure \ref{fig:mw_sel_example} we present a sample run of Algorithm \ref{net:mw_sel}. The input is
  a sequence $1111010010000010000101$, with parameters $\tuple{n_1,n_2,n_3,n_4,k}=\tuple{8,7,4,3,6}$.
  The first step (Figure \ref{subfig:mw:a}) is to arrange the input in columns (lines 2--5).
  In this example we get
  $\bar{x}^1=\tuple{11110100}$, $\bar{x}^2=\tuple{1000001}$, $\bar{x}^3=\tuple{0000}$
  and $\bar{x}^4=\tuple{101}$. Next we sort rows using $1,2,3$ or $4$-sorters (lines 6--8),
  the result is visible in Figure \ref{subfig:mw:b}. We make recursive calls in
  lines 9--11 of the algorithm. Items selected recursively in this step are
  marked in Figure \ref{subfig:mw:c}. Notice that in $i$-th column we only need
  to select $\floor{k/i}$ largest elements. This is because of the initial
  sorting of the rows. Next comes the merging step, which is selecting $k$
  largest elements from the results of the previous step. How exactly those elements
  are obtained and output depends on the implementation. We choose the convention that the
  resulting $k$ elements must be placed in the row-major order in our column
  representation of the input (see Figure \ref{subfig:mw:d}).
\end{example}
\end{examplebox}

\begin{figure}
  \centering
  \subfloat[split into columns\label{subfig:mw:a}]{\includegraphics[width=0.22\textwidth]{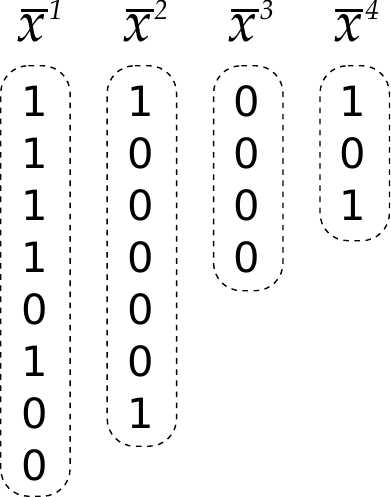}}~~~~~~~~~~~~~~~~~~~~
  \subfloat[sort rows\label{subfig:mw:b}]{\includegraphics[width=0.22\textwidth]{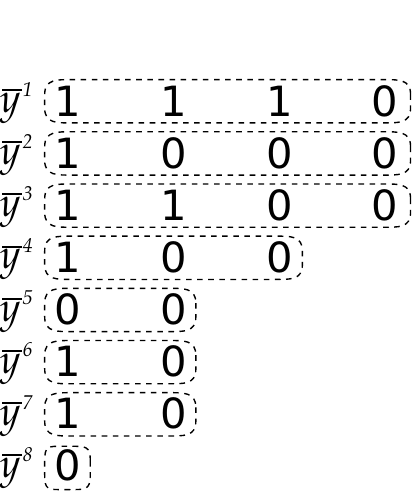}} \\
  \subfloat[select recursively\label{subfig:mw:c}]{\includegraphics[width=0.22\textwidth]{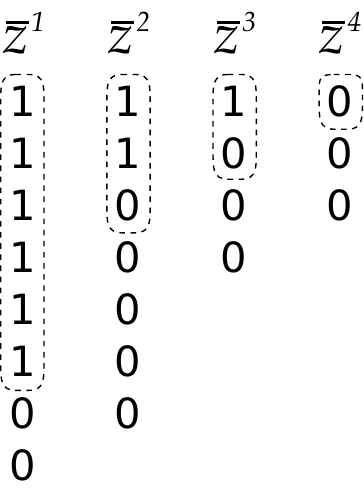}}~~~~~~~~~~~~~~~~~~~~
  \subfloat[merge columns\label{subfig:mw:d}]{\includegraphics[width=0.22\textwidth]{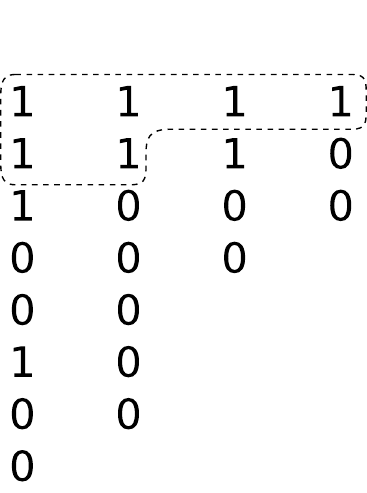}}
  \caption{Sample run of the 4-Wise Selection Network}
  \label{fig:mw_sel_example}
\end{figure}

  \begin{algorithm}[t]
    \caption{$mw\_sel^n_k$}\label{net:mw_sel}
    \begin{algorithmic}[1]
      \Require {$\bar{x} \in X^n$; $n_1, \dots , n_m \in \nat$ 
        where $n > n_1 \geq \dots \geq n_m$ and $\sum n_i = n$; $1 \le k \le n$}
        \Ensure{The output is top $k$ sorted and is a permutation of the inputs}
      \If {$k = 1$}
        \Return $max^n(\bar{x})$
      \EndIf 
      \State $\mathit{offset} = 1$
      \ForAll {$i \in \{1,\dots,m\}$} \Comment{Splitting the input into columns.}
        \State $\bar{x}^i \gets \tuple{x_{\mathit{offset}},\dots,x_{\mathit{offset} + n_i - 1}}$
        \State $\mathit{offset} += n_i$
      \EndFor
      \ForAll {$i \in \{1,\dots,n_1\}$} \Comment{Sorting rows.}
        \State $m' = \max\{j : n_j \geq i\}$
        \State $\bar{y}^i \gets \sort^{m'}(\tuple{x^1_{i},\dots,x^{m'}_{i}})$
      \EndFor
      \ForAll {$i \in \{1,\dots,m\}$} \Comment{Recursively selecting items in columns.}
        \State $k_i=\min(n_1,\floor{k/i})$; ~~$l_i=\min(n_i,\floor{k/i})$
        \Comment{$k_i \ge l_i$}
        \State $\bar{z}^i \gets mw\_sel^{n_i}_{l_i}(\bar{y}^1_i,\dots,\bar{y}^{n_i}_i)$
      \EndFor
      \State $s = \sum_{i=1}^{m} k_i$; ~~$c = k_1$;
        ~~$\overline{out} = \suff(l_1+1,\bar{z}^1) :: \dots :: \suff(l_m+1,\bar{z}^m)$
      \State $\overline{res} \gets mw\_merge^{s}_{(c,k)}(\tuple{
      \pref(l_1,\bar{z}^1)::\bot^{k_1-l_1}, \dots, \pref(l_m,\bar{z}^m)}::\bot^{k_m-l_m})$
      \State \Return $\remove(\bot,\overline{res}) :: \overline{out}$
    \end{algorithmic}
  \end{algorithm}
  
  \begin{lemma}\label{lma:mw_num}
    Let $\bar{t}^i = \tuple{\bar{y}^1_i,\dots,\bar{y}^{n_i}_i}$, where $\bar{y}^j$, $j=1,
    \dots, n_1$, is the result of Step 8 in Algorithm \ref{net:mw_sel}. For each $1 \leq
    i < m$: $|\bar{t}^i|_1 \geq |\bar{t}^{i+1}|_1$.
  \end{lemma}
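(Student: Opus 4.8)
The plan is to unwind the notation and observe that $\bar{t}^i = \tuple{\bar{y}^1_i,\dots,\bar{y}^{n_i}_i}$ is nothing but the $i$-th column of the array after the row-sorting of Steps 6--8 of Algorithm \ref{net:mw_sel}. Since counting occurrences of the value $1$ is what is being compared, I would work in the binary domain $X=\{0,1\}$, where $|\bar{t}^i|_1 = \sum_{j=1}^{n_i}\bar{y}^j_i$ because every $\bar{y}^j_i \in \{0,1\}$. The whole statement then reduces to comparing two such sums over the rows, and the key local fact is that within any single sorted row, the entry in column $i$ dominates the entry in column $i+1$.

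First I would fix $1 \le i < m$. Recall the algorithm's precondition $n_1 \ge \dots \ge n_m$, so $n_{i+1} \le n_i$ and $\bar{t}^{i+1}$ is a prefix-length-bounded companion of $\bar{t}^i$. For every row index $j$ with $1 \le j \le n_{i+1}$, the sorted row $\bar{y}^j = \sort^{m'}(\tuple{x^1_j,\dots,x^{m'}_j})$ with $m' = \max\{l : n_l \ge j\}$ contains at least $i+1$ entries (since $n_{i+1}\ge j$ forces $m' \ge i+1$), so both $\bar{y}^j_i$ and $\bar{y}^j_{i+1}$ are defined; and because our $\sort$ convention outputs a non-increasing sequence, $\bar{y}^j_i \ge \bar{y}^j_{i+1}$. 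I would then chain two inequalities: dropping the surplus rows $n_{i+1}+1,\dots,n_i$ of column $i$ (all summands non-negative) gives $\sum_{j=1}^{n_i}\bar{y}^j_i \ge \sum_{j=1}^{n_{i+1}}\bar{y}^j_i$, and the per-row domination gives $\sum_{j=1}^{n_{i+1}}\bar{y}^j_i \ge \sum_{j=1}^{n_{i+1}}\bar{y}^j_{i+1} = |\bar{t}^{i+1}|_1$. Combining the two yields $|\bar{t}^i|_1 \ge |\bar{t}^{i+1}|_1$, as required.

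There is no genuinely hard step here; the argument is a two-line summation bound once the bookkeeping is in place. The only point demanding care is the indexing: one must confirm that $\bar{y}^j_{i+1}$ truly exists for each row $j$ counted in $\bar{t}^{i+1}$ (which is exactly the implication $j \le n_{i+1} \Rightarrow m' \ge i+1$), and that the sorting convention is non-increasing so that the earlier column position dominates the later one rather than the reverse. After those two sanity checks, the chained inequality closes the proof immediately.
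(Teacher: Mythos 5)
Your proposal is correct and follows essentially the same route as the paper's proof: each row is sorted in non-increasing order, so $y^j_i \geq y^j_{i+1}$ wherever both entries exist, and summing (or, equivalently, noting that $y^j_{i+1}=1$ forces $y^j_i=1$) gives the column-count inequality. Your version is merely more explicit about the indexing details (checking that $m' \geq i+1$ for rows $j \leq n_{i+1}$ and discarding the surplus rows of the longer column), which the paper's terser proof glosses over.
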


  \begin{proof}
    Take any $1 \leq i < m$. Consider element $y^j_i$ (for some $1 \leq j \leq n_i$).
    Since $\bar{y}^j$ is sorted, we have
    $y^j_i \geq y^j_{i+1}$, therefore if $y^j_{i+1}=1$ then $y^j_i=1$.
    Thus $|\bar{t}^i|_1 \geq |\bar{t}^{i+1}|_1$. 
  \end{proof}

  \begin{corollary}\label{crly:mw_num}
    For each $1 \leq i \leq m$, let $\bar{z}^i$ be the result of Step 11
    in Algorithm \ref{net:mw_sel}.
    Then for each $1 \leq i < m$: $|\bar{z}^i|_1 \geq |\bar{z}^{i+1}|_1$.
  \end{corollary}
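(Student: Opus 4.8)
The plan is to reduce the statement directly to the already-established Lemma~\ref{lma:mw_num}. Recall that $\bar{z}^i$ is defined in Step~11 of Algorithm~\ref{net:mw_sel} as $\bar{z}^i \gets mw\_sel^{n_i}_{l_i}(\bar{y}^1_i,\dots,\bar{y}^{n_i}_i)$, and the argument of this recursive call is precisely the sequence $\bar{t}^i = \tuple{\bar{y}^1_i,\dots,\bar{y}^{n_i}_i}$ introduced in Lemma~\ref{lma:mw_num}. The key observation I would use is that $mw\_sel^{n_i}_{l_i}$ is a comparator network, so its output is a permutation of its input (this is asserted by the Ensure clause of Algorithm~\ref{net:mw_sel}, and holds structurally because every comparator merely rearranges values). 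Since permuting a binary sequence does not alter how many ones it contains, this gives $|\bar{z}^i|_1 = |\bar{t}^i|_1$ for every $1 \le i \le m$.

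With this identity in hand, the corollary follows in a single line. For any fixed $1 \le i < m$, Lemma~\ref{lma:mw_num} supplies $|\bar{t}^i|_1 \ge |\bar{t}^{i+1}|_1$, and combining it with the permutation invariance just noted yields
\[
  |\bar{z}^i|_1 = |\bar{t}^i|_1 \ge |\bar{t}^{i+1}|_1 = |\bar{z}^{i+1}|_1,
\]
which is exactly the claim.

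Since the whole argument rests on Lemma~\ref{lma:mw_num} and on the structural fact that comparator networks preserve the multiset of input values, I do not expect any genuine obstacle here. The only point deserving care is to confirm that the recursive call in Step~11 is applied to the same sequence $\bar{t}^i$ that appears in Lemma~\ref{lma:mw_num}, rather than to a $\bot$-padded or truncated variant (the padding with $\bot$'s happens only later, at the merging step in Step~13). Inspecting the indices in Step~11 shows the arguments coincide, so the count $|\cdot|_1$ transfers cleanly and no further work is needed.
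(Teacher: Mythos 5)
Your proposal is correct and follows exactly the paper's own argument: the paper likewise notes that a comparator network only permutes its input, so $|\bar{z}^i|_1 = |\bar{t}^i|_1$, and then applies Lemma~\ref{lma:mw_num} to conclude. Your extra remark about checking that the $\bot$-padding happens only at Step~13 is a sensible sanity check but adds nothing beyond what the paper's one-line proof already relies on.
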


  \begin{proof}
    For $1 \leq i \leq m$, let $\bar{t}^i$ be the same as in Lemma \ref{lma:mw_num}.
    Take any $1 \leq i < m$. Comparator network only permutes its input, therefore
    $|\bar{z}^i|_1 = |mw\_sel^{n_i}_{s_i}(\bar{t}^i)|_1 = |\bar{t}^i|_1 \geq
    |\bar{t}^{i+1}|_1 = |mw\_sel^{n_{i+1}}_{s_{i+1}}(\bar{t}^{i+1})|_1 = |\bar{z}^{i+1}|_1$.
    The inequality comes from Lemma \ref{lma:mw_num}. 
  \end{proof}
  
  \begin{theorem}\label{thm:mw_sel}
    Let $n,k \in \nat$, such that $k \leq n$. Then $mw\_sel^n_k$ is a $k$-selection network.
  \end{theorem}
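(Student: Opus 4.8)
The plan is to exploit the zero-one principle: since every operation in Algorithm \ref{net:mw_sel} (splitting into columns, row-sorting, the recursive calls, and the final merger) is a comparator network, it suffices to show that for every \emph{binary} input $\bar{x}\in\bool^n$ the output is top $k$ sorted. That the output is a permutation of the input is immediate for any $X$: all of the above steps only permute elements, and the padding $\bot^{k_i-l_i}$ inserted before the merger is stripped again by $\remove(\bot,\cdot)$. I would then argue by induction on $n$, with $k\le n$ arbitrary. The base case is $k=1$, where the network returns $max^n(\bar{x})$, whose first output is $\max\{x_1,\dots,x_n\}$; hence it is top $1$ sorted.

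For the inductive step I would fix $n$ with $k\ge 2$ and assume the theorem for all smaller orders. Each column has size $n_i<n$ (because $n>n_1\ge\cdots\ge n_m$) and $l_i\le n_i$, so by the induction hypothesis $mw\_sel^{n_i}_{l_i}$ is an $l_i$-selection network; thus each $\bar{z}^i$ is top $l_i$ sorted and a permutation of the $i$-th row-sorted column. Next I would verify that the padded prefixes $\pref(l_i,\bar{z}^i)::\bot^{\,k_i-l_i}$, $1\le i\le m$, form an $m$-wise tuple of order $(c,k)$ per Definition \ref{def:mws}, with $c=k_1=\min(n_1,k)$ and $k_i=\min(c,\floor{k/i})$. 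Sortedness is clear, since $\pref(l_i,\bar{z}^i)$ is sorted and appending copies of the least element $\bot$ preserves this; the order is well defined because $n_1\ge n/m\ge k/m$ gives $c\ge k/m$. The domination condition is where Corollary \ref{crly:mw_num} enters: for sorted binary sequences domination reduces to comparing numbers of ones, and $|\bar{z}^i|_1\ge|\bar{z}^{i+1}|_1$ together with $l_i\ge l_{i+1}$ yields $|\pref(l_i,\bar{z}^i)|_1\ge|\pref(l_{i+1},\bar{z}^{i+1})|_1$; combined with $\bot<0<1$ this gives $x^i_j\ge x^{i+1}_j$ for all $j\le k_{i+1}$. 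With the $m$-wise property in hand, Definition \ref{def:mwm} guarantees that $mw\_merge^s_{(c,k)}$ returns a top $k$ sorted rearrangement of these prefixes, and stripping the $\bot$'s leaves a top $k$ sorted sequence of $\bigcup_i\pref(l_i,\bar{z}^i)$.

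The hard part will be the last step: showing that concatenating this merged block with $\overline{out}=\suff(l_1+1,\bar{z}^1)::\cdots::\suff(l_m+1,\bar{z}^m)$ is top $k$ sorted for the whole input, i.e.\ that no element among the $k$ largest of $\bar{x}$ is discarded into a suffix. Working in binary, suppose a $1$ lands in $\suff(l_i+1,\bar{z}^i)$. A nonempty suffix forces $l_i<n_i$, hence $l_i=\floor{k/i}$, and top $l_i$ sortedness makes the entire prefix of column $i$ consist of $1$'s, so column $i$ holds at least $\floor{k/i}+1$ ones. By the weak domination induced by the initial row-sorting (Lemma \ref{lma:mw_num} and Corollary \ref{crly:mw_num}), each column $j\le i$ holds at least as many ones, so columns $1,\dots,i$ together contain at least $i(\floor{k/i}+1)\ge k+1$ ones --- precisely the counting used in the proof of Observation \ref{obs:mws}. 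Thus the total number of ones exceeds $k$, so the discarded one lies below the $k$-th largest value; symmetrically, when the total number of ones is at most $k$ no one is ever discarded. In either case all top $k$ elements of $\bar{x}$ reach the merger, the merged prefix realises them in sorted order, and everything in $\overline{out}$ is dominated by the $k$-th output. This shows the final sequence is top $k$ sorted, completing the induction and, by the zero-one principle, the result for arbitrary $X$.
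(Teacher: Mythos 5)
Your overall strategy is the same as the paper's: induction on $(n,k)$, verification via Corollary \ref{crly:mw_num} that the padded prefixes form an $m$-wise tuple of order $(k_1,k)$, and then a counting argument showing that no element of the top $k$ is lost to the discarded suffixes. The first two parts are sound (and your explicit appeal to the zero-one principle is, if anything, more careful than the paper, which simply works over $\bool^n$). The gap is in the last step. When a $1$ sits at position $j>l_i$ of column $i$, you correctly deduce that every column $j'\le i$ carries at least $\floor{k/i}+1$ ones, hence that columns $1,\dots,i$ together carry at least $i(\floor{k/i}+1)\ge k+1$ ones. But what must be shown is that the \emph{prefixes} $\pref(l_1,\bar{z}^1)::\dots::\pref(l_m,\bar{z}^m)$ --- the only data the merger ever sees --- contain at least $k$ ones. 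The column totals do not transfer: column $j'$ contributes only $\min(l_{j'},|\bar{z}^{j'}|_1)$ ones to its prefix, and by hypothesis at least one of the ones you counted lies in a suffix that is thrown away before merging. The prefix bound your inequality actually yields is $\sum_{j'\le i}\min\bigl(l_{j'},\floor{k/i}+1\bigr)$, which degenerates to $i\floor{k/i}$ when the $l_{j'}$ coincide with $l_i$, and $i\floor{k/i}$ can be strictly smaller than $k$ (e.g.\ $i=2$, $k=5$). Relatedly, ``the discarded one lies below the $k$-th largest value'' is not the right conclusion in the binary setting: when there are more than $k$ ones a discarded $1$ \emph{equals} the $k$-th largest value, and the harmlessness of discarding it is precisely equivalent to the prefix count being at least $k$ --- which is the thing still to be proved.

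The paper closes exactly this gap with a finer analysis. Let $r$ be the number of columns $j'\le i$ with $l_{j'}>l_i$; one first shows $r\ge 1$ when $i\ge 2$ (if $l_1=l_i<k$ then $l_1=n_1$, so $j\le n_i\le n_1=l_i$, contradicting $j>l_i$). Each of those $r$ columns is top $(l_i+1)$ sorted and holds at least $l_i+1$ ones, so its prefix contributes $l_i+1$ ones; the remaining $i-r$ prefixes contribute $l_i$ each. Finally the identity $l_{r+1}=l_i=\floor{k/(r+1)}$, which follows from $l_{r+1}=l_i<n_i\le n_{r+1}$, gives $|\bar{w}|_1\ge i\,l_i+r\ge (r+1)\floor{k/(r+1)}+r\ge k$. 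Without some version of this refinement the inductive step does not go through.
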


  \begin{proof}
    We prove by induction that for each $n,k \in \nat$ such that $1 \leq k \leq n$ and
    each $\bar{x} \in \bool^n$: $mw\_sel^n_k(\bar{x})$ is top $k$ sorted. If $1=k \leq n$
    then $mw\_sel^n_k = max^n$, so the theorem is true. For the induction step assume that
    $n \ge k \ge 2$, $m \ge 2$ and for each $(n^{*},k^{*}) \prec (n,k)$ (in
    lexicographical order) the theorem holds. We have to prove the following two properties:

    \begin{enumerate}
      \item The tuple $\tuple{\pref(l_1,\bar{z}^1) :: \bot^{k_1-l_1}, \dots, \pref(l_m,\bar{z}^m) :: 
      \bot^{k_m-l_m}}$ is $m$-wise of order $(k_1, k)$.
      \item The sequence $\bar{w} = \pref(l_1,\bar{z}^1)::\dots ::\pref(l_m,\bar{z}^m)$ contains 
      $k$ largest elements from $\bar{x}$.
    \end{enumerate}
    
    Ad.~1): Observe that for any $i$, $1 \leq i \leq m$, we have $n_i \leq n_1 < n$ and
    $l_i \leq l_1 \leq k$. Thus, $(n_i,l_i) \prec (n,k)$ and $\bar{z}_i$ is top $l_i$
    sorted due to the induction hypothesis. Therefore, $\pref(l_i,\bar{z}^i)$ is sorted and
    so is $\pref(l_i,\bar{z}^i) :: \bot^{k_i-l_i}$. In this way, we prove that
    the first property of Definition \ref{def:mws} is satisfied. To prove the second one,
    fix $1 \leq j \leq l_{i+1}$ and assume that $z^{i+1}_j = 1$. We are going to
    show that $z^i_j = 1$. Since $\bar{z}^i$ and $\bar{z}^{i+1}$ are both top $l_{i+1}$
    sorted (by the induction hypothesis) and since $|\bar{z}^i|_1 \geq |\bar{z}^{i+1}|_1$
    (from Corollary \ref{crly:mw_num}), we have $|\pref(l_{i+1},\bar{z}^i)|_1 \geq
    |\pref(l_{i+1},\bar{z}^{i+1})|_1$, so $z^i_j=1$.
    
    Ad. 2): It is easy to observe that $\bar{z} = \bar{z}^1 :: \dots :: \bar{z}^m$ is a
    permutation of the input sequence $\bar{x}$. If all 1's in $\bar{z}$ are in $\bar{w}$,
    we are done. So assume that there exists $z^i_j=1$ for some $1 \leq i \leq m$, $l_i <
    j \leq n_i$. We show that $|\bar{w}|_1 \geq k$. From the induction hypothesis we
    get $\pref(l_i,z^i) \succeq \tuple{z^i_j}$, which implies that $|\pref(l_i,z^i)|_1 =
    l_i$. From (1) and the second property of Definition \ref{def:mws} it is clear that
    all $z^s_t$ are 1's, where $1 \leq s \leq i$ and $1 \leq t \leq l_i$, therefore
    $|\pref(l_i,\bar{z}^1):: \dots :: \pref(l_i,\bar{z}^i)|_1 = i \cdot l_i$. Moreover,
    since $j > l_i$, we have $l_i=\floor{k/i}$; otherwise we would have $j >n_i$. If
    $i=1$, then $|\bar{w}|_1 \geq 1 \cdot l_1 = k$ and (2) holds.
    
    Otherwise, $l_1 > l_i$, so from the definition of $l_i$, $l_1 \geq \dots \geq l_i$, hence there
    exists $r \geq 1$ such that $\forall_{r < i' \leq i}$ $l_r > l_{i'} = l_i$. Notice
    that since $|\pref(l_i,z^i)|_1=l_i$ and $z^i_j=1$ where $j>l_i$ we get $|z^i|_1 \geq
    l_i + 1$. From Corollary \ref{crly:mw_num} we have that for $1 \leq r' \leq r$:
    $|\bar{z}^{r'}|_1 \geq l_i + 1$. Using $l_{r'} \geq l_r > l_i$ and the induction
    hypothesis we get that each $\bar{z}^{r'}$ is top $l_{i}+1$ sorted, therefore
    $|\pref(l_{i}+1,\bar{z}^{r'})|_1 = l_i+1$. We finally have that $|\bar{w}|_1 \geq
    |\pref(l_{i}+1,\bar{z}^1):: \dots :: \pref(l_{i}+1,\bar{z}^r) :: \pref(l_i,\bar{z}^{r+1})
    :: \dots :: \pref(l_i,\bar{z}^{i})|_1 = r (l_{i}+1) + (i-r)l_i = r (l_{r+1}+1) +
    (i-r)l_{r+1} = i \cdot l_{r+1} + r \geq (r+1)\floor{k/(r+1)} + r \geq k$. In the
    second to last inequality, we use the facts: $l_{r+1}=l_i=\floor{k/i} < n_i \leq n_{r+1}$
    from which $l_{r+1}=\floor{k/(r+1)}$ follows.

    From the statements (1) and (2) we can conclude that $mw\_merge^{s}_{(n,k)}$ returns
    the $k$ largest elements from $\bar{x}$, which completes the proof. 
  \end{proof}

  \section{4-Wise Merging Network}

  In this section the merging algorithm for four columns is presented.
  The input to the merging procedure is $R = \tuple{\pref(n_1,\bar{y}^1), \dots,\pref(n_m,\bar{y}^m)}$,
  where each $\bar{y}^i$ is the output of the recursive call in
  Algorithm \ref{net:mw_sel}. The main observation is the following: since $R$ is $m$-wise, if
  you take each sequence $\pref(n_i,\bar{y}^i)$ and place them side by side, in columns, from
  left to right, then the sequences are sorted in rows and columns.
  The goal of the networks is to put the $k$ largest
  elements in top rows. It is done by sorting slope lines with decreasing slope rate, in lines 1--14
  (similar idea can be found in \cite{zhao1998efficient}).
  The algorithm is presented in Algorithm \ref{net:4mw_merge}. The pseudo-code looks non-trivial, but it
  is because we need a separate sub-case every time we need to use either $\sort^2$,  $\sort^3$ or
  $\sort^4$ operation, and this depends on the sizes of columns and the current slope.

  After slope-sorting phase some elements might not be in the desired row-major order,
  therefore the correction phase is needed, which is the goal of the sorting operations
  in lines 15--18. Figure \ref{fig:4col} shows the order relations of elements
  after $i=\ceil{\log n_1}$ iterations of the {\bf while} loop (disregarding the upper index $i$, for clarity).
  Observe that the order should be possibly corrected between $z_{j-1}$ and $w_{j+1}$ and then
  the 4-tuples $\tuple{x_j, w_j, z_{j-1}, y_{j-1}}$ and $\tuple{x_{j+1}, w_{j+1}, z_j, y_j}$
  should be sorted to get the row-major order. Lines 17--18 addresses certain corner cases of
  the correction phase.

  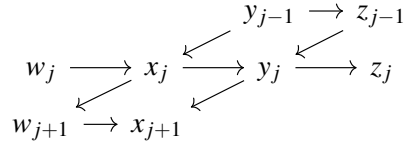
\begin{figure}[ht!]
    \centering 
    \begin{tikzcd}[column sep=small,row sep=tiny]
      ~ & ~ & y_{j-1} \arrow{dl} \arrow{r} & z_{j-1} \arrow{dl}\\
      w_j \arrow{r} & x_j \arrow{dl} \arrow{r} & y_j \arrow{dl} \arrow{r} & z_j \\
      w_{j+1} \arrow{r} & x_{j+1} &  & 
    \end{tikzcd}
    \caption{The order relation among elements of neighboring rows. Arrows shows non-increasing 
      order. Relations that follow from transitivity are not shown.}
    \label{fig:4col}
  \end{figure}

  An input to Algorithm \ref{net:4mw_merge} must be 4-wise of order $(c,k)$
  and the output should be top $k$ sorted. Using the 0-1 principle,
  we can assume that sequences (in particular, inputs) are binary.
  Thus, the algorithm gets as input four sorted 0-1 columns with the additional property
  that the numbers of 1's in successive columns do not increase. Nevertheless, the
  differences between them can be quite big. The goal of each iteration of the main
  loop in Algorithm \ref{net:4mw_merge} is to decrease the maximal possible difference
  by the factor of two. Therefore, after the main loop, the differences are bounded by
  one.
  
\begin{algorithm}[!t]
\caption{$4w\_merge^s_{(c,k)}$}\label{net:4mw_merge}
\begin{algorithmic}[1]
  \Require { 4-wise tuple $\tuple{\bar{w}, \bar{x}, \bar{y}, \bar{z}}$ of order 
  $(c,k)$, where $s = |\bar{w}| + |\bar{x}| + |\bar{y}| + |\bar{z}|$ and $c = |\bar{w}|$.}
  \Ensure{The output is top $k$ sorted and is a permutation of the inputs}
  \State $k_1 = |\bar{w}|$; ~~$k_2 = |\bar{x}|$; ~~$k_3 = |\bar{y}|$; ~~$k_4 = 
  |\bar{z}|$ \Comment{ $k_4=\floor{k/4}$, see def. of $m$-wise tuple.}
  \State $d \gets \min\{l \in \nat \, | \, 2^{l} \geq k_1\}$
  \State $h \gets 2^{d}$; ~~$i \gets 0$; 
  ~~$\tuple{\bar{w}^0,\bar{x}^0,\bar{y}^0,\bar{z}^0} 
  = \tuple{\bar{w}, \bar{x}, \bar{y}, \bar{z}}$; ~~$h_0 \gets h$
  \While{$h_i > 1$} \Comment{ Define the $(i+1)$-th stage of $4w\_merge^s_{(n,k)}$.}
    \State $i \gets i + 1$; ~~$h_{i} \gets h_{i-1}/2$
    \State $\tuple{\bar{w}^i,\bar{x}^i,\bar{y}^i,\bar{z}^i} = 
      \tuple{\bar{w}^{i-1}, \bar{x}^{i-1}, \bar{y}^{i-1}, \bar{z}^{i-1}}$
    \ForAll {$j \in \{1,\dots,\min(k_3-h_i, k_4)\}$}
      \If {$j + 3h_i \leq k_1 \textbf{~and~} j + 2h_i \leq k_2$}
        $\sort^4(z^{i}_j, y^{i}_{j+h_i}, x^{i}_{j + 2h_i}, w^{i}_{j + 3h_i})$
      \ElsIf {$j + 2h_i \leq k_2$}
        $\sort^3(z^{i}_j, y^{i}_{j+h_i}, x^{i}_{j + 2h_i})$
        \Comment{$w^{i}_{j + 3h_i}$ is not defined.}
      \Else
        ~$\sort^2(z^{i}_j, y^{i}_{j+h_i})$
        \Comment{Both $x^{i}_{j + 2h_i}$ and $w^{i}_{j + 3h_i}$ are not defined.}
      \EndIf
    \EndFor
    \ForAll {$j \in \{1,\dots,\min(k_2-h_i, k_3, h_i)\}$}
      \If {$j + 2h_i \leq k_1$}
        $\sort^3(y^{i}_{j}, x^{i}_{j + h_i}, w^{i}_{j + 2h_i})$
      \Else
        ~$\sort^2(y^{i}_{j}, x^{i}_{j + h_i})$
        \Comment{$w^{i}_{j + 2h_i}$ is not defined.}
      \EndIf
    \EndFor
    \ForAll {$j \in \{1,\dots,\min(k_1-h_i, k_2, h_i)\}$}
        $\sort^2(x^{i}_{j}, w^{i}_{j + h_i})$
    \EndFor
  \EndWhile
  \Comment{Define two more stages to correct local disorders.}
  \ForAll {$j \in \{1,\dots,\min(k_1-2, k_4)\}$}
    $\sort^2(z^{i}_{j}, w^{i}_{j+2})$
  \EndFor
  \ForAll {$j \in \{1,\dots,\min(k_2-1, k_4)\}$}
    $\sort^4(y^{i}_{j}, z^{i}_{j}, w^{i}_{j+1}, x^{i}_{j+1})$
  \EndFor
  \If {$k_1 > k_4 \textbf{~and~}  k_2 = k_4$}
    $\sort^3(y^{i}_{k_4}, z^{i}_{k_4}, w^{i}_{k_4+1})$
    \Comment{$x^{i}_{k_4+1}$ is not defined.}
  \EndIf
  \If {$k \bmod 4 = 3 \textbf{~and~} k_1 > k_3$}
    $\sort^2(y^{i}_{k_4+1}, w^{i}_{k_4+2})$
    \Comment{$y^{i}_{k_4+1}$ must be corrected.}
  \EndIf
  \State \Return $\zip(\bar{w}^i,\bar{x}^i,\bar{y}^i,\bar{z}^i)$
  \Comment{Returns the columns in row-major order.}
\end{algorithmic}
\end{algorithm}

  \begin{examplebox}
  \begin{example}
    A sample run of slope sorting phase of Algorithm \ref{net:4mw_merge} is
    presented in Figure \ref{fig:mw_merge_example}. The arrows represent the sorting order.
    Notice how the 1's are being {\em pushed} towards upper-right side. In the end, the differences
    between the number of 1's in consecutive columns are bounded by one.
  \end{example}
  \end{examplebox}

  \begin{observation} \label{obs:abc}
    Let $k_1, \dots, k_4$ be as defined in Algorithm \ref{net:4mw_merge}.
    For each $i$, $0 \le i \le \ceil{\log(k_1)}$, let $c_i = \min(k_3,
    \floor{k/4}+h_i)$, $b_i = \min(k_2, c_i+h_i)$ and  $a_i = \min(k_1, b_i+h_i)$, where
    $h_i$ is as defined in Algorithm \ref{net:4mw_merge}.
    Then we have: $k_1 \ge a_i \ge b_i \ge c_i \ge \floor{k/4}$ and the
    inequalities: $a_i - b_i \le h_i$, $b_i - c_i \le h_i$ and $c_i - \floor{k/4}
    \le h_i$ are true.
  \end{observation}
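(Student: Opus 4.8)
The plan is to reduce everything to two ingredients: the nested-minimum form of the definitions of $a_i$, $b_i$, $c_i$, and the size ordering of the columns supplied by Observation \ref{obs:mws}. First I would record that, since the input tuple $\tuple{\bar{w},\bar{x},\bar{y},\bar{z}}$ is $4$-wise of order $(c,k)$, the first part of Observation \ref{obs:mws} (with $\bar{w},\bar{x},\bar{y},\bar{z}$ playing the roles of $\bar{x}^1,\dots,\bar{x}^4$) gives $k_1 \ge k_2 \ge k_3 \ge k_4 = \floor{k/4}$. Moreover $h_i = 2^{d-i} \ge 1$ throughout the stated range $0 \le i \le \ceil{\log k_1}$, so in particular $h_i \ge 0$.

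The three difference inequalities are then immediate from the definitions. Each of $c_i$, $b_i$, $a_i$ is a minimum, one of whose arguments is $\floor{k/4}+h_i$, $c_i+h_i$, $b_i+h_i$ respectively; since a minimum never exceeds either of its arguments, we get $c_i \le \floor{k/4}+h_i$, $b_i \le c_i + h_i$ and $a_i \le b_i + h_i$, which are exactly $c_i - \floor{k/4} \le h_i$, $b_i - c_i \le h_i$ and $a_i - b_i \le h_i$.

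For the monotone chain $k_1 \ge a_i \ge b_i \ge c_i \ge \floor{k/4}$ I would argue from the inside out. Because $c_i = \min(k_3, \floor{k/4}+h_i)$ and both arguments are at least $\floor{k/4}$ (using $k_3 \ge k_4 = \floor{k/4}$ and $h_i \ge 0$), we obtain $c_i \ge \floor{k/4}$; at the same time $c_i \le k_3 \le k_2$. Hence both arguments of $b_i = \min(k_2, c_i + h_i)$ are at least $c_i$, giving $b_i \ge c_i$, and also $b_i \le k_2 \le k_1$. The same reasoning applied to $a_i = \min(k_1, b_i + h_i)$ yields $a_i \ge b_i$, while $a_i \le k_1$ is part of its definition. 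This closes the chain.

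The only point that requires genuine input beyond unwinding the minima is the column-size ordering $k_1 \ge k_2 \ge k_3 \ge k_4 = \floor{k/4}$; everything else is bookkeeping. I therefore expect no real obstacle — the main care is simply to invoke Observation \ref{obs:mws} with the correct identification of the four columns and to keep the nested minima straight as one peels them off one layer at a time.
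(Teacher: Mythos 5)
Your proposal is correct and follows essentially the same route as the paper: both derive $k_1 \ge k_2 \ge k_3 \ge k_4 = \floor{k/4}$ from Observation \ref{obs:mws}.(1) and then peel off the nested minima, obtaining the chain from the fact that both arguments of each minimum dominate the previous quantity, and the difference bounds from the fact that a minimum never exceeds its second argument. The only cosmetic difference is that the paper writes the differences explicitly as $\min(k_3-\floor{k/4},h_i)$ etc., whereas you phrase the same fact as an upper bound on the minimum; the content is identical.
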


  \begin{proof}
    From Observation \ref{obs:mws}.(1) we have $k_1 \ge k_2 \ge k_3 \ge
    \floor{k/4}$. It follows that $c_i \ge \floor{k/4}$ and $b_i \ge
    \min(k_3,\floor{k/4}+h_i) = c_i$ and $k_1 \ge a_i \ge \min(k_2,c_i+h_i) = b_i$.
    Moreover, one can see that $c_i - \floor{k/4} = \min(k_3-\floor{k/4},h_i) \le
    h_i$, $b_i - c_i = \min(k_2-c_i, h_i) \le h_i$ and $a_i - b_i = \min(k_1-b_i,
    h_i) \le h_i$, so we are done. 
  \end{proof}

  \begin{figure}[t]
    \centering
    \subfloat[\label{subfig:mwm:a}]{\includegraphics[width=0.22\textwidth]{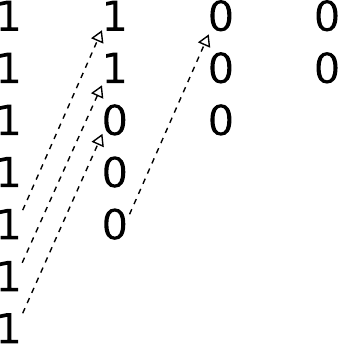}}~~
    \subfloat[\label{subfig:mwm:b}]{\includegraphics[width=0.22\textwidth]{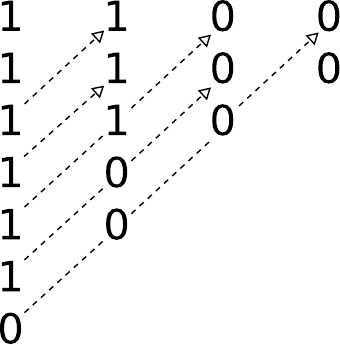}}~~
    \subfloat[\label{subfig:mwm:c}]{\includegraphics[width=0.22\textwidth]{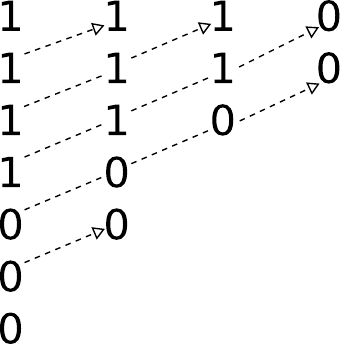}}~~
    \subfloat[\label{subfig:mwm:d}]{\includegraphics[width=0.22\textwidth]{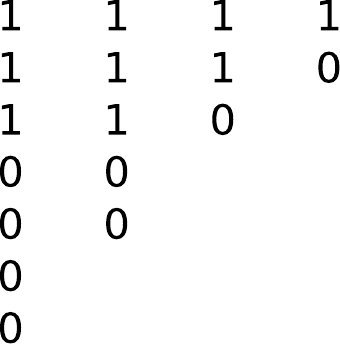}}
    \caption{Sample run of the slope-sorting phase of the 4-Wise Merging Network}
    \label{fig:mw_merge_example}
  \end{figure}
  
  \begin{lemma} \label{lem:check}
    Let $a_i, b_i, c_i$, $0 \le i \le \ceil{\log(k_1)}$, be as defined in Observation
    \ref{obs:abc}. Let us define $\check{w}^i = \pref(a_i,\bar{w}^i)$, $\check{x}^i =
    \pref(b_i,\bar{z}^i)$, $\check{y}^i = \pref(c_i,\bar{y}^i)$ and $\check{z}^i =
    \bar{z}^i$. Then only the items in $\check{w}^i$, $\check{x}^i$, $\check{y}^i$ and
    $\check{z}^i$ take part in the $i$-th iteration of sorting operations in lines 8-14 of
    Algorithm \ref{net:4mw_merge}.
  \end{lemma}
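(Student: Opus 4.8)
The plan is to treat this as a pure index-bookkeeping argument: the lemma asserts that throughout the $i$-th pass of the main loop no sorter ever reads or writes a position deeper than $a_i$ in column $\bar{w}^i$, $b_i$ in $\bar{x}^i$, $c_i$ in $\bar{y}^i$, or $k_4$ in $\bar{z}^i$. Since the sorting operations in lines 8--14 merely permute their arguments, it suffices to bound, for each of the four columns, the largest subscript occurring as an argument across all three inner \textbf{for}-loops, and to check that this bound coincides with (or falls below) the corresponding quantity from Observation \ref{obs:abc}. I would recall at the outset that $k_4 = \floor{k/4}$ and that, by unfolding the definitions, $c_i = \min(k_3,\floor{k/4}+h_i)$, $b_i = \min(k_2,k_3+h_i,\floor{k/4}+2h_i)$ and $a_i = \min(k_1,k_2+h_i,k_3+2h_i,\floor{k/4}+3h_i)$; writing $b_i$ and $a_i$ as these three- and four-term minima is the device that makes every subsequent inequality fall out by inspection.

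Column $\bar{z}^i$ is immediate: the only accesses $z^i_j$ occur in the first loop with $j \le \min(k_3-h_i,k_4) \le k_4$, matching $\check{z}^i = \bar{z}^i$. For $\bar{y}^i$ I would combine the two places it appears. In the first loop the argument $y^i_{j+h_i}$ satisfies $j+h_i \le \min(k_3-h_i,k_4)+h_i = \min(k_3,\floor{k/4}+h_i) = c_i$; in the second loop the argument $y^i_j$ has $j \le \min(k_2-h_i,k_3,h_i) \le k_3$ and $j \le h_i \le \floor{k/4}+h_i$, hence again $j \le c_i$. So no index into $\bar{y}^i$ exceeds $c_i$.

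The columns $\bar{x}^i$ and $\bar{w}^i$ carry the real content, and here the key point is that their deepest-offset arguments $x^i_{j+2h_i}$ and $w^i_{j+3h_i}$ occur \emph{only} inside the guarded $\sort^3$ and $\sort^4$ branches. I would therefore carry the guard inequalities $j+2h_i \le k_2$ and $j+3h_i \le k_1$ alongside the explicit loop range. For $\bar{x}^i$ the three relevant arguments are $x^i_{j+2h_i}$ (first loop, under its guard), $x^i_{j+h_i}$ (second loop) and $x^i_j$ (third loop); in each case the loop bound on $j$ together with the guard yields an index $\le \min(k_2,k_3+h_i,\floor{k/4}+2h_i) = b_i$. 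For $\bar{w}^i$ the arguments $w^i_{j+3h_i}$, $w^i_{j+2h_i}$ and $w^i_{j+h_i}$ from the three loops are each driven, by the same combination of range and guard, below $\min(k_1,k_2+h_i,k_3+2h_i,\floor{k/4}+3h_i) = a_i$; the $\sort^4$ branch is the critical one, where both guards $j+3h_i\le k_1$ and $j+2h_i\le k_2$ must be used together to recover the $k_1$ and $k_2+h_i$ terms of $a_i$.

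The step I expect to require the most care is precisely this last one: keeping the four-term minimum for $a_i$ honest means checking that every branch of every loop contributes all the terms it should, and in particular that the terms dropped in the shorter branches (for example $3h_i$ standing in for $\floor{k/4}+3h_i$ in the third loop) are bounded by their counterparts because $\floor{k/4} \ge 0$. Once $a_i$, $b_i$, $c_i$ are expanded into these minima the verification is mechanical, loop by loop and branch by branch, so the obstacle here is organizational rather than mathematical.
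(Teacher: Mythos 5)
Your proposal is correct and follows essentially the same route as the paper's proof: a loop-by-loop, branch-by-branch accounting of the indices accessed in each column, combining the loop ranges with the guard conditions and matching the results against the expansions $c_i=\min(k_3,\floor{k/4}+h_i)$, $b_i=\min(k_2,c_i+h_i)$, $a_i=\min(k_1,b_i+h_i)$ from Observation \ref{obs:abc}. The only cosmetic difference is that the paper records the exact union of index intervals (e.g.\ $[1,b_i]$ for $\bar{x}^i$) while you only establish the upper bounds, which is all the lemma statement requires.
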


  \begin{proof}
    An element of the vector $\bar{z}^i$, that is,  $z^i_j$ is sorted in one of the lines
    8--10 of Algorithm \ref{net:4mw_merge}, thus for all such $j$ that $1 \le j \le k_4 =
    \floor{k/4}$. An element $y^i_j$ of $\bar{y}^i$ is sorted in lines 8--10 of the first
    inner loop and in lines 12--13 of the second inner loop. In the first three of those
    lines we have the bounds on $j$: $1 + h_i \le j \le \min(k_3,k_4+h_i)$ and in the last
    three lines - the bounds: $1 \le j \le \min(k_3,h_i)$. The sum of these two intervals
    is $1 \le j \le min(k_3, \floor{k/4}+h_i) = c_i$. Similarly, we can analyze the
    operations on $x^i_j$ in lines 8--10, 12--13 and 14. The range of $j$ is the sum of the
    following disjoint three intervals: $[1+2h_i, \min(k_2, k_3+h_i, k_4+2h_i)]$, $[1+h_i,
      \min(k_3+h_i,2h_i)]$ and $[1, \min(k_2, h_i)]$ that give us the interval $[1, \min(k_2,
      k_3+h_1, k_4+2h_i)] = [1, b_i]$ as their sum. The analysis of the range of $j$ used in
    the operations on $w^i_j$ in Algorithm \ref{net:4mw_merge} can be done in the same way. 
  \end{proof}

  \begin{lemma} \label{lma:4wsel}
    Let $\check{w}^i$, $\check{x}^i$, $\check{y}^i$ and $\check{z}^i$ be as defined in Lemma
    \ref{lem:check}. Then for all $i$, $0 \leq i \leq \ceil{\log_2 k_1}$, after the $i$-th
    iteration of the while loop in Algorithm \ref{net:4mw_merge} the sequences $\check{w}^i,
    \check{x}^i, \check{y}^i, \check{z}^i$ are of the form $1^{p_i}0^*$, $1^{q_i}0^*$,
    $1^{r_i}0^*$ and $1^{s_i}0^*$, respectively, and:
    
    \begin{eqnarray}
      p_i \ge q_i \ge r_i \ge s_i, \label{equ:1}\\
      p_i - q_i \le h_i \textrm{~and~} q_i - r_i \le h_i 
      \textrm{~and~} r_i - s_i \le h_i, \label{equ:2}\\
      p_i + q_i + r_i + s_i \ge \min(k, p_0 + q_0 + r_0 + s_0). \label{equ:3}
    \end{eqnarray}
  \end{lemma}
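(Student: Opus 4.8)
The plan is to prove the three invariants \eqref{equ:1}--\eqref{equ:3} simultaneously by induction on the iteration index $i$, working with the 0--1 principle so that each column is a binary sequence and "sorted" means of the form $1^*0^*$. For the base case $i=0$ (before the loop), note that $h_0 = 2^d \ge k_1$ forces $a_0 = k_1$, $b_0 = k_2$, $c_0 = k_3$ in Observation \ref{obs:abc}, so $\check{w}^0,\check{x}^0,\check{y}^0,\check{z}^0$ are the full input columns $\bar{w},\bar{x},\bar{y},\bar{z}$. By Definition \ref{def:mws} these are sorted (hence $1^{p_0}0^*$, $1^{q_0}0^*$, $1^{r_0}0^*$, $1^{s_0}0^*$) and satisfy $x^i_j \ge x^{i+1}_j$, which gives $p_0 \ge q_0 \ge r_0 \ge s_0$, i.e. \eqref{equ:1}; every consecutive difference is at most $p_0 \le k_1 \le h_0$, giving \eqref{equ:2}; and \eqref{equ:3} holds with equality.

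For the inductive step I would assume the invariants after iteration $i$ and analyze iteration $i+1$, which runs with step $h_{i+1}=h_i/2$. By Lemma \ref{lem:check} its compare-exchanges touch only elements inside $\check{w}^{i+1},\check{x}^{i+1},\check{y}^{i+1},\check{z}^{i+1}$, so it is enough to follow the four $1$-counts within these prefixes. The effect of a single slope line $\sort^4(z_j, y_{j+h_{i+1}}, x_{j+2h_{i+1}}, w_{j+3h_{i+1}})$ (and its degenerate $\sort^3$/$\sort^2$ variants near the tops of the columns) is, under the output convention of Algorithm \ref{net:4mw_merge}, to route the maximum to the top-right entry $z_j$ and the minimum to $w_{j+3h_{i+1}}$; thus each operation moves a $1$ \emph{upward and to the right}, from a heavier column into a higher position of a lighter one, exactly as in the sample run.

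The heart of the argument is a per-pair computation. Restricting to two adjacent columns with $1$-counts $p \ge q$ and $p-q \le h_i = 2h_{i+1}$, the pass $\sort^2(\cdot_j,\cdot_{j+h_{i+1}})$ transfers a $1$ precisely for the indices $q < j \le p-h_{i+1}$; this leaves both columns of the form $1^*0^*$ with new counts $p' = q+h_{i+1}$ (heavier column) and $q' = p-h_{i+1}$ (lighter column) when $p-q>h_{i+1}$, and unchanged otherwise. The new gap is therefore $|2h_{i+1}-(p-q)| \le h_{i+1}$, which is \eqref{equ:2} for $i+1$, and the ordering $p' \ge q'$ survives exactly because the old gap obeyed $p-q \le 2h_{i+1}$ — so the inductive hypothesis \eqref{equ:2} is indispensable and yields \eqref{equ:1}. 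I would then argue that the $\sort^4$ couples the three adjacent pairs $w$--$x$, $x$--$y$, $y$--$z$ consistently (no $1$ is counted twice), so these bounds hold simultaneously across all four columns.

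Finally, for \eqref{equ:3}, since the iteration only permutes the checked region and pushes $1$'s upward, no $1$ ever leaves the top of a column; the count in the checked prefixes can only drop when the window itself contracts from $(a_i,b_i,c_i)$ to $(a_{i+1},b_{i+1},c_{i+1})$. The definitions of $a_i,b_i,c_i$ in Observation \ref{obs:abc}, together with the difference bounds just established, are calibrated so that whenever a $1$ would be clipped by the shrinking window the surviving counts already sum to at least $k$, which is what produces the $\min(k,\cdot)$ on the right of \eqref{equ:3}. I expect this boundary bookkeeping — reconciling the contracting prefix lengths $a_i,b_i,c_i$ with the migrating $1$'s — to be the main obstacle, markedly more delicate than the clean diagonal-sort computation, and to be precisely where the calibrated inequalities $a_i-b_i \le h_i$, $b_i-c_i \le h_i$ and $c_i-\floor{k/4} \le h_i$ of Observation \ref{obs:abc} must be invoked.
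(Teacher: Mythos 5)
Your induction scheme and base case match the paper's, but there are two genuine gaps in the inductive step. First, the ``heart'' of your argument reduces each iteration to independent two-column passes $\sort^2(\cdot_j,\cdot_{j+h_{i+1}})$ between adjacent columns and computes the transferred $1$'s pairwise. The algorithm, however, applies a single $\sort^4$ to the whole diagonal $(z_j, y_{j+h}, x_{j+2h}, w_{j+3h})$, which assigns to each position an order statistic of all four entries, not the max/min of an adjacent pair; if you decompose the $4$-sorter into adjacent transpositions, then by the time the ``$x$--$y$ comparator'' fires the values at those positions have already been modified by comparators touching $z_j$ and $w_{j+3h}$, so the per-pair count formulas $p'=q+h_{i+1}$, $q'=p-h_{i+1}$ do not follow, and the claim that the three couplings are ``consistent'' is exactly the statement that needs proof. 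The paper avoids this by writing the outputs explicitly as $w^i_j=\min(\dots)$, $x^i_j=\snd(\dots)$, $y^i_j=\trd(\dots)$, $z^i_j=\max(\dots)$ over suitably padded monotone inputs, deriving monotonicity and count preservation from monotonicity of the order statistics, and proving \eqref{equ:1} and \eqref{equ:2} by exhibiting specific $1$-entries (e.g.\ showing $w^i_{q_i}=1$ by checking all four arguments of the $\min$, which in turn needs the cross-column relation $z^{i-1}_{j-2h_i}\ge y^{i-1}_{j}$).

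Second, you explicitly defer the ``boundary bookkeeping'' for \eqref{equ:3}, but this is not a routine calibration check: it is the paper's inequality \eqref{eq:6}, proved by a three-way case analysis on which of the windows $a_i,b_i,c_i$ actually clips its column, each case with its own arithmetic (e.g.\ deducing $c_i=\floor{k/4}+h_i$ from $r_{i-1}>c_i$ and then bounding the clipped sum below by $4c_i+3-2h_i\ge k$, with a sub-case invoking Observation \ref{obs:mws}). Moreover, the clipping must be handled \emph{before} the sort analysis, not after: iteration $i$ only touches the clipped prefixes, so the clipped counts $\min(a_i,p_{i-1})$, $\min(b_i,q_{i-1})$, $\min(c_i,r_{i-1})$ must themselves be shown to satisfy the ordering and difference bounds (the paper's \eqref{eq:4} and \eqref{eq:5}) as the input to your per-pair step; your proposal feeds the unclipped inductive hypothesis into that step instead. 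As written, the proposal correctly identifies where the difficulties lie but resolves neither of them.
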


  \begin{proof}
    By induction. At the beginning we have $h_0 \geq k_1 \geq k_2 \geq k3 \geq k4 =
    \floor{k/4}$ and therefore $\check{w}^0 = \bar{w}$, $\check{x}^0 = \bar{x}$,
    $\check{y}^0 = \bar{y}$ and $\check{z}^0 = \bar{z})$. All four sequences $\bar{w},
    \bar{x}, \bar{y}, \bar{z}$ are sorted (by Definition \ref{def:mws}), thus, they are of
    the form $1^{p_0}0^*$, $1^{q_0}0^*$, $1^{r_0}0^*$, and $1^{s_0}0^*$ respectively. By
    Definition \ref{def:mws}.(2), for each pair of sequences: $(\bar{w}_{1..k_2}, \bar{x})$,
    $(\bar{x}_{1..k_3}, \bar{y})$ and $(\bar{y}_{1..k_4}, \bar{z})$, if there is a 1 on the
    $j$ position in the right sequence, it must be a corresponding 1 on the same position in
    the left one. Therefore, we have: $p_0 \geq q_0 \geq r_0 \geq s_0 \geq 0$. Moreover,
    $p_0 - q_0 \leq p_0 \leq k_1 \leq h_0$, $q_0 - r_0 \leq q_0 \leq k_2 \leq h_0$ and $r_0
    - s_0 \leq r_0 \leq k_3 \leq h_0$. Finally, $p_0+q_0+r_0+s_0 \geq \min(k, p_0 + q_0 +
    r_0 + s_0)$, thus the lemma holds for $i=0$.

    In the inductive step $i > 0$ observe that the elements of $\check{w}^i$, $\check{x}^i$,
    $\check{y}^i$ and $\check{z}^i$ are defined by the sort operations over the elements
    with the same indices from vectors  $\check{w}^{i-1}$, $\check{x}^{i-1}$,
    $\check{y}^{i-1}$ and $\check{z}^{i-1}$. This means that the values of
    $w^{i-1}_{a_i+1,\dots,a_{i+1}}$, $x^{i-1}_{b_i+1,\dots,b_{i+1}}$ and
    $y^{i-1}_{c_i+1,\dots,c_{i+1}}$ are not used in the $i$-th iteration. Therefore, the
    numbers of 1's in columns that are sorted in the $i$-th iteration are defined by values:
    $p'_{i-1} = \min(a_i, p_{i-1})$, $q'_{i-1} = \min(b_i, q_{i-1})$, $q'_{i-1} = \min(c_i,
    r_{i-1})$ and $s'_{i-1} = s_{i-1}$. In the following we prove that the numbers with
    primes have the same properties as those without them.
    
    \begin{eqnarray}
      p'_{i-1} \ge q'_{i-1} \ge r'_{i-1} \ge s'_{i-1}, \label{eq:4}\\
      p'_{i-1} - q'_{i-1} \le h_{i-1} \textrm{~and~} q'_{i-1} - r'_{i-1} \le h_{i-1} 
      \textrm{~and~} r'_{i-1} - s'_{i-1} \le h_{i-1}, \label{eq:5}\\
      p'_{i-1} + q'_{i-1} + r'_{i-1} + s'_{i-1} \ge \min(k, p_{i-1} + q_{i-1} + 
      r_{i-1} + s_{i-1}). \label{eq:6}
    \end{eqnarray}
    
    The proofs of inequalities in \ref{eq:4} are quite direct and follow from
    the monotonicity of $\min$. For example, we can observe that $p'_{i-1} =
    \min(a_i,p_{i-1}) \ge \min(b_i, q_{i-1}) = q'_{i-1}$, since we have $a_i \ge
    b_i$ and $p_{i-1} \ge q_{i-1}$, by Observation \ref{obs:abc} and the induction
    hypothesis. The others can be shown in the same way.
    
    Let us now prove one of the inequalities of \eqref{eq:5}, say, the second one.
    We have $q'_{i-1} - r'_{i-1} = \min(b_i,q_{i-1}) - \min(c_i,r_{i-1}) \le
    \min(c_i + h_i,r_{i-1} + h_{i-1}) - \min(c_i,r_{i-1}) \le h_{i-1}$, by
    Observation \ref{obs:abc}, the fact that $h_{i-1} = 2h_i$ and the induction 
    hypothesis. The proofs of the others are similar.
    
    The proof of \eqref{eq:6} is only needed if at least one of the following 
    inequalities are true: $p'_{i-1} < p_{i-1}$, $q'_{i-1} < q_{i-1}$ or $r'_{i-1} < 
    r_{i-1}$. Obviously, the inequalities are equivalent to $a_i < p_{i-1}$, 
    $b_i < q_{i-1}$ and $c_i < r_{i-1}$, respectively. Therefore, to prove
    \ref{eq:6} we consider now three separate cases: (1) $a_i \ge p_{i-1}$ and $b_i 
    \ge q_{i-1}$ and $c_i < r_{i-1}$, (2) $a_i \ge p_{i-1}$ and $b_i < q_{i-1}$ and 
    (3) $a_i < p_{i-1}$.
    
    In the case (1) we have $p'_{i-1} = p_{i-1}$, $q'_{i-1} = q_{i-1}$ and $r'_{i-1}
    = c_i = \min(k_3,h_i + \floor{k/4}) < r_{i-1} \le q_{i-1} \le p_{i-1}$. It
    follows that $p'_{i-1} + q'_{i-1} + r'_{i-1} = p_{i-1} + q_{i-1} + c_i \ge c_i +
    1 + c_i + 1 + c_i = 3c_i + 2$. Since $k_3 \ge r_{i-1} \ge c_i + 1 = \min(k_3,
    \floor{k/4} + h_i) + 1$, we can observe that $c_i$ must be equal to $\floor{k/4}
    + h_i$. In addition, by the induction hypothesis we have $s_{i-1} \ge r_{i-1} -
    h_{i-1} \ge c_i + 1 - 2h_i$. Merging those facts we can conclude that $p'_{i-1}
    + q'_{i-1} + r'_{i-1} + s'_{i-1} \ge 4c_i +3 - 2h_i = 4\floor{k/4} + 4h_i + 3 -
    2h_i \ge k$, so we are done in this case.
    
    In the case (2) we have $p'_{i-1} = p_{i-1}$ and $q'_{i-1} = b_i = \min(k_2, c_i 
    + h_i) < q_{i-1} \le p_{i-1}$. It follows that $p'_{i-1} + q'_{i-1} \ge 2b_i + 
    1$. Since $k_2 \ge q_{i-1} \ge b_i + 1 = \min(k_2, c_i + h_i) + 1$, we can 
    observe that $b_i$ must be equal to $c_i + h_i$. In addition, by the induction 
    hypothesis and Observation \ref{obs:abc}, we can bound $r'_{i-1}$ as $r'_{i-1} = 
    \min(c_i, r_{i-1}) \ge \min(b_i-h_i, q_{i-1} - h_{i-1}) \ge b_i + 1 - 2h_i$. 
    Therefore, $p'_{i-1} + q'_{i-1} + r'_{i-1} \ge 3b_i + 2 - 2h_i = 3c_i + 2 + h_i$. 
    Since $c_i$ is defined as $\min(k_3,\floor{k/4} + h_i)$, we have to consider two 
    sub-cases of the possible value of $c_i$. If $c_i = k_3 \le \floor{k/4} + h_i$, 
    then we have $k_2 \ge q_{i-1} \ge b_i + 1 = c_i + h_i + 1 \ge k_3 + 2$. By 
    Observation \ref{obs:mws}.(3), $k_3$ must be equal to $\floor{k/3}$, thus $3c_i + 
    2 = 3k_3 + 2 \ge k$, and we are done. Otherwise, we have $c_i =\floor{k/4} + h_i$ 
    and since $s'_{i-1} = s_{i-1} \ge r_{i-1} - h_{i-1} \ge b_i + 1 - 4h_i = c_i + 1 
    - 3h_i$, we can conclude that $p'_{i-1} + q'_{i-1} + r'_{i-1} + s'_{i-1} \ge 4c_i 
    + 3 - 2h_i = 4\floor{k/4} + 3 + 2h_i \ge k$.
    
    The last case $a_i < p_{i-1}$ can be proved be the similar arguments. Having
    (\ref{eq:4}.\ref{eq:5},\ref{eq:6}), we can start proving the inequalities from the
    lemma. Observe that in Algorithm \ref{net:4mw_merge} the values of vectors $\bar{w}^i$, $\bar{x}^i$,
    $\bar{y}^i$ and $\bar{z}^i$ are defined with the help of three types of sorters:
    $\sort^4$, $\sort^3$ and $\sort^2$. The smaller sorters are used, when the corresponding
    index is out of the range and an input item is not available. In the following analysis
    we would like to deal only with $\sort^4$ and in the case of smaller sorters we extend
    artificially their inputs and outputs with 1's at the left end and 0's at the right end.
    For example, in line 18 we have $\tuple{y^i_{j}, x^i_{j + h_i}} \gets
    \sort^2(y^{i-1}_{j}, x^{i-1}_{j + h_i})$ so we can analyze this operation as $\tuple{1,
      y^i_{j}, x^i_{j + h_i}, 0} \! \gets \! \sort^4(1, y^{i-1}_{j}, x^{i-1}_{j + h_i}, 0)$. The 0
    input corresponds to the element of $\bar{w}^{i-1}$ with index $j+2h_i$, where $j+2h_i >
    k_1 \ge a_i$, and the 1 input corresponds to the element of $\bar{z}^{i-1}$ with index
    $j-h_i$, where $j-h_i < 0$. A similar situation is in lines 10, 12, 16 and 22, where
    $\sort^2$ and $\sort^3$ are used. Therefore, in the following we assume that elements of
    input sequences $\bar{w}^{i-1}$, $\bar{x}^{i-1}$, $\bar{y}^{i-1}$ and $\bar{z}^{i-1}$
    with negative indices are equal to 1 and elements of the inputs with indices above
    $a_i$, $b_i$, $c_i$ and $\floor{k/4}$, respectively, are equal to 0. This assumption
    does not break the monotonicity of the sequences and we also have the property that
    $w^{i-1}_j = 1$ if and only if $j \le p'_{j-1}$ (and similar ones for $\bar{x}^{i-1}$
    and $q'_{i-1}$, and so on).
    
    It should be clear now that, under the assumption above, we have:
    
    \begin{subequations}
      \begin{align}
        w^i_j = & \min(w^{i-1}_j, x^{i-1}_{j-h_i}, y^{i-1}_{j-2h_i}, z^{i-1}_{j-3h_i}) 
        & \textrm{~for~} 1 \le j \le a_i,\\
        x^i_j = & \snd(w^{i-1}_{j+h_i}, x^{i-1}_{j}, y^{i-1}_{j-h_i}, z^{i-1}_{j-2h_i}) 
        & \textrm{~for~} 1 \le j \le b_i,\\
        y^i_j = & \trd(w^{i-1}_{j+2h_i}, x^{i-1}_{j+h_i}, y^{i-1}_{j}, z^{i-1}_{j-h_i}) 
        & \textrm{~for~} 1 \le j \le c_i, \displaybreak[0]\\
        z^i_j = & \max(w^{i-1}_{j+3h_i}, x^{i-1}_{j+2h_i}, y^{i-1}_{j+h_i}, z^{i-1}_{j})
        & \textrm{~for~} 1 \le j \le \floor{k/4},
      \end{align}
    \end{subequations}
    
    where $\snd$ and $\trd$ denote the second and the third smallest element of
    its input, respectively. Since the functions $\min$, $\snd$, $\trd$ and
    $\max$ are monotone and the input sequences are monotone. we can conclude
    that $w^i_j \ge w^i_{j+1}$, $x^i_j \ge x^i_{j+1}$, $y^i_j \ge y^i_{j+1}$ and
    $z^i_j \ge z^i_{j+1}$. Let $p_i$, $q_i$, $r_i$ and $s_i$ denote the numbers
    of 1's in them. Clearly, we have $p_i + q_i + r_i + s_i = p'_{i-1} +
    q'_{i-1} + r'_{i-1} + s'_{i-1}$, thus $p_i + q_i + r_i + s_i \ge \min(k,
    p_{i-1} + q_{i-1} + r_{i-1} + s_{i-1}) \ge \min(k, p_0 + q_0 + r_0 + s_0)$,
    by the induction hypothesis. Thus. we have proved monotonicity of
    $\bar{w}^i$, $\bar{x}^i$, $\bar{y}^i$ and $\bar{z}^i$ and that
    \eqref{equ:3} holds.
    
    To prove \eqref{equ:2} we show that $p_i - h_i \le q_i$,
    $q_i - h_i \le r_i$, $r_i - h_i \le s_i$, that is, that the following
    equalities are true: $x^i_{p_i-h_i} = 1$, $y^i_{q_i-h_i} = 1$ and
    $z^i_{r_i-h_i} = 1$. The first equality follows from the fact that
    $w^i_{p_i} = 1$ and $w^i_{p_i} \le x^i_{p_i-h_i}$, because they are output
    in this order by a single sort operation. The other two equalities can be shown
    by the similar arguments.
    
    The last equation we have to prove is \eqref{equ:1}. By the induction hypothesis and our 
    assumption we have $w^{i-1}_j \ge x^{i-1}_j \ge y^{i-1}_j \ge z^{i-1}_j$, for any $j$. 
    We know also that the vectors are non-increasing. We use these facts to show that 
    $w^i_{q_i} = 1$, which is equivalent to $p_i \ge q_i$. Since $w^i_{q_i} = 
    \min(w^{i-1}_{q_i}, x^{i-1}_{q_i-h_i}, y^{i-1}_{q_i-2h_i}, z^{i-1}_{q_i-3h_i})$, we 
    have to prove that all the arguments of the $\min$ function are 1's. From the 
    definition of $q_i$ we have $1 = x^i_{q_i} = \snd(w^{i-1}_{q_i+h_i}, x^{i-1}_{q_i}, 
    y^{i-1}_{q_i-h_i}, z^{i-1}_{q_i-2h_i})$, thus the maximum of any pair of arguments of 
    $\snd$ must be 1. Now we can see that $w^{i-1}_{q_i} \ge \max(w^{i-1}_{q_i+h_i}, 
    x^{i-1}_{q_i}) \ge 1$, $x^{i-1}_{q_i-h_i} \ge \max(x^{i-1}_{q_i}, y^{i-1}_{q_i-h_i}) 
    \ge 1$, $y^{i-1}_{q_i-2h_i} \ge \max(y^{i-1}_{q_i-h_i}, z^{i-1}_{q_i-2h_i}) \ge 1$ and 
    $z^{i-1}_{q_i-3h_i} \ge \max(y^{i-1}_{q_i-h_i}, z^{i-1}_{q_i-2h_i}) \ge 1$. In the last 
    inequality we use the fact that for any $j$ it is true that $z^{i-1}_{j-2h_i} \ge 
    y^{i-1}_{j}$ (because $r'_{i-1} - 2h_i \le s'_{i-1}$, by \eqref{eq:5}). Thus, 
    $w^i_{q_i}$ must be 1 and we are done. The other two inequalities can be proved in the 
    similar way with the help of two additional relations: $x^{i-1}_{j-2h_i} \ge 
    w^{i-1}_{j}$ and $y^{i-1}_{j-2h_i} \ge x^{i-1}_{j}$, which follows from the induction 
    hypothesis. 
  \end{proof}

  After $i = \ceil{\log k_1}$ iterations of the main loop in Algorithm \ref{net:4mw_merge} we have $h_i = 1$
  and, by Lemma \ref{lma:4wsel}, elements in vectors (columns) $\check{w}^i$, $\check{x}^i$,
  $\check{y}^i$ and $\check{z}^i$ are in non-increasing order. In the following part of this
  subsection the value of $i$ is fixed and we do not write it as the upper index. By
  \eqref{equ:2} and \eqref{equ:1} of the lemma, we have also the same order in diagonal lines:
  $w_j \le x_{j-1} \le y_{j-2} \le z_{j-3}$ and in rows: $w_j \ge x_j \ge y_j \ge z_j$. From
  \eqref{equ:3} it follows that the vectors contains the $k$ largest elements of the
  input sequences: $\check{w}^0$, $\check{x}^0$, $\check{y}^0$ and $\check{z}^0$. The goal
  of the lines 26--30 in Algorithm \ref{net:4mw_merge} is to correct the order in the vectors
  in such a way that the $k$ largest elements appear at the beginning in the row-major
  order. Figure \ref{fig:4col} shows the mentioned-above order relations. Observe that the
  order should be possible corrected between $z_{j-1}$ and $w_{j+1}$ and then the 4-tuples
  $\tuple{x_j, w_j, z_{j-1}, y_{j-1}}$ and $\tuple{x_{j+1}, w_{j+1}, z_j, y_j}$ should be
  sorted to get the row-major order.
  
  \begin{theorem}\label{thm:4mw_merge}
    The output of Algorithm \ref{net:4mw_merge} is top $k$ sorted.
  \end{theorem}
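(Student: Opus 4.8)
The plan is to reduce to the Boolean case and then lean on Lemma \ref{lma:4wsel}, so that the only genuinely new work concerns the two correction stages at the end of Algorithm \ref{net:4mw_merge}. First I would invoke the 0-1 principle: since $4w\_merge^s_{(c,k)}$ is an oblivious comparator network (each $\sort^2$, $\sort^3$ and $\sort^4$ acts on fixed positions, independently of the data), it produces top $k$ sorted output on every input if and only if it does so on every 0-1 input. Hence I assume from now on that the four input columns are binary.

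Next I would set $i=\ceil{\log k_1}$, the final iteration of the \textbf{while} loop, so that $h_i=1$, and apply Lemma \ref{lma:4wsel}. This gives columns $\bar{w}^i,\bar{x}^i,\bar{y}^i,\bar{z}^i$ that are sorted of the forms $1^{p_i}0^*,1^{q_i}0^*,1^{r_i}0^*,1^{s_i}0^*$ with $p_i\ge q_i\ge r_i\ge s_i$ by \eqref{equ:1}, with consecutive gaps at most $1$ by \eqref{equ:2} (since $h_i=1$), and with $p_i+q_i+r_i+s_i\ge\min(k,p_0+q_0+r_0+s_0)$ by \eqref{equ:3}. From the unit gap bound I would read off exactly the positional relations drawn in Figure \ref{fig:4col}: in each row $w_j\ge x_j\ge y_j\ge z_j$, and across one row $w_{j+1}\le x_j$, $x_{j+1}\le y_j$, $y_{j+1}\le z_j$; each diagonal relation is immediate, for instance $p_i-q_i\le 1$ forces $w_{j+1}=1\Rightarrow x_j=1$. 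I would also note that, by Lemma \ref{lem:check} and the padding convention used there, entries with indices beyond $p_i,q_i,r_i,s_i$ may be treated as $0$ and those with non-positive indices as $1$, which lets me analyze every sorter uniformly as a $\sort^4$.

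The heart of the argument is to show that, read in row-major order $w_1,x_1,y_1,z_1,w_2,\dots$, the sequence is already sorted inside each row and its only inversions occur at the row boundaries --- between some $z_j$ and the heads $w_{j+1}, w_{j+2}$ of the following rows --- and only for indices in the narrow band $s_i<j\le p_i$. Because $p_i-s_i\le 3$, this disorder is confined to a window of at most four consecutive rows; I would then check that the correction stage sorts exactly this window into row-major order: line 27, $\sort^2(z_j,w_{j+2})$, removes the single possible long-range inversion, between $z_j$ and the head $w_{j+2}$ of the row two below it, while line 29, $\sort^4(y_j,z_j,w_{j+1},x_{j+1})$, sorts each four-cell window straddling a boundary. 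Once no $(0,1)$ pair remains, the $\min(k,\cdot)$ ones guaranteed by \eqref{equ:3} occupy an exact prefix of the row-major order up to position $k$, so $\zip(\bar{w}^i,\bar{x}^i,\bar{y}^i,\bar{z}^i)$ is sorted in its first $k$ entries and these dominate the rest; this is precisely top $k$ sortedness.

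The main obstacle will be the careful bookkeeping of the corner cases in lines 31 and 33, where a column has run out of entries so the sorter degrades to $\sort^3$ or $\sort^2$, and where the operations of lines 27 and 29 must be shown not to interfere with one another (line 27 runs to completion before line 29, yet the columns it touches are no longer pure $1^*0^*$). The $k\bmod 4=3$ case in line 33 in particular must be verified by hand, because then the target position $k$ falls strictly inside a row and the entry $y^i_{k_4+1}$ needs one further comparison against $w^i_{k_4+2}$ to guarantee that the prefix of ones ends exactly at position $k$ rather than at the next multiple of $4$; the padding convention from the previous paragraph is what makes these degenerate sorters tractable within a single uniform analysis.
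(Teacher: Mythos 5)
Your proposal follows essentially the same route as the paper's proof: both rest on Lemma \ref{lma:4wsel} at the final iteration (where $h_i=1$) to obtain sorted columns with unit gaps containing the top $k$ elements, read off the order relations of Figure \ref{fig:4col}, and then argue that the two correction stages --- the $\sort^2(z_j,w_{j+2})$ pass followed by the $\sort^4$ on each four-cell window straddling a row boundary --- restore row-major order, with the same treatment of the degenerate $\sort^3$/$\sort^2$ cases and of $k \bmod 4 = 3$. The only quibble is that your intermediate claim that inversions occur solely between $z_j$ and the row heads $w_{j+1},w_{j+2}$ understates the possibilities (for instance $y_{j-1}$ versus $w_j$, or $z_{j-1}$ versus $x_j$, can also be inverted), but the four-cell boundary sorts you invoke cover those cases, so the argument goes through as in the paper.
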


  \begin{proof}
    The $\zip$ operation outputs its input vectors in the row-major order. By
    \eqref{equ:3} of Lemma \ref{lma:4wsel}, we know that elements in the $out$
    sequence are dominated by the $k$ largest elements in the output vectors of the
    main loop. From the order diagram given in Fig. \ref{fig:4col} it follows that
    $\tuple{y_{i-1}, \max(z_{i-1}, w_{j+1}), w_j, x_j}$ dominates $\tuple{y_i, z_i,
      \min(z_{i-1}, w_{j+1}), x_{i+1}}$, thus, after the sorting operations in lines
    15--16, the values appear in the row-major order. The two special cases,
    where the whole sequence of four elements is not available for the $\sort^4$
    operation, are covered by lines 17 and 18. If the vector $\check{x}^i$ does not
    have the element with index $k_4 + 1$, then just 3 elements are sorted in line
    17. If $k \bmod 4 = 3$ then the element $y_{k_4 + 1}$ is the last one in desired
    order, but the vector $\check{z}^i$ does not have the element with index $k_4 +
    1$, so the network must sort just 2 elements in line 18. Note that in the first
    $\floor{k/4}$ rows we have $4\floor{k/4}$ elements of the top $k$ ones, so
    the values of $k - 4\floor{k/4}$ leftmost elements in the row $k_4+1$ should be
    corrected.
  \end{proof}

\section{Comparison of Pairwise Selection Networks}

The number of comparators in the Pairwise Selection Network (Algorithm \ref{net:pw}) can be defined using this recursive formula:

\begin{equation}
|pw\_sel^n_k| = \left\{ 
    \begin{array}{l l}
      |pw\_sel^{n/2}_k| + |pw\_sel^{n/2}_{k/2}|+ &  \\ 
      + |pw\_split^n| + |pw\_merge^{n}_k| & \quad \text{if $k<n$}\\
      |oe\_sort^k| & \quad \text{if $k=n$} \\
      |max^n| & \quad \text{if $k=1$} \\
    \end{array} \right.
    \label{eq:pw}
\end{equation}

We denote the splitting step as a network $pw\_split^n$. One can check that
it requires $|pw\_split^n|$ $= n/2$ comparators and the merging step requires $|pw\_merge^{n}_k|=k\log k - k + 1$
comparators \cite{zazonpairwise}. In the formula above we assume $n$ and $k$ to be powers of 2.
This way it is always true that $\min(n/2,k)=n/2$, if $k<n$, thus simplifying the calculations.

The schema of this network is presented in Figure \ref{fig:pw_sel}. We want to count
the number of variables and clauses used when merging 4 outputs of the recursive steps,
therefore we expand the recursion by one level (see Figure \ref{fig:pw_sel2}).

 \begin{figure}
  \centering
      \subfloat[one step\label{fig:pw_sel1}]{%
      \includegraphics[width=0.48\textwidth]{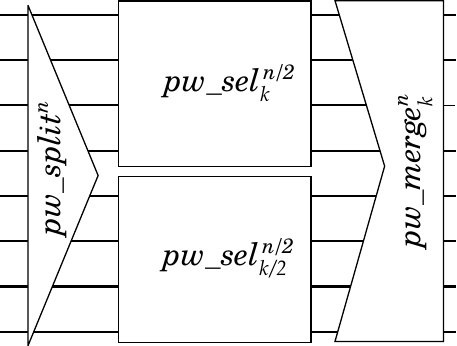}
      }
      ~
      \subfloat[two steps\label{fig:pw_sel2}]{%
      \includegraphics[width=0.48\textwidth]{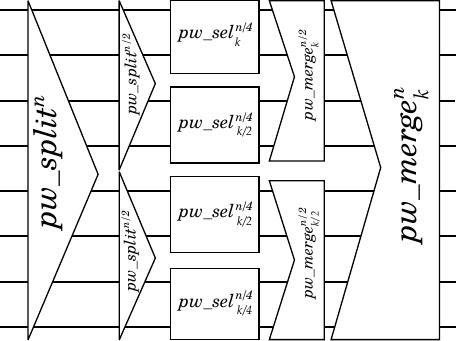}
      }
      \caption{The Pairwise Selection Network}
    \label{fig:pw_sel}
  \end{figure}

 \begin{lemma}
   Let $k,n \in \nat$ be powers of 2, and $k<n$. Then $V(pw\_merge^{n/2}_{k}) + V(pw\_merge^{n/2}_{k/2}) + V(pw\_merge^{n}_k) = 5k\log k -6k +6$
   and $C(pw\_merge^{n/2}_{k}) + C(pw\_merge^{n/2}_{k/2}) + C(pw\_merge^{n}_k) = \frac{15}{2}k\log k - 9k + 9$.
 \end{lemma}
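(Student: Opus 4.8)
The plan is to reduce both equalities to a single uniform accounting of comparators. First I would observe that, under the standard encoding of the ``$\leq k$'' constraint, every comparator (2-sorter) contributes exactly two fresh output variables and, using the three-clause representation of Eq. \ref{eq:3c}, exactly three clauses. No variable is ever counted twice, because the outputs of one comparator serve directly as inputs of the next, and wires not touched by any comparator introduce nothing new. Consequently, for any comparator network $f$ we have the clean identities $V(f) = 2|f|$ and $C(f) = 3|f|$, so that the whole lemma becomes a statement purely about comparator counts.

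Next I would invoke the size formula $|pw\_merge^n_k| = k\log k - k + 1$ recorded earlier, together with the crucial point that it is independent of $n$. Hence $|pw\_merge^{n/2}_k| = |pw\_merge^{n}_k| = k\log k - k + 1$, while for the halved parameter I would use $\log(k/2) = \log k - 1$ (legitimate since $k$ is a power of $2$) to obtain $|pw\_merge^{n/2}_{k/2}| = (k/2)(\log k - 1) - k/2 + 1 = (k/2)\log k - k + 1$.

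Finally, I would sum. Multiplying each of the three comparator counts by $2$ and adding yields $(2+1+2)k\log k - 6k + 6 = 5k\log k - 6k + 6$ for the variables; multiplying by $3$ and adding yields $\left(3 + \frac{3}{2} + 3\right)k\log k - 9k + 9 = \frac{15}{2}k\log k - 9k + 9$ for the clauses, exactly as claimed.

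The only points requiring genuine care---hardly obstacles---are the bookkeeping justification that each comparator contributes precisely two new variables (so that no auxiliary variable is either shared between comparators or left uncounted) and the substitution $\log(k/2) = \log k - 1$. Everything else is routine arithmetic, and I would expect the write-up to consist of little more than stating the two per-comparator identities and collecting terms.
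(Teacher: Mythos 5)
Your proposal is correct and follows essentially the same route as the paper: the paper likewise sums the three comparator counts using $|pw\_merge^n_k| = k\log k - k + 1$ (obtaining $\frac{5}{2}k\log k - 3k + 3$) and then converts to variables and clauses via the per-comparator costs of $2$ variables and $3$ clauses. Your write-up is, if anything, more explicit than the paper's, which dismisses the final step as ``elementary calculation.''
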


 \begin{proof}
   The number of 2-comparator used is:
   
   \[
     |pw\_merge^{n/2}_{k}| + |pw\_merge^{n/2}_{k/2}| + |pw\_merge^{n}_{k}| = \frac{5}{2}k \log k - 3k + 3
   \]

   Elementary calculation gives the desired result. 
 \end{proof}

 We now count the number of variables and clauses for the 4-Wise Selection Network, again,
 disregarding the recursive steps.

 \begin{lemma}
   Let $k \in \nat$. Then:

   \begin{align*}
     V(4w\_merge^{4k}_k) &= k\log k + \frac{7}{6}k - 5, \\
     C(4w\_merge^{4k}_k) &= \frac{15}{4}k \log k - \frac{33}{24}k - 10.
   \end{align*}
 \end{lemma}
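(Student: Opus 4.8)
The plan is to reduce the computation of $V(4w\_merge^{4k}_k)$ and $C(4w\_merge^{4k}_k)$ to counting how many sorting operations of each arity the network emits, exactly as in the preceding pairwise-merge count. First I would record the per-sorter cost. A $\sort^m$ operation is precisely an $m$-selector of order $m$, so by Definition~\ref{def:msel} it introduces $m$ fresh output variables and $\sum_{p=1}^{m}\binom{m}{p}=2^m-1$ clauses; hence $\sort^2$, $\sort^3$ and $\sort^4$ cost $(2,3)$, $(3,7)$ and $(4,15)$ variables and clauses respectively. Writing $a_2,a_3,a_4$ for the numbers of $\sort^2,\sort^3,\sort^4$ emitted by Algorithm~\ref{net:4mw_merge}, the claim becomes equivalent to
\[
  2a_2+3a_3+4a_4 = k\log k + \tfrac{7}{6}k - 5, \qquad 3a_2+7a_3+15a_4 = \tfrac{15}{4}k\log k - \tfrac{33}{24}k - 10,
\]
so everything reduces to determining $a_2,a_3,a_4$.

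Next I would fix the column sizes: for the top-level merge the four input columns have sizes $k_1=k$, $k_2=\floor{k/2}$, $k_3=\floor{k/3}$, $k_4=\floor{k/4}$ (the $\bot$-padding in Algorithm~\ref{net:mw_sel} guarantees this), and the slope-sorting \textbf{while} loop runs for $\ceil{\log k_1}$ iterations with $h_i=h_0/2^i$. For each iteration $i$ I would count the three inner loops separately. The first loop ranges over $j\in\{1,\dots,\min(k_3-h_i,k_4)\}$ and, by its If/ElsIf/Else guards, emits a $\sort^4$ exactly when $j+3h_i\le k_1$ and $j+2h_i\le k_2$, a $\sort^3$ when only $j+2h_i\le k_2$, and a $\sort^2$ otherwise; the second loop (range $\min(k_2-h_i,k_3,h_i)$) emits $\sort^3$ or $\sort^2$; the third loop (range $\min(k_1-h_i,k_2,h_i)$) emits only $\sort^2$. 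Translating the guards into index bounds, the number of $\sort^4$ in iteration $i$ is $\max(0,\min(k_3-h_i,\,k_4,\,k_1-3h_i,\,k_2-2h_i))$, with the $\sort^3$ and $\sort^2$ counts given by the analogous differences. Summing the $\sort^4$ counts is the source of the leading behaviour: once $h_i<k/12$ the minimum is just $k_4=\floor{k/4}$, so roughly $\log k$ iterations each contribute $\tfrac14 k$, yielding $a_4=\tfrac14 k\log k+O(k)$ and already reproducing the $k\log k$ and $\tfrac{15}{4}k\log k$ leading terms; the residual $\sort^3$ and $\sort^2$ counts affect only the linear and constant terms.

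Finally I would add the correction phase following the \textbf{while} loop: the $\sort^2(z,w)$ loop over $\min(k_1-2,k_4)$ indices, the $\sort^4(y,z,w,x)$ loop over $\min(k_2-1,k_4)$ indices, and the two guarded single operations (a $\sort^3$ and a $\sort^2$). These add $O(k)$ operations of each arity and are responsible for the fractional coefficients $\tfrac76$ and $-\tfrac{33}{24}$ (the factors $1/3$ and $1/24$ trace back to $k_3=\floor{k/3}$ and $k_4=\floor{k/4}$). With $a_2,a_3,a_4$ in closed form, substituting into the two linear combinations above and simplifying gives the stated $V$ and $C$. The main obstacle is the exact evaluation of the three nested sums: the difficulty is bookkeeping rather than conceptual, since the geometric decay of $h_i$ interacts with the $\min$-caps so that each loop changes regime at a different crossover iteration, the guards switch arity near the column boundaries, and the floors $\floor{k/2},\floor{k/3},\floor{k/4}$ must be tracked to pin down the constants. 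I would tame this by assuming $k$ makes these floors exact (or by bounding the floor error), splitting each sum at its crossover iteration and evaluating the resulting geometric series in $h_i$, then collecting terms by order; Observation~\ref{obs:abc} (the bounds $a_i-b_i\le h_i$, $b_i-c_i\le h_i$, $c_i-\floor{k/4}\le h_i$) supplies exactly the control needed to justify the regime split.
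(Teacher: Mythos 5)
Your proposal follows essentially the same route as the paper's proof: both reduce the claim to counting the number of $\sort^2$, $\sort^3$ and $\sort^4$ operations emitted by Algorithm~\ref{net:4mw_merge}, determined iteration by iteration from the guards of the three inner loops (with the regime change once $h_i$ drops below the relevant column-size differences), and then weight these counts by the per-sorter costs $(2,3)$, $(3,7)$, $(4,15)$ from Definition~\ref{def:msel}. The paper organizes exactly this computation as a table over iterations $1$, $2$, $3$ and $i\ge 4$, so your plan is a faithful, if less explicit, version of the same argument.
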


 \begin{proof}
   We separately count the number of 2, 3 and 4-comparators used in the merger (Algorithm \ref{net:4mw_merge}).
   By the assumption that $k \leq n/4$ we get $|\bar{w}|=k$, $|\bar{x}|=k/2$, $|\bar{y}|=k/3$,
   $|\bar{z}|=k/4$ and $h_1=k/2$. We consider iterations 1, 2 and 3 separately and then provide
   the formulas for the number of comparators for iterations 4 and beyond. Results are summarized in
   Table \ref{tbl:4w_comp}.

   In the first iteration $h_1 = k/2$, which means that sets of $j$-values
   in the first two inner loops (lines 7--10 and 11--13) are empty. On the other hand, $1 \leq j \leq \min(k_1 - h_1, k_2, h_1) = k/2$
   (line 14), therefore $k/2$ 2-comparators are used. In fact it is
   true for every iteration $i$ that $\min(k_1 - h_i, k_2, h_i) = h_i = k/2^i$. We note this fact in column 3 of Table \ref{tbl:4w_comp}
   (the first term of each expression). For the next iterations we only need to consider the
   first and second inner loops of the algorithm.

   In the second iteration $h_2 = k/4$, therefore in the first inner loop only the condition in line 10 holds, and only
   when $j \leq k/3 - k/4 = k/12$, hence $k/12$ 2-comparators are used. In the second inner loop the $j$-values satisfy condition
   $1 \leq j \leq \min(k_2 - h_2,k_3,h_2)=k/4$. Therefore the condition in line 12
   holds for each $j \leq k/4$, hence $k/4$ 3-comparators
   are used. In fact it is true for every iteration $i \geq 2$ that $\min(k_2 - h_i,k_3,h_i)=k/2^i$,
   so the condition in line 12 is true for $1 \leq j \leq k/2^i$.
   We note this fact in column 4 of Table \ref{tbl:4w_comp}. For the next iterations we only need to consider the
   first inner loop of the algorithm.

   In the third iteration $h_3 = k/8$, therefore $j$-values in the first inner loop satisfy the condition
   $1 \leq j \leq \min(k_3 - h_3,k_4) = 5k/24$ and
   the condition in line 8 holds for each $j \leq 5k/24$, hence
   $5k/24$ 4-comparators are used.
   
   From the forth iteration $h_i \leq k/16$, therefore for every $1 \leq j \leq \min(k_3 - h_i,k_4) = k/4$ the condition in line 8
   holds. Therefore $k/4$ 4-comparators are used.

   What's left is to sum 2,3 and 4-comparators throughout $\log k$ iterations of the algorithm.
   The results are are presented in Table \ref{tbl:4w_comp}.
   Elementary calculation gives the desired result.
   
   \begin{table}[t] \setlength{\tabcolsep}{4pt}
     \centering
     \bgroup
     \def\arraystretch{1.5}
     \begin{tabular}{ c | c || c | c | c | } \hline
       \multicolumn{1}{|c|}{$i$} & $h_i$ & \#2-comparators & \#3-comparators & \#4-comparators \\ \hline
       \multicolumn{1}{|c|}{$1$} & $\frac{k}{2}$ & $\frac{k}{2}$ & $0$ & $0$ \\ \hline
       \multicolumn{1}{|c|}{$2$} & $\frac{k}{4}$ & $\frac{k}{4} + \frac{k}{12}$ & $\frac{k}{4}$ & $0$ \\ \hline
       \multicolumn{1}{|c|}{$3$} & $\frac{k}{8}$ & $\frac{k}{8}$ & $\frac{k}{8}$ & $\frac{5k}{24}$ \\ \hline
       \multicolumn{1}{|c|}{$\geq4$} & $\frac{k}{2^i}$ & $\frac{k}{2^i}$ & $\frac{k}{2^i}$ & $\frac{k}{4}$ \\ \hline
       & Sum & $\frac{13}{12}k - 1$ & $\frac{k}{2} - 1$ & $\frac{1}{4} k \log k - \frac{13k}{24}$ \\ \cline{2-5} 
     \end{tabular}
     \def\abovecaptionskip{10pt}
     \caption{Number of comparators used in different iterations of Algorithm \ref{net:4mw_merge}}\label{tbl:4w_comp}
     \egroup
   \end{table}
\end{proof}

 Let:

 \begin{align*}
   V_{PSN}&=V(pw\_merge^{n/2}_k) + V(pw\_merge^{n/2}_{k/2}) + V(pw\_merge^{n}_{k}), \\
   C_{PSN}&=C(pw\_merge^{n/2}_k) + C(pw\_merge^{n/2}_{k/2}) + C(pw\_merge^{n}_{k}), \\
   V_{4W}&=V(4w\_merge^{4k}_k), \\
   C_{4W}&=C(4w\_merge^{4k}_k).
 \end{align*}
 
 \begin{figure}[t!]
  \centering
      \subfloat[$N=2^7$\label{fig:mw_prd1}]{%
      \includegraphics[width=0.31\textwidth]{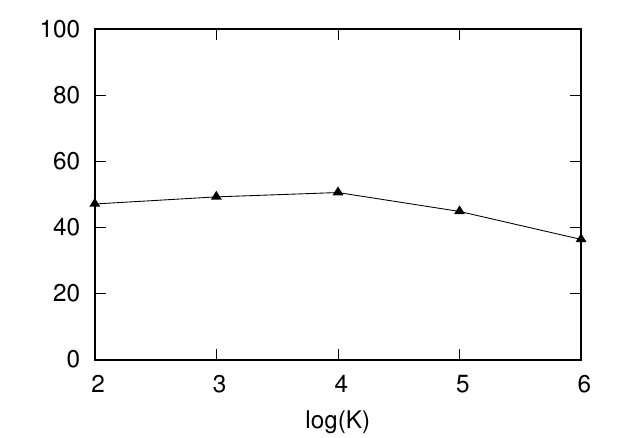}
      }
      ~
      \subfloat[$N=2^{15}$\label{fig:mw_prd2}]{%
      \includegraphics[width=0.31\textwidth]{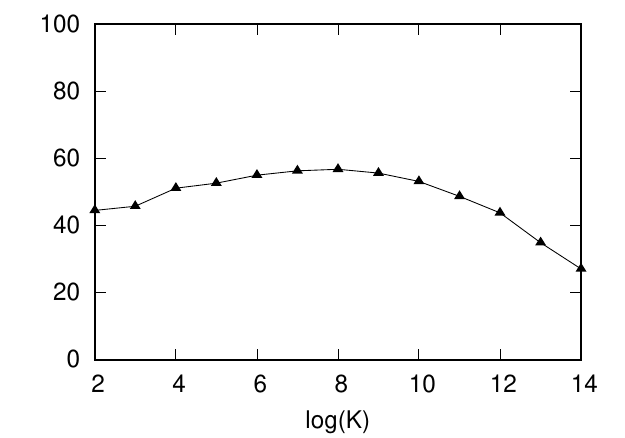}
      }
      ~
      \subfloat[$N=2^{31}$\label{fig:mw_prd3}]{%
      \includegraphics[width=0.31\textwidth]{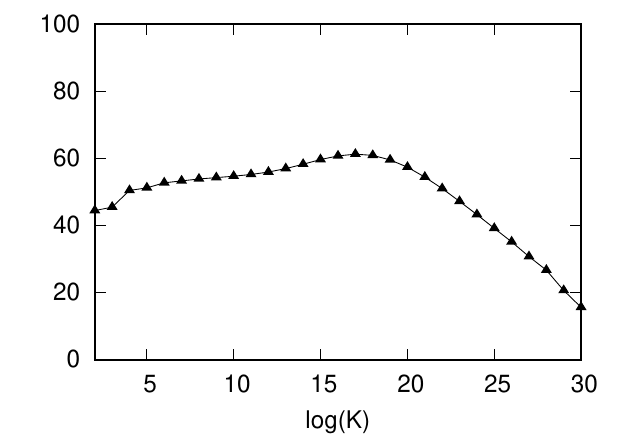}
      }
      \caption{Percentage of variables saved in 4-Wise Selection Networks compared to Pairwise Selection Networks, for selected values of $N$ and $K$.
      Graphs are plotted from the formula $100 \cdot (V(pw\_sel^N_K) - V(4w\_sel^N_K))/V(pw\_sel^N_K)$.}
    \label{fig:mw_prd}
  \end{figure}

 The following corollary shows that 4-column pairwise merging networks
 produces smaller encodings than their 2-column counterpart.
 
 \begin{corollary}\label{crly:pw}
   Let $k \in \nat$ such that $k \geq 4$. Then $V_{4W} < V_{PCN}$ and $C_{4W} < C_{PCN}$.
 \end{corollary}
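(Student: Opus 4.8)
The plan is to read the corollary as a direct comparison of the four closed forms that the two immediately preceding lemmas supply, so that no new structural reasoning about the networks is required. Recall that those lemmas give $V_{PSN}=5k\log k-6k+6$ and $C_{PSN}=\frac{15}{2}k\log k-9k+9$ for the pairwise merger counts, and $V_{4W}=k\log k+\frac{7}{6}k-5$ and $C_{4W}=\frac{15}{4}k\log k-\frac{33}{24}k-10$ for the $4$-wise merger; the quantities written $V_{PCN}$ and $C_{PCN}$ in the statement are precisely the Pairwise Cardinality Network merger counts $V_{PSN}$ and $C_{PSN}$ defined just above, so it suffices to show $V_{4W}<V_{PSN}$ and $C_{4W}<C_{PSN}$ for every $k\ge 4$.

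For the variables I would simply subtract:
\[
  V_{PSN}-V_{4W}=4k\log k-\tfrac{43}{6}k+11.
\]
Since $k\ge 4$ forces $\log k\ge 2$, the leading term obeys $4k\log k\ge 8k$, and $8k-\tfrac{43}{6}k=\tfrac{5}{6}k>0$; hence $V_{PSN}-V_{4W}\ge\tfrac{5}{6}k+11>0$, which gives the variable inequality with room to spare.

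For the clauses the same subtraction yields
\[
  C_{PSN}-C_{4W}=\tfrac{15}{4}k\log k-\tfrac{61}{8}k+19.
\]
Here the estimate is slightly more delicate: at $k=4$ (where $\log k=2$) the leading term $\tfrac{15}{4}k\log k=30$ does not quite dominate the linear term $\tfrac{61}{8}k=30.5$, so the positive constant $19$ is what secures positivity, giving $C_{PSN}-C_{4W}=18.5>0$. For $k\ge 8$ one has $\log k\ge 3$, whence $\tfrac{15}{4}k\log k\ge\tfrac{45}{4}k$ and $\tfrac{45}{4}k-\tfrac{61}{8}k=\tfrac{29}{8}k>0$, so the difference is positive outright. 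As $k$ ranges only over powers of two, checking the single base case $k=4$ and then invoking the dominant-term bound for $k\ge 8$ covers all cases.

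Thus both inequalities reduce to elementary estimates once the two counting lemmas are in hand. The only genuine subtlety — the step I would flag as the main obstacle — is the clause comparison at the boundary $k=4$: the $k\log k$ coefficient of $C_{4W}$ is exactly half that of $C_{PSN}$, so the gap at small $k$ is carried entirely by the additive constants rather than by the asymptotically dominant term, and a careless argument that discards the constant would fail precisely at $k=4$. Equivalently, one could observe that $g(k)=\tfrac{15}{4}k\log k-\tfrac{61}{8}k+19$ satisfies $g(4)>0$ and $g'(k)>0$ on $[4,\infty)$, giving positivity for all real $k\ge 4$ in a single stroke.
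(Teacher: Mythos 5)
Your proposal is correct and is exactly the argument the paper intends: the corollary is stated without proof as an elementary consequence of the two counting lemmas, and you carry out precisely that subtraction of closed forms (correctly reading the paper's $V_{PCN}$/$C_{PCN}$ as the $V_{PSN}$/$C_{PSN}$ defined just above). Your attention to the boundary case $k=4$ for the clause count, where the constant $+19$ alone secures positivity, is a worthwhile detail the paper glosses over, and your monotonicity remark even extends the inequality to all reals $k\geq 4$ rather than just powers of two.
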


 For the following theorem, note that $V(4w\_split^n)=V(pw\_split^n)=n$.
 
 \begin{theorem}
    Let $n,k \in \nat$ such that $1\leq k \leq n/4$ and $n$ and $k$ are both powers of 4. Then
    $V(4w\_sel^n_k) \leq V(pw\_sel^n_k)$.
  \end{theorem}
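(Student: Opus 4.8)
The plan is to prove the inequality by induction, matching the two networks at a common recursion depth. Since $pw\_sel$ splits its input into two halves while $4w\_sel$ splits it into four columns, I would unfold the pairwise recursion (Eq.~\ref{eq:pw}) twice and the $4$-wise recursion (Algorithm~\ref{net:mw_sel}) once, so that both expressions reduce to four selection subproblems on $n/4$ inputs. Collecting the non-recursive parts, the top of $pw\_sel^n_k$ contributes the three merger groups counted by $V_{PSN}$ together with the splitter variables $V(pw\_split^n)+2V(pw\_split^{n/2}) = n + 2(n/2) = 2n$, whereas the single level of $4w\_sel^n_k$ contributes $V_{4W}$ together with the preprocessing cost $V(4w\_split^n)=n$ noted just before the theorem. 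By Corollary~\ref{crly:pw} we have $V_{4W}\le V_{PSN}$, and the $4$-wise preprocessing ($n$) is no larger than the doubly-unfolded pairwise preprocessing ($2n$); hence the entire non-recursive part already favours $4w\_sel$.

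It then remains to control the recursive parts. Using $l_i=\min(n/4,\floor{k/i})=\floor{k/i}$ (valid because $k\le n/4$), the four subproblems produced by $4w\_sel^n_k$ are $(n/4,k)$, $(n/4,k/2)$, $(n/4,\floor{k/3})$ and $(n/4,k/4)$, while the doubly-unfolded pairwise network produces $(n/4,k)$, $(n/4,k/2)$, $(n/4,k/2)$ and $(n/4,k/4)$. Three of these pair off with equal second coordinate, and for each such pair I would invoke the induction hypothesis to obtain $V(4w\_sel^{n/4}_{k'})\le V(pw\_sel^{n/4}_{k'})$. The leftover $4$-wise subproblem $(n/4,\floor{k/3})$ I would compare against the leftover pairwise subproblem $(n/4,k/2)$ by first observing $\floor{k/3}\le k/2$ and applying a monotonicity lemma asserting that $V(4w\_sel^{n/4}_{k'})$ is non-decreasing in $k'$, then the induction hypothesis, giving $V(4w\_sel^{n/4}_{\floor{k/3}})\le V(4w\_sel^{n/4}_{k/2})\le V(pw\_sel^{n/4}_{k/2})$. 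The base case $k=1$ is immediate: both networks reduce to $max^n$ and use the same number of variables.

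The main obstacle is that the recursion does not remain inside the class ``$n,k$ both powers of $4$'': the second coordinates $k/2$ and $\floor{k/3}$ are in general not powers of $4$, and at the boundary $k=n/4$ the first subproblem $(n/4,k)$ degenerates into a full sort. I would therefore strengthen the statement proved by induction so that its hypothesis ranges over all $n'$ a power of $4$ and all $k'$ a power of $2$ with $k'\le n'$ (the pairwise network is well defined whenever $k'$ is a power of $2$, and the degenerate case $k'=n'$ is handled because both networks then equal $oe\_sort^{n'}$). This forces two auxiliary ingredients that I expect to be the delicate part of the argument: establishing the monotonicity of $V(4w\_sel^{n'}_{\cdot})$ in the selection parameter, and checking that the split/row-sorting accounting $V(4w\_split^n)=V(pw\_split^n)=n$ together with the merger counts $V_{4W}$ and $V_{PSN}$ remains valid across the recursion levels reached by the unfolding, so that the per-level comparison telescopes correctly. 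Once these are in place, summing the per-level inequalities yields $V(4w\_sel^n_k)\le V(pw\_sel^n_k)$.
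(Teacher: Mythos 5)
Your proposal follows essentially the same route as the paper's proof: induction on $(n,k)$ with base case $k=1$, unfolding the pairwise recursion twice against one level of the $4$-wise recursion, invoking Corollary \ref{crly:pw} for the merger comparison together with $V(4w\_split^n)=n\le 2n=V(pw\_split^n)+2V(pw\_split^{n/2})$ for the preprocessing, and absorbing the mismatched $(n/4,\floor{k/3})$ subproblem via a monotonicity step. The only cosmetic difference is that you apply monotonicity to $V(4w\_sel^{n/4}_{\cdot})$ before the induction hypothesis while the paper applies the hypothesis first and then uses $V(pw\_sel^{n/4}_{k/3})\le V(pw\_sel^{n/4}_{k/2})$; your explicit concern that the recursion leaves the powers-of-$4$ class is legitimate and is a point the paper itself glosses over.
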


  \begin{proof}
    By induction. For the base case, consider $1=k<n$. It follows that
    $V(4w\_sel^n_k)=V(pw\_sel^n_k)=V(max^n)$. For the induction step assume that for each
    $(n',k') \prec (n,k)$ (in lexicographical order), where $k \geq 4$, the inequality holds, we get:

    \[
    \begin{tabular}{l r}
      \multicolumn{2}{l}{$V(4w\_sel^n_k)= V(4w\_split^n) + \sum_{1 \leq i \leq 4} V(4w\_sel^{n/4}_{k/i}) + V(4w\_merge^{4k}_k)$} \\
      & \small{\bf (by the construction of $4w\_sel$)} \\
      \multicolumn{2}{l}{$\leq V(4w\_split^n) + \sum_{1 \leq i \leq 4} V(pw\_sel^{n/4}_{k/i}) + V(4w\_merge^{4k}_k)$} \\
      & \small{\bf (by the induction hypothesis)} \\
      \multicolumn{2}{l}{$\leq V(pw\_split^n) + 2V(pw\_split^{n/2}) + \sum_{1 \leq i \leq 4} V(pw\_sel^{n/4}_{k/i})$} \\
      \multicolumn{2}{l}{$\quad + V(pw\_merge^{n/2}_k) + V(pw\_merge^{n/2}_{k/2}) + V(pw\_merge^{n}_{k})$} \\
      & \small{\bf (by Corollary \ref{crly:pw} and because $V(4w\_split^n) < V(pw\_split^n) + 2V(pw\_split^{n/2})$)} \\
      \multicolumn{2}{l}{$\leq V(pw\_split^n) + 2V(pw\_split^{n/2}) + V(pw\_sel^{n/4}_{k})+ 2V(pw\_sel^{n/4}_{k/2})$} \\
      \multicolumn{2}{l}{$\quad + V(pw\_sel^{n/4}_{k/4}) + V(pw\_merge^{n/2}_k) + V(pw\_merge^{n/2}_{k/2}) + V(pw\_merge^{n}_{k})$} \\
      & \small{\bf (because $V(pw\_sel^{n/4}_{k/3}) \leq V(pw\_sel^{n/4}_{k/2})$)} \\
      \multicolumn{2}{l}{$= V(pw\_sel^n_k)$} \\
      & \small{\bf (by the construction of $pw\_sel$)} \\
    \end{tabular}
    \] 
  \end{proof}

  In Figure \ref{fig:mw_prd} we show what percentage of variables is saved while using our 4-Wise Selection Networks instead
  of Pairwise Selection Networks. We see that the number of variables saved can be up to 60\%.

  \section{Summary}

  In this chapter we presented a family of multi-column selection networks based
  on the pairwise approach, that can be used to encode cardinality
  constraints. We showed a detailed construction where the number of columns is
  equal to 4 and we showed that the encoding is smaller than its 2-column
  counterpart.

\part{Odd-Even Selection Networks}

\chapter[Generalized Odd-Even Selection Networks]{Generalized Odd-Even \\ Selection Networks}\label{ch:cp18}

    \def\nqueenssolution{Qd4, Qe2, Qf8, Qa5, Qc1, Qg6}
    \setchessboard{smallboard,labelleft=false,labelbottom=false,showmover=false,setpieces=\nqueenssolution}

    \begin{tikzpicture}[remember picture,overlay]
      \node[anchor=east,inner sep=0pt] at (current page text area.east|-0,3cm) {\chessboard};
    \end{tikzpicture}

Even though it has been shown that the pairwise networks use less comparators than the odd-even networks \cite{zazonpairwise}
(for selected values of $n$ and $k$),
it is the latter that achieve better practical results in the context of encoding cardinality constraints \cite{abio2013parametric}.
In this chapter we show a construction of a generalized selection
network based on the odd-even approach called the {\em $4$-Odd-Even Selection Network}. We show that our network is not only more efficient,
but it is also easier to implement (and to prove its correctness) than the GSN based on the pairwise approach from the previous chapter. 

The construction is the generalization of the multi-way merge sorting network by Batcher and Lee \cite{lee1995multiway}.
The main idea is to split the problem into $4$ sub-problems, recursively select $k$ elements in them and
then merge the selected subsequences using an idea of multi-way merging. In such a
construction, we can encode more efficiently comparators in the combine phase of the
merger: instead of encoding each comparator separately by 3 clauses and 2 additional
variables, we propose an encoding scheme that requires 5 clauses and 2 variables on
average for each pair of comparators. 

We give a detailed construction for the $4$-Odd-Even Merging Network. We compare the numbers of variables and
clauses of the encoding and its counterpart: the 2-Odd-Even Merging Network \cite{codish2010pairwise}. 
The calculations show that encodings based on our network use fewer
variables and clauses, when $k<n$.

The construction is parametrized by any values of $k$ and $n$ (just like $m$-Wise Selection Network from the previous chapter),
so it can be further optimized by mixing them with other constructions. For example, in our experiments
we mixed them with the direct encoding for small values of parameters.
We show experimentally that multi-column selection networks are superior to standard selection
networks previously proposed in the literature, in context
of translating cardinality constraints into propositional formulas.

We also empirically compare our encodings with other state-of-the-art encodings, not only based on
comparator networks, but also on binary adders and binary decision diagrams. Those are mainly
used in encodings of Pseudo-Boolean constraints, but it is informative to see how well they
perform when encoding cardinality constraints.

At the end of this chapter we show how we can generalize the $4$-Odd-Even Selection Network
to the {\em $m$-Odd-Even Selection Network}, for any $m \geq 2$, just like we showed 
the {\em $m$-Wise Selection Network} in Chapter \ref{ch:mw}.

\section{4-Odd-Even Selection Network}

  We begin with the top-level algorithm for constructing the $4$-Odd-Even Selection
  Network (Algorithm \ref{net:oe_sel}) where we use $oe\_4merge^{s}_{k}$ as a black box. It
  is a $4$-merger of order $k$, that is, it outputs top $k$ sorted sequence from the inputs consisting of $4$ sorted sequences.
  We give detailed construction of a $4$-merger called {\em 4-Odd-Even Merger} in the next sub-section.

  The idea we use is the generalization of the one used in 2-Odd-Even Selection Network
  from \cite{codish2010pairwise}, which is based on the Odd-Even Sorting Network by Batcher
  \cite{batcher1968sorting}, but we replace the last network with Multiway Merge Sorting Network by
  Batcher and Lee \cite{lee1995multiway}. We arrange the input sequence into $4$ columns of
  non-increasing sizes (lines 3--6) and then recursively run the selection algorithm on each column
  (lines 9--11), where at most top $k$ items are selected from each column. Notice that
  each column is represented by ranges derived from the increasing value of variable
  $\mathit{offset}$. Notice further, that sizes of the columns are selected in such a way that in
  most cases all but first columns are of equal length and the length is a power of two
  (lines 3--5) that is close to the value of $k/4$ (observe that $[k/6, k/3)$ is the smallest
  symmetric interval around $k/4$ that contains a power of 2). Such a choice produces much
  longer propagation paths for small values of $k$ with respect to $n$.
  In the recursive calls selected items are sorted and form prefixes of the
  columns, which are then the input to the merging procedure (line 13). The base case,
  when $k=1$ (line 2), is handled by the selector $sel^n_1$.
  
  \begin{figure}[t!]
    \centering
    \includegraphics[width=1.0\textwidth]{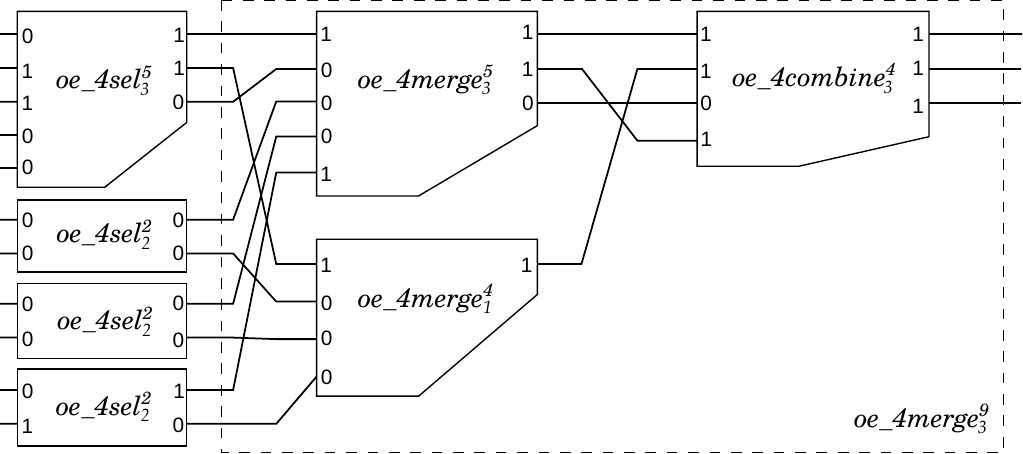}
    \caption{An example of 4-Odd-Even Selection Network, with $n=11$, $k=3$, $n_1=5$, $n_2 =   
    n_3 = n_4 = 2$}
    \label{fig:4oe_sel_cp18}
  \end{figure}

  \begin{examplebox}
  \begin{example}
    In Figure \ref{fig:4oe_sel_cp18} we present a schema of 4-Odd-Even Selection Network, which
    selects 3 largest elements from the input $01100000001$. In this example, $n=11$,
    $k=3$, $n_1=5$, $n_2 = n_3 = n_4 = 2$. First, the input is passed to the recursive
    calls, then the procedure $oe\_4merge^{9}_3$ is applied (Algorithm \ref{net:4oe_merge_cp18}).
  \end{example}
  \end{examplebox}

  \begin{algorithm}[t!]
    \caption{$oe\_4sel^n_k$}\label{net:oe_sel}
    \begin{algorithmic}[1]
      \Require {$\bar{x} \in \bool^n$; $0 \le k \le n$}
      \Ensure{The output is top $k$ sorted and is a permutation of the inputs}
      \If {$k = 0 \textbf{~or~} n \le 1$} \Return $\bar{x}$
        \ElsIf {$k = 1$} \Return $sel^n_1(\bar{x})$
      \EndIf
      \If {$n < 8 \textbf{~or~} k = n$}
          $n_2 = \lfloor (n+2)/4 \rfloor; ~n_3 = \lfloor (n+1)/4 \rfloor; ~n_4 = \lfloor n/4 
          \rfloor; $ \Comment{divide evenly}
        \ElsIf {$2^{\lceil\log(k/6)\rceil} \le \lfloor n/4 \rfloor$} 
          $n_2 = n_3 = n_4 = 2^{\lceil\log(k/6)\rceil}$ \Comment{ divide into powers of 2}
        \Else ~$n_2 = n_3 = n_4 = \lfloor k/4 \rfloor$ 
          \Comment{otherwise, if the power of 2 is too far from k/4}
      \EndIf
      \State $n_1 = n - n_2 - n_3 - n_4$ \Comment{ $n = n_1+\dots+n_4$ and $n_1 \ge n_2 \ge 
      n_3 \ge n_4$ }
      \State $\mathit{offset} = 1$
      \ForAll {$i \in \{1,\dots,4\}$}
        \State $k_i=\min(k, n_i)$
        \State $\bar{y}^i \gets oe\_4sel^{n_i}_{k_i}(\tuple{x_{\mathit{offset}},\dots,x_{\mathit{offset} + n_i - 1}})$
          \Comment{recursive calls}
        \State $\mathit{offset} += n_i$
      \EndFor
      \State $s = \sum_{i=1}^{4} k_i$;
         ~~$\overline{out} = \suff(k_1+1,\bar{y}^1) :: \dots :: \suff(k_4+1,\bar{y}^4)$
      \State \Return $oe\_4merge^{s}_{k}(
        \tuple{\pref(k_1,\bar{y}^1),\dots,\pref(k_4,\bar{y}^4)}) :: \overline{out}$
    \end{algorithmic}
  \end{algorithm}

  \begin{theorem}\label{thm:oe_sel}
    Let $n,k \in \nat$, such that $k \leq n$. Then $oe\_4sel^n_k$ is a $k$-selection network.
  \end{theorem}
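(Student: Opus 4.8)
The plan is to proceed by induction on the pair $(n,k)$ ordered lexicographically, mirroring the proof of Theorem~\ref{thm:mw_sel} for the pairwise construction but in a simpler setting. By the 0--1 principle it suffices to show that for every $\bar{x}\in\bool^n$ with $k\le n$ the output $oe\_4sel^n_k(\bar{x})$ is top $k$ sorted and a permutation of $\bar{x}$; the claim for an arbitrary totally ordered $X$ then follows because $oe\_4sel^n_k$ is an oblivious comparator network. The base cases are read directly off Algorithm~\ref{net:oe_sel}: if $k=0$ or $n\le 1$ the network returns $\bar{x}$ unchanged (trivially top $0$ sorted, or of length $\le 1$), and if $k=1$ it returns $sel^n_1(\bar{x})$, whose single distinguished output is $\max(\bar{x})$, so the result is top $1$ sorted.

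For the inductive step I would fix $n\ge 2$ and $k\ge 2$ and assume the statement for all $(n',k')\prec(n,k)$. The first thing to verify is that the recursive calls are on strictly smaller, well-posed instances. From the size assignments one reads off that $n_1\ge n_2\ge n_3\ge n_4$, that $\sum_{i=1}^4 n_i=n$, and that $n_1<n$ (the three smaller columns are nonempty whenever $n\ge2$, which is exactly what makes the recursion progress); together with $k_i=\min(k,n_i)\le n_i$ this yields $n_i\le n_1<n$ and hence $(n_i,k_i)\prec(n,k)$ with $k_i\le n_i$. By the induction hypothesis each $\bar{y}^i=oe\_4sel^{n_i}_{k_i}(\cdot)$ is therefore top $k_i$ sorted and a permutation of the $i$-th column, so $\pref(k_i,\bar{y}^i)$ is a sorted sequence that holds the $k_i$ largest entries of that column.

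The heart of the argument is the two-part claim that the four sorted prefixes jointly contain the $k$ largest elements of $\bar{x}$, and that the merger then arranges them correctly. For the first part I would argue that each column contributes at most $\min(k,n_i)=k_i$ elements to the global top $k$ (at most $k$ because there are only $k$ such elements, at most $n_i$ because the column has only that many entries); since $\bar{y}^i$ is top $k_i$ sorted, its length-$k_i$ prefix already contains the $k_i$ largest entries of the column, so no element of the global top $k$ is discarded into $\overline{out}=\suff(k_1+1,\bar{y}^1)::\dots::\suff(k_4+1,\bar{y}^4)$. The input to $oe\_4merge^s_k$ is thus a tuple of four sorted sequences, so by the correctness of the $4$-Odd-Even Merger (Algorithm~\ref{net:4oe_merge_cp18}, established in the next subsection, which guarantees a top $k$ sorted output from four sorted inputs) the merger returns the $k$ largest elements of $\bar{x}$ in sorted order. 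Finally, because the merged output already captures every element of the global top $k$ while $\overline{out}$ holds none of them, the first $k$ entries of $oe\_4merge^s_k(\cdots)::\overline{out}$ dominate all remaining entries; the concatenation is top $k$ sorted and is a permutation of $\bar{x}$, completing the induction.

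I expect the main obstacle to be keeping the two halves of the correctness claim cleanly separated: the combinatorial bookkeeping that the discarded suffixes contain none of the top $k$ values hinges precisely on the relation $k_i=\min(k,n_i)$ and on the top $k_i$ sortedness supplied by the induction hypothesis, whereas the ``in sorted order'' conclusion is deferred entirely to the merger specification. Compared with the pairwise construction of Theorem~\ref{thm:mw_sel}, this step is deliberately lighter: there is no splitter and no cross-column weak-domination invariant to maintain, so once the merger is available as a black box the only genuinely delicate point is the domination of $\overline{out}$ by the merged output, which I would record as a one-line observation in the $0$--$1$ setting.
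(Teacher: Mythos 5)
Your proof is correct and follows essentially the same route as the paper's: lexicographic induction on $(n,k)$, the observation that $k_i=\min(k,n_i)$ together with the top-$k_i$ sortedness of each $\bar{y}^i$ (from the induction hypothesis) guarantees that the discarded suffixes contain no element of the global top $k$, and delegation of the sorted-output claim to the merger's correctness theorem. The only immaterial difference is in the key counting step: the paper assumes a $1$ survives into some suffix and derives $|\bar{w}|_1\ge k$ from $k_i=k$ for that column, whereas you bound each column's contribution to the top $k$ directly.
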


  \begin{proof}
    Observe that $\bar{y} = \bar{y}^1 :: \dots :: \bar{y}^4$ is a
    permutation of the input sequence $\bar{x}$. We prove by induction that for each $n,k
    \in \nat$ such that $1 \leq k \leq n$ and each $\bar{x} \in \bool^n$:
    $oe\_4sel^n_k(\bar{x})$ is top $k$ sorted. If $1=k \leq n$ then $oe\_4sel^n_k = sel^n_1$,
    so the theorem is true. For the induction step assume that $n \ge k \ge 2$
    and for each $(n^{*},k^{*}) \prec (n,k)$ (in lexicographical order) the theorem holds.
    We have to prove that the sequence $\bar{w} = \pref(k_1,\bar{y}^1)::\dots
    ::\pref(k_4,\bar{y}^4)$ contains $k$ largest elements from $\bar{x}$. If all 1's from
    $\bar{y}$ are in $\bar{w}$, we are done. So assume that there exists $y^i_j=1$ for
    some $1 \leq i \leq 4$, $k_i < j \leq n_i$. We show that $|\bar{w}|_1 \geq k$.
    Notice that $k_i=k$, otherwise $j > k_i = n_i$ -- a contradiction. Since $|\bar{y}^i|
    = n_i \leq n_1 < n$, from the induction hypothesis we get that $\bar{y}^i$ is top $k_i$
    sorted. In consequence, each element of $\pref(k_i,\bar{y}^i)$ is greater or equal to $y^i_j$,
    which implies that $|\pref(k_i,y^i)|_1 = k_i = k$. We conclude that $|\bar{w}|_1 \geq
    |\pref(k_i,y^i)|_1 = k$. Note also that in the case $n = k$ we have all $k_i = \min(n_i,k) 
    < k$, so the case is correctly reduced.
    
    Finally, using $oe\_4merge^{s}_{k}$ the algorithm returns 
    $k$ largest elements from $\bar{x}$, which completes the proof.
  \end{proof}

  \section{4-Odd-Even Merging Network} \label{sec:merge}

  In this section we give the detailed construction of the network $oe\_4merge$ -- the
  $4$-Odd-Even Merger -- that merges four sequences (columns) obtained from the recursive
  calls in Algorithm \ref{net:oe_sel}. We can assume that input columns are sorted and of
  length at most $k$.
  
    \begin{algorithm}[t!]
      \caption{$oe\_4merge^s_k$}\label{net:4oe_merge_cp18}
      \begin{algorithmic}[1]
        \Require {A tuple of sorted sequences $\tuple{\bar{w}, \bar{x}, \bar{y}, \bar{z}}$, 
          where $1 \le k \le s = |\bar{w}| + |\bar{x}| + |\bar{y}| + |\bar{z}|$
          and $k \ge |\bar{w}| \ge |\bar{x}| \ge |\bar{y}| \ge |\bar{z}|$.}
        \Ensure{The output is top $k$ sorted and is a permutation of the inputs}
        \If {$|\bar{x}| = 0$} \Return $\bar{w}$ \EndIf
        \If {$|\bar{w}| = 1$}
          \Return $sel^{s}_{k}(\bar{w}::\bar{x}::\bar{y}::\bar{z})$ 
            \Comment{Note that $s \le 4$ in this case}
        \EndIf
        \State $s_a = \ceil{|\bar{w}|/2}+\ceil{|\bar{x}|/2}+\ceil{|\bar{y}|/2}+\ceil{|\bar{z}|/2}$;
          ~~$k_a = \min(s_a, \floor{k/2}+2)$;
        \State $s_b = \floor{|\bar{w}|/2} + \floor{|\bar{x}|/2} + \floor{|\bar{y}|/2} + 
          \floor{|\bar{z}|/2}$; ~~$k_b = \min(s_b,\floor{k/2})$
        \State $\bar{a} \gets oe\_4merge^{s_a}_{k_a}(
          \bar{w}_{\odd},\bar{x}_{\odd},\bar{y}_{\odd},\bar{z}_{\odd})$
        \Comment{Recursive calls.}
        \State $\bar{b} \gets oe\_4merge^{s_b}_{k_b}(
          \bar{w}_{\even},\bar{x}_{\even},\bar{y}_{\even},\bar{z}_{\even})$
        \State \Return $oe\_4combine^{k_a+k_b}_k(\pref(k_a,\bar{a}),pref(k_b,\bar{b})) ::
         \suff(k_a+1,\bar{a}) :: \suff(k_b+1,\bar{b})$
      \end{algorithmic}
    \end{algorithm}
  
    \begin{algorithm}[t!]
      \caption{$oe\_4combine^s_k$}\label{net:4oe_combine_cp18}
      \begin{algorithmic}[1]
        \Require {A pair of sorted sequences $\tuple{\bar{x}, \bar{y}}$, 
          where $k \leq s = |\bar{x}| + |\bar{y}|$, $|\bar{y}| \le \floor{k/2}$, 
          $|\bar{x}| \le \floor{k/2} + 2$ and $|\bar{y}|_1 \le |\bar{x}|_1 \le |\bar{y}|_1+4$.}
        \Ensure{The output is sorted and is a permutation of the inputs}
        \State Let $x(i)$ denote $0$ if $i > |\bar{x}|$ or else $x_i$. 
          Let $y(i)$ denote $1$ if $i < 1$ or $0$ if $i > |\bar{y}|$ or $y_i$, otherwise.
        \ForAll {$j \in \{1, \dots, |\bar{x}| + |\bar{y}|\}$}
            \State $i = \ceil{j/2}$
            \If {$j$ is even} $a_j \gets \max(\max(x(i+2), y(i)), \min(x(i+1), y(i-1)))$
                \Else       ~~$a_j \gets \min(\max(x(i+1), y(i-1)), \min(x(i), y(i-2)))$
            \EndIf
        \EndFor
        \State \Return $\bar{a}$
      \end{algorithmic}
    \end{algorithm}

  The network is presented in Algorithm \ref{net:4oe_merge_cp18}. The input to the procedure is
  $\langle\pref(k_1,\bar{y}^1)$, $\ldots$, $\pref(k_4,\bar{y}^4)\rangle$, where each
  $\bar{y}^i$ is the output of the recursive call in Algorithm \ref{net:oe_sel}. The goal is
  to return the $k$ largest (and sorted) elements. It is done by splitting each input sequence
  into two parts, one containing elements of odd index, the other containing elements of
  even index. Odd sequences and even sequences are then recursively merged (lines 5--6)
  into two sequences $\bar{a}$ and $\bar{b}$ that are top $k$ sorted. The sorted prefixes
  are then combined by $oe\_4combine$ into a sorted sequence to which the suffixes
  of $\bar{a}$ and $\bar{b}$ are appended. The result is top $k$ sorted.
  For base cases, since we assume that $|\bar{w}| \ge |\bar{x}| \ge |\bar{y}| \ge |\bar{z}|$,
  we need only to check -- in line 1 -- if $\bar{x}$ is empty (then only $\bar{w}$ is non-empty) or -- in line 2 -- if $\bar{w}$
  contains only a single element -- then the rest of the sequences contains at most one element and we can simply order them with a selector.
  In other cases we have $|\bar{w}|\geq 2$ and $|\bar{x}|\geq 1$, thus $|\bar{w}_{\odd}| < |\bar{w}|$ and $|\bar{w}_{\even}| < |\bar{w}|$,
  so the sizes of sub-problems solved by recursive calls decrease.

  Our network is the generalization of the classic Multiway Merge Sorting Network by
  Batcher and Lee \cite{lee1995multiway}, where we use 4-way mergers and each merger consists of two
  sub-mergers and a combine sub-network. The goal of our network is to select and sort the
  $k$ largest items of four sorted input sequences. The combine networks are described
  and analyzed in \cite{lee1995multiway}. 

  In the case of 4-way merger, the combine operation by Batcher and Lee \cite{lee1995multiway} uses two layers of comparators to
  fix the order of elements of two sorted sequences $\bar{x}$ and $\bar{y}$,
  as presented in Figure \ref{fig:combine}. The combine operation takes sequences $\tuple{x_0,x_1,\dots}$ and $\tuple{y_0,y_1,\dots}$ and
  performs a zip operation: $\tuple{x_0,y_0,x_1,y_1,x_2,y_2,\dots}$. Then, two layers of comparators are applied:
  $[y_i : x_{i+2}]$, for $i=0,1,\dots$, resulting in $\tuple{x_0',y_0',x_1',y_1',\dots}$, and then $[y'_i : x'_{i+1}]$,
  for $i=0,1,\dots$, to get $\tuple{x_0'',y_0'',x_1'',y_1'',\dots}$.

  \begin{figure}[t!]
    \begin{center}
      \includegraphics[scale=1]{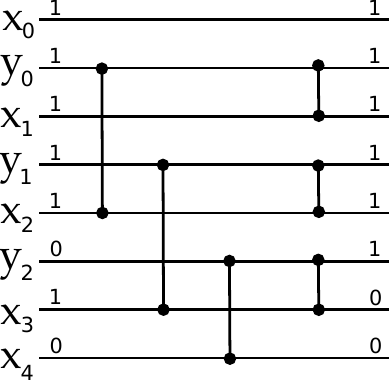}
    \end{center}
    \caption{Comparators of $oe\_4combine^8_6$ and the result of ordering the sequence $11111010$}
    \label{fig:combine}
  \end{figure}

  If we were to directly encode each comparator separately in a combine operation we would need to use
  3 clauses and 2 additional variables on each comparator. The novelty of our construction is that
  the encoding of a combine phase requires 5 clauses and 2 variables on average for each pair of comparators, using
  the following observations:

  \begin{align*}
    y'_i &= \max(y_i,x_{i+2}) \equiv y_i \vee x_{i+2} \quad i=0,1,\dots \\
    x'_i &= \min(y_{i-2},x_i) \equiv y_{i-2} \wedge x_i \quad i=2,3,\dots
  \end{align*}

  \noindent and

  \begin{align*}
    y''_i &= \max(y'_i,x'_{i+1}) = y'_i \vee x'_{i+1} = y_i \vee x_{i+2} \vee (y_{i-1} \wedge x_{i+1}), \\
    x''_i &= \min(y'_{i-1},x'_i) = y'_{i-1} \wedge x'_i = (y_{i-1} \vee x_{i+1}) \wedge y_{i-2} \wedge x_i \\
          &= (y_{i-1} \wedge y_{i-2} \wedge x_i) \vee (y_{i-2} \wedge x_i \wedge x_{i+1}) = (y_{i-1} \wedge x_i) \vee (y_{i-2} \wedge x_{i+1})
  \end{align*}

  \noindent In the above calculations we use the fact that the input sequences are sorted, therefore
  $y_{i-1} \wedge y_{i-2} = y_{i-1}$ and $x_i \wedge x_{i+1} = x_{i+1}$. By the above observations,
  the two calculated values can be encoded using the following set of 5 clauses:

  \[
    y_i \Rightarrow y''_i, x_{i+2} \Rightarrow y''_i, y_{i-1} \wedge x_{i+1} \Rightarrow y''_i, y_{i-1} \wedge x_i \Rightarrow x''_i, y_{i-2} \wedge x_{i+1} \Rightarrow x''_i
  \]

  \noindent if 1's should be propagated from inputs to outputs, otherwise:

  \[
    y''_{i} \Rightarrow y_{i-1} \vee x_{i+2}, y''_{i} \Rightarrow y_{i} \vee x_{i+1}, x''_{i} \Rightarrow x_{i}, x''_{i} \Rightarrow y_{i-2}, x''_{i} \Rightarrow y_{i-1} \vee x_{i+1}.
  \]

  This saves one clause and two variables for each pair of comparators in the original combine operation, which scales to
  $\frac{1}{2}k$ clauses and $k$ variables saved for each two layers of comparators associated with the use
  of a 4-way merger. The pseudo code for our combine procedure is presented in Algorithm \ref{net:4oe_combine_cp18}.

  \begin{examplebox}
  \begin{example}
    In Figure \ref{fig:4oe_sel_cp18}, in dashed lines, a schema of 4-Odd-Even merger is
    presented with $s=9$, $k=3$, $k_1 = 3$ and $k_2 = k_3 = k_4 = 2$. First, the input
    columns are split into two by odd and even indexes, and the recursive calls are made.
    After that, a combine operation fixes the order of elements, to output the 3 largest
    ones. For more detailed example of Algorithm \ref{net:4oe_merge_cp18}, assume that $k=6$ and
    $\bar{w}=100000$, $\bar{x}=111000$, $\bar{y}=100000$, $\bar{z}=100000$. Then $\bar{a}
    = oe\_4merge^{12}_{5}(100,110,100,100)=111110000000$ and $\bar{b} =
    oe\_4merge^{12}_{3}(000,100,000,000)=100000000000$. The combine operation gets
    $\bar{x} = \pref(5, \bar{a}) = 11111$ and $\bar{y} = \pref(3,\bar{b}) = 100$. Notice
    that $|\bar{x}|_1 - |\bar{y}|_1=4$ and after zipping we get $11101011$. Thus, two
    comparators from the first layer are needed to fix the order.
  \end{example}
  \end{examplebox}

  \begin{theorem}\label{thm:4oe_merge}
    The output of Algorithm \ref{net:4oe_merge_cp18} is top $k$ sorted.
  \end{theorem}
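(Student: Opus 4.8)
The plan is to invoke the 0-1 principle, reducing to binary inputs, and then argue by induction on the size $|\bar{w}|$ of the largest column (equivalently on $s$). First I would dispose of the two base cases the algorithm handles explicitly. When $|\bar{x}| = 0$, the hypothesis $|\bar{w}| \ge |\bar{x}| \ge |\bar{y}| \ge |\bar{z}|$ forces all columns but $\bar{w}$ to be empty, so the already-sorted $\bar{w}$ of length at most $k$ is trivially top $k$ sorted. When $|\bar{w}| = 1$, we have $s \le 4$ and the selector $sel^s_k$ is correct by Definition \ref{def:msel} (or rather by the selection-network property). In the remaining situation $|\bar{w}| \ge 2$ and $|\bar{x}| \ge 1$, so the recursion genuinely shrinks the largest column.

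For the inductive step I would first certify that the recursive calls are legitimate. Since each input column is sorted, its odd- and even-indexed subsequences are sorted, their lengths $\lceil |\cdot|/2\rceil$ and $\lfloor |\cdot|/2\rfloor$ are non-increasing across the four columns, and they are bounded by $k_a$ and $k_b$ respectively; moreover $|\bar{w}_{\odd}| < |\bar{w}|$ and $|\bar{w}_{\even}| < |\bar{w}|$. Hence the induction hypothesis applies and gives that $\bar{a}$ is top $k_a$ sorted and $\bar{b}$ is top $k_b$ sorted.

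The crux is the counting argument that controls how the $1$'s split between $\bar{a}$ and $\bar{b}$. Writing $m_1,\dots,m_4$ for the numbers of $1$'s in $\bar{w},\bar{x},\bar{y},\bar{z}$ and $M = \sum m_i$, a sorted column with $m_i$ ones contributes $\lceil m_i/2\rceil$ ones to the odd part and $\lfloor m_i/2\rfloor$ to the even part. As merging preserves the number of $1$'s, $|\bar{a}|_1 = (M + t)/2$ and $|\bar{b}|_1 = (M - t)/2$, where $t \le 4$ is the number of odd $m_i$, so $0 \le |\bar{a}|_1 - |\bar{b}|_1 \le 4$. I would then check that the prefixes $\pref(k_a,\bar{a})$ and $\pref(k_b,\bar{b})$ retain every $1$ relevant to the top $k$: when $M \le k$ one has $|\bar{a}|_1 \le M/2 + 2 \le \lfloor k/2\rfloor + 2 = k_a$ and $|\bar{b}|_1 \le \lfloor k/2\rfloor = k_b$, and when $M > k$ one has $\min(|\bar{a}|_1,k_a) + \min(|\bar{b}|_1,k_b) \ge k$. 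Together with $k_a \ge k_b$ and $|\bar{a}|_1 \ge |\bar{b}|_1$, this verifies exactly the combine precondition $|\bar{y}|_1 \le |\bar{x}|_1 \le |\bar{y}|_1 + 4$ as well as the size bounds $|\bar{x}| \le \lfloor k/2\rfloor + 2$ and $|\bar{y}| \le \lfloor k/2\rfloor$ that $oe\_4combine$ requires.

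Finally I would appeal to the correctness of the Batcher--Lee combine operation \cite{lee1995multiway}: from two sorted sequences whose $1$-counts differ by at most the merge width, its two comparator layers produce a sorted sequence. Since Algorithm \ref{net:4oe_combine_cp18} realizes precisely those two layers (the derivations of $y''_i$ and $x''_i$ in Section \ref{sec:merge} show the encoding computes the same $\max/\min$ values), the combined prefix is sorted and holds the $k$ largest elements, while the appended suffixes $\suff(k_a+1,\bar{a})$ and $\suff(k_b+1,\bar{b})$ consist only of elements dominated by everything retained, so the concatenation is top $k$ sorted. The main obstacle I anticipate is the bookkeeping in this counting step --- matching the slack $+2$ in $k_a$ against the worst-case imbalance $t = 4$ and confirming that no top-$k$ element is truncated when passing to prefixes --- rather than the combine correctness itself, which is inherited from prior work.
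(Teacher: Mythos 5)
Your proof follows essentially the same route as the paper's: induction on $s$ with the same two base cases, the induction hypothesis giving that $\bar{a}$ and $\bar{b}$ are top $k_a$ and top $k_b$ sorted, verification of the combine preconditions via the odd/even $1$-count imbalance of at most $4$, and a counting argument showing that truncating to the prefixes $\pref(k_a,\bar{a})$ and $\pref(k_b,\bar{b})$ loses no top-$k$ element. The only cosmetic difference is that the paper proves the combine step's correctness itself as a separate statement (Lemma \ref{lma:4oe_merge_lma}), directly from the $\min/\max$ formulas of Algorithm \ref{net:4oe_combine_cp18}, rather than deferring to the Batcher--Lee construction.
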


  We start with proving a lemma stating that the result of applying network $oe\_4combine$ 
  to any two sequences that satisfy the requirements of the network is sorted and is a 
  permutation of inputs. Then we prove the theorem.
  
  \begin{lemma}\label{lma:4oe_merge_lma}
  Let $k \geq 1$ and $\bar{x}, \bar{y} \in \{0,1\}^*$ be a pair of sorted sequences such
  that $k \leq s = |\bar{x}| + |\bar{y}|$, $|\bar{y}| \le \floor{k/2}$, $|\bar{x}| \le
  \floor{k/2} + 2$ and $|\bar{y}|_1 \le |\bar{x}|_1 \le |\bar{y}|_1+4$. Let $\bar{a}$ be
  the output sequence of $oe\_4combine^s_k(\bar{x}, \bar{y})$. Then for any $j$, $1 \leq j
  < s$ we have $a_j \geq a_{j+1}$. Moreover, $\bar{a}$ is a permutation of $\bar{x} ::
  \bar{y}$.
  \end{lemma}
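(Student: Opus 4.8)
The plan is to work directly with the $0$-$1$ inputs (no reduction is needed, since the lemma already assumes $\bar x,\bar y\in\{0,1\}^*$) and to split the two assertions into a short multiset argument and a monotonicity check that pinpoints where the hypothesis $|\bar x|_1\le|\bar y|_1+4$ is used. First I would record the normal forms: as sorted $0$-$1$ sequences, $\bar x=1^{p}0^{|\bar x|-p}$ and $\bar y=1^{q}0^{|\bar y|-q}$ with $p=|\bar x|_1$, $q=|\bar y|_1$, and the hypothesis gives $q\le p\le q+4$. Reading off Algorithm~\ref{net:4oe_combine_cp18} (with its boundary conventions $x(t)=0$ for $t$ out of range, $y(t)=1$ for $t<1$ and $y(t)=0$ for $t>|\bar y|$), the output sits at odd and even positions as $a_{2i-1}=x''_i$ and $a_{2i}=y''_i$, where
\[
y''_i = y_i \vee x_{i+2} \vee (y_{i-1}\wedge x_{i+1}),\qquad
x''_i = (y_{i-1}\vee x_{i+1})\wedge y_{i-2}\wedge x_i .
\]
Thus the lemma reduces to two claims: (i) $\bar a$ is a permutation of $\bar x::\bar y$ (whence $|\bar a|_1=p+q$), and (ii) $a_j\ge a_{j+1}$ for $1\le j<s$.

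For (i) I would observe that $oe\_4combine$ is, on its real wires, exactly the interleaving of $\bar x$ and $\bar y$ followed by the two comparator layers $[y_i:x_{i+2}]$ and $[y'_i:x'_{i+1}]$; since interleaving is a permutation and every comparator preserves the multiset of its two inputs, $\bar a$ is a permutation of $\bar x::\bar y$, and in particular $|\bar a|_1=p+q$. The closed-form formulas above are merely the evaluation of these comparators, the padding conventions being chosen so that a boundary comparator with a missing wire acts as the identity.

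The substance is (ii). Using the position assignment, $a_j\ge a_{j+1}$ for all $j$ amounts to the two inequalities $x''_i\ge y''_i$ and $y''_i\ge x''_{i+1}$. The second is immediate: if $x''_{i+1}=1$ then both $x_{i+1}=1$ and $y_{i-1}=1$, which is precisely the disjunct $y_{i-1}\wedge x_{i+1}$ of $y''_i$, so $y''_i=1$. For the first I would assume $y''_i=1$ and derive $x''_i=1$ by cases on which disjunct of $y''_i$ holds, translating each into the indicators $[\,i+2\le p\,]$, $[\,i\le q\,]$, $[\,i+1\le p\,]$ and $[\,i\le q+1\,]$. The conjuncts $x_i$ and $(y_{i-1}\vee x_{i+1})$ of $x''_i$ are immediate in each case (using $q\le p$ and the case assumption); the delicate conjunct is $y_{i-2}$, i.e. $i-2\le q$.

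I expect the case $y''_i=1$ \emph{because} $x_{i+2}=1$ (that is $i\le p-2$) to be the main obstacle, and it is exactly here that the constant $4$ enters: from $p\le q+4$ one gets $i\le p-2\le q+2$, hence $i-2\le q$ and $y_{i-2}=1$. In the two remaining cases ($i\le q$, or $y_{i-1}\wedge x_{i+1}$) the inequality $i-2\le q$ follows directly. Throughout, the boundary conventions must be applied explicitly for the small indices $i\in\{1,2,3\}$, where some of $y_{i-1},y_{i-2}$ are the virtual value $1$. Once (i) and (ii) are in hand, $|\bar a|_1=p+q$ together with $a_j\ge a_{j+1}$ forces $\bar a=1^{p+q}0^{\,s-p-q}$, which is both sorted and a permutation of $\bar x::\bar y$, finishing the proof. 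An alternative would be to quote the combine analysis of Batcher and Lee~\cite{lee1995multiway}, but the self-contained check above is short and makes the role of the bound $4$ transparent.
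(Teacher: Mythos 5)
Your proof is correct and follows essentially the same route as the paper's: your case analysis on the disjuncts of $y''_i$ is the Boolean mirror of the paper's argument that $\max(x(i+1),y(i-1))$, $x(i)$ and $y(i-2)$ are each upper bounds on $a_{2i}$, and the constant $4$ enters in exactly the same place (the paper's inequality $y(l)\ge x(l+4)$ is your step $i\le p-2\Rightarrow i\le q+2$). The permutation argument also matches; the paper merely makes the ``comparators preserve multisets'' idea concrete by tracking an intermediate sequence $\bar{b}$ of length $s+1$ padded with one extra $0$ to handle the boundary wires you gloss over with the padding conventions.
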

   
  \begin{proof}
  Note first that the notations $x(i)$ and $y(i)$ (introduced in Algorithm \ref{net:4oe_combine_cp18}) defines
  monotone sequences that extend the given input sequences $\bar{x}$ and $\bar{y}$ (which
  are sorted). Observe that the inequality is obvious for an even $j = 2i$, because
  $a_{2i} = \max($ $\max(x(i+2), y(i)), \min(x(i+1), y(i-1))) \geq \min(\max(x(i+2), y(i)),
  \min(x(i+1), y(i-1))) = a_{2j+1}$. Consider now an odd $j = 2i-1$ for which $a_{2i-1} =
  \min(\max(x(i+1), y(i-1)), \min(x(i), y(i-2)))$. We show that all three
  values: (1) $\max(x(i+1), y(i-1))$, (2) $x(i)$ and (3) $y(i-2)$ are upper bounds on
  $a_{2j}$. Then the minimum of them is also an upper bound on $a_{2j}$.
  
  We have the following inequalities as the consequence of the assumptions: $x(l) \geq
  y(l) \ge x(l+4)$ for any integer $l$. Using them and the monotonicity of $x(i)$, $y(i)$
  and the $\min/\max$ functions, we have:
  \begin{enumerate}
  \item[(1)] $\max(x(i+1), y(i-1)) \geq \max(\max(x(i+2), y(i)), \min(x(i+1), y(i-1))) = 
  a_{2j}$,
  \item[(2)] $x(i) \geq \max(\max(x(i+2), y(i)), \min(x(i+1), y(i-1))) = a_{2j}$ and 
  \item[(3)] $y(i-2) \geq \max(\max(x(i+2), y(i)), \min(x(i+1), y(i-1))) = a_{2j}$. 
  \end{enumerate}
  In (2) we use $x(i) \geq x(i+1) \geq \min(x(i+1), y(i-1))$. In (3) - the similar ones.
  
  To prove the second part of the lemma let us introduce an intermediate sequence $b_j$, $1\leq 
  j \leq s+1$ such that $b_{2i} = \max(x(i+2), y(i))$ and $b_{2i-1} = \min(x(i), y(i-2)$ and 
  observe that it is a permutation of $\bar{x} :: \bar{y} :: 0$, since a pair $b_{2i}$ and 
  $b_{2i+3} = \min(x(i+2), y(i))$ is a permutation of the pair $x(i+2)$ and $y(i)$. Now we can 
  write $a_{2i}$ as $\max(b_{2j}, b_{2j+1})$ and $a_{2i+1}$ as $\min(b_{2j}, b_{2j+1})$, thus 
  the sequence $\bar{a} :: 0$ is a permutation of $\bar{b}$ and we are done.
  \end{proof}
  
  \begin{proof}[Proof of Theorem \ref{thm:4oe_merge}]
  Let $k \geq 1$ and $\bar{w}$, $\bar{x}$, $\bar{y}$ and $\bar{z}$ be sorted binary
  sequences such that $k \le s = |\bar{w}| + |\bar{x}| + |\bar{y}| + |\bar{z}|$ and $k \ge
  |\bar{w}| \ge |\bar{x}| \ge |\bar{y}| \ge |\bar{z}|$. Assume that they are the inputs to the
  network $oe\_4merge^s_k$, so we can use in the following the variables and sequences
  defined in it. The two base cases are: (1) all but first sequences are empty, and (2) all
  sequences contain at most one item. In both of them the network trivially select the top
  $k$ items. In the other cases the construction of $oe\_4merge^s_k$ is recursive, so we
  proceed by induction on $s$. Observe then that $s_a, s_b < s$, since $|\bar{w}| \geq 2$
  and $|\bar{x}| \geq 1$. By induction hypothesis, $\bar{a}$ is top $k_a$ sorted and
  $\bar{b}$ is top $k_b$ and $\bar{a} :: \bar{b}$ is an permutation of the inputs. Let
  $\bar{c} = oe\_4combine_k^{k_a + k_b}(\pref(k_a, \bar{a}), \pref(k_b, \bar{b})$. By
  previous lemma $\bar{c}$ is sorted and is a permutation of $\pref(k_a, \bar{a}) ::
  \pref(k_b, \bar{b})$. Thus the output sequence $\bar{c} :: \suff(k_a+1, \bar{a}) ::
  \suff(k_b+1, \bar{b})$ is a permutation of $\bar{w} ::\bar{x} :: \bar{y} ::\bar{z}$ and
  it remains only to prove that the output is top $k$ sorted.
  
  If $\suff(k_a+1, \bar{a}) :: \suff(k_b+1, \bar{b})$ contains just zeroes, there is nothing to 
  prove. Assume then that it contains at least one 1. In this case we prove 
  that $\pref(k_a, \bar{a}) :: \pref(k_b, \bar{b})$ contains at least $k$ 1's, thus $c_k$ is 
  $1$, and the output is top $k$ sorted. Observe that $k_a + k_b \geq k$, because $s_b 
  \leq s_a \leq s_b+4$ and $s_a + s_b = s \geq k$ so $s_a \geq \ceil{s/2}$ and $s_b \geq 
  \ceil{s/2} - 2$.
  
  It is clear that $|\bar{b}|_1 \leq |\bar{a}|_1 \leq |\bar{b}|_1 + 4$, because in each
  sorted input at odd positions there is the same number of 1's or one more as at even
  positions. Assume first that $|\suff(k_a+1, \bar{a})|_1 > 0$. Then the suffix in
  non-empty, so $k_a = \floor{k/2} + 2$ and $\pref(k_a,\bar{a})$ must contain only 1's,
  thus $|\bar{a}|_1 \geq k_a + 1 \geq \floor{k/2} + 3$. It follows that $|\bar{b}|_1 \geq 
  \floor{k/2} - 1$. If $\pref(k_b, \bar{b})$ contains only 1's then $\bar{c}$ also consists 
  only of 1's. Otherwise the prefix must contain $\floor{k/2} - 1$ 1's and thus the total 
  number of 1's in $\bar{c}$ is at least $\floor{k/2} + 2 + \floor{k/2} - 1 \geq k$. 
  
  Assume next that $|\suff(k_b+1, \bar{b})|_1 > 0$. Then $\pref(k_b,\bar{b})$ must contain only 
  1's and $|\bar{b}|_1 \geq k_b +1 \geq \floor{k/2} + 1$. Since $k_a \geq \ceil{k/2}$ and 
  $|\bar{a}|_1 \geq |\bar{b}|_1 \ge \floor{k/2} + 1$, we get $|\pref(k_a,\bar{a})|_1 \geq 
  \ceil{k/2}$ and finally $|\bar{c}|_1 \geq \floor{k/2} + \ceil{k/2} \geq k$.
  \end{proof}

  \section{Comparison of Odd-Even Selection Networks} \label{sec:comp}

In this section we estimate and compare the number of variables and clauses in encodings
based on our algorithm to some other encoding based
on the odd-even selection. Such encoding -- which we call the 2-Odd-Even Selection Network --
was already analyzed by Codish and Zazon-Ivry \cite{codish2010pairwise}.
We start by counting how many variables and clauses are needed in order to merge 4 sorted sequences returned
by recursive calls of the 4-Odd-Even Selection Network and the 2-Odd-Even Selection Network. Then, based on those values we
prove that the total number of variables and clauses is almost
always smaller when using the 4-column encoding rather than the 2-column encoding.
In the next section we show that the new encoding is not just smaller, but
also have better solving times for many benchmark instances.

To simplify the presentation we assume that $k \leq n/4$ and both $k$ and $n$
are the powers of $4$. We also omit the ceiling and floor
function in the calculations, when it is convenient for us.

\begin{definition}
  Let $n,k \in \nat$. For given (selection) network $f^n_k$ let $V(f^n_k)$ and $C(f^n_k)$ denote the number
  of variables and clauses used in the standard CNF encoding of $f^n_k$.
\end{definition}

We remind the reader that a single 2-sorter uses 2 auxiliary variables and 3 clauses.
In case of a 4-sorter the numbers are 4 and 15 (by Definition \ref{def:msel}).

  We count how many variables and clauses are needed in order to merge 4 sorted sequences
  returned by recursive calls of the 2-Odd-Even Selection Network and the 4-Odd-Even Selection Network, respectively.
  Two-column selection network using the odd-even approach is presented in \cite{codish2010pairwise}.
  We briefly introduce this network with the following three-step recursive procedure (omitting the base case):

  \begin{enumerate}
    \item Split the input $\bar{x} \in \bool^n$ into two sequences $\bar{x}^1 = \bar{x}_{\odd}$
      and $\bar{x}^2 = \bar{x}_{\even}$.
    \item Recursively select top $k$ sorted elements from $\bar{x}^1$ and top $k$ sorted
      elements from $\bar{x}^2$ (into $\bar{y}^1$ and $\bar{y}^2$, respectively).
    \item Merge the outputs of the previous step using an 2-Odd-Even Merging Network and
      output the top $k$ from $2k$ elements (top $k$ elements from $\bar{y}^1$ and top $k$ elements from $\bar{y}^2$).
  \end{enumerate}

  If we treat the merging step as a network $oe\_2merge^{2k}_k$, then the number of 2-sorters used in the
  2-Odd-Even Selection Network can be written as:

  \begin{equation}
    |oe\_2sel^n_k| = \left\{ 
    \begin{array}{l l}
      2|oe\_2sel^{n/2}_k| + |oe\_2merge^{2k}_k| & \quad \text{if $k<n$}\\
      |oe\_sort^k| & \quad \text{if $k=n$} \\
      |max^n| & \quad \text{if $k=1$} \\
    \end{array} \right.
    \label{eq:oe}
  \end{equation}
  
  One can check that Step 3 requires $|oe\_2merge^{2k}_k|=k\log k + 1$ 2-sorters (see \cite{codish2010pairwise}),
  which leads to the simple lemma.

  \begin{lemma}\label{lma:2oe_vars_cls}
   $V(oe\_2merge^{2k}_k) = 2 k \log k + 2$, $C(oe\_2merge^{2k}_k) = 3 k \log k + 3$.
  \end{lemma}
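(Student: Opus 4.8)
The plan is to reduce the claim to a single multiplication, since both quantities are additive over the comparators of the network. First I would invoke the fact---recorded in the sentence immediately preceding the lemma and established in \cite{codish2010pairwise}---that the merger consists of exactly $|oe\_2merge^{2k}_k| = k\log k + 1$ comparators (2-sorters). This is the only nontrivial input, and I would treat it as given rather than re-derive it here.

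Next I would recall how the standard encoding translates a single comparator. By the 3-clause representation \eqref{eq:3c}, each 2-sorter with inputs $\tuple{a,b}$ and outputs $\tuple{c,d}$ is encoded by introducing the two fresh output variables $c,d$ together with the three clauses $a \Rightarrow c$, $b \Rightarrow c$, $a \wedge b \Rightarrow d$. Thus every comparator contributes exactly $2$ new variables and $3$ clauses. The wires that merely pass an element through without comparing it (such as the boundary assignments $z_1 \gets y_1$ and $z_n \gets y'_n$ in Algorithm \ref{ex:net:oemerge}) are realized by the aliasing convention ``$\equiv$'' and therefore introduce neither variables nor clauses; likewise, the input variables of the merger are supplied by the recursive selection calls and are not counted among the auxiliary variables of $oe\_2merge^{2k}_k$ itself.

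Since distinct comparators introduce disjoint sets of fresh output variables and disjoint clauses, the totals are obtained by multiplying the per-comparator counts by the number of comparators:
\begin{align*}
  V(oe\_2merge^{2k}_k) &= 2\,(k\log k + 1) = 2k\log k + 2,\\
  C(oe\_2merge^{2k}_k) &= 3\,(k\log k + 1) = 3k\log k + 3.
\end{align*}
The only point that will warrant a moment's care is the bookkeeping: confirming that no output variable is shared between two comparators and that the pass-through wires are genuinely handled by aliasing rather than by a (trivial) sorter that would otherwise inflate the count. Once this is settled the formulas follow immediately, which is why the lemma is described as simple.
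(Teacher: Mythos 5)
Your proposal is correct and follows essentially the same route as the paper: the text takes the comparator count $|oe\_2merge^{2k}_k| = k\log k + 1$ from \cite{codish2010pairwise} and multiplies by the $2$ auxiliary variables and $3$ clauses per 2-sorter stated just before the lemma. Your extra remarks on pass-through wires and disjointness of fresh variables are sound bookkeeping that the paper leaves implicit.
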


  The schema of this network is presented in Figure \ref{fig:oe_sel}.
  In order to count the number of comparators used when merging 4 sorted sequences we need
  to expand the recursive step by one level (see Figure \ref{fig:oe_sel2}).

  \begin{figure}[t!]
    \centering
    \subfloat[one step\label{fig:oe_sel1}]{%
      \includegraphics[width=0.48\textwidth]{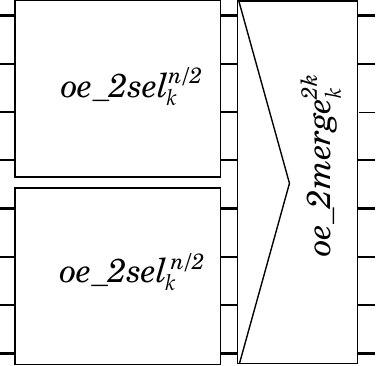}
    }
    ~
    \subfloat[two steps\label{fig:oe_sel2}]{%
      \includegraphics[width=0.48\textwidth]{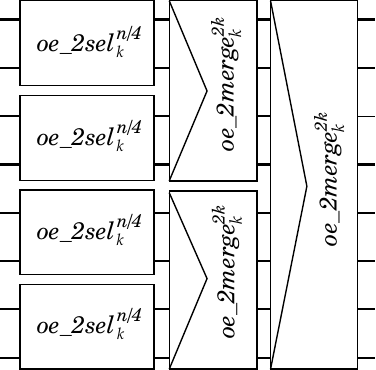}
    }
    \caption{The 2-Odd-Even Selection Network}
    \label{fig:oe_sel}
  \end{figure}

  Now we do the counting for our 4-way merging network based on
  Algorithm \ref{net:4oe_merge_cp18}.
  
  \begin{lemma}\label{lma:4oe_vars_cls}
    Let $k \in \nat$, then: $V(oe\_4merge^{4k}_k) \leq (k - 2)\log k + 5k - 1$;
    $C(oe\_4merge^{4k}_k)$ $\leq$ $(\frac{5}{2}k - 5)\log k + 21k - 6$.
  \end{lemma}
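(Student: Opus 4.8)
The plan is to reduce the whole count to a sum over the \emph{combine} sub-networks that appear in the recursion of Algorithm~\ref{net:4oe_merge_cp18}, and then to solve the resulting recurrence by induction on $k$. Every auxiliary variable and every clause produced by $oe\_4merge^{4k}_k$ is created either inside one of the $oe\_4combine$ invocations or inside one of the base-case selectors $sel^s_1$ (with $s\le 4$). Since the recursion splits each call into an odd-indexed and an even-indexed sub-call operating on columns of half the length, the call tree has depth $\log k$ and $2^i$ nodes at level $i$; the $\approx k$ leaves are base cases contributing only $O(1)$ variables and clauses each. Thus I first fix the cost of a single combine, then assemble the recurrence, then close it.

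First I would count $oe\_4combine$. By the calculation preceding Algorithm~\ref{net:4oe_combine_cp18}, a combine whose output has width $w$ is built from about $w/2$ comparator pairs, each encoded with $2$ variables and $5$ clauses, plus a bounded number of boundary positions (the first and last few outputs of the zipped sequence, where $x(i)$ or $y(i)$ is a defined constant or a plain copy and a smaller sorter suffices). Hence a combine used at the top of $oe\_4merge^{4k}_k$, where the two prefixes have total length $k_a+k_b = k+2$, uses at most $k + O(1)$ variables and at most $\tfrac{5}{2}k + O(1)$ clauses, the improvement over the naive encoding of each comparator (Definition~\ref{def:msel}) being exactly the $k$ variables and $\tfrac{1}{2}k$ clauses claimed in the text.

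Next I would set up the recurrence. For $k$ a power of $2$ the top-level call splits into the odd merger $oe\_4merge$ on four columns of size $k/2$ with target $k_a=\floor{k/2}+2$ and the even merger with target $k_b=\floor{k/2}$, so that
\[
  V(oe\_4merge^{4k}_k) \le V_{\mathrm{odd}} + V_{\mathrm{even}} + \big(k+O(1)\big),
\]
and analogously for $C$. Summing the combine costs level by level, each level contributes $2^i\cdot(k/2^{i}+O(1)) = k + O(2^i)$ variables (respectively $\tfrac{5}{2}k + O(2^i)$ clauses), and over the $\log k$ levels this yields the leading terms $k\log k$ and $\tfrac{5}{2}k\log k$; the accumulated $O(2^i)$ corrections together with the $O(k)$ leaf contributions sum to the linear terms $5k$ and $21k$. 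I would make this precise by proving the two stated inequalities simultaneously by strong induction on $k$, substituting the recurrence, applying the induction hypothesis to the two sub-mergers, and verifying the arithmetic of the $\log k$, linear and constant coefficients.

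The main obstacle will be the asymmetry of the recursion: the odd sub-merger is called with the enlarged target $\floor{k/2}+2$ rather than $k/2$, so the sub-problem is not literally an instance of the form $oe\_4merge^{4(k/2)}_{k/2}$ to which the lemma applies. To handle this I would either generalise the induction hypothesis to bound $V$ and $C$ for arbitrary four-column inputs with all column sizes at most $k$ and an arbitrary target, or argue that raising the target by the additive constant $2$ perturbs the counts only by lower-order terms that the slack in the linear coefficients ($5k$ and $21k$) comfortably absorbs. The remaining difficulty is purely bookkeeping: tracking the boundary comparators of each combine and the geometric sums of the $O(2^i)$ terms precisely enough to land on the exact constants $-1$ and $-6$, which is the routine part of the argument.
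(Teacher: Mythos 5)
Your plan follows essentially the same route as the paper's proof: attribute all variables and clauses to the base-case selectors plus the combine sub-networks, bound a single combine by $k-1$ variables and roughly $\tfrac{5}{2}$ clauses per variable, and close the resulting recurrence by induction with a hypothesis generalized to arbitrary input size and target — which is exactly the paper's two-parameter recurrence $B(s,k)\le(k-2)(\log s-2)+\tfrac14 s-1$. Of your two proposed ways to handle the asymmetric targets $\floor{k/2}+2$ and $\floor{k/2}$, the generalized induction is the one that works (and is what the paper does); simply "absorbing" the $+2$ into the linear slack is riskier, since those perturbations sum to $\Theta(k)$ over the $k-1$ combines and would push the leading coefficient to $k$ rather than $k-2$.
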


  \begin{proof}
    We separately count the number of variables and clauses used. 

    In the base case (line 2) we can assume -- for the sake of the upper bound -- 
    that we always use 4-sorters. Notice, that the number of 4-sorters is only dependent 
    on the variable $s$. The solution to the following recurrence
    gives the sought number: $\{A(4) = 1; A(s) = 2A(s/2), \text{for } s>4\}$, which is equal to $s/4$. Therefore
    we use $s$ auxiliary variables and $(15/4)s$ clauses. We treat the recursive case separately below.

    The number of variables used in the combine network is at most $k-1$, because a new variable is not needed for $a_i$, where $i>k$,
    because such $a_i$ can be replaced by a zero in clauses containing it, and not for $a_1=x_1$. Therefore, 
    the total number of variables is bounded by solution to the following recurrence:

    \[
      B(s,k) =
        \begin{cases}
          0                            & \quad \text{if } s \leq 4 \\
          B(s_a,k_a)+B(s_b,k_b)+k-1    & \quad \text{otherwise }   \\
        \end{cases}
    \]

    \noindent where $k \leq s = s_a + s_b \leq 4k$, $s_b \leq s_a \leq s_b + 4$ and $k_a = \min(s_a,\floor{k/2}+2)$ and
    $k_b$ $=$ $\min(s_b,\floor{k/2})$. Therefore $s/2 \leq s_a \leq s/2 + 2$, $s/2-2 \leq s_b \leq s/2$, $k_a \leq k/2+2$
    and $k_b \leq k/2$. We claim that $B(s,k) \leq (k-2)(\log s - 2) + \frac{1}{4}s - 1$.
    This can be easily verified by induction.

    The upper bound of the number of clauses can now be easily computed noticing that in the combine we require either
    2 or 3 clauses for each new variable (depending on the parity of the index), therefore the number of clauses in the combiner is
    bounded by $2.5 \times \#vars + 3.5$. Constant factor 3 is added because additional clauses can be added for
    values $a_{k+1}$ and $a_{k+2}$ (see equations in Section \ref{sec:merge}).
    The overall number of clauses in the merger (omitting base cases) is then at most $2.5 \cdot B(s,k) + 3.5(k-1)$, where
    factor $(k-1)$ is the upper bound on the number of combines used in the recursive tree of the merger.
    Elementary calculations give the desired result.
  \end{proof}

  Combining Lemmas \ref{lma:2oe_vars_cls} and \ref{lma:4oe_vars_cls} gives the following corollary.

  \begin{corollary}\label{crly:oe}
    Let $k \in \nat$. Then $3V(oe\_2merge^{2k}_k) - V(oe\_4merge^{4k}_k) \geq (5k+2)\log(\frac{k}{2}) + 9 \geq 0$,
    and for $k \geq 8$, $3C(oe\_2merge^{2k}_k) - C(oe\_4merge^{4k}_k) \geq (\frac{13}{2}k + 5)\log(\frac{k}{8})-\frac{3}{2}k+30 \geq 0$.
  \end{corollary}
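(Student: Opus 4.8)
The plan is to prove Corollary \ref{crly:oe} by directly substituting the closed forms from Lemma \ref{lma:2oe_vars_cls} and the upper bounds from Lemma \ref{lma:4oe_vars_cls}, and then reducing each of the two claimed inequalities to an elementary logarithmic identity followed by a sign check. Since Lemma \ref{lma:4oe_vars_cls} provides only upper bounds on $V(oe\_4merge^{4k}_k)$ and $C(oe\_4merge^{4k}_k)$, subtracting those bounds yields valid lower bounds on the two differences, so it suffices to match the resulting expressions with the stated right-hand sides and then confirm nonnegativity. Throughout I would use that the logarithm is base $2$, so that $\log k = \log(k/2) + 1$ and $\log k = \log(k/8) + 3$.

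For the variable count, I would first substitute $3V(oe\_2merge^{2k}_k) = 3(2k\log k + 2) = 6k\log k + 6$ and subtract the bound $(k-2)\log k + 5k - 1$, obtaining $3V(oe\_2merge^{2k}_k) - V(oe\_4merge^{4k}_k) \geq (5k+2)\log k - 5k + 7$. The only manipulation needed is to rewrite $\log k = \log(k/2) + 1$, which turns $(5k+2)\log k - 5k + 7$ into exactly $(5k+2)\log(k/2) + 9$, matching the claim. Nonnegativity is then immediate: for $k \geq 2$ we have $\log(k/2) \geq 0$, so the expression is at least $9$, and the single remaining case $k = 1$ is checked by hand.

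For the clause count the same scheme applies: substituting $3C(oe\_2merge^{2k}_k) = 9k\log k + 9$ and subtracting $(\frac{5}{2}k - 5)\log k + 21k - 6$ gives the lower bound $(\frac{13}{2}k + 5)\log k - 21k + 15$, and rewriting $\log k = \log(k/8) + 3$ brings it into the claimed form $(\frac{13}{2}k + 5)\log(k/8) - \frac{3}{2}k + 30$. The one genuinely nontrivial part will be verifying that this last quantity is nonnegative for $k \geq 8$, because the linear remainder $-\frac{3}{2}k + 30$ turns negative once $k > 20$. I would handle this by splitting on $k$: for $8 \leq k \leq 20$ the remainder $-\frac{3}{2}k + 30$ is itself nonnegative while $(\frac{13}{2}k+5)\log(k/8) \geq 0$, and for $k > 20$ the bound $\log(k/8) \geq 1$ (already valid for $k \geq 16$) gives $(\frac{13}{2}k+5)\log(k/8) \geq \frac{13}{2}k + 5 \geq \frac{3}{2}k - 30$, the last inequality following from $5k + 35 > 0$. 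The main obstacle here is purely bookkeeping — keeping the floor/ceiling simplifications and the base-$2$ logarithm identities consistent — rather than any conceptual difficulty.
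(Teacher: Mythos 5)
Your proposal is correct and follows exactly the route the paper intends: the paper gives no explicit proof beyond "Combining Lemmas \ref{lma:2oe_vars_cls} and \ref{lma:4oe_vars_cls}," and your substitution, the base-$2$ identities $\log k = \log(k/2)+1$ and $\log k = \log(k/8)+3$, and the case split for nonnegativity all check out arithmetically.
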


  This shows that using our merging procedure gives a smaller encoding than its 2-column counterpart and the differences
  in the number variables and clauses used is significant.

  The main result of this section is as follows.

  \begin{theorem}
    Let $n,k \in \nat$ such that $1\leq k \leq n/4$ and $n$ and $k$ are both powers of 4. Then:

    \[
      dsV_k(n) \stackrel{df}{=} V(oe\_2sel^n_k) - V(oe\_4sel^n_k) \geq \frac{(n-k)(5k+2)}{3k}\log\left(\frac{k}{2}\right) + 3\left(\frac{n}{k}-1\right).
    \]
  \end{theorem}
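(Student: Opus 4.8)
The plan is to collapse both constructions to a single recurrence on the difference $dsV_k(n)=V(oe\_2sel^n_k)-V(oe\_4sel^n_k)$ and solve it by induction over powers of $4$. First I would unfold each network by one \emph{comparable} amount. Under the standing assumptions ($k\le n/4$, and $n,k$ powers of $4$) the $2$-column network of \eqref{eq:oe} splits into two copies of $oe\_2sel^{n/2}_k$, each of which splits again into two copies of $oe\_2sel^{n/4}_k$; hence two binary levels yield exactly four size-$(n/4)$ subproblems together with three copies of the merger $oe\_2merge^{2k}_k$ (one at the top level, two at the second). The $4$-column network of Algorithm~\ref{net:oe_sel} produces the \emph{same} four size-$(n/4)$ subproblems (here $k_i=\min(k,n/4)=k$ and $s=4k$) together with a single merger $oe\_4merge^{4k}_k$. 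Subtracting the two variable counts, the four recursive terms pair off and I obtain the clean recurrence
\[
  dsV_k(n) = 4\, dsV_k(n/4) + D, \qquad D := 3V(oe\_2merge^{2k}_k) - V(oe\_4merge^{4k}_k),
\]
valid for every power of $4$ with $k\le n/4$.

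Next I would feed in Corollary~\ref{crly:oe}, which bounds the per-level surplus from below by $D \ge D_0 := (5k+2)\log(k/2)+9 \ge 0$. The claimed right-hand side $R(n) := \frac{(n-k)(5k+2)}{3k}\log(k/2) + 3(n/k-1)$ is tailored to satisfy the same recurrence with $D$ replaced by its lower bound: a short check shows $R(n)=4R(n/4)+D_0$ and $R(k)=0$, since the $\log(k/2)$-coefficients combine as $\tfrac{(n-4k)+3k}{3k}=\tfrac{n-k}{3k}$ and the constant parts as $3(n/k)-12+9=3(n/k-1)$. With this identity, the theorem follows by induction on $n$ over powers of $4$ with $n\ge 4k$: assuming $dsV_k(n/4)\ge R(n/4)$ and using $D\ge D_0$ gives $dsV_k(n)=4\,dsV_k(n/4)+D \ge 4R(n/4)+D_0 = R(n)$.

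The one genuinely non-routine ingredient is the base case $dsV_k(k)\ge 0 = R(k)$, i.e.\ that the $4$-way Batcher--Lee sorting network $oe\_4sel^k_k$ uses no more variables than the ordinary odd-even sorting network $oe\_2sel^k_k=oe\_sort^k$. This is the step I expect to be the main obstacle, because at $n=k$ the two constructions no longer share subproblems in the tidy way exploited above (the $4$-column branch divides into four columns of size $k/4$ while the $2$-column branch divides into two of size $k/2$, and the mergers become full sorters rather than the selector $oe\_4merge^{4k}_k$). I would settle it by running the analogous ``collapse two binary levels into one quaternary level'' comparison at the sorting level, establishing a companion inequality between the full $4$-way merger and the three $2$-way mergers it replaces; this leans on the combine-phase optimization of Algorithm~\ref{net:4oe_combine_cp18} (five clauses and two variables per pair of comparators), and the induction bottoms out at networks on a constant number of inputs, where the inequality is verified directly.
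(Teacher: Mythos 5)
Your proposal follows essentially the same route as the paper: the identical recurrence $dsV_k(n)=4\,dsV_k(n/4)+dV_k$ with $dV_k=3V(oe\_2merge^{2k}_k)-V(oe\_4merge^{4k}_k)$, obtained by unfolding the 2-column network two binary levels against one quaternary level of the 4-column one, combined with the lower bound on $dV_k$ from Corollary \ref{crly:oe}. The base case $dsV_k(k)\ge 0$ that you single out as the main remaining obstacle is indeed required to pass from the recurrence to the stated bound, but the paper simply asserts the solution of the recurrence without addressing it, so your explicit treatment of that step is the only substantive difference.
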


  \begin{proof}
    Let $dV_k = 3V(oe\_2merge^{2k}_k) - V(oe\_4merge^{4k}_k)$ (from Corollary \ref{crly:oe}), then:

    \begin{align*}
      dsV_k(n) &= V(oe\_2sel^n_k) - V(oe\_4sel^n_k) \\
               &= 2V(oe\_2sel^{n/2}_k) + V(oe\_2merge^{2k}_k) - 4V(oe\_4sel^{n/4}_k) - V(oe\_4merge^{4k}_k) \\
               &= 4V(oe\_2sel^{n/4}_k) + 3V(oe\_2merge^{2k}_k) - 4V(oe\_4sel^{n/4}_k) - V(oe\_4merge^{4k}_k) \\
               &= 4dsV_k(n/4) + dV_k
    \end{align*}

    \noindent The solution to the above recurrence is $dsV_k(n) \geq \frac{1}{3}(\frac{n}{k}-1)dV_k$. Therefore:

    \begin{align*}
      dsV_k(n) &\geq \frac{1}{3}\left(\frac{n}{k}-1\right)\left((5k+2)\log\left(\frac{k}{2}\right) + 9\right) \\
               &=\frac{(n-k)(5k+2)}{3k}\log\left(\frac{k}{2}\right) + 3\left(\frac{n}{k}-1\right).
    \end{align*}
  \end{proof}
  
  \noindent Similar theorem can be proved for the number of clauses (when $k\geq8$).

\section{Experimental Evaluation} \label{sec:exp}

As it was observed in \cite{abio2013parametric}, having a smaller encoding in terms of number of
variables or clauses is not always beneficial in practice, as it should also be
accompanied with a reduction of SAT-solver runtime. In this section we assess how our
encoding based on the new family of selection networks affect the performance of a
SAT-solver.

\subsection{Methodology}

Our algorithms that encode CNF instances with cardinality constraints into CNFs were
implemented as an extension of \textsc{MiniCard} ver. 1.1, created by Mark Liffiton and
Jordyn Maglalang\footnote{See https://github.com/liffiton/minicard}. \textsc{MiniCard}
uses three types of solvers:

\begin{itemize}
  \item {\em minicard} - the core \textsc{MiniCard} solver with native AtMost constraints,
  \item {\em minicard\_encodings} - a cardinality solver using CNF encodings for AtMost constraints,
  \item {\em minicard\_simp\_encodings} - the above solver with simplification / pre-processing.
\end{itemize}

The main program in {\em minicard\_encodings} has an option to generate a CNF formula, given a CNFP
instance (CNF with the set of cardinality constraints)
and to select a type of encoding applied to cardinality constraints.
Program run with this option outputs a CNF
instance that consists of collection of the original clauses with
the conjunction of CNFs generated by given method
for each cardinality constraint. No additional pre-processing and/or simplifications are made.
Authors of {\em minicard\_encodings} have implemented six methods to encode cardinality constraints
and arranged them in one library called {\em Encodings.h}. 
Our modification of \textsc{MiniCard} is that we added implementation 
of the encoding presented in this chapter and put it in
the library {\em Encodings\_MW.h}. Then, for each CNFP instance and each encoding method,
we used \textsc{MiniCard} to generate CNF instances.
After preparing, the benchmarks were run on a different SAT-solver.
Our extension of \textsc{MiniCard}, which we call \textsc{KP-MiniCard},
is available online\footnote{See https://github.com/karpiu/kp-minicard}.

In our evaluation we use the state-of-the-art SAT-solver \textsc{COMiniSatPS}
by Chanseok Oh\footnote{See http://cs.nyu.edu/\%7echanseok/cominisatps/} \cite{oh2016improving},
which have collectively won six medals in SAT Competition 2014 and Configurable SAT Solver Challenge 2014.
Moreover, the modification of this solver called \textsc{MapleCOMSPS} won the Main Track category of
SAT Competition 2016\footnote{See http://baldur.iti.kit.edu/sat-competition-2016/}.
All experiments were carried out on the machines with Intel(R) Core(TM) i7-2600 CPU @ 3.40GHz.

Detailed results are available online\footnote{See http://www.ii.uni.wroc.pl/\%7ekarp/sat/2018.html}.
We publish spreadsheets showing running time for each instance,
speed-up/slow-down tables for our encodings, number of time-outs met and total running time.

\subsection{Encodings}

We use our multi-column selection network for evaluation -- the 4-Odd-Even Selection Network ({\bf 4OE})
based on Algorithms \ref{net:oe_sel}, \ref{net:4oe_combine_cp18} and \ref{net:4oe_merge_cp18}.
We compare our encoding to some others found in the literature.
We consider the Pairwise Cardinality Networks \cite{codish2010pairwise}. We also consider a solver called
\textsc{MiniSat+}\footnote{See https://github.com/niklasso/minisatp}
which implements techniques to encode Pseudo-Boolean constraints to propositional formulas \cite{minisatp}.
Since cardinality constraints are a subclass of Pseudo-Boolean constraints, we can
measure how well the encodings used in \textsc{MiniSat+} perform, compared with our methods.
The solver chooses between three techniques to generate SAT encodings for Pseudo-Boolean constraints.
These convert the constraint to: a BDD structure, a network of binary adders, a network of sorters.
The network of adders is the most concise encoding, but it can have poor propagation properties and often
leads to longer computations than the BDD based encoding. The network of sorters is the implementation
of classic odd-even (2-column) sorting network by Batcher \cite{batcher1968sorting}. Calling the solver we can
choose the encoding with one of the parameters: {\em -ca}, {\em -cb}, {\em -cs}.
By default, \textsc{MiniSat+} uses the so called {\bf Mixed} strategy, where program
chooses which method (adders, BDDs or sorters) to use in the encodings.
We do not include the {\bf Mixed} strategy in the results,
as the evaluation showed that it performs almost the same as {\em -cb} option.
The generated CNFs were written to files with the option {\em -cnf=$<$file$>$}.
Solver \textsc{MiniSat+} have been slightly modified, namely, we fixed a pair of bugs such as
the one reported in the experiments section of \cite{aavani2013new}.

To sum up, here are the competitors' encodings used in this evaluation:

\begin{itemize}
  \item {\bf PCN} - the Pairwise Cardinality Networks (our implementation),
  \item {\bf CA} - encodings based on Binary Adders (from \textsc{MiniSat+}),
  \item {\bf CB} - encodings based on Binary Decision Diagrams (from \textsc{MiniSat+}),
  \item {\bf CS} - the 2-Odd-Even Sorting Networks (from \textsc{MiniSat+}).
\end{itemize}

Encodings {\bf 4OE} and {\bf PCN} were extended,
following the idea presented in \cite{abio2013parametric}, where authors use
Direct Cardinality Networks in their encodings for sufficiently
small values of $n$ and $k$. Values of $n$ and $k$ for which we substitute the recursive calls
with Direct Cardinality Network were selected based on the optimization idea in \cite{abio2013parametric}.
We minimize the function $\lambda \cdot V + C$, where $V$ is the number of variables and $C$
the number of clauses to determine when to switch to direct networks,
and following authors' experimental findings, we set $\lambda=5$. 

Additionally, we compare our encodings with two state-of-the-art general purpose constraint solvers. First is
the \textsc{PBLib} ver. 1.2.1, by Tobias Philipp and Peter Steinke \cite{philipp2015pblib}. This solver implements a plethora of encodings
for three types of constraints: at-most-one, at-most-k (cardinality constraints) and Pseudo-Boolean constraints.
The \textsc{PBLib} automatically normalizes the input constraints and decides which encoder provides the most effective translation.
One of the implemented encodings for at-most-k constraints is based on the sorting network from the paper by Ab\'io et al. \cite{abio2013parametric}.
One part of the \textsc{PBLib} library is the program called {\em PBEncoder} which takes an input
file and translate it into CNF using the \textsc{PBLib}. We have generated CNF formulas from all benchmark instances using this program,
then we have run \textsc{COMiniSatPS} on those CNFs. Results for this method are labeled {\bf PBE} in our evaluation.

The second solver is the \textsc{npSolver} by Norbert Manthey
and Peter Steinke\footnote{See http://tools.computational-logic.org/content/npSolver.php},
which is a Pseudo-Boolean solver that translates Pseudo-Boolean constraints to SAT similar to \textsc{MiniSat+},
but which incorporates novel techniques. We have exchanged the SAT-solver used by default in \textsc{npSolver}
to \textsc{COMiniSatPS} because the results were better with this one. Results for this method are labeled {\bf NPS} in our evaluation.

\begin{figure}[t!]
  \centering
  \includegraphics[width=1.0\textwidth]{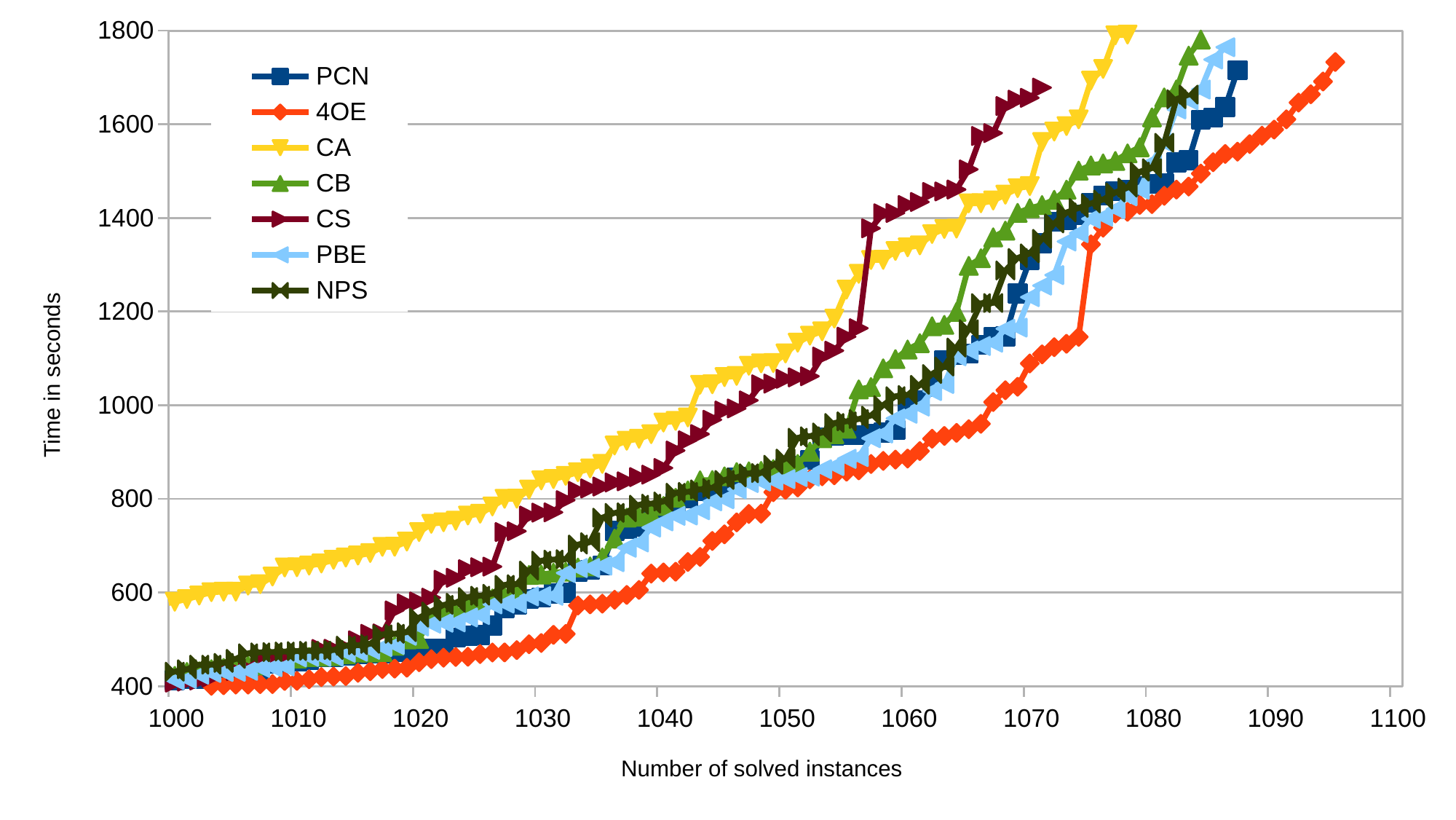}
  \caption{The number of solved instances of PB15 suite in a given time}
  \label{fig:plots}
\end{figure}

\subsection{Benchmarks}

The set of benchmarks we used is {\bf PB15 suite}, which is a set of instances from the
Pseudo-Boolean Evaluation 2015\footnote{See {\texttt http://pbeva.computational-logic.org/}}.
One of the categories of the competition was {\em DEC-LIN-32-CARD}, which contains 2289 instances
-- we use these in our evaluation. Every instance is a collection of cardinality constraints.

\renewcommand{\arraystretch}{1.0}

\subsection{Results}

The time-out limit in the SAT-solver was set to 1800 seconds. When comparing two encodings we only considered instances
for which at least one achieved the SAT-solver runtime of at least 10\% of the time-out limit. All other instances
were considered trivial, and therefore were not included in the speed-up/slow-down results. We also filtered out instances for which
relative percentage deviation of the running time of encoding {\bf A} w.r.t. the running time of encoding {\bf B}
was less than 10\% (and vice-versa).

In Figure \ref{fig:plots} we present a cactus plot, where x-axis gives the number of solved instances of PB15 suite
and the y-axis the time needed to solve them (in seconds) using given encoding.
From the plot we can see that the {\bf 4OE} encoding outperforms all other encodings.

Table \ref{tbl:res-pb} presents speed-up and slow-down factors for
encoding {\bf 4OE} w.r.t. all other encodings.
From the evaluation we can conclude that the best performing encoding is {\bf 4OE}.
From the data presented in Table \ref{tbl:res-pb} our encoding
achieve better speed-up factor w.r.t. all other encodings. Total running time for {\bf 4OE} is 629.78 hours on all 2289 instances. 
All other encodings required more time to finish the computation. Also, {\bf 4OE} solved the most number of instances -- 1095.
The second to last column of Table \ref{tbl:res-pb} shows the difference in total running time of all encodings w.r.t. {\bf 4OE} 
(in HH:MM format -- hours and minutes). The last column indicates the difference in the number of solved instances of all encodings w.r.t. {\bf 4OE} 
(here all instances are counted, even the trivial ones).
We can see, for example, that for {\bf 4OE} computations finished about 7 hours sooner for {\bf 4OE} than {\bf CS}.
This shows that using 4-column selection networks
is more desirable than using 2-column selection/sorting networks for encoding cardinality constraints.
Encodings {\bf CA} and {\bf CS} had the worst performance on PB15 suite. We can also see that even the
state-of-the-art constraint solvers have larger running times and solved less instances on this set of benchmarks,
as {\bf PBE} and {\bf NPS} finished computations more than about 3--4 hours
later than {\bf 4OE}.

\begin{table}[!t]\setlength{\tabcolsep}{4pt}
    \centering
    \begin{tabular}{ c | c | c | c | c | c | c || c | c | c | c | c | c || c | c | } \cline{2-13}
         & \multicolumn{6}{|c||}{4OE speed-up} & \multicolumn{6}{c||}{4OE slow-down} & \multicolumn{2}{c}{} \\ \cline{2-15}
                                  & TO & 4.0 & 2.0 & 1.5 & 1.1 & Total & TO & 4.0 & 2.0 & 1.5 & 1.1 & Total & Time dif. & \#s dif. \\ \hline
      \multicolumn{1}{|l|}{PCN}   & 9  & 11  & 5   & 5   & 5   & 35    & 1  & 2   & 1   & 3   & 3   & 10    & +02:55 & -8  \\ \hline %
      \multicolumn{1}{|l|}{CA}    & 22 & 15  & 28  & 21  & 21  & 107   & 5  & 3   & 5   & 4   & 11  & 28    & +10:54 & -17 \\ \hline %
      \multicolumn{1}{|l|}{CB}    & 18 & 11  & 15  & 8   & 24  & 76    & 7  & 2   & 6   & 3   & 28  & 46    & +04:54 & -11 \\ \hline %
      \multicolumn{1}{|l|}{CS}    & 27 & 13  & 14  & 13  & 18  & 85    & 3  & 0   & 18  & 14  & 13  & 48    & +06:55 & -24 \\ \hline %
      \multicolumn{1}{|l|}{PBE}   & 15 & 13  & 10  & 10  & 20  & 68    & 6  & 16  & 5   & 6   & 27  & 60    & +02:48 & -9  \\ \hline %
      \multicolumn{1}{|l|}{NPS}   & 17 & 15  & 7   & 11  & 29  & 67    & 5  & 16  & 6   & 5   & 26  & 58    & +03:51 & -12 \\ \hline %
      \multicolumn{15}{c}{ } \\
    \end{tabular}
    \caption{Comparison of encodings in terms of SAT-solver runtime on PB15 suite. We count number of 
    benchmarks for which {\bf 4OE} showed speed-up or slow-down factor with respect to 
    different encodings, the difference in total running time of each encoding w.r.t. {\bf 4OE} and
    the difference in the number of solved instances of each encoding w.r.t. {\bf 4OE}.}
    \label{tbl:res-pb}
\end{table}

\subsection{4-Wise vs 4-Odd-Even}

Notice that in the evaluation we have omitted the 4-Wise Selection Network from Chapter \ref{ch:mw}.
It is because preliminary experiments showed that the 4-Odd-Even Selection Network is superior, and since running the {\bf PB15 suite}
is very time-consuming we decided to showcase only our best encoding. To remedy this situation we have performed similar experiment
on a different set of instances.

{\bf MSU4 suite} is a set consisting of about 14000 benchmarks, each of which contains a mix of CNF formula and multiple cardinality constraints.
This suite was created from a set of MaxSAT instances reduced from real-life problems, and then it was converted
by the implementation of {\em msu4} algorithm \cite{marques2011algorithms}.
This algorithm reduces a MaxSAT problem to a series of SAT problems with cardinality constraints.
The MaxSAT instances were taken from the Partial Max-SAT
division of the Third Max-SAT evaluation\footnote{See http://www.maxsat.udl.cat/08/index.php?disp=submitted-benchmarks}.
The time-out limit was set to 600 seconds.

The results are summarized in Figure \ref{fig:plots_msu}. We show a cactus plot for {\bf 4OE}, {\bf PCN} and {\bf 4WISE} --
the implementation of the 4-Wise Selection Network based on Algorithms \ref{net:mw_sel} and \ref{net:4mw_merge} (available in \textsc{KP-MiniCard}).
Similar as before, we extended {\bf 4WISE} by using Direct Cardinality Networks for sufficiently small values of $n$ and $k$. The graph shows a clear
difference in performance between all three encodings.

\begin{figure}[t!]
  \centering
  \includegraphics[width=1.0\textwidth]{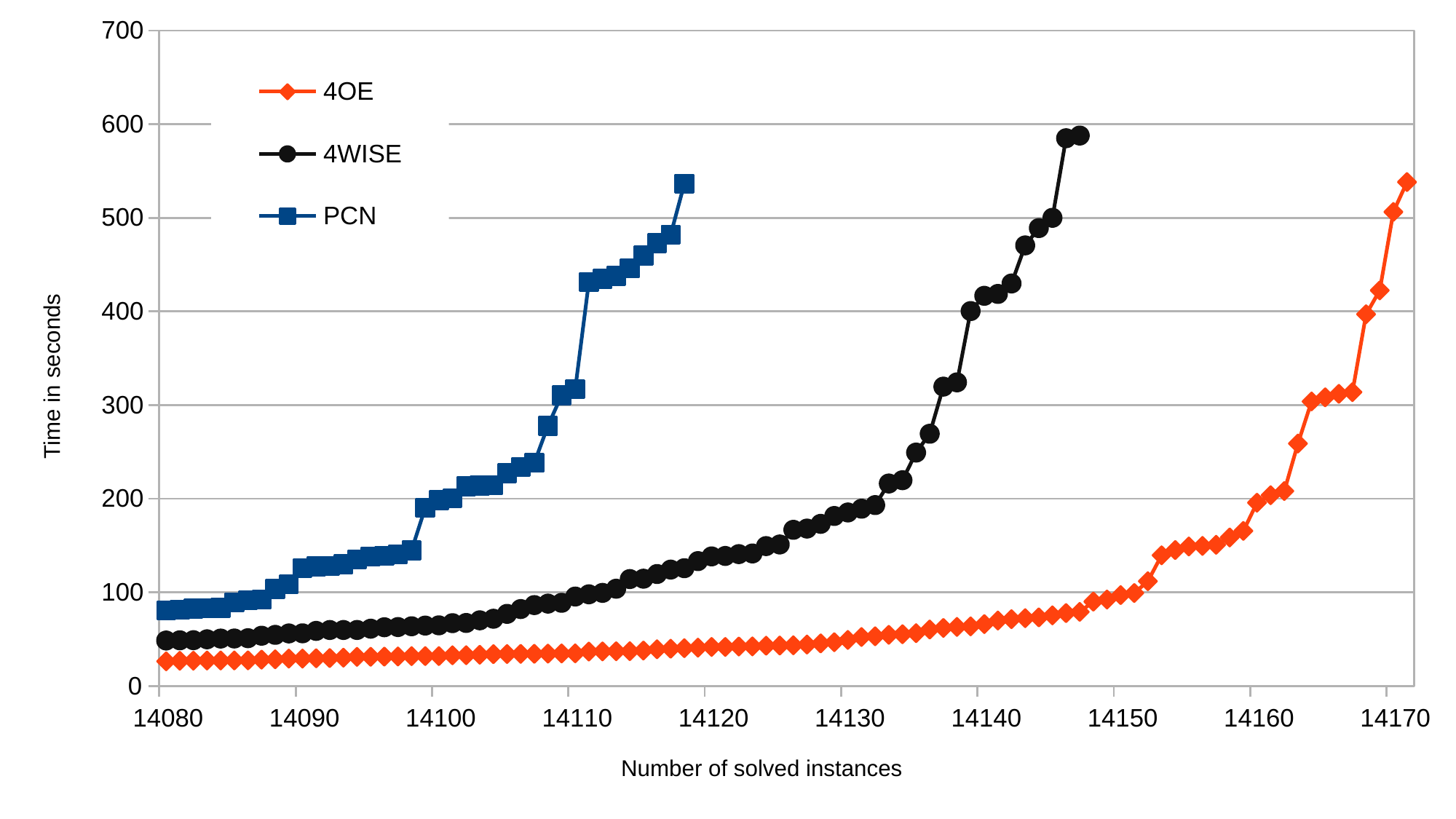}
  \caption{The number of solved instances of MSU4 suite in a given time}
  \label{fig:plots_msu}
\end{figure}

\section{m-Odd-Even Selection Network}

We show that we can generalize our algorithm further, so that it can be parametrized by any value of $m\geq2$.
The construction of the $m$-Odd-Even Selection Network is presented in Algorithm \ref{net:oe_sel_m}.

  We arrange the input sequence into $m$ columns of non-increasing sizes and we recursively
  run the selection algorithm on each column (lines 3--6), where at most $k$ items are selected
  from each column. Selected items are sorted and form prefixes of the columns and they
  are the input to the merging procedure (line 7--8). The base case, when $k=1$,
  is handled by the selector $sel^n_1$.
  
  \begin{algorithm}[t]
    \caption{$oe\_sel^n_k$}\label{net:oe_sel_m}
    \begin{algorithmic}[1]
      \Require {$\bar{x} \in \bool^n$; $n_1, \dots , n_m \in \nat$ where $n > n_1 \geq \dots \geq
        n_m$ and $\sum n_i = n$; $1 \le k \le n$}
      \Ensure{The output is top $k$ sorted and is a permutation of the inputs}
      \If {$k = 1$}
        \Return $sel^n_1(\bar{x})$
      \EndIf
      \State $\mathit{offset} = 1$
      \ForAll {$i \in \{1,\dots,m\}$}
        \State $k_i=\min(k, n_i)$
        \State $\bar{y}^i \gets oe\_sel^{n_i}_{k_i}(\bar{x}_{\mathit{offset},\dots,\mathit{offset} + n_i - 1})$
        \State $\mathit{offset} += n_i$
      \EndFor
      \State $s = \sum_{i=1}^{m} k_i$; 
         ~~$\overline{out} = \suff(k_1+1,\bar{y}^1) :: \dots :: \suff(k_m+1,\bar{y}^m)$
      \State \Return $oe\_merge^{s}_{k}(
        \tuple{\pref(k_1,\bar{y}^1),\dots,\pref(k_m,\bar{y}^m)}) :: \overline{out}$
    \end{algorithmic}
  \end{algorithm}

  \begin{theorem}\label{thm:oe_sel_m}
    Let $n,k \in \nat$, such that $k \leq n$. Then $oe\_sel^n_k$ is a $k$-selection network.
  \end{theorem}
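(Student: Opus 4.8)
The plan is to mirror the proof of Theorem \ref{thm:oe_sel} almost verbatim, replacing the fixed arity $4$ by the general $m$, and treating the $m$-way merger $oe\_merge^s_k$ as a black box whose correctness is assumed. First I would observe that the concatenation $\bar{y}^1 :: \dots :: \bar{y}^m$ is a permutation of $\bar{x}$: the splitting into columns partitions the inputs, each recursive call only permutes its own column, and the merger only permutes its input. Hence it suffices to show that the returned sequence is top $k$ sorted.

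I would then proceed by induction on the pair $(n,k)$ in lexicographic order. The base case $k=1$ is settled immediately, since $oe\_sel^n_1 = sel^n_1$ outputs the maximum and is therefore top $1$ sorted. For the inductive step, fix $n \ge k \ge 2$ and assume the claim for every $(n^{*},k^{*}) \prec (n,k)$. Because $n_i \le n_1 < n$ and $k_i = \min(k,n_i) \le k$, each recursive call is on a strictly smaller pair, so by the induction hypothesis every $\bar{y}^i$ is top $k_i$ sorted. The crux is then to prove that $\bar{w} = \pref(k_1,\bar{y}^1) :: \dots :: \pref(k_m,\bar{y}^m)$ already contains the $k$ largest elements of $\bar{x}$. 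If all the $1$'s of $\bar{y}$ lie in $\bar{w}$ there is nothing to prove; otherwise some $y^i_j = 1$ with $k_i < j \le n_i$, which forces $k_i = k$ (else $j > n_i$, a contradiction), and since $\bar{y}^i$ is top $k$ sorted with a $1$ beyond position $k$, the prefix $\pref(k,\bar{y}^i)$ must consist entirely of $1$'s, giving $|\bar{w}|_1 \ge k$. This is exactly the counting argument used for $m=4$, and it is insensitive to the value of $m$.

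Having established that the $m$ sorted prefixes passed to $oe\_merge^s_k$ contain the $k$ largest inputs, I would invoke the correctness of the merger to conclude that it returns a top $k$ sorted permutation of those prefixes; appending the suffixes $\overline{out}$, whose every element is dominated by the prefixes, then yields a top $k$ sorted permutation of $\bar{x}$, which completes the induction.

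The main obstacle is not this selection argument, which is routine, but the correctness of the generalized merger $oe\_merge^s_k$ itself, that is, the claim that it is an $m$-merger of order $k$. For $m=4$ this is Theorem \ref{thm:4oe_merge}, established through a multiway-merge lemma and a careful analysis of the combine and correction phases. For arbitrary $m$ one must verify the analogous multiway merging network of Batcher and Lee \cite{lee1995multiway}, in particular that a bounded number of combine and cleanup stages suffices to repair every local disorder uniformly across arities; handling those corner cases for general $m$ is where the real work lies, and I would isolate it as a separate lemma about $oe\_merge$ rather than fold it into this theorem.
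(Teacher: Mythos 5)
Your proof is correct and follows essentially the same route as the paper: the same lexicographic induction, the same counting argument showing that a $1$ beyond position $k_i$ forces $k_i=k$ and fills the prefix with $1$'s, and the same black-box treatment of the $m$-way merger (the paper likewise only verifies the merger in detail for $m=4$, via Theorem~\ref{thm:4oe_merge}). Your closing remark that the general-$m$ merger deserves its own lemma accurately identifies where the remaining work lies, and matches how the paper structures the $m=4$ case.
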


  \begin{proof}
    Observe that $\bar{y} = \bar{y}^1 :: \dots :: \bar{y}^m$ is a
    permutation of the input sequence $\bar{x}$. We prove by induction that for each $n,k
    \in \nat$ such that $1 \leq k \leq n$ and each $\bar{x} \in \bool^n$:
    $oe\_sel^n_k(\bar{x})$ is top $k$ sorted. If $1=k \leq n$ then $oe\_sel^n_k = sel^n_1$,
    so the theorem is true. For the induction step assume that $n \ge k \ge 2$, $m \ge 2$
    and for each $(n^{*},k^{*}) \prec (n,k)$ (in lexicographical order) the theorem holds.
    We have to prove that the sequence $\bar{w} = \pref(k_1,\bar{y}^1)::\dots
    ::\pref(k_m,\bar{y}^m)$ contains $k$ largest elements from $\bar{x}$. If all 1's in
    $\bar{y}$ are in $\bar{w}$, we are done. So assume that there exists $y^i_j=1$ for
    some $1 \leq i \leq m$, $k_i < j \leq n_i$. We show that $|\bar{w}|_1 \geq k$.
    Notice that $k_i=k$, otherwise $j > k_i = n_i$ -- a contradiction. Since $|\bar{y}^i|
    = n_i \leq n_1 < n$, from the induction hypothesis we get that $\bar{y}^i$ is top $k_i$
    sorted. In consequence, $\pref(k_i,\bar{y}^i) \succeq \tuple{y^i_j}$, which implies
    that $|\pref(k_i,y^i)|_1 = k_i = k$. We conclude that $|\bar{w}|_1 \geq
    |\pref(k_i,y^i)|_1 = k$. Note also that in the case $n = k$ we have all $k_i = \min(n_i,k) 
    < k$, so the case is correctly reduced.
    
    Finally, using $oe\_merge^{s}_{k}$ the algorithm returns 
    $k$ largest elements from $\bar{x}$, which completes the proof.
  \end{proof}

\section{Summary} \label{sec:concl}

In this chapter we presented a multi-column selection networks based on the odd-even approach,
that can be used to encode cardinality constraints. We showed that the CNF
encoding of the $4$-Odd-Even Selection Network is smaller than the 2-column version. We extended the encoding by
applying Direct Cardinality Networks \cite{abio2013parametric} for sufficiently small input. The
new encoding was compared with the selected state-of-the-art encodings based
on comparator networks, adders and binary decision diagrams as well as
with two popular general constraints solvers.
The experimental evaluation shows that the new encoding yields
better speed-up and overall runtime in the SAT-solver performance.

We have also showed (here, and in the previous chapter) how to generalize the multi-column networks for any number of columns.
The conclusion is that the odd-even algorithm is much easier to implement than the pairwise
algorithm.

\chapter[Encoding Pseudo-Boolean Constraints]{Encoding Pseudo-\\ Boolean Constraints}\label{ch:pos18}

    \def\nqueenssolution{Qd4, Qe2, Qf8, Qa5, Qc1, Qg6, Qh3}
    \setchessboard{smallboard,labelleft=false,labelbottom=false,showmover=false,setpieces=\nqueenssolution}

    \begin{tikzpicture}[remember picture,overlay]
      \node[anchor=east,inner sep=0pt] at (current page text area.east|-0,3cm) {\chessboard};
    \end{tikzpicture}

To replicate the success of our algorithm from the previous chapter in the field of PB-solving, we implemented
the 4-Odd-Even Selection Network in \textsc{MiniSat+} and removed the 2-Odd-Even Sorting Network from
the original implementation \cite{minisatp}. In Chapter \ref{ch:cp18}
we have showed a top-down, divide-and-conquer algorithm for constructing the 4-Odd-Even Selection Network.
The difference in our new implementation is that we build our network in a bottom-up manner,
which results in the easier and cleaner implementation.

We apply a number of optimization techniques in our solver, some based on the work of other researchers. In particular, we use optimal base searching
algorithm based on the work of Codish et al. \cite{codish2011} and ROBDD structure \cite{abio2012} instead of BDDs for one of the encodings in
\textsc{MiniSat+}. We also substitute sequential search of minimal value of the goal function in optimization problems with binary search similarly to
Sakai and Nabeshima \cite{sakai2015}. We use \textsc{COMiniSatPS} \cite{oh2016improving} by Chanseok Oh as the underlying SAT-solver,
as it has been observed to perform better than the original \textsc{MiniSat} \cite{een2003extensible} for many instances.

We experimentally compare our solver with other state-of-the-art general constraints solvers like
\textsc{PBLib} \cite{philipp2015pblib} and \textsc{NaPS} \cite{sakai2015} to prove that our techniques 
are good in practice. There have been organized a series of Pseudo-Boolean Evaluations \cite{manquinho2006first}
which aim to assess the state-of-the-art in the field of PB-solvers. We use the competition problems
from the PB 2016 Competition as a benchmark for the solver proposed in this chapter.

\section{System Description}

\subsection{4-Way Merge Selection Network}



It has already been observed that using selection networks instead of sorting networks is more efficient
for the encoding of constraints \cite{codish2010pairwise}. This fact has been successfully used in encoding cardinality constraints,
as evidenced, for example, by the results of this thesis. We
now apply this technique to PB-constraints. Here we describe the algorithm for constructing a bottom-up version of the 4-Odd-Even Selection Network.

The procedure can be described as follows. Assume $k\leq n$ and that we have the sequence of Boolean literals $\bar{x}$ of length $n$
and we want to select $k$ largest, sorted elements, then:

\begin{itemize}
  \item If $k=0$, there is nothing to do.
  \item If $k=1$, simply select the largest element from $n$ inputs using a $1$-selector of order $n$.
  \item If $k>1$, then split the input into subsequences of either the same literals (of length at least 2) or sorted 5 singletons 
  (using a $\min(5,k)$-selector of order $5$). Next, sort subsequences by length, in a non-increasing order.
  In loop: merge each 4 (or less) consecutive subsequences into one (using 4-Odd-Even Merging Network as a sub-procedure)
  and select at most $k$ largest items until one subsequence remains.
\end{itemize}

\begin{examplebox}
\begin{example}
  See Figure \ref{fig:4oe_sel} for a schema of our selection network
  (where $n=18$ and $k=6$) which selects 6 largest elements from the input $011000010000001011$.
  In the same figure, in dashed lines, a schema of 4-Odd-Even Merger is presented
  with $s=18$, $k=6$, $|\bar{w}|=|\bar{x}|=|\bar{y}|=5$ and $|\bar{z}|=3$. First, the inputs are split into two by odd and even indices,
  and the recursive calls are made. After that, a combine operation fixes the order of elements, to output the 6 largest ones.
\end{example}
\end{examplebox}

\begin{figure}[t!]
  \centering
  \includegraphics[width=1.00\textwidth]{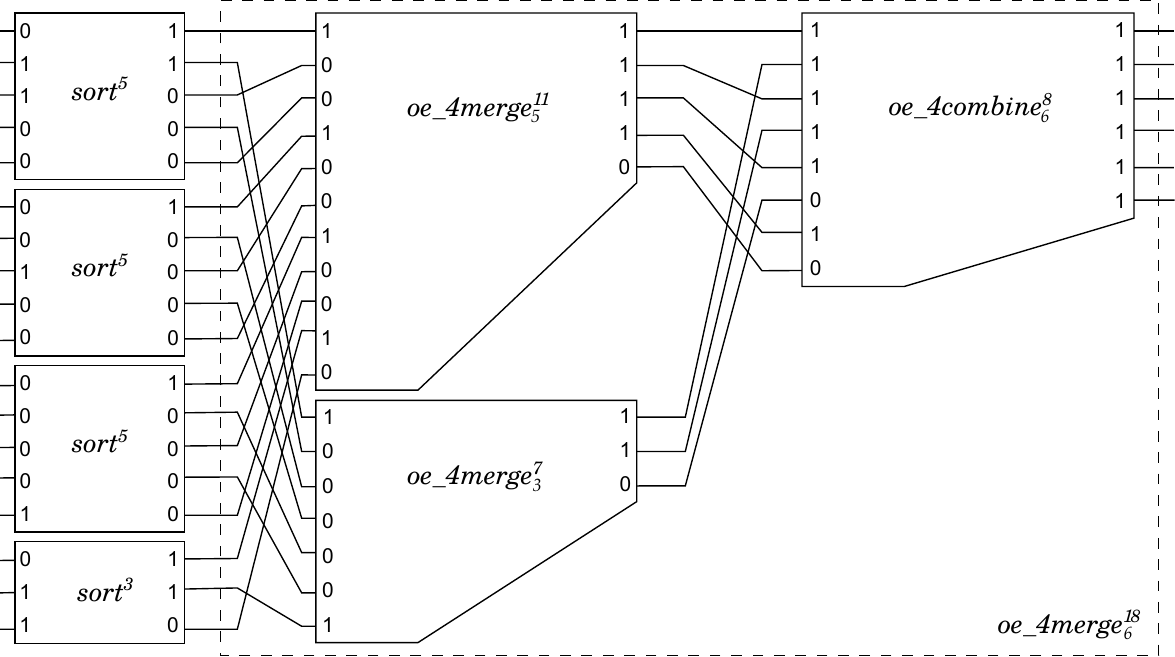}
  \caption{An example of 4-Odd-Even Selection Network, with $n=18$ and $k=6$}
  \label{fig:4oe_sel}
\end{figure}

In the following subsections we explain how we have extended the \textsc{MiniSat+} solver to make the above process (and the entire PB-solving computation) more efficient.

\subsection{Simplifying Inequality Assertions}

We use the following optimization found in \textsc{NaPS} \cite{sakai2015} for simplifying inequality assertions in a constraint.
We introduce this concept with an example. In order to assert the constraint $a_1l_1 + \dots + a_nl_n \geq k$
the encoding compares the digits of the sum of the terms on the left side of the constraint with those from $k$ (in some base $\mathbf{B}$)
from the right side. Consider the following example:

\[
  5x_1 + 7x_2 \geq 9
\]

\noindent Assume that the base is $\mathbf{B}=\tuple{2,2}$. Then $9=\tuple{1,0,2}_{\mathbf{B}}$, but if we add $7$ to both sides
of the inequality:

\[
  7 + 5x_1 + 7x_2 \geq 16
\]

\noindent then those constraints are obviously equivalent and $16=\tuple{0,0,4}_{\mathbf{B}}$. Now in order to assert the inequality we only
need to assert a single output variable of the encoding of the sum of LHS coefficients (using a singleton clause). 
The constant 7 on the LHS has a very small impact on the size of LHS encoding. This simplification allows for the reduction 
of the number of clauses in the resulting CNF encoding, as well as allows better propagation.

\subsection{Optimal Base Problem}

We have mentioned that \textsc{MiniSat+} searches for a mixed radix base such that the sum of all the digits of
the coefficients written in that base, is as small as possible. In their paper \cite{minisatp} authors mention in the footnote that:

\begin{quote}
  {\em The best candidate is found by a brute-force search trying all prime numbers $< 20$.
  This is an ad-hoc solution that should be improved in the future. Finding
  the optimal base is a challenging optimization problem in its own right.}
\end{quote}

Codish et al. \cite{codish2011} present an algorithm which scales to find an optimal
base consisting of elements with values up to $1,000,000$ and they consider several measures of optimality for finding the base.
They show experimentally that in many cases finding a better base leads also to better SAT solving time.
We use their algorithm in our solver, but we restrict the domain of the base to prime numbers less than $50$,
as preliminary experiments show that numbers in the base are usually small.

\subsection{Minimization Strategy}

The key to efficiently solve Pseudo Boolean optimization problems is the repeated use of a SAT-solver. Assume we have a minimization problem
with an objective function $f(x)$. First, without considering $f$, we run the solver on a set of constraints to get an initial solution $f(x_0)=k$.
Then we add the constraint $f(x) < k$ and run the solver again. If the problem is unsatisfiable, $k$ is the optimal solution. If not, the process is repeated with
the new (smaller) candidate solution $k'$. The minimization strategy is about the choice of $k'$. If we choose $k'$ as reported by the SAT-solver,
then we are using the so called {\em sequential} strategy -- this is implemented in \textsc{MiniSat+}.

Sakai and Nabeshima \cite{sakai2015} propose the {\em binary} strategy for the choice of new $k'$. Let $k$ be the best known goal value
and $l$ be the greatest known lower bound, which is initially the sum of negative coefficients of $f$. After each iteration, new constraint
$p \Rightarrow f(x) < \floor{(k(q-1)+l)/q}$ is added, where $p$ is a fresh variable (assumption) and $q$ is a constant (we set $q=3$ as
default value). Depending on the new SAT-solver answer, $\floor{(k(q-1)+l)/q}$ becomes the new upper or lower bound (in this case $p$ is set to $0$), and the process begin anew.

In our implementation we use binary strategy until the difference between the upper and lower bounds of the goal value is less than $96$,
then we switch to the sequential strategy. We do this in order to avoid a situation when a lot of computation is needed when searching
for UNSAT answers, which could arise
if only binary strategy was used. This was also observed in \cite{sakai2015} and the authors have used it as a default strategy.
Moreover, they propose to alternate between binary and sequential strategy depending
on the SAT-solver answer in a given iteration.

\subsection{ROBDDs Instead of BDDs}

One of the encodings of \textsc{MiniSat+} is based on {\em Binary Decision Diagrams} (BDDs). We have improved
the implementation of this encoding by using the more recent {\em Reduced Ordered BDD} (ROBDD) construction \cite{abio2012}. Now ROBDD is used
to create a DAG representation of a constraint. One of the advantages of ROBBDs is that we can reuse nodes in the ROBDD structure,
which results in a smaller encoding. This concept is illustrated in Figure \ref{fig:bdd},
which shows an example from \cite{abio2012} of BDD and ROBDD for the PB-constraint $2x_1 + 3x_2 + 5x_3 \leq 6$.
Two reductions are applied (until fix-point) for obtaining ROBDD: removing nodes with identical children and merging isomorphic subtrees.
This was already explained in Section \ref{sec:pb}. See \cite{abio2012} for a more detailed example.

\begin{figure}[t!]
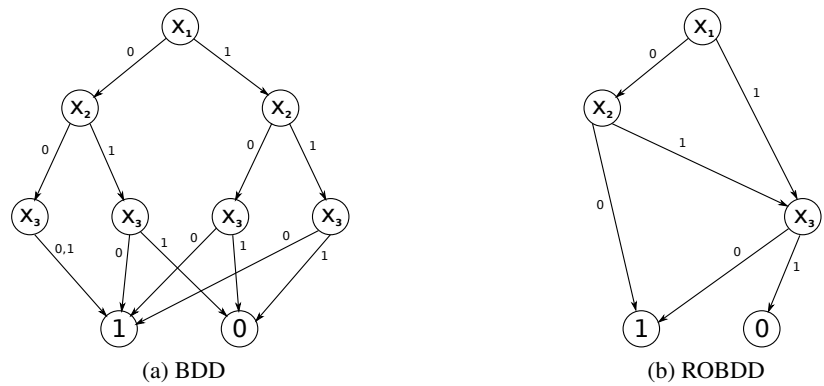

  \centering
  \subfloat[BDD\label{fig:bdd_a}]{%
    \includegraphics[height=0.3\textwidth]{bdd.pdf}
  }
  ~~~~~~~~~~~~~~~~~~~~~~~~~~~~~~
  \subfloat[ROBDD\label{fig:bdd_b}]{%
    \includegraphics[height=0.3\textwidth]{robdd.pdf}
  }
  \caption{Construction of a BDD (left) and a ROBDD (right) for $2x_1 + 3x_2 + 5x_3 \leq 6$}
  \label{fig:bdd}
\end{figure}

\subsection{Merging Carry Bits}

In the construction of interconnected sorters in \textsc{MiniSat+} carry bits from one sorter are being fed to the next sorter. In the footnote
in \cite{minisatp} it is mentioned that:

\begin{quote}
  {\em Implementation note: The sorter can be improved by using the fact that the carries are already sorted.}
\end{quote}

\noindent We go with this suggestion and use our merging network to merge the carry bits with a sorted digits representation instead
of simply forwarding the carry bits to the inputs of the next selection network.

\subsection{SAT-solver}

The underlying SAT-solver of \textsc{MiniSat+} is \textsc{MiniSat} \cite{een2003extensible} created by Niklas E\'en and Niklas S\"orensson.
It has served as an extension to many new solvers, but it is now quite outdated. We have integrated a solver called \textsc{COMiniSatPS}
by Chanseok Oh \cite{oh2016improving}, which have collectively won six medals in SAT Competition 2014 and Configurable SAT Solver Challenge 2014.
Moreover, the modification of this solver called \textsc{MapleCOMSPS} won the Main Track category of
SAT Competition 2016\footnote{See http://baldur.iti.kit.edu/sat-competition-2016/}.

\section{Experimental Evaluation}

Our extension of \textsc{MiniSat+} based on the features explained in this chapter,
which we call \textsc{KP-MiniSat+}, is available online\footnote{See https://github.com/karpiu/kp-minisatp}.
It should be linked with a slightly modified \textsc{COMiniSatPS}, also available online\footnote{See https://github.com/marekpiotrow/cominisatps}.
Detailed results of the experimental evaluation are also available online\footnote{See http://www.ii.uni.wroc.pl/\%7ekarp/pos/2018.html}.

The set of instances we use is from the
Pseudo-Boolean Competition 2016\footnote{See {\texttt http://www.cril.univ-artois.fr/PB16/}}. We use
instances with linear, Pseudo-Boolean constraints that encode either decision or optimization problems.
To this end, three categories from the competition have been selected:

\begin{itemize}
  \item {\bf DEC-SMALLINT-LIN} - 1783 instances of decision problems with small coefficients in the constraints
         (no constraint with sum of coefficients greater than $2^{20}$).
         No objective function to optimize. The solver must simply find a solution.
  \item {\bf OPT-BIGINT-LIN} - 1109 instances of optimization problems with big coefficients in the constraints
         (at least one constraint with a sum of coefficients greater than $2^{20}$).
         An objective function is present. The solver must find a solution with the best possible value of the
         objective function.
  \item {\bf OPT-SMALLINT-LIN} - 1600 instances of optimization problems. Like
         {OPT-BIGINT-LIN} but with small coefficients (as in {DEC-SMALLINT-LIN}) in the constraints.
\end{itemize}

We compare our solver (abbreviated to {\bf KP-MS+}) with two state-of-the-art general purpose constraint solvers. First is
the \textsc{pbSolver} from \textsc{PBLib} ver. 1.2.1, by Tobias Philipp and Peter Steinke \cite{philipp2015pblib} (abbreviated to {\bf PBLib} in the results).
This solver implements a plethora of encodings for three types of constraints: at-most-one, at-most-k (cardinality constraints) and Pseudo-Boolean constraints.
The \textsc{PBLib} automatically normalizes the input constraints and decides which encoder provides the most effective translation.
We have launched the program \texttt{./BasicPBSolver/pbsolver} of \textsc{PBLib} on each instance with the default parameters.

The second solver is \textsc{NaPS} ver. 1.02b by Masahiko Sakai and Hidetomo Nabeshima \cite{sakai2015} which implements
improved ROBDD structure for encoding constraints in band form, as well as other optimizations.
This solver is also built on the top of \textsc{MiniSat+}. \textsc{NaPS} won two of the optimization categories in the Pseudo-Boolean Competition 2016:
OPT-BIGINT-LIN and OPT-SMALLINT-LIN. We have launched the main program of \textsc{NaPS} on each instance, with parameters \texttt{-a -s -nm}.

We also compare our solver with the original \textsc{MiniSat+} in two different versions, one using the original \textsc{MiniSat} SAT-solver
and the other using the \textsc{COMiniSatPS} (the same as used by us). We label these {\bf MS+} and {\bf MS+COM} in the results. We prepared results for
{\bf MS+COM} in order to show that the advantage of using our solver does not come simply from changing the underlying SAT-solver.

We have launched our solver on each instance, with parameters \texttt{-a -s -cs -nm}, where \texttt{-cs}  means that in experiments
the solver used just one encoding technique, the 4-Odd-Even Selection Networks combined with a direct encoding of small sub-networks.

All experiments were carried out on the machines with Intel(R) Core(TM) i7-2600 CPU @ 3.40GHz. Timeout limit is set to 1800 seconds and
memory limit is 15 GB, which are enforced with the following commands:  \texttt{ulimit -Sv 15000000} and \texttt{timeout -k 20 1809
  <solver> <parameters> <instance>}.

We would like to note that we also wanted to include in this evaluation the winner of DEC-SMALLINT-LIN category, which is the solver based on the
{\em cutting planes} technique, but we refrained from that for the following reason. We have not found the source code of this solver
and the only working version found in the author's website\footnote{See {\texttt http://www.csc.kth.se/\%7eelffers/}} is a binary file 
without any documentation. Because of this, we were unable to get any meaningful results of running aforementioned program on optimization instances.

\renewcommand{\arraystretch}{1.0}
\begin{table}[t!]\setlength{\tabcolsep}{6pt}
    \centering
    \begin{tabular}{ c || c | c | c | c | c | c | c | c | c | }
      \multirow{2}{*}{solver} & \multicolumn{3}{c|}{DEC-SMALLINT-LIN} & \multicolumn{3}{c|}{OPT-BIGINT-LIN} & \multicolumn{3}{c|}{OPT-SMALLINT-LIN} \\\cline{2-10}
                              & Sat & UnSat & cpu            & Opt & UnSat & cpu            & Opt & UnSat & cpu        \\ \hline\hline
      KP-MS+                  & 432 & 1049  & 647041         & 359 & 72    & 1135925        & 808 & 86    & 1289042    \\ \hline
      NaPS                    & 348 & 1035  & 816725         & 314 & 69    & 1314536        & 799 & 81    & 1330536    \\ \hline
      PBLib                   & 349 & 922   & 1104508        & --  & --    & --             & 691 & 56    & 1611247    \\ \hline
      MS+                     & 395 & 951   & 895774         & 149 & 71    & 1647958        & 715 & 73    & 1515166    \\ \hline
      MS+COM                  & 428 & 1027  & 703269         & 174 & 71    & 1609433        & 734 & 71    & 1491269    \\ \hline
      \multicolumn{10}{c}{ } \\
    \end{tabular}
    \caption{Number of solved instances of PB-competition benchmarks.}
    \label{tbl:solved}
\end{table}

In Table \ref{tbl:solved} we present the number of solved instanced for each competition category.
{\bf Sat}, {\bf UnSat} and {\bf Opt} have the usual meaning, while {\bf cpu} is the total solving time of the solver over all instances of
a given category. Results clearly favor our solver. We observe significant improvement in the number
of solved instances in comparison to \textsc{NaPS} in categories DEC-SMALLINT-LIN and OPT-BIGINT-LIN.
The difference in the number of solved instances in the OPT-SMALLINT-LIN category is not so significant. Solver \textsc{PBLib} had the worst performance
in this evaluation. Notice that the results of \textsc{PBLib} for OPT-BIGINT-LIN division is not available. This is because \textsc{PBLib}
is using 64-bit integers in calculations, thus could not be launched with all OPT-BIGINT-LIN instances.

Figures \ref{fig:plot1}, \ref{fig:plot2} and \ref{fig:plot3} show cactus plots of the results, which indicate the
number of solved instances within the time. We see clear advantage of our solver over the competition in the 
DEC-SMALLINT-LIN and OPT-BIGINT-LIN categories.

\begin{figure}[t!]
  \centering
  \includegraphics[width=1.00\textwidth]{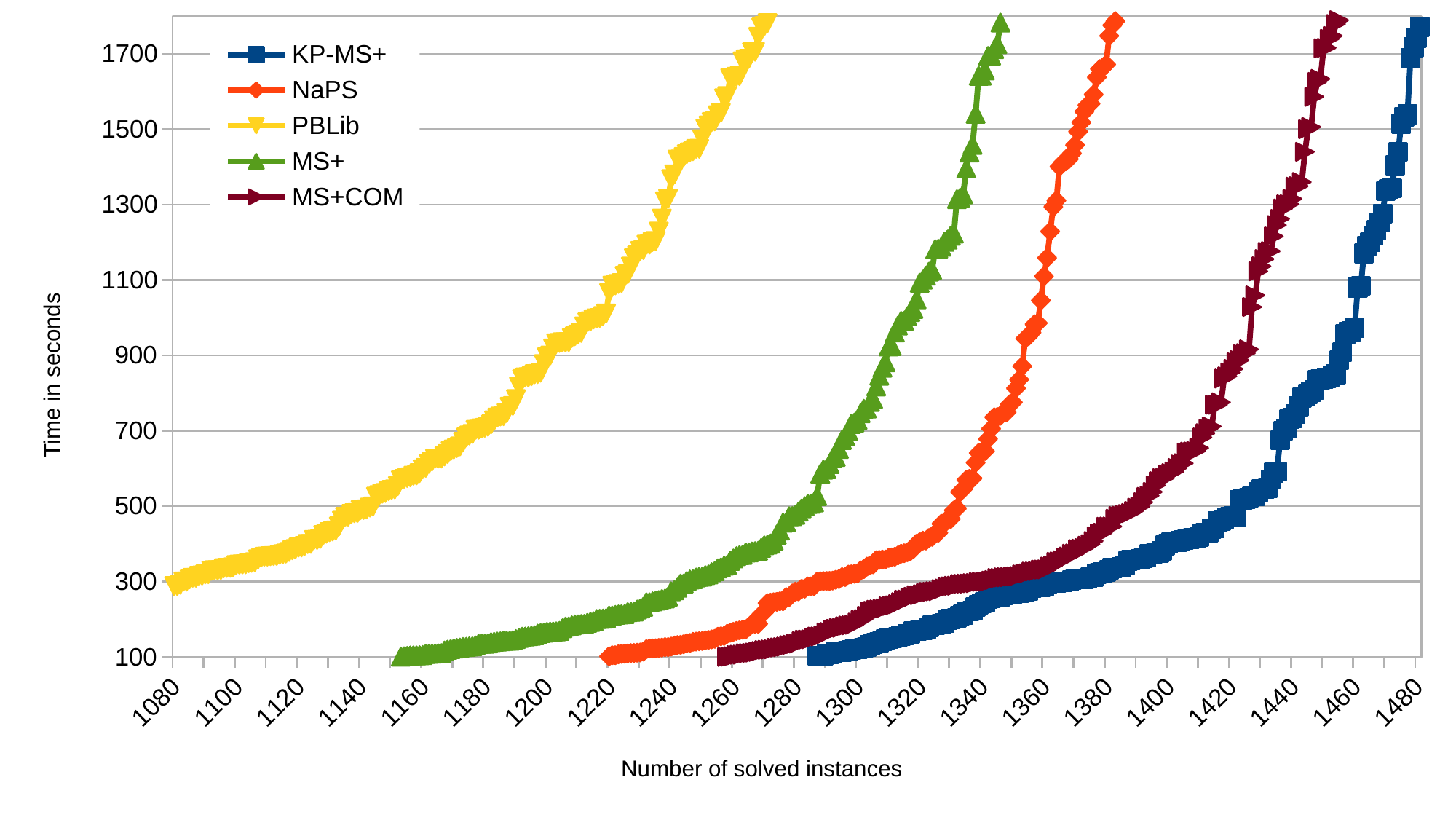}
  \caption{Cactus plot for DEC-SMALLINT-LIN division}
  \label{fig:plot1}
\end{figure}

\begin{figure}[t!]
  \centering
  \includegraphics[width=1.00\textwidth]{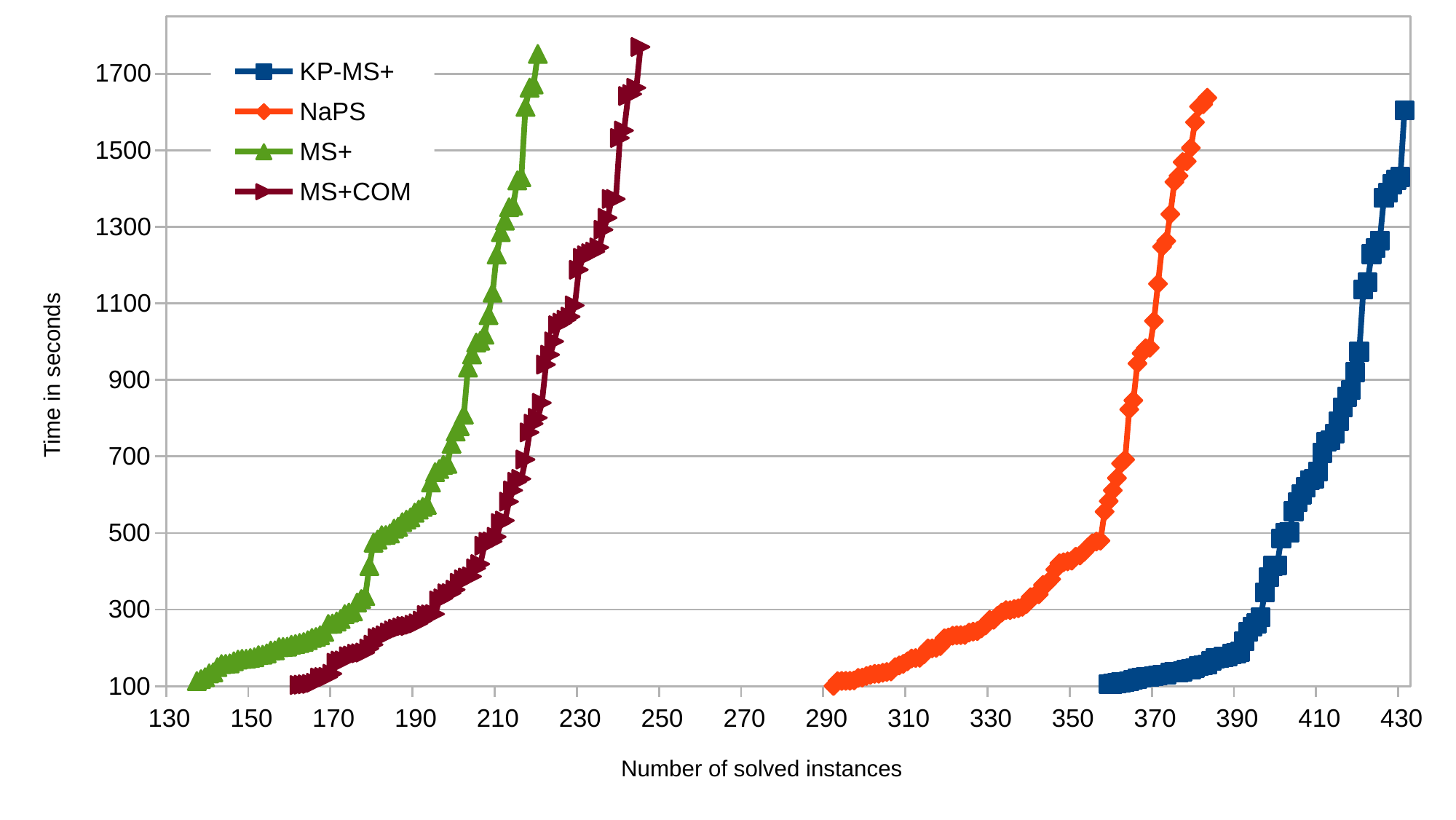}
  \caption{Cactus plot for OPT-BIGINT-LIN division}
  \label{fig:plot2}
\end{figure}

\begin{figure}[t!]
  \centering
  \includegraphics[width=1.00\textwidth]{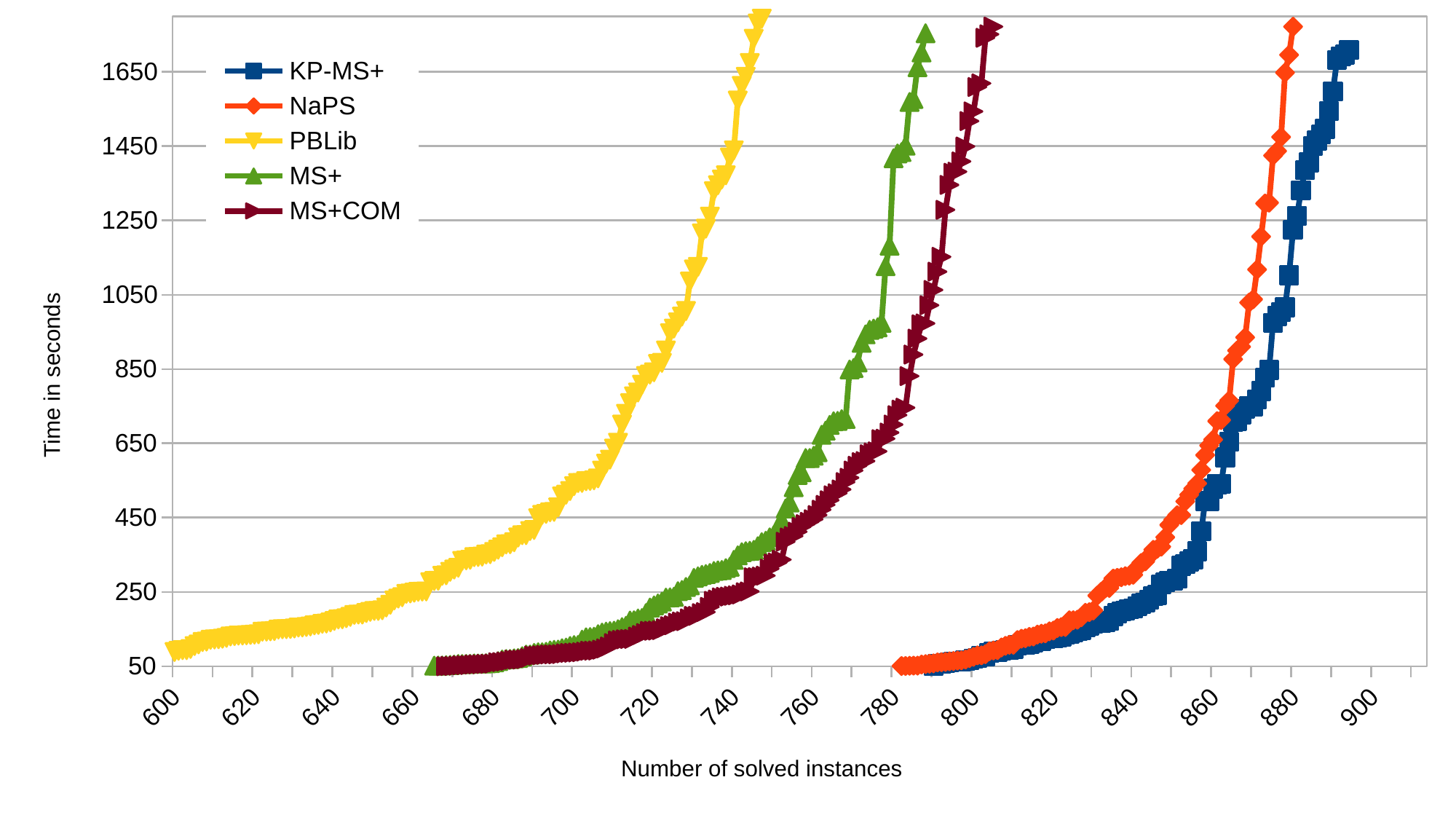}
  \caption{Cactus plot for OPT-SMALLINT-LIN division}
  \label{fig:plot3}
\end{figure}

\section{Summary}

In this chapter we proposed a new method of encoding PB-constraints into SAT based
on sorters. We have extended the \textsc{MiniSat+} with the 4-way merge selection
algorithm and showed that this method is competitive to other state-of-the-art
solutions. Our algorithm is short and easy to implement. Our implementation is modular,
therefore it can be easily extracted and applied in other solvers.

\chapter[Final Remarks]{Final Remarks}\label{ch:c}

    \def\nqueenssolution{Qd4, Qe2, Qf8, Qa5, Qc1, Qg6, Qh3, Qb7}
    \setchessboard{smallboard,labelleft=false,labelbottom=false,showmover=false,setpieces=\nqueenssolution}

    \begin{tikzpicture}[remember picture,overlay]
      \node[anchor=east,inner sep=0pt] at (current page text area.east|-0,3cm) {\chessboard};
    \end{tikzpicture}

This chapter concludes the thesis, as well as shows the solution to the 8-Queens Puzzle which
we have been building throughout the dissertation (it is the only solution -- modulo symmetry --
with the property that no three queens are in a straight line!). Here we would like to give some
final remarks and possibilities for future work.

We have shown new classes of networks that have efficient translations to CNFs.
Encodings -- of both cardinality and Pseudo-Boolean constraints -- based on our 4-way Merge Selection Network
are very competitive, as evidenced by the experimental evaluation presented here.
Our encodings easily compete with other state-of-the-art encodings, even with the winners
of recent competitions. We also prove the arc-consistency property for all encodings based
on the standard encoding of generalized selection networks. This captures all the new encodings
presented in this thesis, as well as all past (and future) encodings (based on sorting/selection networks).

Possibly the biggest mystery we can leave the reader with is: why encoding cardinality constraints using comparator
networks is so efficient? It is in fact a very fundamental question. We can only see the empirical evidence as shown in this thesis,
as well as other papers. We can compare two networks using various measures like the number of
comparators, depth, the number of variables and clauses the encoding generates, or variables to
clauses ratio. But all of those measures do not seem to give a conclusive answer to our question, especially
if we want to collate our encodings with the ones not based on comparator networks. Arc-consistency
property is important, but even with it we still cannot decisively distinguish between the top used encodings.
In fact, being arc-consistent is a must if an encoding is to be competitive. It looks like another, more complex propagation
property is needed in order to theoretically prove superiority of encodings based on comparator networks. Such property
may have not been discovered yet.

We see, by the new constructions and the old ones referenced in the first part of this thesis,
that comparator networks considered in the field
are based exclusively on the odd-even or pairwise approach.
Nevertheless, there are many other sorting networks proposed in the literature that could be used.
It would be informative to see an empirical study on much wider collection of encodings based on networks not yet considered in practice.
So a survey on this topic is a niche ready to be filled.

{\printbibliography}

\end{document}